\newtheorem{theorem}{Theorem}[section]
\newtheorem{proposition}[theorem]{Proposition}
\newtheorem{lemma}[theorem]{Lemma}
\newtheorem{remark}[theorem]{Remark}
\theoremstyle{definition}
\newtheorem{definition}[theorem]{Definition}
\newtheorem{condition}[theorem]{Condition}
\numberwithin{equation}{section}
\numberwithin{figure}{section}
\numberwithin{table}{section}
\newcommand{\wutilde}[1]{\vrule depth 0pt width 0pt%
{\raise0.8pt\hbox{$\smash{{\mathop{#1} \limits_{\displaystyle\widetilde{}}}}$}}}
\newcommand{\wuhat}[1]{\vrule depth 0pt width 0pt%
{\raise0.6pt\hbox{$\smash{{\mathop{#1} \limits_{\displaystyle\widehat{}}}}$}}}
\newcommand{\ol}[1]{\overline{#1}}
\newcommand{\ul}[1]{\underline{#1}}
\newcommand{\al}{\alpha}
\newcommand{\be}{\beta}
\newcommand{\de}{\delta}
\newcommand{\si}{\sigma}
\newcommand{\ep}{\bm{\epsilon}}
\newcommand{\PDE}{P$\Delta$E}
\newcommand{\bbZ}{\mathbb{Z}}
\newcommand{\bbR}{\mathbb{R}}
\newcommand{\bbC}{\mathbb{C}}
\newcommand{\ii}{{\rm i}}
\newcommand{\bml}{{\bm l}}
\newcommand{\set}[2]{\left\{\left. #1 ~\right|~ #2 \right\}}
\newcommand{\sn}[1]{{\rm sn}\left(#1\right)}
\newcommand{\polyQ}{\mathrm Q}
\newcommand{\polyH}{\mathrm H}
\newcommand{\polyD}{\mathrm D}
\newcommand{\polyN}{\mathrm N}
\long\def\@makecaption#1#2{
 \vskip 10pt
 \setbox\@tempboxa\hbox{#1. #2}
 \ifdim \wd\@tempboxa >\hsize #1. #2\par \else \hbox
to\hsize{\hfil\box\@tempboxa\hfil}
 \fi}
\begin{document}
\allowdisplaybreaks

\title[]{Classification of quad-equations on a cuboctahedron}
\author{Nalini Joshi}
\address{School of Mathematics and Statistics F07, The University of Sydney, New South Wales 2006, Australia.}
\email{nalini.joshi@sydney.edu.au}
\author{Nobutaka Nakazono}
\address{Institute of Engineering, Tokyo University of Agriculture and Technology, 2-24-16 Nakacho Koganei, Tokyo 184-8588, Japan.}
\email{nakazono@go.tuat.ac.jp}
\begin{abstract}
In this paper, we consider polynomials associated with faces and internal quadrilaterals of a cuboctahedron and classify them under the requirement that they are consistent. These polynomials give rise to a system of partial difference equations on a face-centred cubic lattice. Our results were motivated by $\tau$-functions related to discrete Painlev\'e equations.
\end{abstract}

\subjclass[2020]{
33E30, 
37J35, 
37K10, 
39A14, 
39A36, 
39A45 
}
\keywords{
Consistency around a cuboctahedron;
Consistency around an octahedron;
quad-equation;
Consistency around a cube;
ABS equation;
Discrete Painlev\'e equation
}

\maketitle
\setcounter{tocdepth}{1}
\tableofcontents
\section{Introduction}\label{Introduction}
This paper is concerned with consistent polynomials associated to a face-centered cubic (FCC) lattice, given by
\begin{equation}\label{eqn:def_Omega}
 \Omega=\set{\bml=\sum_{i=1}^3l_i\ep_i}{l_i\in\bbZ,~l_1+l_2+l_3\in2\bbZ},
\end{equation}
where $\{\ep_1,\ep_2,\ep_3\}$ is a standard basis of $\bbR^3$. Each cell in the FCC lattice shown in Figure \ref{fig:fcc} contains a cuboctahedron, and the polynomials we consider here are associated with quadrilateral faces and internal quadrilaterals of such a cuboctahedron. These polynomials give rise to partial difference equations on the FCC lattice. In this paper, we report on the classification of such polynomials. 
\begin{figure}[H]
  \centering
  \includegraphics[width=0.4\textwidth]{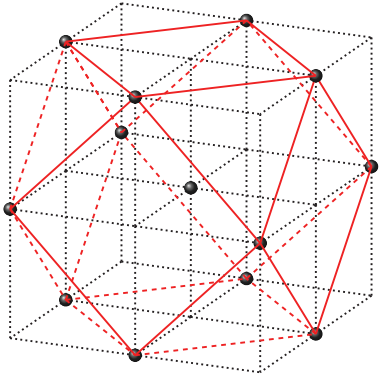}
  \caption{A cell of the $\Omega$ lattice, showing an embedded cuboctahedron.}
\label{fig:fcc}
  \end{figure}
  \noindent
 
   Earlier classification of polynomials were found for simple cubic lattices \cite{ABS2003:MR1962121,ABS2009:MR2503862,BollR2011:MR2846098,BollR2012:MR3010833,BollR:thesis} and led to \PDE s that have properties closely related to those of integrable PDEs, such as the Korteweg-de Vries equation. 
   We note recent extension of such results to the FCC lattice \cite{kels2020interaction}, which involves polynomials in 5 variables and edge relations motivated by Yang-Baxter equations. 

We here consider polynomials of 4 variables, each associated to a vertex of a quadrilateral that forms a face or internal slice of a cuboctahedron. Our motivation came from new examples of \PDE s that arise naturally in our study of discrete Painlev\'e equations.
(An example is given in \cite{JN:Preparing}.) 

\subsection{Main Results} Our main result is expressed by Theorem \ref{theo:classification_CACO}, which gives a classification of polynomials that lead to quad-equations (defined in Definition \ref{def:quad-eqn}) around a cuboctahedron. 
These quad-equations lead to constrained \PDE s that are consistent on a 3D lattice of quadrilaterals given by overlapping cuboctahedra as described in Definition \ref{def:CACO_PDEs}. 
\subsection{Notation and Definitions}
In this section, we define notation and useful terminology that will be used throughout the paper. 
\begin{definition}\label{def:quad-eqn}\rm
Let $Q=Q(x,y,z,w)$ be a multivariable polynomial over $\bbC$.
\begin{enumerate}[leftmargin=1.4cm,label=(\alph*)]
\item%
The polynomial $Q$ is defined to be a multi-affine polynomial, if it is linear in each variable.
\item%
The multi-affine polynomial $Q$ is said to be irreducible over $\bbC$, 
if the equation $Q=0$ can be solved for each argument, and the solution is a rational function of the other three arguments.
\item%
The equation $Q=0$ will be referred to as a quad-equation, if $Q$ is an irreducible multi-affine polynomial. 
\item%
If $Q'=Q'(x,y,z,w)$ is rational in each argument, and written as $Q'=Q/R$, where $R$ is a polynomial and $Q$ is an irreducible multi-affine polynomial, then $Q'=0$ is also called a quad-equation.
\end{enumerate}
\end{definition}

\begin{remark}
In what follows, for simplicity, we sometimes use the term ``quad-equation $Q$" instead of ``quad-equation $Q=0$".
Moreover, when we use this term, it will mean that $Q$ is given as an irreducible multi-affine polynomial.
\end{remark}

\begin{remark}
Notice that the condition of irreducibility implies non-vanishing of certain coefficients 
and non-factoring occurring in a multi-affine polynomial $Q(x,y,z,w)$. 
We do not restate such conditions separately in the body of the paper.
\end{remark}

\begin{definition}
Let $P$ and $Q$ be quad-equations, which are respectively given by irreducible multi-affine polynomials $P'$ and $Q'$.
The quad-equation $P$ is said to be equivalent to a quad-equation $Q$,
 denoted by 
\begin{equation}
 P\equiv Q,
\end{equation}
if $P'$ is a constant multiple of $Q'$. 
Moreover, given two sets of quad-equations $\{P_1,\dots,P_n\}$ and $\{Q_1,\dots,Q_n\}$ such that there exists a permutation $\si\in\mathfrak{S}_n$ with
\begin{equation}
 P_1\equiv Q_{\si(1)}\,,\dots,P_n\equiv Q_{\si(n)},
\end{equation}
we write
\begin{equation}
 \{P_1,\dots,P_n\}=\{Q_1,\dots,Q_n\}.
\end{equation}
\end{definition}

\begin{definition}\rm
Let $r$ be the M\"obius transformation of the variables $x$, $y$, $z$, $w$, given by
\begin{equation}
 r.x=\frac{A_1x+B_1}{C_1x+D_1},\quad
 r.y=\frac{A_2y+B_2}{C_2y+D_2},\quad
 r.z=\frac{A_3z+B_3}{C_3z+D_3},\quad
 r.w=\frac{A_4w+B_4}{C_4w+D_4},
\end{equation}
where $A_i,B_i,C_i,D_i\in\bbC$ and $A_iD_i-B_iC_i\neq 0$ for $i=1,\dots,4$. We refer to such transformations of 4 variables as $\mathfrak M_4$. 
Then, we define the action of $r$ on a quad-equation $Q=Q(x,y,z,w)$ by
\begin{equation}
 r.Q=(C_1x+D_1)(C_2y+D_2)(C_3z+D_3)(C_4w+D_4)Q(r.x,r.y,r.z,r.w),
\end{equation}
which we write for conciseness and simplicity as 
\begin{equation}
 r.Q=Q(r.x,r.y,r.z,r.w),
\end{equation}
as these are equivalent as quad-equations.
\end{definition}

\begin{definition}\rm
A quad-equation $P=P(x,y,z,w)$ is said to be equivalent up to M\"obius transformations to a quad-equation $Q=Q(x,y,z,w)$, 
if there exists an $\mathfrak M_4$ transformation $r$ such that
\begin{equation}
 P\equiv r.Q.
\end{equation}
In this case, the equivalence is denoted by 
\begin{equation}
 P\sim_m Q.
\end{equation}
\end{definition}

\subsection{Background}
\label{subsection:Background}
Integrable systems are widely applicable models of science, occurring in fluid dynamics, particle physics and optics. 
The prototypical example is the famous Korteweg-de Vries (KdV) equation whose solitary wave-like solutions interact elastically like particles, leading to the invention of the term \textit{soliton}. 
It is then natural to ask which discrete versions of such equations are also integrable. 
This question turns out to be related to consistency conditions for polynomials associated to faces of cubes as we explain below.

Integrable discrete systems were discovered \cite{NQC1983:MR719638,NCWQ1984:MR763123,QNCL1984:MR761644,nimmo1998integrable} from mappings that turn out to be consistent on multi-dimensional cubes. (We note that there are additional systems that do not fall into this class, see e.g., \cite[Chapter 3]{HJN2016:MR3587455}.) 
These are quad-equations in the sense of Definition \ref{def:quad-eqn}. 
In \cite{ABS2003:MR1962121,ABS2009:MR2503862,BollR2011:MR2846098,BollR2012:MR3010833,BollR:thesis}, 
Adler-Bobenko-Suris {\it et al.} classified quad-equations satisfying the following properties.
Consider a cube as shown in Figure \ref{fig:cube}. 
\begin{figure}[H]
\begin{center}
\includegraphics[width=0.4\textwidth]{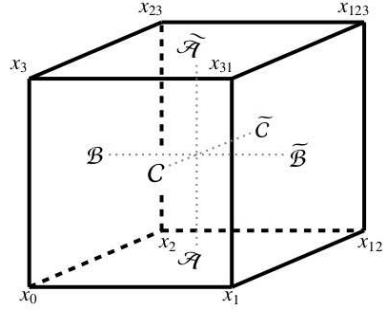}
\end{center}
\caption{A cube with variables $x_0$, \dots, $x_{123}$ associated with vertices and 
 equations $\mathcal A$, $\mathcal B$, $\mathcal C$, $\widetilde{\mathcal A}$, $\widetilde{\mathcal B}$, $\widetilde{\mathcal C}$ associated with faces.
}
\label{fig:cube}
\end{figure}
\noindent We assume that each face is associated with a quad-equation, and label these by
$\mathcal A$, $\mathcal B$, $\mathcal C$, $\widetilde{\mathcal A}$, $\widetilde{\mathcal B}$, $\widetilde{\mathcal C}$ as indicated in Figure \ref{fig:cube}, where the arguments of each polynomial are given in terms of the vertices as follows:
\begin{subequations}\label{eq:sixfaces}
\begin{align}
 {\mathcal A}(x_0,x_1,x_2,x_{12})&=0,\ 
 &\mathcal B(x_0,x_2,x_3,x_{23})&=0,\ 
 &\mathcal C(x_0,x_3,x_1,x_{31})&=0,\\
 \widetilde{\mathcal A}(x_3,x_{31},x_{23},x_{123})&=0,\ 
 &\widetilde{\mathcal B}(x_1,x_{12},x_{31},x_{123})&=0,\ 
 &\widetilde{\mathcal C}(x_2,x_{23},x_{12},x_{123})&=0.
\end{align}
\end{subequations}
Note that the six equations \eqref{eq:sixfaces} provide 3 different ways of evaluating $x_{123}$ in terms of $x_0$, $x_1$, $x_2$ and $x_3$. 
The requirement of consistency of these values gives rise to the following definition.
\begin{definition}[CAC and tetrahedron properties]~\\[-1.5em]
\begin{enumerate}[leftmargin=1.4cm,label={\rm (\alph*)}]
\item 
When the 3 results for $x_{123}$ given by equations \eqref{eq:sixfaces} are equal, this system of quad-equations is said to be \emph{3D consistent} or \emph{consistent around a cube} (CAC).
\item 
When the result for $x_{123}$ turns out to depend only on $x_1$, $x_2$, $x_3$, and $x_0$ depends only on $x_{12}$, $x_{23}$, and $x_{31}$, the system \eqref{eq:sixfaces} is said to have the \emph{tetrahedron property}.
\end{enumerate}
\end{definition}

A system of equations \eqref{eq:sixfaces} with the CAC property can be interpreted as a system of {\PDE}s on a cubic lattice $\bbZ^3$, by identifying iterates of a variable $U_{l,m,n}$ with the values at each vertex. 
We refer to such \PDE s as ABS equations. 
It turns out that ABS equations contain many well known integrable \PDE s \cite{NC1995:MR1329559,NCWQ1984:MR763123,NQC1983:MR719638,HirotaR1977:MR0460934}.

Reductions of integrable PDEs lead to Painlev\'e equations, which first arose in the search for new transcendental functions in the early 1900's\cite{PainleveP1902:MR1554937,GambierB1910:MR1555055,FuchsR1905:quelques}. 
Again a natural question is to ask whether discrete versions exist with analogous properties. 
This question led to the discovery of second-order difference equations called the discrete Painlev\'e equations\cite{GR2004:MR2087743,KNY2017:MR3609039,QNCL1984:MR761644,J2020:rnoti,joshi2019discrete}). It is now well-known that discrete Painlev\'e equations have initial value spaces with geometric structures that can be identified with root systems and affine Weyl groups \cite{SakaiH2001:MR1882403}. 
Sakai showed that there are 22 types of initial value spaces as shown in Table \ref{tab:sakai}.

\begin{table}[H]
\begin{center}
\caption{Types of spaces of initial values.}~\\
\label{tab:sakai}
\begin{tabular}{|l|l|}
\hline
Discrete type&Type of space of initial values\\
\hline
Elliptic&$A_0^{(1)}$\rule[-.5em]{0em}{1.6em}\\
\hline
Multiplicative&$A_0^{(1)\ast}$, $A_1^{(1)}$, $A_2^{(1)}$, $A_3^{(1)}$, \dots, $A_8^{(1)}$, $A_7^{(1)'}$\rule[-.5em]{0em}{1.6em}\\
\hline
Additive&$A_0^{(1)\ast\ast}$, $A_1^{(1)\ast}$, $A_2^{(1)\ast}$, $D_4^{(1)}$, \dots, $D_8^{(1)}$, $E_6^{(1)}$, $E_7^{(1)}$, $E_8^{(1)}$\rule[-.5em]{0em}{1.6em}\\
\hline
\end{tabular}
\end{center}
\end{table}

\subsection{Outline of the paper}
In this paper, we list systems of quad-equations arising from consistency conditions on a cuboctahedron. 
The setting is described in further detail and definitions are given in \S \ref{section:preliminary}. 
In \S \ref{section:classification_Cuboctahedron}, we provide a sequence of lemmas that lead to the main result Theorem \ref{theo:classification_CACO}. 
Detailed proofs of lemmas are proved in Appendices \ref{section:proof_lemma_CAO}, \ref{section:proof_lemma_r}, \ref{section:proof_lemma_CQ123456789_1} and \ref{section:proof_lemma_CACO_CO1CO2CO3}. 
Conditions on the parameters involved in these steps are proved in Appendix \ref{section:list_theorem_conditions}.

\section{Preliminary steps}\label{section:preliminary}
We start by recalling polynomials classified by Adler {\it et al.} \cite{ABS2003:MR1962121,ABS2009:MR2503862} and Boll \cite{BollR2011:MR2846098,BollR2012:MR3010833,BollR:thesis}, which give rise to quad-equations.
We then define the consistency property for quad-equations on octahedra before defining consistency for quad-equations defined on cuboctahedra.
In \S \ref{subsection:CACOPropPDEs}, we extend these polynomials to {\PDE}s on the FCC lattice before providing an example of such a {\PDE} system in \S \ref{subsection:example_PDEs}.

\subsection{Classifications of quad-equations modulo M\"obius transformations}
We show here that several ABS polynomials can be rewritten in terms of another set of polynomials (called $\rm N3$, $\rm N2$, $\rm N1$ below), which play a prominent role in the study of quad-equations on a cuboctahedron.

\begin{lemma}[Adler {\it et al.} \cite{ABS2009:MR2503862,ABS2003:MR1962121,BollR2012:MR3010833,BollR2011:MR2846098,BollR:thesis}]
\label{lemma:classification_ABS_polys}
Under the equivalence relation $\sim_m$, any quad-equation $Q(x,y,z,w)$ is equivalent to one of the following quad-equations:
{\allowdisplaybreaks
\begin{align}
 &\polyQ4(x,y,z,w;\al,\be)
 =\sn\al\sn\be\sn{\al+\be}(k^2xyzw+1)\notag\\
 &\hspace{8em}-\sn\al(xy+zw)-\sn\be(xw+yz)\notag\\
 &\hspace{8em}+\sn{\al+\be}(xz+yw),
 \tag{$\polyQ4$}\label{eqn:Q4}\\
 &\polyQ3(x,y,z,w;\al,\be;\de)
 =(\al-\al^{-1})(xy+zw)+(\be-\be^{-1})(xw+yz)\notag\\
 &\hspace{9em}-(\al\be-\al^{-1}\be^{-1})(xz+yw)\notag\\
 &\hspace{9em}+\frac{\de(\al-\al^{-1})(\be-\be^{-1})(\al\be-\al^{-1}\be^{-1})}{4},
 \tag{$\polyQ3$}\label{eqn:Q3}\\
 &\polyQ2(x,y,z,w;\al,\be)
 =\al(x-w)(y-z)+\be(x-y)(w-z)\notag\\
 &\hspace{8em}-\al\be(\al+\be)(x+y+z+w)\notag\\
 &\hspace{8em}+\al\be(\al+\be)(\al^2+\al\be+\be^2),
 \tag{$\polyQ2$}\label{eqn:Q2}\\
 &\polyQ1(x,y,z,w;\al,\be;\de)=\al(x-w)(y-z)+\be(x-y)(w-z)\notag\\
 &\hspace{9em}-\de\al\be(\al+\be),
 \tag{$\polyQ1$}\label{eqn:Q1}\\
 &\polyH3(x,y,z,w;\al,\be;\de_1,\de_2)=\al(xy+zw)-\be(xw+yz)\notag\\
 &\hspace{10.5em}+(\al^2-\be^2)\left(\de_1+\dfrac{\de_2}{\al\be}yw\right),
 \tag{$\polyH3$}\label{eqn:H3}\\
 &\polyH2(x,y,z,w;\al,\be;\de)=(x-z)(y-w)+(\be-\al)(x+y+z+w)\notag\\
 &\hspace{9em}+\be^2-\al^2+\de(\be-\al)(2y+\al+\be)(2w+\al+\be)\notag\\
 &\hspace{9em}+\de(\be-\al)^3,
 \tag{$\polyH2$}\label{eqn:H2}\\
 &\polyH1(x,y,z,w;\al,\be;\de)=(x-z)(y-w)+(\be-\al)(1+\de yw),\tag{$\polyH1$}\label{eqn:H1}\\[0.2em]
 &\polyD4(x,y,z,w;\de_1,\de_2,\de_3)=xz+yw+\de_1yz+\de_2zw+\de_3,\tag{$\polyD4$}\label{eqn:D4}\\[0.2em]
 &\polyD3(x,y,z,w)=y+xz+xw+zw,\tag{$\polyD3$}\label{eqn:D3}\\[0.2em]
 &\polyD2(x,y,z,w;\de)=x+z+y(w+\de x),\tag{$\polyD2$}\label{eqn:D2}\\[0.2em]
 &\polyD1(x,y,z,w)=x+y+z+w,\tag{$\polyD1$}\label{eqn:D1}
\end{align}
}\noindent
where $\al,\be\in\bbC$, $\de,\de_1,\de_2,\de_3\in\{0,1\}$ and the parameter $k$ $(k\neq 0,1)$ is the modulus of the Jacobi elliptic function $\sn{t}$.
The first four types are collectively called $Q$-type equations, while the remaining equations are called $H$-type, but separated into $H^4$- {\rm (for \eqref{eqn:H3}--\eqref{eqn:H1})} and $H^6$-types {\rm (for \eqref{eqn:D4}--\eqref{eqn:D1})} respectively.
Note that current form of $\polyQ4$ given by the Jacobi elliptic function was first found in \cite{HietarintaJ2005:MR2217106}.
\end{lemma}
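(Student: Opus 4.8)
## Proof proposal for Lemma \ref{lemma:classification_ABS_polys}

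The plan is to reduce the statement to the already-published classification theorems of Adler--Bobenko--Suris and of Boll, rather than to reprove it from scratch. First I would recall that any multi-affine polynomial $Q(x,y,z,w)$ in four variables carries a natural $\mathfrak{M}_4$-action, and that the quantities controlling the classification are the so-called \emph{biquadratic discriminants} attached to each pair of opposite edges of the underlying square together with the discriminant structure that measures degeneracy. The key organizing principle is that these discriminants transform covariantly under $\mathfrak{M}_4$, so their pattern of vanishing (the ``degree'' of the biquadratics, whether they have repeated roots, etc.) is an invariant of the $\sim_m$-equivalence class. This is exactly the invariant used by ABS and Boll to stratify the space of multi-affine polynomials.

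The main steps I would carry out are: \textbf{(1)} Normalize $Q$ by an $\mathfrak{M}_4$ transformation so that as many of its coefficients as possible are brought to standard position; this uses up the $4 \times 3 = 12$ parameters of $\mathfrak{M}_4$ against the $15$ projective parameters of a multi-affine $Q$, leaving essentially a three-parameter family before discrete normalizations, matching the parameter counts $(\al,\be;\de\text{'s})$ appearing in the list. \textbf{(2)} Split into cases according to the discriminant data: the non-degenerate case (biquadratics of full degree with distinct roots) yields the $Q$-type list, with \eqref{eqn:Q4} as the generic elliptic representative and \eqref{eqn:Q3}, \eqref{eqn:Q2}, \eqref{eqn:Q1} as its rational/trigonometric degenerations; the intermediate degenerate cases yield the $H^4$-type list \eqref{eqn:H3}--\eqref{eqn:H1}; and the most degenerate cases (where the biquadratics factor or drop degree) yield the $H^6$-type list \eqref{eqn:D4}--\eqref{eqn:D1}. \textbf{(3)} In each stratum, solve the residual normalization to pin down the explicit canonical form and verify, by direct substitution, that each listed polynomial is indeed irreducible multi-affine and that no two lie in the same $\sim_m$-class (so the list has no redundancy). \textbf{(4)} Cite \cite{ABS2003:MR1962121,ABS2009:MR2503862} for the $Q$- and $H^4$-type strata and \cite{BollR2011:MR2846098,BollR2012:MR3010833,BollR:thesis} for the $H^6$-type strata, and \cite{HietarintaJ2005:MR2217106} for the Jacobi-elliptic presentation of \eqref{eqn:Q4}, so that the proof consists of assembling these pieces into the single statement above.

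The hard part, if one wanted a fully self-contained argument, would be step (2): carefully enumerating the degeneration strata of the biquadratic/discriminant data and showing that the enumeration is exhaustive and that each stratum collapses to exactly one canonical representative under the residual $\mathfrak{M}_4$-freedom. This is where the original papers do substantial casework. Since those results are published, however, I would treat this as a \emph{restatement} lemma: the only genuinely new content here is the bookkeeping of collecting the $Q$-, $H^4$- and $H^6$-lists into one uniform list in the notation of the present paper, and checking that the normalizations quoted (the ranges $\de,\de_i \in \{0,1\}$, the constraints $\al,\be \in \bbC$, $k \neq 0,1$) are compatible with how these polynomials will be used on the cuboctahedron in \S\ref{section:classification_Cuboctahedron}. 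Accordingly the proof I would write is short: quote the three groups of source theorems, note the elliptic-function reformulation of \eqref{eqn:Q4}, and observe that together they exhaust all $\sim_m$-classes of irreducible multi-affine polynomials in four variables.
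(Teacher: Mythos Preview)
Your proposal is correct and matches the paper's approach: Lemma \ref{lemma:classification_ABS_polys} is stated in the paper without proof, as a direct citation of the classification results of Adler--Bobenko--Suris and Boll, with the biquadratic/discriminant machinery recalled in the remark immediately following. Your plan to treat it as a restatement lemma, citing \cite{ABS2003:MR1962121,ABS2009:MR2503862} for the $Q$- and $H^4$-types, \cite{BollR2011:MR2846098,BollR2012:MR3010833,BollR:thesis} for the $H^6$-types, and \cite{HietarintaJ2005:MR2217106} for the elliptic form of \eqref{eqn:Q4}, is exactly what the paper does.
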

In the proof of Lemma \ref{lemma:classification_ABS_polys}, certain biquadratics and discriminants played a prominent role. We recall them here in order to prove that the polynomials in that lemma are equivalent to another set of polynomials below.

\begin{remark}
Let the irreducible multi-affine polynomial be given by $Q(x_1, x_2, x_3, x_4)$. 
Then, the six biquadratics $h^{ij}$ are given by
\begin{equation}
 h^{ij}(x_i, x_j)=\frac{\partial Q}{\partial x_k}\frac{\partial Q}{\partial x_l}-Q\,\frac{\partial^2 Q}{\partial x_k\partial x_l},
\end{equation}
where $\{k,l\}$ is the complement of $\{i, j\}$ in $\{1,2,3,4\}$,
and, furthermore, the four discriminants $r_i$, $i=1,2,3,4$, are given by
\begin{equation}
r_i(x_i)=\left(\frac{\partial h^{ij}}{\partial x_j}\right)^2-2 h^{ij}\left(\frac{\partial^2 h^{ij}}{\partial x_j^2}\right).
\end{equation}
Note that $r_i$ does not depend on the choice of $j$.
The polynomials listed in Lemma \ref{lemma:classification_ABS_polys} are categorized according to whether $h^{ij}$ is degenerate or not and subsequently by the multiplicity of the zeroes of $r_i$. 
Here, $h^{ij}$ is called degenerate if the solution of $h^{ij}=0$ includes points (rather than lines).
\end{remark}

For later consideration on cuboctahedra, it is convenient to define the following three polynomials:
\begin{align}
 &\polyN3(x,y,z,w;\al_1,\al_2,\al_3,\al_4)=\al_1 xy+\al_2 zw+\al_3 xw+\al_4 yz,\tag{$\polyN3$}\\
 &\polyN2(x,y,z,w;\al_1,\al_2,\al_3,\al_4)=(x+z)(\al_1 y+\al_2 w)+\al_3 y+\al_4 w,\tag{$\polyN2$}\\
 &\polyN1(x,y,z,w;\al_1,\al_2,\al_3,\al_4)=\al_1(x+z)(y+w)+\al_2(x+z)+\al_3(y+w)+\al_4,\tag{$\polyN1$}
\end{align}
where $\al_i\in\bbC$, $i=1,2,3,4$.
We will refer to these polynomials as $N$-type. 

We show below that the quad-equations $\polyH3_{\de_1=\de_2=0}$, $\polyH1$, $\polyD4_{\de_1=\de_2=0}$, $\polyD2_{\de=0}$ and $\polyD1$ are equivalent to certain cases of $N$-type quad-equations. 
In other words, Lemma \ref{lemma:classification_ABS_polys} can be rewritten as the following.
\begin{lemma}
\label{lemma:classification_JN_polys}
Under the equivalence relation $\sim_m$, any quad-equation $Q(x,y,z,w)$ is equivalent to one of the following quad-equations:
\begin{equation*}
\begin{array}{lll}
 \polyQ4(x,y,z,w;\al,\be),
 &\polyQ3(x,y,z,w;\al,\be;\de),
 &\polyQ2(x,y,z,w;\al,\be),\\
 \polyQ1(x,y,z,w;\al,\be;\de),
 &\polyH3(x,y,z,w;\al,\be;\de_1,\de_2),
 &\polyH2(x,y,z,w;\al,\be;\de),\\
 \polyD4(x,y,z,w;\de_1,\de_2,\de_3),
 &\polyD3(x,y,z,w),
 &\polyD2(x,y,z,w;1),\\
 \polyN3(x,y,z,w;\al_1,\al_2,\al_3,\al_4),
 &\multicolumn{2}{l}{\polyN2(x,y,z,w;\al_1,\al_2,\al_3,\al_4),}\\
 \polyN1(x,y,z,w;\al_1,\al_2,\al_3,\al_4),
\end{array}
\end{equation*}
where $(\de_1,\de_2)\neq(0,0)$ for $\polyH3$ and $\polyD4$.
\end{lemma}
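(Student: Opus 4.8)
The strategy is to start from the classification already established in Lemma~\ref{lemma:classification_ABS_polys} and show that the five degenerate cases singled out in the statement --- namely $\polyH3_{\de_1=\de_2=0}$, $\polyH1$, $\polyD4_{\de_1=\de_2=0}$, $\polyD2_{\de=0}$ and $\polyD1$ --- are each $\sim_m$-equivalent to an appropriate specialization (or the generic form) of one of $\polyN3$, $\polyN2$, $\polyN1$. Since $\sim_m$ is an equivalence relation, replacing these five items in the list of Lemma~\ref{lemma:classification_ABS_polys} by the three $N$-type polynomials (which, being families with four free coefficients $\al_i\in\bbC$, absorb all the required specializations) yields exactly the list asserted in Lemma~\ref{lemma:classification_JN_polys}. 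Conversely one checks that each $N$-type polynomial is itself irreducible multi-affine for generic $\al_i$ (so that it genuinely defines a quad-equation) and is $\sim_m$-equivalent to one of the ABS polynomials, so nothing new is introduced.

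\textbf{Key steps.} First I would record the direct reductions. For $\polyH1(x,y,z,w;\al,\be;0)=(x-z)(y-w)+(\be-\al)$, expanding gives $xy-xw-zy+zw+(\be-\al)$, which is visibly of the form $\polyN3(x,y,z,w;1,1,-1,-1)$ after the trivial additive constant is folded in via a M\"obius shift; more systematically, $\polyH1$ with $\de=1$ pairs with $\polyN1$ and $\polyH1$ with $\de=0$ degenerates onto an $N3$-type equation --- I would identify the precise coefficients by comparing the biquadratics $h^{ij}$ and the discriminants $r_i$ recalled in the Remark, since the $N$-type polynomials are distinguished from one another (and from the non-degenerate $Q$- and $H^4$-types) exactly by the degeneracy pattern of $h^{ij}$ and by whether $r_i\equiv 0$. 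Second, for $\polyD4_{\de_1=\de_2=0}(x,y,z,w;0,0,\de_3)=xz+yw+\de_3$: when $\de_3=1$ a M\"obius rescaling in one variable turns this into $\polyN3(x,y,z,w;0,0,1,1)$-type up to relabeling, while $\de_3=0$ gives $xz+yw$, again an $N3$ specialization; but one must be careful, since $\polyD4$ with $(\de_1,\de_2)\neq(0,0)$ is retained in the new list, so the reduction is only applied in the stated range. Third, $\polyD2_{\de=0}(x,y,z,w;0)=x+z+yw$ is multi-affine and matches $\polyN2(x,y,z,w;1,0,1,0)$ up to a permutation of arguments; and $\polyD1=x+y+z+w$ is the obvious degenerate case of $\polyN1$ with $\al_1=0$, $\al_2=\al_3=1$, $\al_4=0$ (again with a suitable pairing of variables into $x+z$ and $y+w$). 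Fourth, $\polyH3_{\de_1=\de_2=0}(x,y,z,w;\al,\be;0,0)=\al(xy+zw)-\be(xw+yz)$ is literally $\polyN3(x,y,z,w;\al,\al,-\be,-\be)$. After assembling these identifications, I would verify in each case that the realizing $\mathfrak M_4$ transformation has all four Jacobians $A_iD_i-B_iC_i\neq 0$ so that it lies in $\mathfrak M_4$, and that the resulting $N$-type polynomial is irreducible (no vanishing of the coefficients forced by the quad-equation property), invoking the standing conventions noted in the Remarks about irreducibility.

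\textbf{Main obstacle.} The genuinely delicate point is not any single reduction --- each is a short computation --- but rather establishing that the three $N$-type families, taken with all their coefficient specializations, exhaust \emph{precisely} the classes listed and do not collapse further into each other or into the retained $Q$-, $H^4$-, $H^6$-type entries, nor split into more cases. This is where the biquadratic/discriminant invariants do the real work: I would compute $h^{ij}$ and $r_i$ for each of $\polyN3$, $\polyN2$, $\polyN1$ (for generic $\al_i$, and then track how they degenerate under the relevant coefficient specializations), check that these match the invariants of $\polyH3_{\de_1=\de_2=0}$, $\polyH1$, $\polyD4_{\de_1=\de_2=0}$, $\polyD2_{\de=0}$, $\polyD1$ respectively, and confirm that a generic $N$-type polynomial is not $\sim_m$-equivalent to any surviving item on the list --- since $\sim_m$ preserves the degeneracy type of $h^{ij}$ and the multiplicity structure of the zeroes of $r_i$, this is a finite check. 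A secondary bookkeeping issue is the treatment of the excluded values (the parentheticals ``$(\de_1,\de_2)\neq(0,0)$ for $\polyH3$ and $\polyD4$'' and the replacement of $\polyD2(x,y,z,w;\de)$ by $\polyD2(x,y,z,w;1)$): one must make sure that every polynomial removed from the original list reappears as an $N$-type instance and that nothing in the new list is redundant. I expect the proof to reduce to a table of biquadratics and discriminants together with the explicit M\"obius substitutions, with the bulk of the detailed verification deferred to an appendix in the spirit of the rest of the paper.
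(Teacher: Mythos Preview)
Your overall strategy is the paper's: start from Lemma~\ref{lemma:classification_ABS_polys} and show that the five degenerate cases are absorbed by the $N$-type families, with the biquadratics $h^{ij}$ and discriminants $r_i$ doing the bookkeeping. Your identification $\polyH3_{\de_1=\de_2=0}=\polyN3(x,y,z,w;\al,\al,-\be,-\be)$ is exactly right, and your ``main obstacle'' paragraph describes precisely the check the paper performs.

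However, several of your specific pairings are wrong and would not survive the biquadratic computation you propose to defer. The correct correspondences are
\[
\polyN3\ \longleftrightarrow\ \{\polyH3_{\de_1=\de_2=0},\ \polyD4_{\de_1=\de_2=0}\},\quad
\polyN2\ \longleftrightarrow\ \{\polyH1_{\de=1},\ \polyD2_{\de=0}\},\quad
\polyN1\ \longleftrightarrow\ \{\polyH1_{\de=0},\ \polyD1\}.
\]
Your claim that $\polyH1_{\de=0}=(x-z)(y-w)+(\be-\al)$ lands in $\polyN3$ cannot work: every monomial of $\polyN3$ has total degree $2$, so no $\mathfrak M_4$-shift ``folds in'' an additive constant without generating linear terms; and your candidate $\polyN3(x,y,z,w;1,1,-1,-1)=(x-z)(y-w)$ is reducible, hence not a quad-equation at all. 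The correct home is $\polyN1$ (send $z\mapsto-z$, $w\mapsto-w$ to get $(x+z)(y+w)+(\be-\al)=\polyN1(x,y,z,w;1,0,0,\be-\al)$). Likewise $\polyH1_{\de=1}=(x-z)(y-w)+(\be-\al)(1+yw)$ carries a $yw$ term with no balancing $xz$, which matches the structure of $\polyN2$, not $\polyN1$; and your coefficients for $\polyD2_{\de=0}$ in $\polyN2$ are off (one needs a transformation such as $y\mapsto 1/y$ to land on $(x+z)y+w$).

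These are genuine errors, not just imprecision, but your instinct to let the invariants decide is correct and would repair all of them. The paper does exactly this: it computes $h^{ij}$ and $r_k$ for $\polyN3$ generically and at the various coefficient degenerations, matches them against the ABS list, and then asserts the analogous computations for $\polyN2$ and $\polyN1$ without writing them out.
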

\begin{proof}
The only assertions we need to prove concern the quad-equations $\polyH3_{\de_1=\de_2=0}$, $\polyH1$, $\polyD4_{\de_1=\de_2=0}$, $\polyD2_{\de=0}$ and $\polyD1$. 
Our argument relies on the properties of their respective biquadratics and discriminants.
  
Notice that for $\al_i\neq0$, $i=1,2,3,4$, and $\al_1\al_2\neq \al_3\al_4$, the biquadratics and discriminants of $\polyN3(x,y,z,w;\al_1,\al_2,\al_3,\al_4)$ are given by 
\begin{subequations}\label{eqns:biquadratics_discriminants_N3}
\begin{align}
 &h^{ij}=x_ix_j,~i\in\{1,3\},~j\in\{2,4\},
 &&h^{13}=(\al_3x_1+\al_2x_3)(\al_1x_1+\al_4x_3),\\
 &h^{24}=(\al_4x_2+\al_2x_4)(\al_1x_2+\al_3x_4),
 &&r_k={x_k}^2,~k=1,2,3,4.
\end{align}
\end{subequations}
We note that these are equivalent to the corresponding biquadratics and discriminants for $\polyH3_{\de_1=\de_2=0}$ under the condition $\al^2\neq \be^2$. 
Therefore, by the ABS classification results, these two polynomials ($\polyN3$ and $\polyH3_{\de_1=\de_2=0}$ with conditions) give equivalent quad-equations.

Now we consider the case where one of the parameters $\al_i$ is zero, while the remaining parameters $\al_j$ are not zero. 
Then, the biquadratics and discriminants of $\polyN3$ are given by \eqref{eqns:biquadratics_discriminants_N3} with $\al_i=0$.
This case corresponds to the biquadratic and discriminant of $\polyD4_{\de_1=\de_2=0,\de_3=1}$. 
Moreover, when $\{\al_1=\al_2=0, \al_3,\al_4\neq0\}$ or $\{\al_1,\al_2\neq0, \al_3=\al_4=0\}$,
the biquadratics and discriminants of $\polyN3$ are equivalent to the corresponding biquadratics and discriminants for $\polyD4_{\de_1=\de_2=\de_3=0}$. 

In other words, we gave shown that $\polyN3$ is equivalent to $\polyH3_{\de_1=\de_2=0}$ and $\polyD4_{\de_1=\de_2=0}$ with appropriate changes of variables and transformations of parameters.

The remaining cases can be proved in exactly the same way. 
We find that $\polyN2$ is equivalent to $\polyH1_{\de=1}$ and $\polyD2_{\de=0}$, while $\polyN1$ is equivalent to $\polyH1_{\de=0}$ and $\polyD1$. 
This completes the proof.
\end{proof}

\subsection{Consistency around an octahedron property}\label{subsection:def_CAO}
In this section, we give a definition of a consistency around an octahedron.

Consider the octahedron centered around the origin whose six vertices are given by
\begin{equation}
 V=\{\pm(\ep_1+\ep_2),\pm(\ep_2+\ep_3),\pm(\ep_1+\ep_3)\},
\end{equation}
where $\{\ep_1,\ep_2,\ep_3\}$ form the standard basis of $\bbR^3$.
We assign the variables $u(\bml)$ to the vertices $\bml\in V$
and impose the following relations:
\begin{equation}\label{eqn:octahedron_Q1Q2Q3}
 Q_1\left(u_4,u_2,u_1,u_5\right)=0,\quad
 Q_2\left(u_2,u_6,u_5,u_3\right)=0,\quad
 Q_3\left(u_6,u_4,u_3,u_1\right)=0,
\end{equation}
where $Q_i$, $i=1,2,3$, are quad-equations and
\begin{equation}
\begin{split}
 &u_1=u(\ep_2+\ep_3),\quad
 u_2=u(-\ep_1-\ep_3),\quad
 u_3=u(\ep_1+\ep_2),\quad
 u_4=u(-\ep_2-\ep_3),\\
 &u_5=u(\ep_1+\ep_3),\quad
 u_6=u(-\ep_1-\ep_2).
\end{split}
\end{equation}
The planes that pass through the vertices $\{u_4,u_2,u_1,u_5\}$, $\{u_2,u_6,u_5,u_3\}$ and $\{u_6,u_4,u_3,u_1\}$ give 3 quadrilaterals that lie in the interior of the octahedron (see Figure \ref{fig:octahedron_3D}).
The quad-equations $Q_i$, $i=1,2,3$, are assigned to the quadrilaterals.
The consistency around an octahedron property is defined by the following.

\begin{figure}[htbp]
 \begin{center}
 \includegraphics[width=0.4\textwidth]{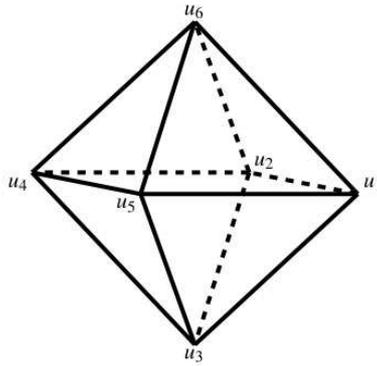}
 \end{center}
\caption{An octahedron labelled with vertices $u_1$,\dots,$u_6$.}
\label{fig:octahedron_3D}
\end{figure}

\begin{definition}\rm
The octahedron with quad-equations $\{Q_1,Q_2,Q_3\}$ is said to have a {\it consistency around an octahedron (CAO) property}, 
if each quad-equation can be obtained from the other two equations.
An octahedron is said to be a {\it CAO octahedron}, if it has the CAO property.
\end{definition}

\subsection{Consistency around a cuboctahedron property}\label{subsection:def_CACO}
In this section, we give a definition of a consistency around a cuboctahedron.

We consider the cuboctahedron centered around the origin whose twelve vertices are given by
\begin{equation}
 V=\set{\pm\ep_i\pm\ep_j}{i,j\in\bbZ,~1\leq i<j\leq 3},
\end{equation}
where $\{\ep_1,\ep_2,\ep_3\}$ form the standard basis of $\bbR^3$.
We assign the variables $u(\bml)$ to the vertices $\bml\in V$
and impose the following relations:
\begin{subequations}\label{eqns:V_caco_general}
\begin{align}
 &Q_1\left(u_5,u_1,v_5,v_4\right)=0,\quad
 Q_2\left(v_2,v_1,u_2,u_4\right)=0,\quad
 Q_3\left(u_3,u_5,v_3,v_2\right)=0,
 \label{eqn:V_caco_general_1}\\
 &Q_4\left(v_6,v_5,u_6,u_2\right)=0,\quad
 Q_5\left(u_1,u_3,v_1,v_6\right)=0,\quad
 Q_6\left(v_4,v_3,u_4,u_6\right)=0,
 \label{eqn:V_caco_general_2}\\
 &Q_7\left(u_4,u_2,u_1,u_5\right)=0,\quad
 Q_8\left(u_2,u_6,u_5,u_3\right)=0,\quad
 Q_9\left(u_6,u_4,u_3,u_1\right)=0,
 \label{eqn:V_caco_general_3}
\end{align}
\end{subequations}
where $Q_i$, $i=1,\dots,9$, are quad-equations and
\begin{equation}\label{eqn:ui_vi_ufunction}
\begin{array}{llll}
 u_1=u(\ep_2+\ep_3),
 &u_2=u(-\ep_1-\ep_3),
 &u_3=u(\ep_1+\ep_2),
 &u_4=u(-\ep_2-\ep_3),\\
 u_5=u(\ep_1+\ep_3),
 &u_6=u(-\ep_1-\ep_2),
 &v_1=u(\ep_2-\ep_3),
 &v_2=u(\ep_1-\ep_3),\\
 v_3=u(\ep_1-\ep_2),
 &v_4=u(-\ep_2+\ep_3),
 &v_5=u(-\ep_1+\ep_3),
 &v_6=u(-\ep_1+\ep_2).
\end{array}
\end{equation}
Note that quad-equations $Q_i$, $i=1,\dots,6$, are assigned to the faces of the cuboctahedron.
Moreover, $u_i$, $i=1,\dots,6$, collectively form the vertices of an octahedron
and quad-equations $Q_i$, $i=7,8,9$, are assigned to the quadrilaterals that appear as sections passing through four vertices of the octahedron.
(See Figure \ref{fig:cuboctahedron_3D}.)

\begin{figure}[htbp]
\centering
\begin{subfigure}[b]{0.43\textwidth}
\centering
 \includegraphics[width=\textwidth]{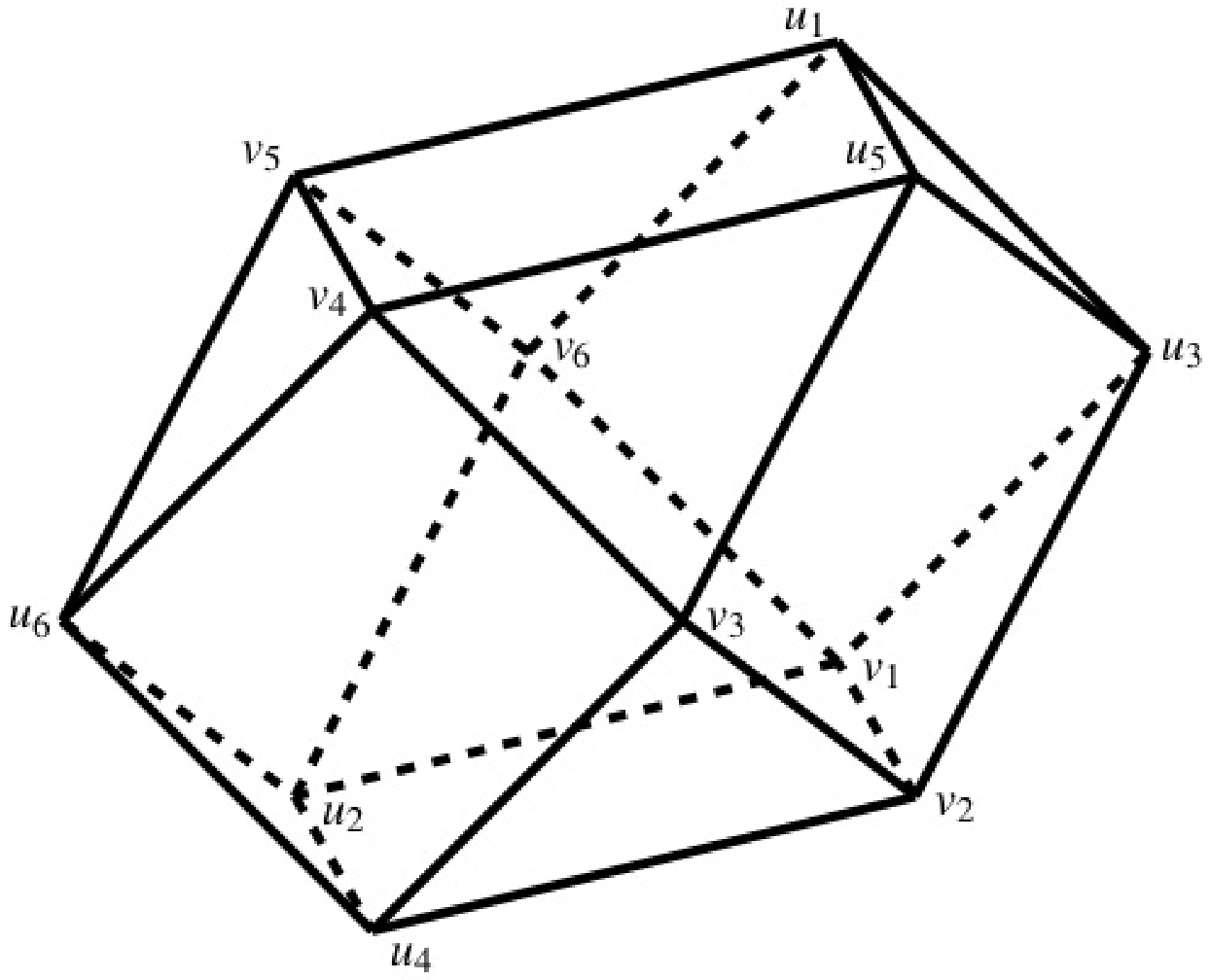}
 \caption{A cuboctahedron labelled with vertices $v_1$,\dots,$v_6$, $u_1$,\dots,$u_6$.}
 \label{fig:cuboctahedron_3D_a}
\end{subfigure}
\hspace{2em}
\begin{subfigure}[b]{0.43\textwidth}
\centering
 \includegraphics[width=\textwidth]{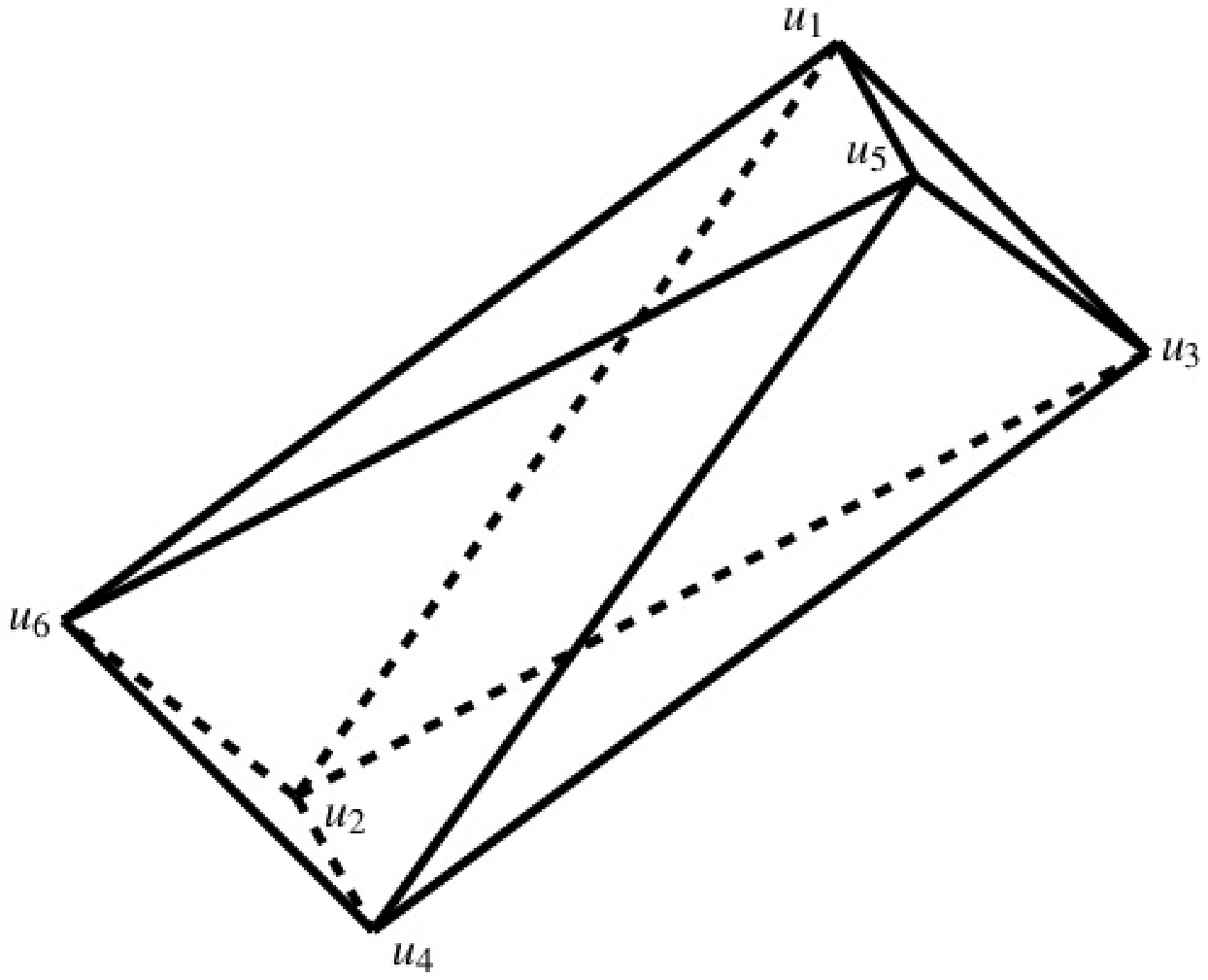}
 \caption{An octahedron labelled with vertices $u_1$,\dots,$u_6$.}
\end{subfigure}
 \caption{A cuboctahedron and an interior octahedron.}
 \label{fig:cuboctahedron_3D}
\end{figure}

\begin{definition}[CACO property]\label{def:CACO_cuboctahedron}\rm
The cuboctahedron with quad-equations $\{Q_1,\dots,Q_9\}$ is said to have a {\it consistency around a cuboctahedron (CACO) property}, 
if the following properties hold.
\begin{description}
\item[(i)] 
The octahedron with quad-equations $\{Q_7,Q_8,Q_9\}$ has the CAO property.
\item[(ii)]
Assume that $u_1,\dots,u_6$ are given so as to satisfy $Q_i=0$, $i=7,8,9$, and, in addition, $v_k$ is given, for some $k\in\{1,\dots,6\}$.
Then, quad-equations $Q_i$, $i=1\dots,6$, determine the variables $v_j$, $j\in\{1,\dots,6\}\backslash\{k\}$, uniquely.
\end{description}
A cuboctahedron is said to be a {\it CACO cuboctahedron}, if it has the CACO property.
\end{definition}

\begin{definition}[Square property]\rm
The CACO cuboctahedron with quad-equations $\{Q_1,\dots,Q_9\}$ is said to have a {\it square property}, 
if there exist polynomials $K_i=K_i(x,y,z,w)$, $i=1,2,3$, where
\begin{equation}
 \deg_x{K_i}=\deg_w{K_i}=1,\quad
 1\leq\deg_y{K_i},\,\deg_z{K_i},
\end{equation}
satisfying
\begin{equation}
 K_1(v_1,u_1,u_4,v_4)=0,\quad
 K_2(v_2,u_2,u_5,v_5)=0,\quad
 K_3(v_3,u_3,u_6,v_6)=0. 
\end{equation}
Then, each equation $K_i=0$ is called a {\it square equation}.
\end{definition}

To explain Definition \ref{def:CACO_cuboctahedron} in plainer language, we use an orthogonal projection of a cuboctahedron to two dimensions, shown in Figure \ref{fig:cuboctahedron2D}. Note that the vertices labelled $u_1,\ldots, u_6$ form a hexagram, while the convex hull of the projection which connects the vertices $v_1,\ldots, v_6$ form a hexagon. Below, we refer to the collection of all such vertices in terms of $u({\bm p})$, by using the notation of Equation \eqref{eqn:ui_vi_ufunction}.
\begin{remark}\label{remark:CACO_2D}
Assume that all vertices in the inner hexagram and a vertex in the outer hexagon of an orthogonal projection of a cuboctahedron are given. (Refer to Figure \ref{fig:cuboctahedron2D}.) 

Label the given vertex in the outer hexagon by $u({\bm p})$, for some ${\bm p}$.
Then, there are two ways of obtaining the value of $u(-{\bm p})$.
One is obtained by using quad-equations that occur in the walk from $u({\bm p})$ to $u(-{\bm p})$ clockwise around the outer hexagon, while the other is by using quad-equations in the anti-clockwise walk around the hexagon.
The CACO property ensures that $u(-{\bm p})$ is determined uniquely regardless of the direction.

For example, if ${\bm p}=\ep_2-\ep_3$, then
$u({\bm p})=v_1$, while
$u(-{\bm p})=v_4$. Walking in a clockwise direction means using quad-equations $Q_i$, $i=2,3,6$, while the anti-clockwise direction means using quad-equations $Q_i$, $i=5,4,1$.
\end{remark}
 
\begin{figure}[H]
 \begin{minipage}{0.45\hsize}
 \begin{center}
  \hspace*{-2.em}\includegraphics[width=1\textwidth]{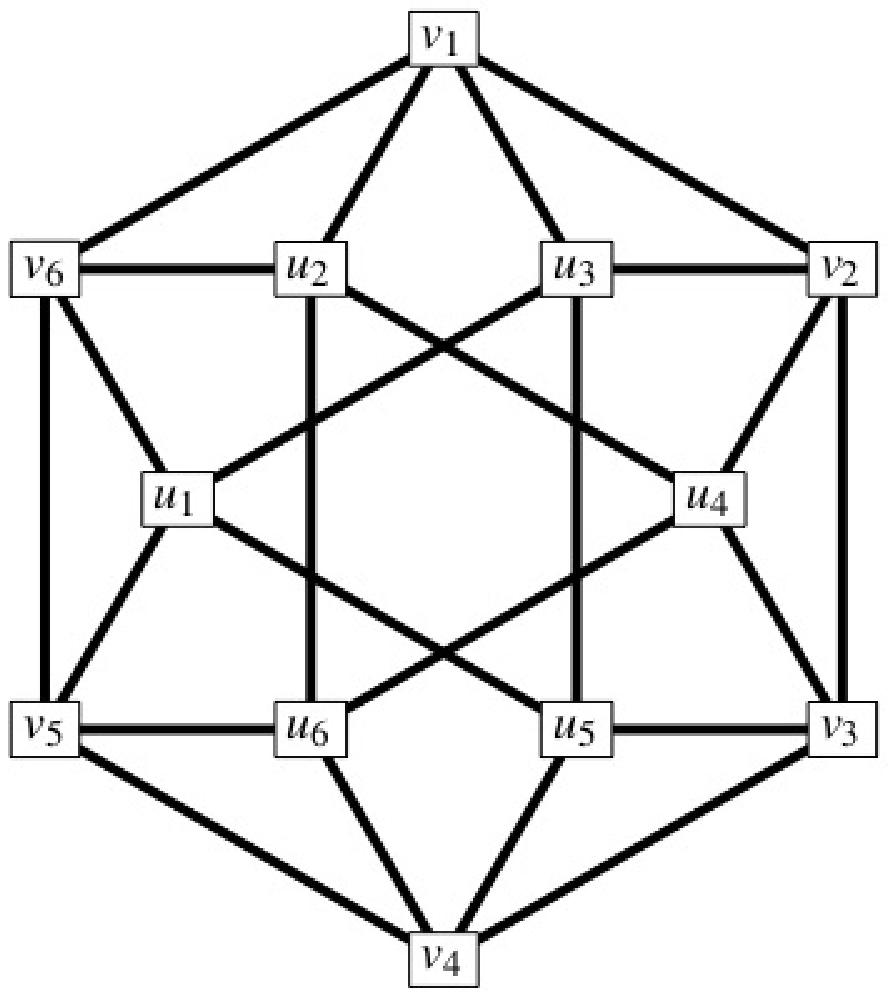}
 \end{center}
 \end{minipage}
 \begin{minipage}{0.5\hsize}
\begin{flushleft}
 \hspace*{-1.5em}\includegraphics[width=0.57\textwidth]{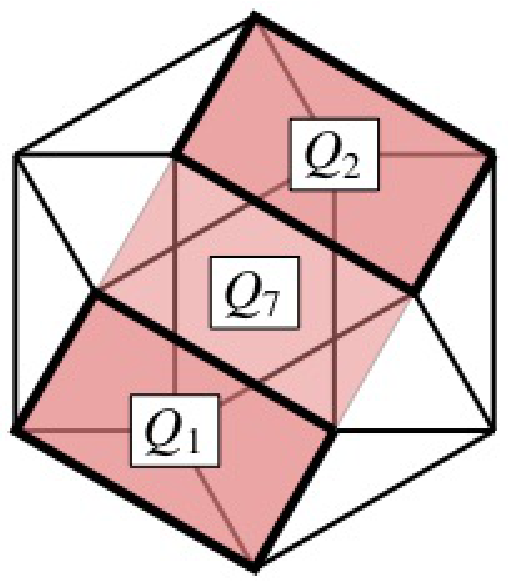}\\[-4.5em]
 \hspace*{8.5em}\includegraphics[width=0.57\textwidth]{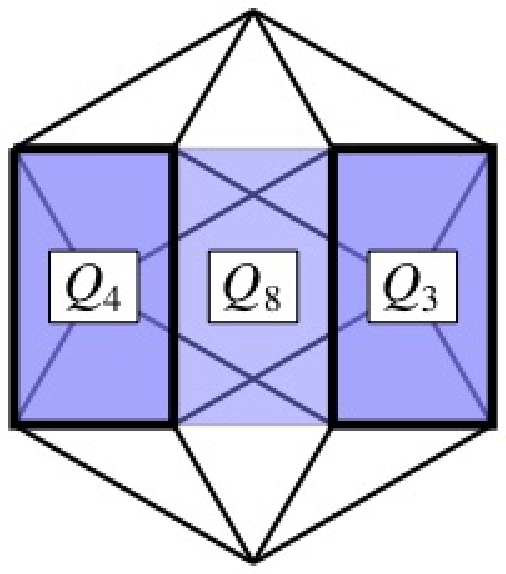}\\[-4.5em]
 \hspace*{-1.5em}\includegraphics[width=0.57\textwidth]{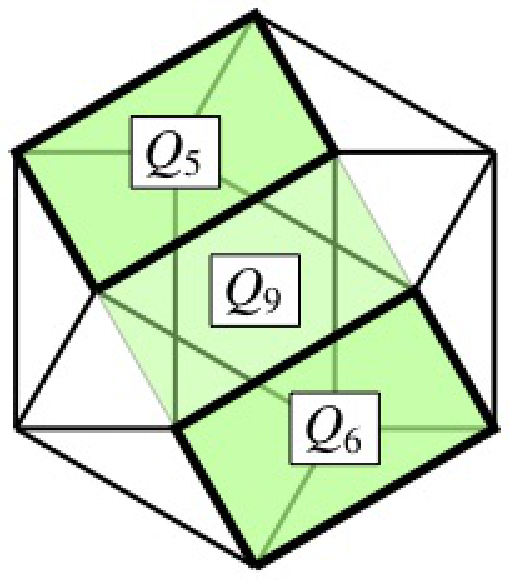}
\end{flushleft}
 \end{minipage}
\caption{Orthogonal projection of the cuboctahedron centered on a triangular face. Figures on the right show where quad-equations $Q_i$, $i=1,\dots,9$, lie in the projection.}
\label{fig:cuboctahedron2D}
\end{figure}

\subsection{CACO property to \PDE s}\label{subsection:CACOPropPDEs}
We now explain how to associate quad-equations with \PDE s in three-dimensional space. This requires us to consider overlapping cuboctahedra that lead to two-dimensional tessellations consisting of quadrilaterals. 
For each given cuboctahedron, there are twelve overlapping cuboctahedra. One such pair of overlapped cuboctahedra is shown in Figure \ref{fig:2cuboctahedra3D}.

The two cuboctahedra in Figure \ref{fig:2cuboctahedra3D} are coloured yellow and blue to distinguish them. Note that one of the vertices of the yellow cuboctahedron forms the centre of the blue one, and vice versa, one of the vertices of the blue cuboctahedron forms the centre of the yellow one. Note also that there are four shared vertices, which are coloured green in the figure.

Notice that there are pairs of quadrilateral faces that form 2D planes in any sequence of overlapped cuboctahedra. For example, in Figure \ref{fig:2cuboctahedra3D}, the bottom faces form such a plane and so do the top faces. To fix notation, we take each cuboctahedron to be a copy of the one drawn in Figure \ref{fig:cuboctahedron_3D_a}. So, for example, the bottom face of each cuboctahedron corresponds to the quadrilateral $Q_2$ and the top face corresponds to $Q_1$. 

The twelve overlapping cuboctahedra around a given one provide six directions of tiling by quadrilaterals. For later convenience, we label directions by $\ep_i\pm \ep_j$, $1\leq i<j\leq 3$. Vertices labelled in this way form the FCC lattice:
\begin{equation}\label{eqn:def_Omega}
 \Omega=\set{\sum_{i=1}^3l_i\ep_i}{l_i\in\bbZ,~l_1+l_2+l_3\in2\bbZ}.
\end{equation}
Such vertices are interpreted as being iterated on each successive cuboctahedron. 
We are then led to 6 partial difference equations that occur on these iterated cuboctahedra:
\begin{subequations}\label{eqns:PDEs_P123456}
\begin{align}
 &P_1\left(u_{\ol{13}},u_{\ol{23}},u_{\ul{1}\ol{3}},u_{\ul{2}\ol{3}}\right)=0,\qquad
 P_2\left(u_{\ol{12}},u_{\ol{13}},u_{\ol{1}\ul{2}},u_{\ol{1}\ul{3}}\right)=0,\\
 &P_3\left(u_{\ol{23}},u_{\ol{12}},u_{\ol{2}\ul{3}},u_{\ul{1}\ol{2}}\right)=0,\qquad
 P_4\left(u_{\ul{23}},u_{\ul{13}},u_{\ol{23}},u_{\ol{13}}\right)=0,\\
 &P_5\left(u_{\ul{13}},u_{\ul{12}},u_{\ol{13}},u_{\ol{12}}\right)=0,\qquad
 P_6\left(u_{\ul{12}},u_{\ul{23}},u_{\ol{12}},u_{\ol{23}}\right)=0,
\end{align}
\end{subequations}
where $u=u(\bml)$ and $\bml\in\Omega$.
Here, $P_i$, $i=1,\dots,6$, are quad-equations,
and subscripts $\bar{i}$ and $\ul{j}$ mean $\bml\to\bml+\ep_i$ and $\bml\to\bml-\ep_j$, respectively.

\begin{remark}
It is important to note that $P_i$, $i=1,\dots,6$,  are not necessarily autonomous, because its coefficients may vary with $\bml$. This possibility is embedded in our notation. For example,
$P_1\left(u_{\ol{1}\ul{3}},u_{\ol{2}\ul{3}},u_{\ul{13}},u_{\ul{23}}\right)$
means 
$P_1\left(u_{\ol{13}},u_{\ol{23}},u_{\ul{1}\ol{3}},u_{\ul{2}\ol{3}}\right)$
after a shift $\bml\to\bml-2\ep_3$. In this case, the coefficients will also be shifted appropriately.
\end{remark}

\begin{figure}[htbp]
 \begin{center}
 \includegraphics[width=0.8\textwidth]{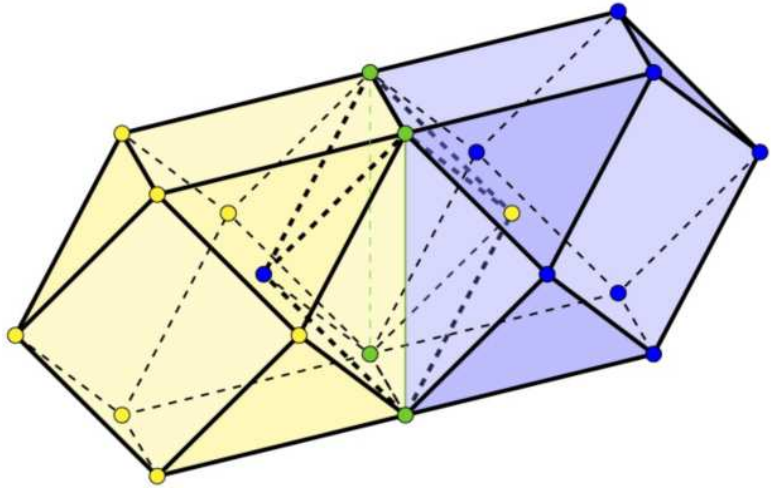}
 \end{center}
\caption{The two overlapping cuboctahedra, one with yellow and one with blue. The green vertices and edges correspond to where the two cuboctahedra overlap.}
\label{fig:2cuboctahedra3D}
\end{figure}

Conversely, given $\bml\in\Omega$, we obtain the cuboctahedron centered around $\bml$. We refer to its quad-equations as before by
$\{Q_1(\bml),\dots,Q_9(\bml)\}$. Moreover, the overlapped region gives an octahedron centred around $\bml+\ep_3$, and we label its quad-equations by $\{\hat{Q}_1(\bml),\hat{Q}_2(\bml),\hat{Q}_3(\bml)\}$.

Each such quad-equation is identified with the 6 partial difference equations given in Equations \eqref{eqns:PDEs_P123456} in the following way. 
Firstly, for $Q_1,\ldots, Q_9$, we use
\begin{subequations}
\begin{align}
 &Q_1(\bml)=P_1\left(u_{\ol{13}},u_{\ol{23}},u_{\ul{1}\ol{3}},u_{\ul{2}\ol{3}}\right)=0,\quad
 Q_2(\bml)=P_1\left(u_{\ol{1}\ul{3}},u_{\ol{2}\ul{3}},u_{\ul{13}},u_{\ul{23}}\right)=0,\\
 &Q_3(\bml)=P_2\left(u_{\ol{12}},u_{\ol{13}},u_{\ol{1}\ul{2}},u_{\ol{1}\ul{3}}\right)=0,\quad
 Q_4(\bml)=P_2\left(u_{\ul{1}\ol{2}},u_{\ul{1}\ol{3}},u_{\ul{12}},u_{\ul{13}}\right)=0,\\
 &Q_5(\bml)=P_3\left(u_{\ol{23}},u_{\ol{12}},u_{\ol{2}\ul{3}},u_{\ul{1}\ol{2}}\right)=0,\quad
 Q_6(\bml)=P_3\left(u_{\ul{2}\ol{3}},u_{\ol{1}\ul{2}},u_{\ul{23}},u_{\ul{12}}\right)=0,\\
 &Q_7(\bml)=P_4\left(u_{\ul{23}},u_{\ul{13}},u_{\ol{23}},u_{\ol{13}}\right)=0,\quad
 Q_8(\bml)=P_5\left(u_{\ul{13}},u_{\ul{12}},u_{\ol{13}},u_{\ol{12}}\right)=0,\\
 &Q_9(\bml)=P_6\left(u_{\ul{12}},u_{\ul{23}},u_{\ol{12}},u_{\ol{23}}\right)=0,
\end{align}
\end{subequations}
and for $\hat{Q}_1,\hat{Q}_2,\hat{Q}_3$, we use
\begin{subequations}
\begin{align}
 &\hat{Q}_1(\bml)=P_1\left(u_{\ol{13}},u_{\ol{23}},u_{\ul{1}\ol{3}},u_{\ul{2}\ol{3}}\right)=0,\quad
 \hat{Q}_2(\bml)=P_2\left(u_{\ol{23}},u_{\ol{33}},u_{\ul{2}\ol{3}},u\right)=0,\\
 &\hat{Q}_3(\bml)=P_3\left(u_{\ol{33}},u_{\ol{13}},u,u_{\ul{1}\ol{3}}\right)=0. 
\end{align}
\end{subequations}
We are now in a position to define the CACO property for \PDE s.

\begin{definition}[CACO property for {\PDE}s]\label{def:CACO_PDEs}\rm
We transfer the definition of CACO and square properties to the system of \PDE s \eqref{eqns:PDEs_P123456} in an obvious way, as follows. 
The system of \PDE s \eqref{eqns:PDEs_P123456} is said to have the CACO property, 
if the following properties hold.
\begin{description}
\item[(i)]
The cuboctahedra with quad-equations $\{Q_1(\bml),\dots,Q_9(\bml)\}$ have the CACO and square properties,
and, furthermore, the square equations 
\begin{align*}
 &K_1\left(u_{\ol{2}\ul{3}},u_{\ol{23}},u_{\ul{23}},u_{\ul{2}\ol{3}}\right)=0,\qquad
 K_2\left(u_{\ul{1}\ol{3}},u_{\ol{13}},u_{\ul{13}},u_{\ol{1}\ul{3}}\right)=0,\\
 &K_3\left(u_{\ol{1}\ul{2}},u_{\ol{12}},u_{\ul{12}},u_{\ul{1}\ol{2}}\right)=0,
\end{align*}
are consistent with the {\PDE}s $P_i=0$, $i=1,2,3$.
\item[(ii)]
The octahedra with quad-equations $\{\hat{Q}_1(\bml),\hat{Q}_2(\bml),\hat{Q}_3(\bml)\}$ have the CAO property.
\end{description}
\end{definition}

\begin{remark}
Property {\bf (i)} in Definition \ref{def:CACO_PDEs}
imposes constraints on the system \eqref{eqns:PDEs_P123456}. We can verify the number of degrees of freedom in the constrained initial values for this system by considering the planes 
where each pair of equations $P_1$ and $K_3$, $P_2$ and $K_1$, and $P_3$ and $K_2$ holds. 
\end{remark}

\begin{remark}
Property {\bf (ii)} in Definition \ref{def:CACO_PDEs} means that the octahedra arising from the overlaps between cuboctahedra have the CAO property.
Note that these octahedra are different from the interior octahedra arising inside each individual cuboctahedron.
\end{remark}

\subsection{Example: a system of \PDE s which has the CACO property}\label{subsection:example_PDEs}
In this section, we provide an example of a system of \PDE s with the CACO property.

Consider the following system of \PDE s:
\begin{subequations}\label{eqns:samplePDEs_P123456}
\begin{align}
 &P_1\left(u_{\ol{13}},u_{\ol{23}},u_{\ul{1}\ol{3}},u_{\ul{2}\ol{3}}\right)
 =\polyN3\left(u_{\ol{13}},u_{\ol{23}},u_{\ul{1}\ol{3}},u_{\ul{2}\ol{3}};a_2,a_1,a_3,a_4\right)=0,
 \label{eqn:samplePDEs_P123456_1}\\
 &P_2\left(u_{\ol{12}},u_{\ol{13}},u_{\ol{1}\ul{2}},u_{\ol{1}\ul{3}}\right)
 =\polyN3\left(u_{\ol{12}},u_{\ol{13}},u_{\ol{1}\ul{2}},u_{\ol{1}\ul{3}};a_6,a_5,a_7,a_8\right)=0,
 \label{eqn:samplePDEs_P123456_2}\\
 &P_3\left(u_{\ol{23}},u_{\ol{12}},u_{\ol{2}\ul{3}},u_{\ul{1}\ol{2}}\right)
 =\polyN3\left(u_{\ol{23}},u_{\ol{12}},u_{\ol{2}\ul{3}},u_{\ul{1}\ol{2}};a_{10},a_9,a_{11},a_{12}\right)=0,
 \label{eqn:samplePDEs_P123456_3}\\
 &P_4\left(u_{\ul{23}},u_{\ul{13}},u_{\ol{23}},u_{\ol{13}}\right)
 =\polyN3\left(u_{\ul{23}},u_{\ul{13}},u_{\ol{23}},u_{\ol{13}};a_1,a_2,a_3,a_4\right)=0,\\
 &P_5\left(u_{\ul{13}},u_{\ul{12}},u_{\ol{13}},u_{\ol{12}}\right)
 =\polyN3\left(u_{\ul{13}},u_{\ul{12}},u_{\ol{13}},u_{\ol{12}};a_5,a_6,a_7,a_8\right)=0,\\
 &P_6\left(u_{\ul{12}},u_{\ul{23}},u_{\ol{12}},u_{\ol{23}}\right)
 =\polyN3\left(u_{\ul{12}},u_{\ul{23}},u_{\ol{12}},u_{\ol{23}};a_9,a_{10},a_{11},a_{12}\right)=0,
\end{align}
\end{subequations}
where $u=u(\bml)$, $\bml=\sum_{i=1}^3l_i\ep_i\in\Omega$
(recall that $\Omega$ is defined by \eqref{eqn:def_Omega})
and
\begin{subequations}
\begin{align}
 &a_1=\al_{12}+(-1)^{l_2+l_3}\de_2-(-1)^{l_1+l_3}\de_3,
 &&a_2=\al_{12}-(-1)^{l_2+l_3}\de_2+(-1)^{l_1+l_3}\de_3,\\
 &a_3=\al_{21}-c+(-1)^{l_1+l_2}\de_1,
 &&a_4=\al_{21}+c-(-1)^{l_1+l_2}\de_1,\\
 &a_5=\al_{23}+(-1)^{l_1+l_3}\de_3-(-1)^{l_1+l_2}\de_1,
 &&a_6=\al_{23}-(-1)^{l_1+l_3}\de_3+(-1)^{l_1+l_2}\de_1,\\
 &a_7=\al_{32}-c+(-1)^{l_2+l_3}\de_2,
 &&a_8=\al_{32}+c-(-1)^{l_2+l_3}\de_2,\\
 &a_9=\al_{31}+(-1)^{l_1+l_2}\de_1-(-1)^{l_2+l_3}\de_2,
 &&a_{10}=\al_{31}-(-1)^{l_1+l_2}\de_1+(-1)^{l_2+l_3}\de_2,\\
 &a_{11}=\al_{13}-c+(-1)^{l_1+l_3}\de_3,
 &&a_{12}=\al_{13}+c-(-1)^{l_1+l_3}\de_3,
\end{align}
\end{subequations}
where
\begin{equation}
 \al_{ij}=\al_i(l_i)-\al_j(l_j),\quad i,j\in\{1,2,3\},\quad
  \al_i(k)=\al_i(0)+k,\quad i\in\{1,2,3\},~k\in\mathbb{Z}.
\end{equation}
Here, $\al_i(0)$, $i=1,2,3$, $c$, and $\de_j$, $j=1,2,3$, are complex parameters. 

We can easily verify that the system \eqref{eqns:samplePDEs_P123456} has the CACO property (see Definition \ref{def:CACO_PDEs}).
Remark \ref{rem:sqexample} considers the square property of the system. 
Finally, we state where this example fits in our classification in Remark \ref{rem:typeexample}.
\begin{remark}\label{rem:sqexample}
The system \eqref{eqns:samplePDEs_P123456} has the square property (see Definition \ref{def:CACO_PDEs}). 
Then the following square equations exist:
\begin{subequations}
\begin{align}
 &K_1=K\left(u_{\ol{2}\ul{3}},u_{\ol{23}},u_{\ul{23}},u_{\ul{2}\ol{3}}\,;\,\al_{12},\al_{13},(-1)^{l_1+l_2}\de_1,(-1)^{l_2+l_3}\de_2,c,(-1)^{l_1+l_3}\de_3\right)=0,\\
 &K_2=K\left(u_{\ul{1}\ol{3}},u_{\ol{13}},u_{\ul{13}},u_{\ol{1}\ul{3}}\,;\,\al_{23},\al_{21},(-1)^{l_2+l_3}\de_2,(-1)^{l_1+l_3}\de_3,c,(-1)^{l_1+l_2}\de_1\right)=0,\\
 &K_3=K\left(u_{\ol{1}\ul{2}},u_{\ol{12}},u_{\ul{12}},u_{\ul{1}\ol{2}}\,;\,\al_{31},\al_{32},(-1)^{l_1+l_3}\de_3,(-1)^{l_1+l_2}\de_1,c,(-1)^{l_2+l_3}\de_2\right)=0,\label{eqn:sample_K3}
\end{align}
\end{subequations}
where the polynomial $K=K(x,y,z,w\,;A_1,A_2,A_3,A_4,A_5,A_6)$ is given by 
\begin{align}\label{eqn:example_def_squareK}
 K=&\Big({A_1}^2(x-z)(y-w)+{A_2}^2(x-y)(w-z)\Big)(y-z)\notag\\
 &+(A_3-A_4-A_5+A_6)\Big(A_1(x-z)(y-w)+A_2(x-y)(w-z)\Big)(y+z)\notag\\
 &+(A_3-A_6)(A_4-A_5)(y-z)(yz-xy-xz+xw)\notag\\
 &+\Big((A_3-A_4)y-(A_4-A_6)z\Big)\Big((A_3-A_5)z-(A_5-A_6)y\Big)(x-w)\notag\\
 &+(A_4-A_5)^2(x-w)yz.
\end{align}
Note that we have 
\begin{equation}\label{eqn:degree_xyzw_K}
 \deg_x{K}=\deg_w{K}=1,\quad
 \deg_y{K}=\deg_z{K}=2.
\end{equation}
\end{remark}

\begin{remark}\label{rem:typeexample}
The system of equations \eqref{eqns:samplePDEs_P123456} is classified in {\rm Type I} and satisfies the condition \eqref{eqn:typeI_cond1}.
(See Lemma \ref{lemma:classification_CACO_CO1CO2CO3_1}.)
\end{remark}

\begin{proposition}[\cite{JN:Preparing}]\label{prop:red_A2_Painleve}
The system of \PDE s \eqref{eqns:samplePDEs_P123456}
can be reduced to the $A_2^{(1)\ast}$-type discrete Painlev\'e equations by imposing the $(1,1,1)$-periodic condition $u_{\ol{123}}=u$.
\end{proposition}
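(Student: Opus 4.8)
The plan is to perform an explicit periodic reduction of the system \eqref{eqns:samplePDEs_P123456} under the constraint $u_{\ol{123}}=u$ and to show that the resulting equations are equivalent to the $A_2^{(1)\ast}$-type discrete Painlev\'e equations. First I would set up coordinates on the quotient lattice: imposing $u_{\ol{123}}=u$ means that $u(\bml)$ depends only on $\bml$ modulo the sublattice generated by $\ep_1+\ep_2+\ep_3$. On $\Omega$, a convenient set of representatives is parametrised by a single integer $n$ together with a finite index recording the residue class; concretely, one follows the ``staircase'' orbit and writes all the shifted variables $u_{\ol{13}},u_{\ol{23}},u_{\ul{1}\ol{3}},\dots$ appearing in \eqref{eqns:samplePDEs_P123456} in terms of $u_n$ and $u_{n\pm1}$. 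Simultaneously, the parameters $\al_i(l_i)=\al_i(0)+l_i$ and the sign factors $(-1)^{l_i+l_j}$ must be re-expressed: under the periodic reduction the combinations $\al_{ij}=\al_i(l_i)-\al_j(l_j)$ become affine functions of $n$ with fixed slopes, and the signs $(-1)^{l_i+l_j}$ become a fixed (possibly $n$-dependent with period $2$) pattern, which is exactly the additive, $A_2^{(1)\ast}$ parameter structure.

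Next I would carry out the elimination. Among the six \PDE s in \eqref{eqns:samplePDEs_P123456}, the periodic constraint collapses the three ``face'' equations $P_1,P_2,P_3$ and the three ``octahedral'' equations $P_4,P_5,P_6$ onto the same staircase, producing an overdetermined-looking but consistent system; by the CACO property (Definition \ref{def:CACO_PDEs}), which the example satisfies by Remark \ref{rem:typeexample}, these are mutually compatible, so I can solve, say, $P_1=0$ and $P_2=0$ for two of the iterates and substitute into $P_3=0$ to obtain a single scalar recurrence relating $u_{n-1},u_n,u_{n+1}$. Because each $P_i$ is of $\polyN3$-type — i.e. multi-affine of the special bilinear form $\al_1xy+\al_2zw+\al_3xw+\al_4yz$ — the elimination is rational and the result is a second-order difference equation of the form $u_{n+1}u_{n-1}=\phi(u_n;n)$ (up to a M\"obius change of the dependent variable), with $\phi$ a ratio of affine-in-$n$ linear functions of $u_n$; one then matches this against the standard $A_2^{(1)\ast}$ $q$-... no, additive-type canonical form in \cite{SakaiH2001:MR1882403,KNY2017:MR3609039}, identifying the root-system parameters with $\al_i(0)$, $c$, $\de_j$.

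Finally I would verify that the reduction is consistent with the square equations $K_1,K_2,K_3$ of Remark \ref{rem:sqexample}: under $u_{\ol{123}}=u$ these also descend to the staircase, and one checks that they are consequences of the reduced scalar equation (this is automatic from part {\bf (i)} of Definition \ref{def:CACO_PDEs}), so no spurious extra constraint appears. It then remains to confirm the geometric type by computing the singularity confinement pattern / the symmetry of the parameter configuration and matching it to the $A_2^{(1)\ast}$ entry of Table \ref{tab:sakai}.

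The main obstacle I anticipate is the bookkeeping in the first step: correctly tracking how the twelve parameters $a_1,\dots,a_{12}$ and the sign factors $(-1)^{l_i+l_j}$ transform under the $(1,1,1)$-periodic identification, and choosing the staircase/fundamental domain so that all of $u_{\ol{13}},u_{\ol{23}},u_{\ul{1}\ol{3}},u_{\ul{2}\ol{3}}$ and their analogues in $P_2,\dots,P_6$ land on only three consecutive representatives $u_{n-1},u_n,u_{n+1}$. Once that identification is pinned down, the elimination producing the second-order recurrence and the identification with the known $A_2^{(1)\ast}$ discrete Painlev\'e equation is essentially a routine (if lengthy) computation, and the consistency with $K_i$ follows from the already-established CACO property rather than requiring fresh work. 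Since a full treatment is deferred to \cite{JN:Preparing}, here I would present the reduction schematically, stating the resulting second-order equation and its parameter dictionary, and refer to \cite{JN:Preparing} for the detailed verification.
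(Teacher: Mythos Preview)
The paper does not contain a proof of this proposition: it is stated with attribution to \cite{JN:Preparing}, and the Concluding Remarks explicitly say that the details of the reduction will be given in that separate paper. So there is no in-paper argument to compare your proposal against.

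That said, your sketch is a reasonable outline of how such a $(1,1,1)$-periodic reduction is typically carried out for quad-equation systems, and it correctly identifies the CACO and square properties as the mechanism guaranteeing consistency of the overdetermined reduced system. One caution: your expectation that elimination will yield a single scalar second-order recurrence of the form $u_{n+1}u_{n-1}=\phi(u_n;n)$ is probably too optimistic for $A_2^{(1)\ast}$, which in Sakai's scheme is naturally realised as a coupled first-order system in two dependent variables rather than a single multiplicative three-term recurrence. You should anticipate that the reduction produces a pair of coupled equations (or equivalently a mapping on a two-dimensional initial-value space), and that the identification with the $A_2^{(1)\ast}$ equation proceeds via the affine Weyl group action on parameters rather than by matching a single scalar form. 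Since the paper defers all of this to \cite{JN:Preparing}, your closing remark that you would state the result schematically and cite that reference is exactly what the paper itself does.
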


\section{Classification of quad-equations on a cuboctahedron}\label{section:classification_Cuboctahedron}
\subsection{Classification of quad-equations on an octahedron}\label{subsection:class_CAO}
In this section, we give a classification of the quad-equations on an octahedron which have the consistency around an octahedron property.

In the following, we fix the form of the quad-equation $Q_1$ considering the symmetries of an octahedron, M\"obius transformations and Lemma \ref{lemma:classification_JN_polys} (the result is given in Remark \ref{rem:QHDN}).

Consider the symmetries of an octahedron.
Letting
\begin{equation}
 G^{(o)}=\Big\langle (1\,4),~(2\,5),~(3\,6),~(2\,3\,5\,6),~(1\,3\,4\,6),~(1\,2\,4\,5)\Big\rangle
\end{equation}
be the subgroup of the symmetric group $\mathfrak{S}_6$ and
lifting the action of $\si\in G^{(o)}$ upon the variables $u_i$, $i=1,\dots,6$, by 
\begin{equation}
 \si.u_i=u_{\si(i)}.
\end{equation}
The permutations $(1\,4)$, $(2\,5)$ and $(3\,6)$ are reflections 
with the diagrams of a section passing through four vertices as the axes,
and the permutations $(2\,3\,5\,6)$, $(1\,3\,4\,6)$ and $(1\,2\,4\,5)$ are rotations 
about axes passing through the vertices $\{u_1,u_4\}$, $\{u_2,u_5\}$ and $\{u_3,u_6\}$, respectively.
(See Figure \ref{fig:octahedron_3D}.)
Therefore, we can regard $G^{(o)}$ as providing symmetries of the octahedron.

Using the M\"obius transformations and the transformation group $G^{(o)}$, 
we define the equivalence relation $\sim_o$ for quad-equations on an octahedron as the following.

\begin{definition}\rm
Let $\{P,Q,R\}$ and $\{P',Q',R'\}$ denote quad-equations on an octahedron, identified as follows
\begin{equation}
 P=P(u_4,u_2,u_1,u_5),\quad
 Q=Q(u_2,u_6,u_5,u_3),\quad
 R=R(u_6,u_4,u_3,u_1),
\end{equation}
and
\begin{equation}
 P'=P'(u_4,u_2,u_1,u_5),\quad
 Q'=Q'(u_2,u_6,u_5,u_3),\quad
 R'=R'(u_6,u_4,u_3,u_1).
\end{equation}
We say that the quad-equations $\{P,Q,R\}$ are equivalent, modulo M\"obius transformations and transformation group $G^{(o)}$, to the quad-equations $\{P',Q',R'\}$, or
\begin{equation}
 \{P,Q,R\}\sim_o\{P',Q',R'\},
\end{equation} 
if there exist
a M\"obius transformation $r$ 
and a permutation $\si\in G^{(o)}$ such that 
\begin{equation}
 \{P',Q',R'\}=\Big\{r.\si.P,\,r.\si.Q,\,r.\si.R\Big\}.
\end{equation}
\end{definition}

Not all permutations of the 4 vertices in a quad-equation are allowed because some violate the choice of a quadrilateral in the octahedron. For example, the equatorial quadrilateral in Figure \ref{fig:octahedron_3D} allows $(1245)$ but not $(12)$ without also including $(45)$. Therefore, the $4!$ permutations of $\mathfrak{S}_4$ reduce to $\bigm|\mathfrak{S}_4\cap G^{(o)}\bigm|=8$. 

These considerations allow us to prove the following lemma, which fixes the form of the quad-equation $Q_1$ in \eqref{eqn:octahedron_Q1Q2Q3} using the equivalence relation $\sim_o$. 
\begin{lemma}\label{lemma:Q1_3type}
Assume that the quad-equation $Q_1=Q_1(u_4,u_2,u_1,u_5)$ is given by a polynomial $Q(x,y,z,w)$.
Then, under the equivalence relation $\sim_o$, 
it is sufficient to consider the following three cases:
\begin{equation}
 Q_1=Q(u_4,u_2,u_1,u_5),\quad
 Q_1=Q(u_4,u_2,u_5,u_1),\quad
 Q_1=Q(u_4,u_1,u_2,u_5).
\end{equation}
\end{lemma}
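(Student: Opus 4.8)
The plan is to use the equivalence relation $\sim_o$ to reduce the number of ways the four arguments of $Q$ can be placed on the vertices $\{u_4, u_2, u_1, u_5\}$ of the equatorial quadrilateral. A priori there are $4! = 24$ assignments of the ordered argument slots of $Q(x,y,z,w)$ to the four vertices. However, M\"obius transformations alone do not permute arguments; the reductions come from the symmetry group $G^{(o)}$ of the octahedron, restricted to those permutations that preserve the chosen quadrilateral $\{u_4, u_2, u_1, u_5\}$ as a set. As observed in the paragraph preceding the lemma, this stabilizer is $\mathfrak{S}_4 \cap G^{(o)}$, which has order $8$. First I would identify this order-$8$ subgroup explicitly: it is the dihedral group of the square with cyclic ordering $u_4 \to u_2 \to u_5 \to u_1 \to u_4$ around the equator (note the equatorial quadrilateral of the octahedron in Figure~\ref{fig:octahedron_3D} has this cyclic adjacency), generated by the $4$-cycle $(1\,2\,4\,5)$ and a reflection such as $(2\,5)$ or $(1\,4)$.

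\textbf{Key steps.} The argument proceeds as follows. (1) Observe that acting by $\si \in \mathfrak{S}_4 \cap G^{(o)} \cong D_4$ on the octahedron sends the quad-equation $Q_1 = Q(u_4, u_2, u_1, u_5)$ to $Q_1 = Q(u_{\si(4)}, u_{\si(2)}, u_{\si(1)}, u_{\si(5)})$, which is a relabelling of the \emph{same} polynomial $Q$ with its arguments attached to permuted vertices; since we are free to apply such $\si$, the $24$ possible vertex-assignments collapse into orbits under the right action of $D_4$ on the $4!$ orderings. (2) Count the orbits: $24 / 8 = 3$ orbits, so there are exactly three inequivalent cases. (3) Choose one representative from each orbit — the natural choices being $Q(u_4, u_2, u_1, u_5)$, $Q(u_4, u_2, u_5, u_1)$ (swap the last two), and $Q(u_4, u_1, u_2, u_5)$ (swap the middle two) — and verify these three lie in distinct $D_4$-orbits while every other ordering is $D_4$-equivalent to one of them. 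Here one uses that the orbits of $D_4$ acting on the $24$ orderings of four objects arranged on a square correspond to the three ways of pairing up "opposite" versus "adjacent" slots: whether the diagonal pairs of the polynomial's argument-quadrilateral match the diagonals, or one of the two pairs of edges, of the geometric square.

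\textbf{Main obstacle.} The routine part is the orbit count; the step requiring care is confirming that $\mathfrak{S}_4 \cap G^{(o)}$ is genuinely order $8$ and is exactly the dihedral group of the equatorial square — i.e. that no "illegal" transposition like $(2\,5)$ alone (without a compensating transposition) survives, while $(1\,4)$, $(2\,5)$, $(1\,2\,4\,5)$ and their products do. This amounts to checking which generators of $G^{(o)}$ fix $\{1,2,4,5\}$ setwise: the reflections $(1\,4)$ and $(2\,5)$ do, $(3\,6)$ acts trivially on this set, the rotation $(1\,2\,4\,5)$ does, while $(2\,3\,5\,6)$ and $(1\,3\,4\,6)$ do not. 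Their action on $\{1,2,4,5\}$ then generates precisely $D_4$. Once this is pinned down, the lemma follows immediately by picking orbit representatives; the choice of the three representatives displayed in the statement is then just a convenient normal form, and one checks directly that the remaining $21$ orderings each reduce to one of these three under a suitable $\si \in \mathfrak{S}_4 \cap G^{(o)}$.
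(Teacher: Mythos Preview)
Your approach is correct and essentially identical to the paper's: the paper observes just before the lemma that $|\mathfrak{S}_4 \cap G^{(o)}| = 8$, so the $24$ orderings fall into $24/8 = 3$ orbits, and then states the lemma without further proof. Two minor exposition slips to fix: the cyclic adjacency on the equatorial square is $u_4 \to u_2 \to u_1 \to u_5 \to u_4$ (since $u_2$ and $u_5$ are antipodal in the octahedron and hence diagonal in the square, not adjacent), and your sentence calling $(2\,5)$ an ``illegal'' transposition while simultaneously listing it among the valid generators is self-contradictory --- you presumably meant a transposition like $(1\,2)$, as in the paper's remark.
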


Given a generic polynomial $Q(x,y,z,w)$, Lemma \ref{lemma:Q1_3type} states that there are 3 inequivalent cases of corresponding quad-equations. 
However, for certain cases of quad-equations in Lemma \ref{lemma:classification_JN_polys}, there are further equivalences between these 3 types. 
This is clarified in the following lemma.

\begin{lemma}\label{lemma:Q4321_D432_symmetry}
Consider the cases when $Q_1$ is given by $\polyQ4$, $\polyQ3$, $\polyQ2$, $\polyQ1$, $\polyD4$, $\polyD3$ or $\polyD2_{\de=1}$. 
Then, the 3 cases identified in Lemma \ref{lemma:Q1_3type} are equivalent under the equivalence relation $\sim_o$.
\end{lemma}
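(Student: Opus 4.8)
The plan is to exploit the extra symmetry that each of the listed polynomials possesses beyond the generic multi-affine case, and show that this extra symmetry, together with the octahedral symmetry group $G^{(o)}$ and M\"obius transformations, collapses the three cases of Lemma \ref{lemma:Q1_3type} into a single equivalence class. Recall that the three cases differ only in how the four vertices $u_4,u_2,u_1,u_5$ (equivalently, the four ``slots'' $x,y,z,w$ of $Q$) are paired as the two diagonals of the quadrilateral: in $Q(u_4,u_2,u_1,u_5)$ the diagonal pairs are $\{u_4,u_1\}$ and $\{u_2,u_5\}$; in $Q(u_4,u_2,u_5,u_1)$ they are $\{u_4,u_5\}$ and $\{u_2,u_1\}$; in $Q(u_4,u_1,u_2,u_5)$ they are $\{u_4,u_2\}$ and $\{u_1,u_5\}$. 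So the three cases correspond to the three ways of splitting the four vertices into two pairs.

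First I would observe that for each of $\polyQ4$, $\polyQ3$, $\polyQ2$, $\polyQ1$, $\polyD4$, $\polyD3$, $\polyD2_{\de=1}$, the polynomial is invariant (up to relabelling of parameters, and possibly composing with a M\"obius transformation of the variables) under permutations of the four arguments that realise the transpositions swapping the two diagonal-pairs with each other, i.e.\ under the full $D_4$-symmetry of the square, or at least under enough of $\mathfrak{S}_4$ to move between the three pairings. Concretely, for $\polyQ1$ (and similarly $\polyQ2$, $\polyQ3$, $\polyQ4$) the form $\al(x-w)(y-z)+\be(x-y)(w-z)+\dots$ is carried to the form with $x,z$ playing the role of a diagonal by interchanging $y\leftrightarrow w$ (which is a trivial relabel) and to the third pairing by a transposition combined with $\al\leftrightarrow\be$ or $\al\to -\al-\be$, etc.; one checks the analogous statements for $\polyD1$--$\polyD4$ directly from their explicit forms. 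The key point is that these symmetries of $Q$ as a polynomial are \emph{not} elements of $G^{(o)}$ acting on the octahedron, but they nonetheless let us rewrite $Q_1$ in any of the three forms of Lemma \ref{lemma:Q1_3type} at the level of polynomials; composing with the permutations of $G^{(o)}$ then shows all three are $\sim_o$-equivalent.

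The main steps, in order, are: (1) for each polynomial type on the list, write down explicitly the permutation $\pi\in\mathfrak{S}_4$ of its arguments that sends pairing-type~1 to pairing-type~2, together with the induced change of parameters and, where needed, the accompanying M\"obius transformation $r$; (2) do the same for pairing-type~1 to pairing-type~3; (3) verify in each case that the resulting polynomial is again of the same type (with possibly different parameter values), so that the equivalence stays inside the family under consideration; (4) translate these polynomial identities into the octahedral picture, using the definition of $\sim_o$ and the fact (noted just before the lemma) that $\mathfrak{S}_4\cap G^{(o)}$ has order $8$ and acts transitively enough on the three quadrilateral-embeddings to realise the needed repairings. Step (3) is the part requiring care: for $\polyQ4$ one must check the symmetry is compatible with the elliptic parametrisation (the $\sn$-terms), for $\polyD4$ one must check that the $\de_i\in\{0,1\}$ constraints are respected by the parameter shuffle, and for $\polyD2_{\de=1}$ (the only $\polyD2$ case on the list) one must check the repairing does not force $\de=0$.

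The hard part will be organising steps (1)--(3) uniformly rather than as seven separate ad hoc computations, and in particular handling $\polyQ4$: its $\mathfrak{S}_4$-symmetry is well known but involves the structure of the elliptic curve, so one should cite or reuse the symmetry analysis already implicit in Lemma \ref{lemma:classification_ABS_polys} (the ABS classification) rather than re-deriving it. A clean way to present the argument is to note that precisely the $Q$-type equations and the ``most symmetric'' $H^6$-type equations $\polyD1$, $\polyD2_{\de=1}$, $\polyD3$, $\polyD4$ are those whose biquadratics $h^{ij}$ have the full $\mathfrak{S}_4$-symmetry pattern (all six $h^{ij}$ being equivalent, or the degenerate analogue), and that this symmetry pattern is exactly what is needed to pass between the three pairings; the remaining types on the original list ($\polyH2$, $\polyH3$ with $(\de_1,\de_2)\neq(0,0)$, and the $N$-type polynomials) lack it, which is why they are excluded from this lemma. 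I would close by remarking that the complementary cases — where the three types of Lemma \ref{lemma:Q1_3type} genuinely remain distinct — are precisely what the subsequent lemmas in \S\ref{section:classification_Cuboctahedron} must treat separately.
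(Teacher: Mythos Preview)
Your approach is correct and essentially matches the paper's strategy: exploit polynomial symmetries (under argument permutations, parameter relabelling, and M\"obius transformations) that are not octahedral symmetries, to pass between the three cases. The paper's execution, however, is far more economical than you anticipate. Rather than organising a full $\mathfrak{S}_4$-symmetry analysis or invoking the biquadratic structure, the paper uses a single transposition identity per type, namely that swapping the last two arguments of $Q$ preserves the form up to parameter change (and, for $\polyD4$ and $\polyD2_{\de=1}$, a simple M\"obius inversion on two variables). For instance, $\polyQ4(x,y,w,z;\al,\be)=-\polyQ4(x,y,z,w;-\al,\al+\be)$ immediately reduces case~2 to case~1, and applying the same identity to case~3 produces $Q(u_4,u_1,u_5,u_2)$, which is then carried to case~1 by an element of $G^{(o)}$. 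So your step~(1)--(3) collapse to a one-line identity per polynomial, and your worry about $\polyQ4$ being the hard case is unfounded: the $\sn$-identity needed is elementary. Your biquadratic remark is a nice conceptual explanation of \emph{why} exactly these types have the extra symmetry, but it is not needed for the proof itself.
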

\begin{proof}
The definition of the polynomial $\polyQ4$ shows that the following relation holds:
\begin{equation}
 \polyQ4(x,y,w,z;\al,\be)=-\polyQ4(x,y,z,w;-\al,\al+\be),
\end{equation}
which gives 
\begin{subequations}
\begin{align}
 &\polyQ4(u_4,u_2,u_5,u_1;\al,\be)=-\polyQ4(u_4,u_2,u_1,u_5;-\al,\al+\be),\\
 &\polyQ4(u_4,u_1,u_2,u_5;\al,\be)=-\polyQ4(u_4,u_1,u_5,u_2;-\al,\al+\be).
\end{align}
\end{subequations}
Note that the second case appears to be different to those listed in Lemma \ref{lemma:Q1_3type}. However, the right side of the second equation is the case
$Q_1=\polyQ4(u_4,u_1,u_5,u_2;\al,\be)$, which is equivalent to the case
$Q_1=\polyQ4(u_4,u_2,u_1,u_5;\al,\be)$, under the equivalence relation $\sim_o$.
Therefore, whether we take $Q_1$ to be $\polyQ4(u_4,u_2,u_1,u_5;\al,\be)$, or 
$\polyQ4(u_4,u_2,u_5,u_1;\al,\be)$, or $\polyQ4(u_4,u_1,u_2,u_5;\al,\be)$ leads to the same result. 

The remaining cases can be also verified, in a similar manner, by using the following relations:
\begin{align*}
 &\polyQ3(x,y,w,z;\al,\be;\de)=-\polyQ3(x,y,z,w;\al^{-1},\al\be;\de),\\
 &\polyQ2(x,y,w,z;\al,\be)=-\polyQ2(x,y,z,w;-\al,\al+\be),\\
 &\polyQ1(x,y,w,z;\al,\be;\de)=\polyQ1(x,y,z,w;-\al,\al+\be;\de),\\
 &\polyD4(x,y,w,z;\de_1,\de_2,\de_3)=yz\,\polyD4(x,y^{-1},z^{-1},w;\de_3,\de_2,\de_1),\\
 &\polyD3(x,y,w,z)=\polyD3(x,y,z,w),\\
 &\polyD2(x,y,w,z;1)=y\,\polyD2(x,y^{-1},z,w;1).
\end{align*}
\end{proof}

\begin{remark}\label{rem:QHDN}
The above lemma shows that we need only consider one case for each of $\polyQ4$, $\polyQ3$, $\polyQ2$, $\polyQ1$, $\polyD4$, $\polyD3$, $\polyD2_{\delta=1}$, $\polyD1$. 
On the other hand, there still remain 3 cases for each of $\polyH3$, $\polyH2$, $\polyN3$, $\polyN2$, and $\polyN1$. 
Therefore, Lemmas \ref{lemma:classification_JN_polys},\ref{lemma:Q1_3type} and \ref{lemma:Q4321_D432_symmetry} leave a total of 22 cases to be considered, listed below:
\[\arraycolsep=3.pt
\begin{array}{lll}
 \polyQ4(u_4,u_2,u_1,u_5;\al,\be),
 &\polyQ3(u_4,u_2,u_1,u_5;\al,\be;\de),
 &\polyQ2(u_4,u_2,u_1,u_5;\al,\be),\\
 \polyQ1(u_4,u_2,u_1,u_5;\al,\be;\de),
 &\polyD4(u_4,u_2,u_1,u_5;\de_1,\de_2,\de_3),
 &\polyD3(u_4,u_2,u_1,u_5),\\
 \polyD2(u_4,u_2,u_1,u_5;1),
 &\polyH3(u_4,u_2,u_1,u_5;\al,\be;\de_1,\de_2),
 &\polyH3(u_4,u_2,u_5,u_1;\al,\be;\de_1,\de_2),\\
 \polyH3(u_4,u_1,u_2,u_5;\al,\be;\de_1,\de_2),
 &\polyH2(u_4,u_2,u_1,u_5;\al,\be;\de),
 &\polyH2(u_4,u_2,u_5,u_1;\al,\be;\de),\\
 \polyH2(u_4,u_1,u_2,u_5;\al,\be;\de),
 &\polyN3(u_4,u_2,u_1,u_5;\{\al_i\}),
 &\polyN3(u_4,u_2,u_5,u_1;\{\al_i\}),\\
 \polyN3(u_4,u_1,u_2,u_5;\{\al_i\}),
 &\polyN2(u_4,u_2,u_1,u_5;\{\al_i\}),
 &\polyN2(u_4,u_2,u_5,u_1;\{\al_i\}),\\
 \polyN2(u_4,u_1,u_2,u_5;\{\al_i\}),
 &\polyN1(u_4,u_2,u_1,u_5;\{\al_i\}),
 &\polyN1(u_4,u_2,u_5,u_1;\{\al_i\}),\\
 \polyN1(u_4,u_1,u_2,u_5;\{\al_i\}),
\end{array}
\]
where $(\de_1,\de_2)\neq(0,0)$ for $\polyH3$ and $\polyD4$.
\end{remark}

Recall that we use the convention adopted in Equation \eqref{eqn:octahedron_Q1Q2Q3}, which fixes the labels $\{Q_1,Q_2,Q_3\}$ to quadrilaterals with particular choices of variables. The following two lemmas lead to the classification of CAO quad-equations presented in Lemma \ref{lemma:classification_Q123_CAO}.
\begin{lemma}\label{lemma:classification_CAO_1}
Under the equivalence relation $\sim_o$, any CAO octahedron with quad-equations $\{Q_1,Q_2,Q_3\}$ is equivalent to one of the following:
\begin{subequations}
\begin{align}
&\begin{cases}\label{eqn:Q123_N3}
 \polyN3(u_4,u_2,u_1,u_5;\al_1,\al_2,\al_3,\al_4),\\
 u_5 \,q(u_6,u_3;B_1,B_2,B_3,B_4)+u_2 \,q(u_6,u_3;B_5,B_6,B_7,B_8),\\
 (\al_4 u_1+\al_1 u_4)\,q(u_6,u_3;B_1,B_2,B_3,B_4)\\
 \hspace{1em}-(\al_2 u_1+\al_3 u_4)\,q(u_6,u_3;B_5,B_6,B_7,B_8),
\end{cases}\\
&\begin{cases}\label{eqn:Q123_N2}
 \polyN2(u_4,u_2,u_1,u_5;\al_1,\al_2,\al_3,\al_4),\\
 u_5 \,q(u_6,u_3;B_1,B_2,B_3,B_4)+u_2 \,q(u_6,u_3;B_5,B_6,B_7,B_8),\\
 \Big(\al_1(u_1+u_4)+\al_3\Big)\,q(u_6,u_3;B_1,B_2,B_3,B_4)\\
 \hspace{1em}-\Big(\al_2(u_1+u_4)+\al_4\Big)\,q(u_6,u_3;B_5,B_6,B_7,B_8),
\end{cases}\\
&\begin{cases}\label{eqn:Q123_N1}
 \polyN1(u_4,u_2,u_1,u_5;\al_1,\al_2,\al_3,\al_4),\\
 q(u_6,u_3;B_1,B_2,B_3,B_4)+(u_5+u_2) q(u_6,u_3;B_5,B_6,B_7,B_8),\\
 \Big(\al_1(u_1+u_4)+\al_3\Big)\,q(u_6,u_3;B_1,B_2,B_3,B_4)\\
 \hspace{1em}-\Big(\al_2(u_1+u_4)+\al_4\Big)\,q(u_6,u_3;B_5,B_6,B_7,B_8),
\end{cases}
\end{align}
\end{subequations}
where $B_i\in\bbC$, $i=1\dots,8$, and
\begin{equation}\label{eqn:def_q}
 q(x,y;a,b,c,d)=axy+bx+cy+d.
\end{equation}
\end{lemma}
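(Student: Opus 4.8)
The plan is to start from the form of $Q_1$ fixed by Remark \ref{rem:QHDN} and Lemma \ref{lemma:Q4321_D432_symmetry}, restricting attention to the $N$-type cases, since the CAO analysis of the $Q$-, $H$-, and $D$-type choices for $Q_1$ is carried out separately (and, as we shall see, forces the remaining two quad-equations to also be $N$-type or to be inconsistent). So I would first suppose $Q_1 = \polyN3(u_4,u_2,u_1,u_5;\al_1,\al_2,\al_3,\al_4)$ (and then, symmetrically, the $\polyN2$ and $\polyN1$ cases). The key observation is that $\polyN3$, $\polyN2$, $\polyN1$ each have the special structure of being \emph{affine-linear in the pair} $(u_1,u_4)$ \emph{jointly appearing only through a fixed combination} — namely $\polyN3$ involves $u_1,u_4$ through $\al_3 u_1 u_4' $-type monomials but crucially is built from $u_1 u_2$, $u_4 u_5$, etc., so that solving $Q_1=0$ for either $u_3$ or $u_6$ is impossible ($Q_1$ does not contain $u_3,u_6$ at all). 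This means the entire content of the CAO condition must be extracted from $Q_2(u_2,u_6,u_5,u_3)=0$ and $Q_3(u_6,u_4,u_3,u_1)=0$: each of these must be derivable from the other two, and since $Q_1$ knows nothing about $u_3,u_6$, the pair $\{Q_2,Q_3\}$ must be such that eliminating, say, $u_6$ between $Q_2=0$ and $Q_3=0$ reproduces $Q_1=0$.

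\textbf{Main steps.} First I would write $Q_2 = Q_2(u_2,u_6,u_5,u_3)$ as a general multi-affine polynomial in its four arguments; thinking of it as affine in $u_2$ and in $u_5$, group it as $Q_2 = u_2\,\big(u_5\,q_1(u_6,u_3) + q_2(u_6,u_3)\big) + \big(u_5\,q_3(u_6,u_3)+q_4(u_6,u_3)\big)$, where each $q_i$ is of the form \eqref{eqn:def_q}. Similarly write $Q_3 = Q_3(u_6,u_4,u_3,u_1)$ grouped as affine in $u_1,u_4$. Now impose that $Q_2=0$, solved for $u_5$, gives $u_5$ as a function of $u_2,u_6,u_3$; substitute into $Q_1=0$ (which reads $\al_1 u_4 u_2 + \al_2 u_1 u_5 + \al_3 u_4 u_5 + \al_4 u_1 u_2=0$, i.e. linear in $u_5$) to solve for $u_5$ a second way; requiring these two expressions for $u_5$ to coincide identically as rational functions of $(u_1,u_2,u_3,u_4,u_6)$ — this is exactly one half of the CAO requirement. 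The same substitution via $Q_3=0$ (now eliminating the pair $(u_1,u_4)$ against $Q_1$) gives the other half. Matching numerators and denominators coefficient-by-coefficient will force $Q_2$ to have the shape $u_5\,q(u_6,u_3;B_1,\dots,B_4) + u_2\, q(u_6,u_3;B_5,\dots,B_8)$ displayed in \eqref{eqn:Q123_N3} — in particular the $u_2 u_5 u_6 u_3$-type terms must pair up so that the $q$-factors multiplying $u_5$ and $u_2$ are independent, while $Q_3$ is pinned down to be $(\al_4 u_1 + \al_1 u_4)\,q(u_6,u_3;B_1,\dots,B_4) - (\al_2 u_1+\al_3 u_4)\,q(u_6,u_3;B_5,\dots,B_8)$ with the \emph{same} $B_i$'s, because the coefficients $\al_i$ entering the cross-elimination are exactly those of $Q_1$. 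Finally I would check the converse: that this form genuinely satisfies CAO (each equation recoverable from the other two), which is a short direct computation — eliminating $u_6$ or $u_3$ between $Q_2$ and $Q_3$ returns a constant multiple of $Q_1$. The $\polyN2$ and $\polyN1$ cases \eqref{eqn:Q123_N2}, \eqref{eqn:Q123_N1} are handled identically, the only change being the replacement of the combination $(\al_4 u_1 + \al_1 u_4, \al_2 u_1 + \al_3 u_4)$ by $(\al_1(u_1+u_4)+\al_3, \al_2(u_1+u_4)+\al_4)$, which reflects the different way $\polyN2$ and $\polyN1$ depend on $u_1,u_4$ (through the symmetric combination $u_1+u_4$ rather than separately), and correspondingly the $u_2$-slot of $Q_2$ is grouped differently in the $\polyN1$ case since $\polyN1$ is symmetric in $u_2 \leftrightarrow u_5$.

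\textbf{Main obstacle.} The hard part will be the bookkeeping in the elimination/matching step: showing that the requirement "$u_5$ computed from $Q_1 \wedge Q_2$ agrees with $u_5$ computed from $Q_1 \wedge Q_3$, identically in the free variables" forces precisely the claimed factorized form and no spurious extra families. One must be careful that $Q_2$ is allowed to be any irreducible multi-affine polynomial, so a priori there are $16$ coefficients; the CAO constraints are a large over-determined system, and the subtlety is to argue that irreducibility (non-vanishing of the relevant coefficients, per the Remark following Definition \ref{def:quad-eqn}) rules out degenerate branches and that the surviving solution space is exactly the $8$-parameter family parametrized by $B_1,\dots,B_8$. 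A secondary subtlety is handling the possibility that $q(u_6,u_3;B_1,\dots,B_4)$ and $q(u_6,u_3;B_5,\dots,B_8)$ are proportional, which would make $Q_2$ reducible — this case must be excluded by the standing irreducibility convention. I expect the cleanest route is to treat $u_6$ (or $u_3$) as the elimination variable throughout, so that one is comparing two linear-fractional expressions in $u_6$ and matching the three essential coefficients, which keeps the algebra manageable and is deferred to Appendix \ref{section:proof_lemma_CAO} for the full details.
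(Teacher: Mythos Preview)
Your overall plan---write $Q_2$ generically in terms of bilinear polynomials $q_i(u_6,u_3)$, impose CAO, and extract the claimed factorized form---matches the paper's strategy. But there are two real problems.

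First, your description of the CAO mechanism is wrong. You propose to solve $Q_2=0$ for $u_5$ as a function of $(u_2,u_6,u_3)$, solve $Q_1=0$ for $u_5$ as a function of $(u_1,u_2,u_4)$, and then ``require these two expressions for $u_5$ to coincide identically as rational functions of $(u_1,u_2,u_3,u_4,u_6)$''. These two expressions depend on disjoint variable sets (apart from $u_2$), so they can never coincide identically; this is not what CAO says. The correct mechanism, and the one the paper uses, is: eliminate $u_5$ between $Q_1=0$ and $Q_2=0$ to obtain a polynomial relation $R(u_1,u_2,u_3,u_4,u_6)=0$. Since $Q_3$ involves only $u_1,u_3,u_4,u_6$ and $u_2$ is a free initial value, expand $R=\sum_k C^{(k)}(u_1,u_3,u_4,u_6)\,u_2^k$; CAO then forces each $C^{(k)}$ either to vanish identically or to be a constant multiple of $Q_3$. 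This is the key device that makes the $16$-coefficient matching tractable: for $\polyN3$ one finds $C^{(0)}=(\al_2u_1+\al_3u_4)q_1$ and $C^{(2)}=-(\al_4u_1+\al_1u_4)q_4$, and since neither has degree $1$ in each of $u_1,u_3,u_4,u_6$, they must vanish, i.e.\ $q_1=q_4=0$, giving $Q_2=u_5q_2+u_2q_3$ immediately. Your proposal never identifies this $u_2$-expansion step, and without it the ``bookkeeping'' you flag as the main obstacle really is intractable.

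Second, you treat the $Q$-, $H$-, $D$-type choices for $Q_1$ as ``carried out separately'', but ruling these out is part of \emph{this} lemma: the statement is that \emph{every} CAO octahedron is of one of the three $N$-type forms. The paper devotes Lemmas~\ref{lemma:CAO_Q4321}--\ref{lemma:CAO_H2} to showing that each of these choices forces an inadmissible degeneration (typically $q_1=q_2=0$ or $q_3=q_4=0$, making $Q_2$ reducible or independent of one argument). Your proof sketch does not cover this, so as written it establishes only the easier half of the lemma.
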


The proof of Lemma \ref{lemma:classification_CAO_1} is given in Appendix \ref{section:proof_lemma_CAO}. Note that this result still contains many parameters, which can be reduced by using M\"obius transformations of the variables $u_3$ and $u_6$, as shown in the following lemma.

\begin{lemma}\label{lemma:classification_rational_r}
Given parameters $B_i\in\bbC$, $i=1\dots,8$, and for $q$ defined by \eqref{eqn:def_q}, assume that the rational function
\begin{equation}\label{eqn:def_r_qq}
 r(x,y;\{B_i\})=\dfrac{q(x,y;B_1,B_2,B_3,B_4)}{q(x,y;B_5,B_6,B_7,B_8)},
\end{equation}
satisfies $\partial r/\partial x\not=0$ and $\partial r/\partial y\not=0$. 
Then, it can be transformed to one of two canonical forms
\begin{subequations}
\begin{align}
 &r_1(x,y;\{A_i\})=\dfrac{A_1x+A_2y}{A_3x+A_4y},\label{eqn:r_type1}\\
 &r_2(x,y;A)=\dfrac{1}{x+y}+A,\label{eqn:r_type2}\\
 &r_3(x,y)=x+y,\label{eqn:r_type3}
\end{align}
\end{subequations}
by M\"obius transformations of $x$ and $y$, for some $A_i,A\in\bbC$ with $A_1A_4\not=A_2A_3$.
\end{lemma}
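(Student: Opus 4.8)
The statement is a normal-form classification for the rational function $r(x,y;\{B_i\}) = q(x,y;B_1,B_2,B_3,B_4)/q(x,y;B_5,B_6,B_7,B_8)$ under the action of M\"obius transformations in each of the two variables $x$ and $y$ separately. The strategy is to exploit that both numerator and denominator are multi-affine (degree one in each of $x$ and $y$), so that $r$ is a "bi-M\"obius" object, and to run a two-stage reduction: first normalize in $x$ (treating $y$ as a parameter), then normalize in $y$. Throughout, the hypotheses $\partial r/\partial x \neq 0$ and $\partial r/\partial y \neq 0$ guarantee genuine dependence on each variable, which prevents the degenerate collapses.

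\textbf{Step 1: reduce the $x$-dependence.} Writing $q(x,y;B_1,B_2,B_3,B_4) = (B_1 y + B_2)x + (B_3 y + B_4)$ and similarly for the denominator with $B_5,\dots,B_8$, we see $r = \big(p_1(y)x + p_0(y)\big)/\big(p_1'(y)x + p_0'(y)\big)$ where $p_1,p_0,p_1',p_0'$ are affine in $y$. For fixed generic $y$ this is a M\"obius function of $x$; the transformation $x \mapsto (Ax+B)/(Cx+D)$ applied to $r$ post-composes the $x$-substitution, so we may clear one of the four affine coefficient-polynomials. The natural choice is to send $x \mapsto$ something that makes the denominator affine-polynomial in $y$ of the simplest possible shape — concretely, divide through so that $r$ becomes $(\tilde p_1(y)x + \tilde p_0(y))/(\tilde p_1'(y)x + \tilde p_0'(y))$ with, say, $\tilde p_0' \equiv 0$ (giving the shape $x\cdot(\cdots)$ in the denominator) or $\tilde p_1' \equiv$ const. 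A short discussion by cases on which of $p_1', p_0'$ can be made to vanish (they cannot both vanish identically, since then $r$ would be independent of $x$) brings $r$ to the intermediate form $r = \big(\alpha(y) x + \beta(y)\big)/\big(\gamma(y) x + \delta(y)\big)$ with a reduced set of free parameters, where the remaining freedom is a M\"obius transformation in $y$ only.

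\textbf{Step 2: reduce the $y$-dependence and trichotomize.} Now $\alpha,\beta,\gamma,\delta$ are affine in $y$, and we apply a M\"obius transformation $y \mapsto (A'y+B')/(C'y+D')$. The key invariant is the behaviour of the cross-ratio $\alpha\delta - \beta\gamma$ as a polynomial in $y$ (it controls when $r$ becomes independent of $x$): it is a quadratic in $y$ whose leading coefficient, zero pattern, and factorization type over $\bbC$ determine the orbit. The trichotomy in the statement should fall out as: (i) $\alpha\delta-\beta\gamma$ has two distinct roots $\leadsto$ after moving them to $0,\infty$ and rescaling one gets the homographic form $r_1 = (A_1x + A_2 y)/(A_3 x + A_4 y)$ with $A_1A_4 \neq A_2A_3$; (ii) it has a double root $\leadsto$ moving it to $\infty$ (or $0$) and normalizing gives the "parabolic" form $r_2 = 1/(x+y) + A$; (iii) the truly degenerate/additive case where $r$ itself is affine (linear in $x+y$) $\leadsto$ $r_3 = x+y$ after M\"obius normalization. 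At each stage one checks that the nondegeneracy hypotheses $\partial r/\partial x\neq 0$, $\partial r/\partial y\neq 0$ are preserved and that they exclude the remaining sub-degenerations (e.g. they forbid $r$ from depending on only $x$ or only $y$, and in case (i) they force $A_1A_4\neq A_2A_3$).

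\textbf{Main obstacle.} The genuine difficulty is bookkeeping: there are several branches according to which affine coefficient-polynomials vanish identically, vanish at a point, or are proportional, and one must verify that the branches are exhaustive and map onto exactly the three canonical forms with no overlap and no missing case — in particular that the symmetry $x \leftrightarrow y$ of $r_2$ and $r_3$ (but not of generic $r_1$) is consistent with the count. The cleanest way to control this is to track the factorization type of the single quadratic $\alpha\delta - \beta\gamma \in \bbC[y]$ (equivalently, the number of values of $y$ at which $r(\cdot,y)$ degenerates as a M\"obius map in $x$): two simple roots, one double root, or degree $< 2$. This invariant is unchanged (up to the M\"obius action on $y$) by the allowed transformations, so it partitions the orbits, and matching each class to $r_1, r_2, r_3$ is then a finite and essentially mechanical normalization. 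I would present the case analysis compactly, relegating the routine algebra of putting each branch into canonical form to a short verification.
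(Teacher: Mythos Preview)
Your invariant-based strategy is a reasonable alternative to the paper's direct coefficient-by-coefficient reduction (the paper first kills $B_5$ by a M\"obius transformation in $x$, then case-splits on which of $B_6,B_7$ vanish). However, the trichotomy you propose has a genuine gap: the invariant $\Delta(y)=\alpha\delta-\beta\gamma$ does \emph{not} separate $r_2$ from $r_3$. A direct computation gives $\Delta(y)=(A_1A_4-A_2A_3)\,y$ for $r_1$ (two distinct projective roots, at $0$ and $\infty$), but $\Delta(y)=-1$ for $r_2$ and $\Delta(y)=1$ for $r_3$, both constant, i.e.\ both with a double root at $y=\infty$. More generally, for every element of the $r_2$- or $r_3$-orbit, $\Delta(y)$ is a perfect square, so your invariant yields only a dichotomy $r_1$ versus $\{r_2,r_3\}$. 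Your proposed case (iii) criterion, ``$r$ itself is affine'', is not M\"obius-invariant (e.g.\ $r_3(1/x,y)=(1+xy)/x$ is not affine), and ``degree ${<}\,2$'' of $\Delta$ is not M\"obius-invariant in $y$ either, since a linear $\Delta$ has two distinct projective roots and so falls into your case (i).

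The missing second invariant is the \emph{value} of the constant map $x\mapsto r(x,y_0)$ at the double root $y_0$ of $\Delta$: it is $\infty$ throughout the $r_3$-orbit (since $r_3(x,\infty)=\infty$) and the finite number $A$ throughout the $r_2$-orbit (since $r_2(x,\infty)=A$). Because the lemma allows M\"obius transformations only on $x$ and $y$, not postcomposition on the target, ``constant value $\infty$'' versus ``constant value finite'' is a genuine orbit invariant, and adding it to your scheme would repair the argument. As written, though, your Step~2 would fail to place $r_2$ and $r_3$ in separate bins.
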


The proof of Lemma \ref{lemma:classification_rational_r} is given in Appendix \ref{section:proof_lemma_r}.
The main classification of CAO quad-equations is given by the following lemma.
\begin{lemma}[CAO classification]\label{lemma:classification_Q123_CAO}
Under the equivalence relation $\sim_o$, any CAO octahedron with quad-equations $\{Q_1,Q_2,Q_3\}$ is equivalent to one of the following:
\begin{subequations}
\begin{align}
&\begin{cases}\label{eqn:Q123_N333}
 \polyN3(u_4,u_2,u_1,u_5;\al_1,\al_2,\al_3,\al_4),\\
 \polyN3(u_2,u_6,u_5,u_3;\al_5,\al_6,\al_7,\al_8),\\
 \polyN3(u_6,u_4,u_3,u_1;\al_{1,8}-\al_{3,5},\al_{4,6}-\al_{2,7},\al_{4,8}-\al_{2,5},\al_{1,6}-\al_{3,7}),
\end{cases}\\
&\begin{cases}\label{eqn:Q123_N223}
 \polyN2(u_2,u_4,u_5,u_1;\al_1,\al_2,\al_3,\al_4),\\
 \polyN2(u_2,u_6,u_5,u_3;\al_5,\al_6,\al_7,\al_8),\\
 \polyN3(u_6,u_4,u_3,u_1;\al_{3,5}-\al_{1,7},\al_{4,6}-\al_{2,8},\al_{4,5}-\al_{2,7},\al_{3,6}-\al_{1,8}),
\end{cases}\\
&\begin{cases}\label{eqn:Q123_N221}
 \polyN2(u_4,u_2,u_1,u_5;\al_1,\al_2,\al_3,\al_4),\\
 \polyN2(u_6,u_2,u_3,u_5;\al_5,\al_6,\al_7,\al_8),\\
 \polyN1(u_6,u_4,u_3,u_1;\al_{2,5}-\al_{1,6},\al_{4,5}-\al_{3,6},\al_{2,7}-\al_{1,8},\al_{4,7}-\al_{3,8}),
\end{cases}\\
&\begin{cases}\label{eqn:Q123_N111}
 \polyN1(u_4,u_2,u_1,u_5;\al_1,\al_2,\al_3,\al_4),\\
 \polyN1(u_2,u_6,u_5,u_3;\al_5,\al_6,\al_7,\al_8),\\
 \polyN1(u_6,u_4,u_3,u_1;\al_{2,5}-\al_{1,7},\al_{4,5}-\al_{3,7},\al_{2,6}-\al_{1,8},\al_{4,6}-\al_{3,8}),
\end{cases}
\end{align}
\end{subequations}
where $\al_{i,j}=\al_i\al_j$.
\end{lemma}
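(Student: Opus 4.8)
\textbf{Proof strategy for Lemma \ref{lemma:classification_Q123_CAO}.}
The plan is to start from the three canonical families produced by Lemma \ref{lemma:classification_CAO_1}, namely \eqref{eqn:Q123_N3}, \eqref{eqn:Q123_N2} and \eqref{eqn:Q123_N1}, and in each case apply Lemma \ref{lemma:classification_rational_r} to the rational function $r(u_6,u_3;\{B_i\})$ appearing there. By Lemma \ref{lemma:classification_rational_r} there are exactly three canonical forms $r_1$, $r_2$, $r_3$ for this quotient after a M\"obius change in $u_6$ and $u_3$; so in total there are nominally $3\times 3=9$ subcases. The first step is therefore to substitute each of $r_1,r_2,r_3$ back into the general shapes \eqref{eqn:Q123_N3}--\eqref{eqn:Q123_N1}, clear denominators, and record the resulting triples $\{Q_1,Q_2,Q_3\}$. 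Since the $r_i$-form only fixes $Q_2$ and enters $Q_3$ linearly, this is a direct substitution; one then has to check that $Q_2$ (and $Q_3$) are genuinely irreducible multi-affine polynomials, which pins down which of the nine subcases survive and forces $Q_2$, $Q_3$ to land in the $N3/N2/N1$ list.

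The second step is to identify the surviving subcases with the four displayed outcomes \eqref{eqn:Q123_N333}--\eqref{eqn:Q123_N111} and to compute the explicit parameter expressions for $Q_3$. Here the key observation is the CAO condition itself: $Q_3$ is \emph{determined} by $Q_1$ and $Q_2$ (each equation is a consequence of the other two), so one recovers $Q_3$ by eliminating one variable. Concretely, from $Q_1(u_4,u_2,u_1,u_5)=0$ solve for (say) $u_2$ as a rational function of $u_4,u_1,u_5$, substitute into $Q_2(u_2,u_6,u_5,u_3)=0$, and also use $Q_1$ to eliminate $u_5$ — after clearing denominators this must reduce to a multi-affine polynomial in $(u_6,u_4,u_3,u_1)$, which is $Q_3$ up to a nonzero constant. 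Carrying this elimination through with the $N$-type forms of $Q_1,Q_2$ gives the bilinear-in-the-$\al$'s coefficients $\al_{i,j}=\al_i\al_j$ that appear in \eqref{eqn:Q123_N333}--\eqref{eqn:Q123_N111}; the combinations $\al_{1,8}-\al_{3,5}$, etc., are exactly the $2\times2$ minors that drop out of the elimination. One should also verify the sorting imposed by $\sim_o$: some of the nominal subcases (for instance an $N3$--$N2$ mix) are related by a permutation in $G^{(o)}$ together with a M\"obius transformation to one already on the list, and this is what collapses nine possibilities down to four — this sorting needs the symmetry relations of $N2$, $N3$ under the allowed vertex permutations, analogous to the relations used in the proof of Lemma \ref{lemma:Q4321_D432_symmetry}.

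The third step is a consistency check: for each of the four triples \eqref{eqn:Q123_N333}--\eqref{eqn:Q123_N111} one must confirm the CAO property actually holds (not just that these are the only candidates), i.e. that the third equation really is a consequence of the other two and symmetrically. This is a finite computation — resolve $Q_1=0$ and $Q_2=0$ for a shared variable and check the compatibility condition factors through $Q_3$ — and it closes the classification. I expect the main obstacle to be the bookkeeping in the second step: tracking the M\"obius normalisations in $u_3,u_6$ together with the $G^{(o)}$-relabellings so that the surviving subcases are matched to the \emph{stated} normal forms with precisely the parameter formulas given, and making sure no admissible case is lost when one of the $B_i$-determinants or leading coefficients degenerates (the boundary between the $r_1$, $r_2$, $r_3$ branches, and between $N3$, $N2$, $N1$). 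Ruling out the mixed triples — e.g. showing no CAO octahedron has $Q_1$ of type $N3$ and $Q_2$ of type $N1$ except via an equivalence to one of the four listed families — is the delicate part and will likely be deferred to the appendices referenced in the excerpt (Appendix \ref{section:proof_lemma_CQ123456789_1}).
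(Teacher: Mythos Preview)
Your overall strategy is correct and matches the paper's: start from the three families of Lemma~\ref{lemma:classification_CAO_1}, apply the trichotomy $r_1,r_2,r_3$ of Lemma~\ref{lemma:classification_rational_r} in each, and then collapse the nine resulting subcases to four via $\sim_o$.

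Two places where you over-complicate or misread the situation. First, you do not need to \emph{re-derive} $Q_3$ by eliminating a variable between $Q_1$ and $Q_2$: Lemma~\ref{lemma:classification_CAO_1} already hands you $Q_3$ explicitly as a linear combination of the two $q$-polynomials, so once you have fixed the canonical form $r_i$ for $q(u_6,u_3;B_1,\dots,B_4)/q(u_6,u_3;B_5,\dots,B_8)$ you merely substitute and read off the $N$-type coefficients of $Q_3$; the bilinear $\al_{i,j}$ expressions drop out immediately from this substitution, not from a separate elimination. Second, the ``mixed triples'' you worry about (e.g.\ $Q_1$ of type $\polyN3$, $Q_2$ of type $\polyN2$) are not exceptional cases to be ruled out in an appendix --- they are exactly the subcases $\{\eqref{eqn:Q123_N3},\eqref{eqn:r_type2}\}$ and $\{\eqref{eqn:Q123_N3},\eqref{eqn:r_type3}\}$ among your nine, and the paper simply checks by an explicit $G^{(o)}$-relabelling that these are $\sim_o$-equivalent to \eqref{eqn:Q123_N223}. (Appendix~\ref{section:proof_lemma_CQ123456789_1} concerns the later Lemma~\ref{lemma:classification_Q123456789_1} on cuboctahedra, not this step.) With these simplifications the bookkeeping you anticipate is quite short.
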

\begin{proof}
Our starting point is the result of Lemma \ref{lemma:classification_CAO_1}, which gives 3 sets of quad-equations $\{Q_1,Q_2,Q_3\}$ to consider. 
Each set has three possible canonical forms of $r$, given by $r_1$, $r_2$ and $r_3$, which therefore results in nine cases to consider in total. 
We show below that some of these are equivalent, or occur as special cases of others.
  
We start by considering the \eqref{eqn:Q123_N3} and \eqref{eqn:r_type1}. Recall that $Q_2$ gives 
\begin{equation}
 u_5 \,q(u_6,u_3;B_1,B_2,B_3,B_4)+u_2 \,q(u_6,u_3;B_5,B_6,B_7,B_8)=0.
\end{equation}
By using M\"obius transformations of $u_3$ and $u_6$, we can rewrite this equation as
\begin{equation}
 \polyN3(u_2,u_6,u_5,u_3;A_3,A_2,A_4,A_1)=0.
\end{equation}
Then, $\{Q_1,Q_2,Q_3\}$ becomes
\begin{equation}\label{eqn:CAO_proof_1}
\begin{cases}
 \polyN3(u_4,u_2,u_1,u_5;\al_1,\al_2,\al_3,\al_4),\\
 \polyN3(u_2,u_6,u_5,u_3;A_3,A_2,A_4,A_1),\\
 \polyN3(u_6,u_4,u_3,u_1;A_1\al_1-A_3\al_3,A_2\al_4-A_4\al_2,A_1\al_4-A_3\al_2,A_2\al_1-A_4\al_3).
\end{cases}
\end{equation}
Similarly, the pair \eqref{eqn:Q123_N3} and \eqref{eqn:r_type2} gives
\begin{equation}\label{eqn:CAO_proof_2}
\begin{cases}
 \polyN3(u_4,u_2,u_1,u_5;\al_1,\al_2,\al_3,\al_4),\\
 \polyN2(u_6,u_2,u_3,u_5;1,A_5,0,1),\\
 \polyN2(u_6,u_4,u_3,u_1;A_5\al_1-\al_3,A_5\al_4-\al_2,\al_1,\al_4),
\end{cases}
\end{equation}
and the pair \eqref{eqn:Q123_N3} and \eqref{eqn:r_type3} gives
\begin{equation}\label{eqn:CAO_proof_7}
\begin{cases}
 \polyN3(u_5,u_4,u_2,u_1;\al_3,\al_4,\al_2,\al_1),\\
 \polyN2(u_6,u_5,u_3,u_2;1,0,0,1),\\
 \polyN2(u_6,u_4,u_3,u_1;\al_1,\al_4,-\al_3,\al_2).
\end{cases}
\end{equation}
We can easily verify that the following hold:
\begin{align*}
 &\eqref{eqn:Q123_N333}\sim_o\eqref{eqn:CAO_proof_1},\\
 &\eqref{eqn:Q123_N223}\sim_o\eqref{eqn:CAO_proof_2},~\eqref{eqn:CAO_proof_7}.
\end{align*}

In a similar manner, we obtain the following results:
\begin{align*}
 &\eqref{eqn:Q123_N223}\sim_o
 \big\{\eqref{eqn:Q123_N2},\eqref{eqn:r_type1}\big\},\\
 &\eqref{eqn:Q123_N221}\sim_o
 \big\{\eqref{eqn:Q123_N2},\eqref{eqn:r_type2}\big\},~
 \big\{\eqref{eqn:Q123_N2},\eqref{eqn:r_type3}\big\},~
 \big\{\eqref{eqn:Q123_N1},\eqref{eqn:r_type1}\big\},\\
 &\eqref{eqn:Q123_N111}\sim_o
 \big\{\eqref{eqn:Q123_N1},\eqref{eqn:r_type2}\big\},~
 \big\{\eqref{eqn:Q123_N1},\eqref{eqn:r_type3}\big\}.
\end{align*}
Therefore, we have completed the proof.
\end{proof}
%
%

\subsection{Classification of quad-equations on a cuboctahedron}\label{subsection:classification_CACO}
In this section, we classify the quad-equations on a cuboctahedron by using the conditions \ref{cond:CO1}--\ref{cond:CO3} defined below.
Note that the following description is given only on one individual cuboctahedron. 

Consider the system of quad equations on a cuboctahedron as described by Equations \eqref{eqns:V_caco_general}.
We assume that the cuboctahedron with quad-equations $\{Q_1,\dots,Q_9\}$ satisfies the following properties.
\begin{condition}\label{condition:CO}
The quad-equations $\{Q_1,\dots,Q_9\}$ satisfy the following properties:
\begin{enumerate}[leftmargin=1.4cm,label={\rm (CO\arabic*)}]
\item\label{cond:CO1}
the CACO property;
\item\label{cond:CO2}
the square property;
\item\label{cond:CO3}
there exist irreducible multi-affine polynomials $P_i$, $i=1,2,3$, such that the system of equations is given by
\begin{subequations}\label{eqns:Q123456789_CO3}
\begin{align}
 &Q_1=P_1\left(u_5,u_1,v_5,v_4;A^{(1)}_1,\dots,A^{(1)}_{k_1}\right),
 &&Q_2=P_1\left(v_2,v_1,u_2,u_4;A^{(1)}_1,\dots, A^{(1)}_{k_1}\right),\label{eqn:Q1_P1_CO3}\\
 &Q_3=P_2\left(u_3,u_5,v_3,v_2;A^{(2)}_1,\dots,A^{(2)}_{k_2}\right),
 &&Q_4=P_2\left(v_6,v_5,u_6,u_2;A^{(2)}_1,\dots,A^{(2)}_{k_2}\right),\\
 &Q_5=P_3\left(u_1,u_3,v_1,v_6;A^{(3)}_1,\dots,A^{(3)}_{k_3}\right),
 &&Q_6=P_3\left(v_4,v_3,u_4,u_6;A^{(3)}_1,\dots,A^{(3)}_{k_3}\right).\\
 &Q_7=P_1\left(u_4,u_2,u_1,u_5;B^{(1)}_1,\dots,B^{(1)}_{k_1}\right),
 &&Q_8=P_2\left(u_2,u_6,u_5,u_3;B^{(2)}_1,\dots, B^{(2)}_{k_2}\right),\\
 &Q_9=P_3\left(u_6,u_4,u_3,u_1;B^{(3)}_1,\dots,B^{(3)}_{k_3}\right),\label{eqn:Q9_P3_CO3}
\end{align}
\end{subequations}
where $k_1,k_2,k_3\in\bbZ_{\geq0}$, and $A_i^{(j)}$ and $B_i^{(j)}$ are complex parameters.
If $A^{(i)}_{j}\equiv0$, then $B^{(i)}_{j}\equiv0$, and vice versa.
\end{enumerate}
Note that the parameters $A_i^{(j)}$, $B_i^{(j)}$ in \ref{cond:CO3} are not necessarily independent. 
Conditions \ref{cond:CO1} and \ref{cond:CO2} give rise to relations between them as shown in Lemma \ref{lemma:classification_CACO_CO1CO2CO3_1}--\ref{lemma:classification_CACO_CO1CO2CO3_3} below.
\end{condition}

Clearly, if the cuboctahedron undergoes a reflection, analogous equations should hold. This leads to the consideration of a symmetry group, generated by the reflections on the vertices of the cuboctahedron. 
This motivates the definition of the group:
\begin{equation}\label{eqn:Gco}
 G^{(CO)}=\langle s_{12},s_{23},s_{13},\iota\rangle,
\end{equation}
where $s_{12}$, $s_{23}$, $s_{13}$ and $\iota$ are transformations defined by the following actions:
\begin{align*}
 s_{12}:&~u_1\leftrightarrow u_5,\quad u_2\leftrightarrow u_4,\quad 
 v_1\leftrightarrow v_2,\quad v_3\leftrightarrow v_6,\quad v_4\leftrightarrow v_5,\\
 s_{23}:&~u_2\leftrightarrow u_6,\quad u_3\leftrightarrow u_5,\quad 
 v_1\leftrightarrow v_4,\quad v_2\leftrightarrow v_3,\quad v_5\leftrightarrow v_6,\\
 s_{13}:&~u_1\leftrightarrow u_3,\quad u_4\leftrightarrow u_6,\quad 
 v_1\leftrightarrow v_6,\quad v_2\leftrightarrow v_5,\quad v_3\leftrightarrow v_4,\\
 \iota:&~
 u_1\leftrightarrow u_4,\quad u_2\leftrightarrow u_5,\quad u_3\leftrightarrow u_6,\quad 
 v_1\leftrightarrow v_4,\quad v_2\leftrightarrow v_5,\quad v_3\leftrightarrow v_6,
\end{align*}
or, equivalently, the following actions on the standard basis:
\begin{equation*}
 s_{12}:\ep_1\leftrightarrow \ep_2,\quad
 s_{23}:\ep_2\leftrightarrow \ep_3,\quad
 s_{13}:\ep_1\leftrightarrow \ep_3,\quad
 \iota:\{\ep_1,\ep_2,\ep_3\}\leftrightarrow \{-\ep_1,-\ep_2,-\ep_3\}.
\end{equation*}

The transformation $s_{12}$ can be visualized as the reflection across a slice of the cuboctahedron that divides it into two equal pieces. The slice in this case is taken through the vertices $\{u_3,u_6\}$ and the center of the edge connecting the vertices $\{u_1,u_5\}$. Similarly, the transformations $s_{23}$ and $s_{13}$ are reflections across two other ways of slicing the cuboctahedron. The slice corresponding to $s_{23}$ goes through the vertices $\{u_1,u_4\}$ and the center of the edge connecting the vertices $\{u_2,u_6\}$, while the one for $s_{13}$ goes through the vertices $\{u_2,u_5\}$ and the center of the edge connecting the vertices $\{u_1,u_3\}$.
Moreover, the transformation $\iota$ corresponds to the reflection of vertices about the centre of the cuboctahedron. 
(See Figure \ref{fig:cuboctahedron_3D}.)

\begin{remark}\label{remark:symmetry_GCO}
In Appendix \ref{section:proof_lemma_CACO_CO1CO2CO3}, we will apply these transformations to paticular cases of quad-equations.
The invariance of such quad-equations under $G^{(CO)}$ will need conditions on parameters.
We do not list them here, as different cases studied in Appendix \ref{section:proof_lemma_CACO_CO1CO2CO3} will lead to different conditions on parameters.
\end{remark}

We now collect the transformations under which quad-equations on a cuboctahedron remain invariant. Using the M\"obius transformations and the transformation group $G^{(CO)}$, 
we define the equivalence relation $\sim_{co}$ as follows.

\begin{definition}\rm
The cuboctahedron with quad-equations $\{Q_1,\dots,Q_9\}$ is said to be equivalent under $\sim_{co}$ to the cuboctahedron with quad-equations $\{Q_1',\dots,Q_9'\}$, i.e.
\begin{equation}
 \{Q_1,\dots,Q_9\}\sim_{co}\{Q_1',\dots,Q_9'\},
\end{equation} 
if the following properties hold.
Let a M\"obius transformation of the variables $u_i$ and $v_j$, where $i,j=1,\dots,6$, be denoted by $r$ and elements of $G^{(CO)}$ be denoted $\si$. 
Then we have
\begin{equation}
 \{Q_1,\dots,Q_9\}=\Big\{r.\si.Q_1',\dots,r.\si.Q_9'\Big\}.
\end{equation}
In a similar manner, we also define the equivalence relation $\sim_{co}$ for the quad-equations on the octahedron with quad-equations $\{Q_7,Q_8,Q_9\}$.
\end{definition}

In the following, we fix the form of the quad-equation $Q_7$.
Assume that the quad-equation $Q_7=Q_7(u_4,u_2,u_1,u_5)$ is given by an irreducible multi-affine polynomial polynomial $Q(x,y,z,w)$.
Then, there are 24 ways to define $Q_7$ by the polynomial $Q(x,y,z,w)$ up to M\"obius transformations.
Lemma \ref{lemma:classification_Q123_CAO} shows that 
if the quad-equations $\{Q_7,Q_8,Q_9\}$ has the CAO property,
then the quad-equation $Q_7$ is given by one of the polynomials $\polyN3$, $\polyN2$ and $\polyN1$, that is, the polynomial $Q(x,y,z,w)$ is given by one of the polynomials $\polyN3$, $\polyN2$ and $\polyN1$.
Because of the following relations: 
\begin{align*}
 &\polyN3(z,y,x,w;\al_4,\al_3,\al_2,\al_1)
 =\polyN3(x,w,z,y;\al_3,\al_4,\al_1,\al_2)
 =\polyN3(x,y,z,w;\al_1,\al_2,\al_3,\al_4),\\
 &\polyN2(z,y,x,w;\al_1,\al_2,\al_3,\al_4)
 =\polyN2(x,w,z,y;\al_2,\al_1,\al_4,\al_3)
 =\polyN2(x,y,z,w;\al_1,\al_2,\al_3,\al_4),\\
 &\polyN1(z,y,x,w;\al_1,\al_2,\al_3,\al_4)
 =\polyN1(x,w,z,y;\al_1,\al_2,\al_3,\al_4)
 =\polyN1(x,y,z,w;\al_1,\al_2,\al_3,\al_4),
\end{align*} 
it is sufficient to consider the following six cases for each $N$-type polynomial:
\begin{align*}
 &Q_7=Q(u_4,u_2,u_1,u_5),\quad
 Q_7=Q(u_2,u_4,u_5,u_1),\quad
 Q_7=Q(u_4,u_2,u_5,u_1),\\
 &Q_7=Q(u_2,u_4,u_1,u_5),\quad
 Q_7=Q(u_4,u_1,u_2,u_5),\quad
 Q_7=Q(u_1,u_2,u_5,u_4).
\end{align*}
Therefore, it is sufficient to consider the eighteen cases.
Lemma \ref{lemma:classification_CAO_1} (or, Lemmas \ref{lemma:CAO_N3}--\ref{lemma:CAO_N1}) shows that $Q_7$ must be given by one of the following:
\begin{align*}
 &\polyN3(u_4,u_2,u_1,u_5;\al_1,\al_2,\al_3,\al_4),\quad
 \polyN3(u_2,u_4,u_5,u_1;\al_1,\al_2,\al_3,\al_4),\\
 &\polyN2(u_4,u_2,u_1,u_5;\al_1,\al_2,\al_3,\al_4),\quad
 \polyN2(u_2,u_4,u_5,u_1;\al_1,\al_2,\al_3,\al_4),\\
 &\polyN1(u_4,u_2,u_1,u_5;\al_1,\al_2,\al_3,\al_4),\quad
 \polyN1(u_2,u_4,u_5,u_1;\al_1,\al_2,\al_3,\al_4),
\end{align*}
and the others are inadmissible cases or special cases of the cases above.
Moreover, since the polynomials $\polyN3$ and $\polyN1$ satisfy the following relations:
\begin{subequations}
\begin{align}
 \polyN3(y,x,w,z;\al_1,\al_2,\al_4,\al_3)=\polyN3(x,y,z,w;\al_1,\al_2,\al_3,\al_4),\label{eqn:relation_N3_2}\\
 \polyN1(y,x,w,z;\al_1,\al_3,\al_2,\al_4)=\polyN1(x,y,z,w;\al_1,\al_2,\al_3,\al_4),\label{eqn:relation_N1_2}
\end{align} 
\end{subequations}
it is sufficient to consider the four cases:
\begin{align*}
 &Q_7=\polyN3(u_4,u_2,u_1,u_5;\al_1,\al_2,\al_3,\al_4),\quad
 Q_7=\polyN2(u_4,u_2,u_1,u_5;\al_1,\al_2,\al_3,\al_4),\\
 &Q_7=\polyN2(u_2,u_4,u_5,u_1;\al_1,\al_2,\al_3,\al_4),\quad
 Q_7=\polyN1(u_4,u_2,u_1,u_5;\al_1,\al_2,\al_3,\al_4).
\end{align*}
Therefore, we obtain the following lemma.
\begin{lemma}\label{lemma:classification_Q789_CAO}
Under the equivalence relation $\sim_{co}$, any CAO octahedron with quad-equations $\{Q_7,Q_8,Q_9\}$ is equivalent to one of the systems of quad-equations \eqref{eqn:Q123_N333}--\eqref{eqn:Q123_N111}.
\end{lemma}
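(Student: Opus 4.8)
The plan is to reduce the classification of CAO octahedra under $\sim_{co}$ to the classification already obtained under $\sim_o$ in Lemma \ref{lemma:classification_Q123_CAO}. The key observation is that the octahedron sitting inside a cuboctahedron is the \emph{same} octahedron as the one considered in \S\ref{subsection:class_CAO}, with the same vertex labelling $u_1,\dots,u_6$ and the same distinguished quadrilaterals carrying $\{Q_7,Q_8,Q_9\}$. Hence Lemma \ref{lemma:classification_Q123_CAO} already tells us that, up to $\sim_o$, every CAO octahedron is one of \eqref{eqn:Q123_N333}--\eqref{eqn:Q123_N111}. What must be checked is that nothing is lost (and nothing new is gained) when we replace the equivalence relation $\sim_o$, built from $G^{(o)}$, by $\sim_{co}$, built from $G^{(CO)}$ restricted to $\{Q_7,Q_8,Q_9\}$. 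So the first step is to compare the two symmetry groups acting on the octahedron: the restriction of $G^{(CO)}$ to the action on $u_1,\dots,u_6$ is generated by the permutations induced by $s_{12},s_{23},s_{13},\iota$, and one checks directly from the displayed actions that each of these lies in $G^{(o)}=\langle(1\,4),(2\,5),(3\,6),(2\,3\,5\,6),(1\,3\,4\,6),(1\,2\,4\,5)\rangle$. Conversely, the reflections $(1\,4)$, $(2\,5)$, $(3\,6)$ and the rotations generating $G^{(o)}$ are realised by products of $s_{12},s_{23},s_{13},\iota$, so the two induced permutation groups on $\{1,\dots,6\}$ coincide. Consequently $\sim_{co}$ and $\sim_o$ define the same equivalence classes on the set of octahedral quad-equation triples.

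Granting that, the remainder of the proof is the bookkeeping already carried out in the discussion preceding the statement. Starting from the twenty-four labellings of $Q_7$ by a polynomial $Q(x,y,z,w)$ up to M\"obius transformations, Lemma \ref{lemma:classification_Q123_CAO} (via Lemma \ref{lemma:classification_CAO_1}) forces $Q$ to be one of $\polyN3$, $\polyN2$, $\polyN1$; the internal symmetries of these $N$-type polynomials recorded in the text (the three-fold relations for $\polyN3$, $\polyN2$, $\polyN1$, together with \eqref{eqn:relation_N3_2} and \eqref{eqn:relation_N1_2}) then cut the eighteen surviving cases down to the four cases
\[
 Q_7=\polyN3(u_4,u_2,u_1,u_5;\{\al_i\}),\quad
 Q_7=\polyN2(u_4,u_2,u_1,u_5;\{\al_i\}),\quad
 Q_7=\polyN2(u_2,u_4,u_5,u_1;\{\al_i\}),\quad
 Q_7=\polyN1(u_4,u_2,u_1,u_5;\{\al_i\}).
\]
For each of these four choices of $Q_7$, Lemma \ref{lemma:classification_Q123_CAO} determines $Q_8$ and $Q_9$ (up to $\sim_o$, hence up to $\sim_{co}$), and one reads off that the resulting triples are precisely \eqref{eqn:Q123_N333}, \eqref{eqn:Q123_N223}, \eqref{eqn:Q123_N221}, \eqref{eqn:Q123_N111} respectively — the match being immediate from the shapes of $Q_7$ in each displayed system. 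This completes the identification.

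I expect the only genuine obstacle to be the first step: verifying carefully that the group generated by $\{s_{12},s_{23},s_{13},\iota\}$ induces on $\{u_1,\dots,u_6\}$ exactly the group $G^{(o)}$, neither more nor less. This is where one must be attentive, because $G^{(CO)}$ also moves the $v_j$, and one has to be sure that (a) no extra octahedral symmetry sneaks in that would collapse two of the four cases above, and (b) no octahedral symmetry used in proving Lemma \ref{lemma:classification_Q123_CAO} fails to be available here. Both are settled by writing each generator of $G^{(CO)}$ as the permutation it induces on indices $1,\dots,6$ — for instance $s_{12}\mapsto(1\,5)(2\,4)$, $s_{23}\mapsto(2\,6)(3\,5)$, $s_{13}\mapsto(1\,3)(4\,6)$, $\iota\mapsto(1\,4)(2\,5)(3\,6)$ — and checking that these lie in $G^{(o)}$ and together generate it. Once this correspondence is in hand, the lemma follows purely formally from Lemma \ref{lemma:classification_Q123_CAO} and the $N$-type relations, with no further computation.
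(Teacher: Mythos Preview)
Your central claim---that the restriction of $G^{(CO)}$ to $\{u_1,\dots,u_6\}$ coincides with $G^{(o)}$---is false, and this breaks the argument. The induced permutations are indeed
\[
s_{12}\mapsto(1\,5)(2\,4),\quad s_{23}\mapsto(2\,6)(3\,5),\quad s_{13}\mapsto(1\,3)(4\,6),\quad \iota\mapsto(1\,4)(2\,5)(3\,6),
\]
and these lie in $G^{(o)}$; but $s_{12},s_{23},s_{13}$ generate a copy of $\mathfrak{S}_3$ (permuting the three coordinate axes), and $\iota$ is a central involution, so the image has order $12$. By contrast $G^{(o)}$ is the full octahedral group of order $48$: for instance the single transposition $(1\,4)$ lies in $G^{(o)}$ but cannot be written as a product of your four generators (the only odd elements in the order-$12$ image are $\iota$ times an even element, and none of these equals $(1\,4)$). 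Hence $\sim_{co}$ is strictly \emph{finer} than $\sim_o$ on octahedral triples, and your inference ``up to $\sim_o$, hence up to $\sim_{co}$'' goes the wrong way.

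This is exactly why the paper's proof does not simply quote Lemma~\ref{lemma:classification_Q123_CAO}. With the smaller group $G^{(CO)}$ one cannot reduce $Q_7$ to three cases as in Lemma~\ref{lemma:Q1_3type}; one is left with the four cases listed before the lemma, and the extra case $Q_7=\polyN2(u_2,u_4,u_5,u_1;\{\al_i\})$ is not a priori $\sim_{co}$-equivalent to $Q_7=\polyN2(u_4,u_2,u_1,u_5;\{\al_i\})$. The paper therefore reruns the CAO derivation for this fourth case (obtaining the system~\eqref{eqn:hatQ123_N2}) and then checks, using only Möbius transformations and elements of $G^{(CO)}$, that the resulting triples land in \eqref{eqn:Q123_N223} or \eqref{eqn:Q123_N221}. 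Your ``respective'' matching of the four $Q_7$-cases to the four displayed systems also does not hold as stated: both $\polyN2$ placements of $Q_7$ can feed into either \eqref{eqn:Q123_N223} or \eqref{eqn:Q123_N221}, depending on which canonical form of $r$ arises for $Q_8$. The fix is to drop the group-equality claim and instead verify, case by case (as the paper does), that the $\sim_{co}$-orbits of all nine (resp.\ twelve) intermediate systems still collapse to the four listed types.
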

\begin{proof}
Firstly, we let $Q_7$ be one of the following:
\begin{align*}
 &\polyN3(u_4,u_2,u_1,u_5;\al_1,\al_2,\al_3,\al_4),\quad
 \polyN2(u_4,u_2,u_1,u_5;\al_1,\al_2,\al_3,\al_4),\\
 &\polyN1(u_4,u_2,u_1,u_5;\al_1,\al_2,\al_3,\al_4).
\end{align*}
Then, in a similar manner to the proof of Lemma \ref{lemma:classification_Q123_CAO},
we obtain the nine CAO octahedra, each of which is equivalent to one of the systems of quad-equations \eqref{eqn:Q123_N333}--\eqref{eqn:Q123_N111} under the equivalence relation $\sim_{co}$.

Next, let
\begin{equation}
 Q_7=\polyN2(u_2,u_4,u_5,u_1;\al_1,\al_2,\al_3,\al_4),
\end{equation}
and $Q_8$ be the generic form \eqref{eqn:Q2_general_2}.
Then, in a similar manner to the proofs in Appendix \ref{section:proof_lemma_CAO},
we obtain the following CAO octahedron with quad-equations:
\begin{equation}\label{eqn:hatQ123_N2}
\begin{cases}
 \polyN2(u_2,u_4,u_5,u_1;\al_1,\al_2,\al_3,\al_4),\\
 q(u_6,u_3;B_1,B_2,B_3,B_4)+(u_5+u_2) \,q(u_6,u_3;B_5,B_6,B_7,B_8),\\
 (\al_2 u_1+\al_1u_4)\,q(u_6,u_3;B_1,B_2,B_3,B_4)\\
 \hspace{1em}-(\al_4u_1+\al_3u_4)\,q(u_6,u_3;B_5,B_6,B_7,B_8).
\end{cases}
\end{equation}
In a similar manner to the proof of Lemma \ref{lemma:classification_Q123_CAO},
we obtian the followin results:
\begin{align*}
 &\eqref{eqn:Q123_N223}\sim_{co}
 \big\{\eqref{eqn:hatQ123_N2},\eqref{eqn:r_type1}\big\},\\
 &\eqref{eqn:Q123_N221}\sim_{co}
 \big\{\eqref{eqn:hatQ123_N2},\eqref{eqn:r_type2}\big\},~
 \big\{\eqref{eqn:hatQ123_N2},\eqref{eqn:r_type3}\big\}.
\end{align*}
Therefore, we have completed the proof.
%
%
%
%
%
%
\end{proof}

\begin{lemma}
Under the equivalence relation $\sim_{co}$,
the polynomials $P_i=P_i(x,y,z,w;A_1,A_2,A_3,A_4)$, $i=1,2,3$, in the condition \ref{cond:CO3} can be given by one of
\begin{subequations}
\begin{align}
 &\begin{cases}\label{eqn:P123_Mobius_1}
 P_1=\polyN3\Big(r_4(x), r_2(y), r_1(z), r_5(w);A_1,A_2,A_3,A_4\Big),\\
 P_2=\polyN3\Big(r_2(x), r_6(y), r_5(z), r_3(w);A_1,A_2,A_3,A_4\Big),\\
 P_3=\polyN3\Big(r_6(x), r_4(y), r_3(z), r_1(w);A_1,A_2,A_3,A_4\Big),
 \end{cases}\\
 &\begin{cases}\label{eqn:P123_Mobius_2}
 P_1=\polyN2\Big(r_2(y), r_4(x), r_5(w), r_1(z);A_1,A_2,A_3,A_4\Big),\\
 P_2=\polyN2\Big(r_2(x), r_6(y), r_5(z), r_3(w);A_1,A_2,A_3,A_4\Big),\\
 P_3=\polyN3\Big(r_6(x), r_4(y), r_3(z), r_1(w);A_1,A_2,A_3,A_4\Big),
 \end{cases}\\
 &\begin{cases}\label{eqn:P123_Mobius_3}
 P_1=\polyN2\Big(r_4(x), r_2(y), r_1(z), r_5(w);A_1,A_2,A_3,A_4\Big),\\
 P_2=\polyN2\Big(r_6(y), r_2(x), r_3(w), r_5(z);A_1,A_2,A_3,A_4\Big),\\
 P_3=\polyN1\Big(r_6(x), r_4(y), r_3(z), r_1(w);A_1,A_2,A_3,A_4\Big),
 \end{cases}\\
 &\begin{cases}\label{eqn:P123_Mobius_4}
 P_1=\polyN1\Big(r_4(x), r_2(y), r_1(z), r_5(w);A_1,A_2,A_3,A_4\Big),\\
 P_2=\polyN1\Big(r_2(x), r_6(y), r_5(z), r_3(w);A_1,A_2,A_3,A_4\Big),\\
 P_3=\polyN1\Big(r_6(x), r_4(y), r_3(z), r_1(w);A_1,A_2,A_3,A_4\Big).
 \end{cases}
\end{align}
\end{subequations}
Here, $r_i$, $i=1,\dots,6$, are M\"obius transformations for arbitrary variable $X$ defined by
\begin{equation}\label{eqn:mobius_r_dij}
 r_i(X)=\dfrac{d_{i1}X+d_{i2}}{d_{i3}X+d_{i4}},
\end{equation}
where $d_{ij}$, $i=1,\dots,6$, $j=1,\dots,4$, are complex parameters.
\end{lemma}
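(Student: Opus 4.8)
The idea is to combine three inputs that are already available: (i) the CAO classification of Lemma~\ref{lemma:classification_Q789_CAO}, which tells us that under $\sim_{co}$ the octahedron $\{Q_7,Q_8,Q_9\}$ must be one of the four systems \eqref{eqn:Q123_N333}--\eqref{eqn:Q123_N111}; (ii) Condition~\ref{cond:CO3}, which forces $Q_7,Q_8,Q_9$ to be the ``same'' polynomials $P_1,P_2,P_3$ (with parameters $B^{(i)}_j$) that occur as the faces $Q_1,\dots,Q_6$ (with parameters $A^{(i)}_j$), and moreover that the vanishing pattern of the $A$'s and $B$'s coincide; and (iii) the M\"obius freedom in $\sim_{co}$, which acts independently on each of the twelve variables $u_i,v_j$. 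First I would record that \ref{cond:CO3} identifies the polynomial type of $P_1$ with that of $Q_7$, of $P_2$ with that of $Q_8$, and of $P_3$ with that of $Q_9$; so the four cases of Lemma~\ref{lemma:classification_Q789_CAO} immediately split the analysis into exactly the four cases \eqref{eqn:P123_Mobius_1}--\eqref{eqn:P123_Mobius_4}, according to whether $(P_1,P_2,P_3)$ is $(\polyN3,\polyN3,\polyN3)$, $(\polyN2,\polyN2,\polyN3)$, $(\polyN2,\polyN2,\polyN1)$, or $(\polyN1,\polyN1,\polyN1)$.

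\textbf{Reading off the variable slots.} Within each case, the next step is to determine in which order and with which variables each $P_i$ must be evaluated. This is forced by matching the octahedral equations $Q_7,Q_8,Q_9$ in \eqref{eqn:Q9_P3_CO3} against the canonical octahedron of the corresponding CAO system: e.g.\ in the $\polyN3$ case, $Q_7=P_1(u_4,u_2,u_1,u_5;\cdot)$ must agree, up to the $N3$-symmetries \eqref{eqn:relation_N3_2} and $\polyN3(z,y,x,w;\al_4,\al_3,\al_2,\al_1)=\polyN3(x,y,z,w;\al_1,\al_2,\al_3,\al_4)$, with the first line of \eqref{eqn:Q123_N333}, and similarly for $Q_8,Q_9$. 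This pins down, up to those symmetries, the cyclic arrangement of $\{u_4,u_2,u_1,u_5\}$, $\{u_2,u_6,u_5,u_3\}$, $\{u_6,u_4,u_3,u_1\}$ in $P_1,P_2,P_3$. Then \eqref{eqn:Q1_P1_CO3}--\eqref{eqn:Q9_P3_CO3} say the \emph{same} polynomial $P_1$ is also evaluated on the face quadruples $(u_5,u_1,v_5,v_4)$ and $(v_2,v_1,u_2,u_4)$; consistency of the slot assignments across the octahedral and facial occurrences of $P_1$ forces the argument pattern $(r_4(x),r_2(y),r_1(z),r_5(w))$ displayed in \eqref{eqn:P123_Mobius_1}, with each $r_i$ the as-yet-undetermined M\"obius transformation acting on the variable sitting at vertex $u_i$ (or $v_i$). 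I would carry this out case by case, using \eqref{eqn:relation_N3_2}, \eqref{eqn:relation_N1_2}, and the quoted $N2$-symmetry to reduce to the single normal form listed, exactly as was done to cut the eighteen cases down to four just before Lemma~\ref{lemma:classification_Q789_CAO}.

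\textbf{Absorbing M\"obius transformations and equalising parameters.} The remaining point is that $\sim_{co}$ allows an \emph{independent} M\"obius change of each variable, and I would use this freedom in two stages. First, the M\"obius transformations $r_i$ appearing in \eqref{eqn:P123_Mobius_1}--\eqref{eqn:P123_Mobius_4} are precisely the residual coordinate changes on the twelve vertices that have not yet been used; they cannot in general be removed because each vertex is shared between several $P_i$'s, so we keep them as free data $r_i(X)=(d_{i1}X+d_{i2})/(d_{i3}X+d_{i4})$. Second — and this is where Condition~\ref{cond:CO3} does real work — the hypothesis that each of $Q_1,Q_2$ (and likewise $Q_3,Q_4$ and $Q_5,Q_6$) is the \emph{same} polynomial $P_1$ with the \emph{same} parameter vector $(A^{(1)}_1,\dots,A^{(1)}_{k_1})$ as the octahedral equation $Q_7$, up to the stated ``$A\equiv0\iff B\equiv0$'' caveat, means the $B$-parameters can be rescaled to the $A$-parameters by a constant multiple (recall equivalence of quad-equations is up to constant multiples), so all six occurrences of the $N$-type polynomial in each column share a common parameter vector $A_1,A_2,A_3,A_4$. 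Matching this against the parameter expressions in \eqref{eqn:Q123_N333}--\eqref{eqn:Q123_N111} and solving the resulting linear relations among $\al_1,\dots,\al_8$ gives the common $(A_1,A_2,A_3,A_4)$; I would verify that the number of $N$-type parameters is indeed four in every case (three for $\polyN3$ once one overall scale is fixed, etc.), which is why only $A_1,\dots,A_4$ appear.

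\textbf{Main obstacle.} The genuinely delicate step is the bookkeeping in the second paragraph: showing that the slot assignments forced by the \emph{octahedral} equations $Q_7,Q_8,Q_9$ are compatible with those forced by the \emph{facial} equations $Q_1,\dots,Q_6$, and that after using the $N$-type symmetry relations exactly one normal form survives in each of the four cases. In other words, one must check that no additional case-splitting is hidden in the choice of which cyclic orientation of each quadrilateral to take, and that the leftover M\"obius freedom is exactly captured by the six transformations $r_1,\dots,r_6$ attached to the vertices $u_1,\dots,u_6$ (the $v_i$ being governed by the same $r_i$ via the square-type pairing $v_i\leftrightarrow u_i$ built into $G^{(CO)}$ and the square property \ref{cond:CO2}). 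This is a finite but intricate verification; everything else reduces to the already-proved Lemmas~\ref{lemma:classification_Q123_CAO} and~\ref{lemma:classification_Q789_CAO} together with routine linear algebra on the parameters.
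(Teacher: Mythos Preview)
Your core strategy---invoke Lemma~\ref{lemma:classification_Q789_CAO} to reduce $\{Q_7,Q_8,Q_9\}$ to one of the four CAO canonical forms, and record the six M\"obius transformations used---is exactly what the paper does, and its proof is essentially that single sentence. By \ref{cond:CO3} the polynomials $P_1,P_2,P_3$ are \emph{defined} via $Q_7=P_1(u_4,u_2,u_1,u_5;\cdot)$, $Q_8=P_2(u_2,u_6,u_5,u_3;\cdot)$, $Q_9=P_3(u_6,u_4,u_3,u_1;\cdot)$, so once the $\sim_{co}$-equivalence of Lemma~\ref{lemma:classification_Q789_CAO} supplies M\"obius maps $r_1,\dots,r_6$ (one per variable $u_1,\dots,u_6$) bringing $\{Q_7,Q_8,Q_9\}$ to one of \eqref{eqn:Q123_N333}--\eqref{eqn:Q123_N111}, the formulas \eqref{eqn:P123_Mobius_1}--\eqref{eqn:P123_Mobius_4} for $P_i(x,y,z,w)$ are read off immediately by substituting $u_i\mapsto r_i(u_i)$ into the canonical form and then replacing $(u_4,u_2,u_1,u_5)$ etc.\ by generic $(x,y,z,w)$.

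You over-engineer two points. First, there is no ``slot compatibility'' check between the octahedral and facial occurrences of $P_1$: the polynomial $P_1$ is determined entirely by $Q_7$, and $Q_1,Q_2$ are then whatever results from plugging their own argument quadruples into that already-fixed $P_1$. The $r_i$ act on the \emph{slots} of $P_1$, not on vertices, so no separate consistency with $v_j$'s arises at this stage. Second, your parameter-equalisation paragraph misreads \ref{cond:CO3}: the condition ``$A^{(i)}_j\equiv 0\iff B^{(i)}_j\equiv 0$'' says only that the $A$- and $B$-parameter vectors have identical vanishing patterns, not that they are constant multiples of one another; the $B^{(i)}_\bullet$ of $Q_7$ and the $A^{(i)}_\bullet$ of $Q_1$ remain genuinely independent. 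Nothing about this lemma requires tying them together---the symbols $A_1,\dots,A_4$ in the statement are merely generic parameter labels for the $N$-type polynomial form, to be specialised differently at each occurrence in \eqref{eqns:Q123456789_CO3}. Drop these two extra steps and your argument collapses to the paper's.
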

\begin{proof}
Let $P_i=P_i(x,y,z,w)$, $i=1,2,3$, be the polynomials given by the condition \ref{cond:CO3}.
The octahedron with quad-equations $\{Q_7,Q_8,Q_9\}$, where
\begin{align*}
 &Q_7=P_1\left(u_4,u_2,u_1,u_5;B^{(1)}_1,\dots,B^{(1)}_{k_1}\right),\quad
 Q_8=P_2\left(u_2,u_6,u_5,u_3;B^{(1)}_1,\dots, B^{(1)}_{k_1}\right),\\
 &Q_9=P_3\left(u_6,u_4,u_3,u_1;B^{(2)}_1,\dots,B^{(2)}_{k_2}\right),
\end{align*}
satisfies the CAO property.
Then, from Lemma \ref{lemma:classification_Q789_CAO} there exist the pair of a M\"obius transformation of $u_i$, $i=1,\dots,6$,
and a permutation $\si\in G^{(co)}$ such that
$\{Q_7,Q_8,Q_9\}$ maps to one of \eqref{eqn:Q123_N333}--\eqref{eqn:Q123_N111}.
Therefore, we have completed the proof.
\end{proof}

\begin{lemma}\label{lemma:classification_Q123456789_1}
Under the equivalence relation $\sim_{co}$, the square property leads to three canonical classes of polynomials $P_i$, $i=1,2,3$, in Condition \ref{condition:CO}, which are given by
\begin{subequations}
\begin{align}
 &\begin{cases}
 P_1=\polyN3\Big(x, y, z, w;A_1,A_2,A_3,A_4\Big),\\
 P_2=\polyN3\Big(x, y, z, w;A_1,A_2,A_3,A_4\Big),\\
 P_3=\polyN3\Big(x, y, z, w;A_1,A_2,A_3,A_4\Big),
 \end{cases}\label{eqn:P123_caseN3_1}\\
 &\begin{cases}
 P_1=\polyN3\Big(x, y, z, w^{-1};A_1,A_2,A_3,A_4\Big),\\
 P_2=\polyN3\Big(x, y, z^{-1}, w^{-1};A_1,A_2,A_3,A_4\Big),\\
 P_3=\polyN3\Big(x, y, z^{-1}, w;A_1,A_2,A_3,A_4\Big),
 \end{cases}\label{eqn:P123_caseN3_3}\\
 &\begin{cases}
 P_1=\polyN1\Big(x, y, d_1 z, d_2 w;A_1,A_2,A_3,A_4\Big),\\
 P_2=\polyN1\Big(x, y, d_2 z, d_3 w;A_1,A_2,A_3,A_4\Big),\\
 P_3=\polyN1\Big(x, y, d_3 z, d_1 w;A_1,A_2,A_3,A_4\Big),
 \end{cases}\label{eqn:P123_caseN1}
\end{align}
\end{subequations}
where $d_j$, $j=1,2,3$, are complex parameters satisfying
\begin{equation}
 d_1 d_2 d_3=-1.
\end{equation}
\end{lemma}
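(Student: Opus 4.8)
The plan is to start from the previous lemma, which already reduces the polynomials $P_i$ (up to $\sim_{co}$) to one of the four families \eqref{eqn:P123_Mobius_1}--\eqref{eqn:P123_Mobius_4}, each built from $N$-type polynomials $\polyN3$, $\polyN2$, $\polyN1$ composed with independent M\"obius transformations $r_1,\dots,r_6$ on the six coordinate directions. The task is then to determine which choices of these M\"obius transformations are compatible with the square property of Condition \ref{condition:CO}, and to show that the surviving cases collapse (after further use of $\sim_{co}$) to the three listed normal forms \eqref{eqn:P123_caseN3_1}, \eqref{eqn:P123_caseN3_3}, \eqref{eqn:P123_caseN1}.

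First I would recall that the square property requires the existence of polynomials $K_i=K_i(x,y,z,w)$ with $\deg_x K_i=\deg_w K_i=1$ and $\deg_y K_i,\deg_z K_i\geq 1$, satisfying $K_1(v_1,u_1,u_4,v_4)=0$, $K_2(v_2,u_2,u_5,v_5)=0$, $K_3(v_3,u_3,u_6,v_6)=0$ on solutions of the face and octahedron equations. For each of the four families I would use the face quad-equations $Q_1,\dots,Q_6$ (which in the CACO setup express each $v_j$ rationally in terms of the adjacent vertices) to eliminate the interior octahedron variables and obtain, for each pair $(v_k,v_{k+3})$, a relation between $v_k$, $v_{k+3}$ and two of the $u$'s; demanding that this relation be polynomial of the prescribed bidegree in the four relevant variables forces algebraic constraints on the $r_i$. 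Concretely, the degenerations built into $\polyN2$ and $\polyN1$ (where one pair of variables enters only through the combination $x+z$, respectively $x+z$ and $y+w$) interact with the M\"obius transformations: the square polynomial can only have the required low degree in $x$ and $w$ if the $r_i$ attached to those directions are affine (no denominator), or appear in reciprocal pairs that cancel. Tracking these constraints through the three mixed families \eqref{eqn:P123_Mobius_2}, \eqref{eqn:P123_Mobius_3} shows they are either inadmissible or fall back onto the pure-$\polyN3$ or pure-$\polyN1$ families; for the pure families one then normalizes the remaining freedom in $r_1,\dots,r_6$ using the $G^{(CO)}$-action and residual M\"obius gauge, arriving at either all $r_i=\mathrm{id}$ (giving \eqref{eqn:P123_caseN3_1}), or the ``reciprocal'' pattern in which three of the directions carry $X\mapsto X^{-1}$ in a way consistent around the three slices (giving \eqref{eqn:P123_caseN3_3}), or, in the $\polyN1$ case, the rescalings $z\mapsto d_j z$, $w\mapsto d_{j+1} w$ with the cocycle condition $d_1d_2d_3=-1$ forced by closing up the three square equations around the cuboctahedron (giving \eqref{eqn:P123_caseN1}).

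The bookkeeping for which scalings and inversions survive is the crux: the condition $d_1d_2d_3=-1$ (rather than $=1$) is the kind of sign that only emerges from carefully composing the three square relations $K_1,K_2,K_3$ around the cuboctahedron and comparing orientations via $\iota$, so I would derive it explicitly by writing the three square equations in the $\polyN1$ normal form, substituting the face relations, and reading off the consistency condition on the product of the scaling parameters. I expect the main obstacle to be precisely this step — showing that the square-property polynomials $K_i$ of the prescribed bidegree exist \emph{if and only if} the $r_i$ have one of the three admissible shapes, and in particular ruling out the mixed $\polyN2/\polyN3$ and $\polyN2/\polyN1$ combinations without an enormous case analysis. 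To keep that manageable I would lean on the biquadratic/discriminant invariants recalled earlier: the square equations impose that certain biquadratics of $P_1,P_2,P_3$ along the $u$-edges factor compatibly, and the classification of such factorizations (which is exactly what distinguishes $\polyN3$, $\polyN2$, $\polyN1$) rapidly eliminates the inhomogeneous mixed cases, leaving only the homogeneous ones to normalize. The full details, including the precise parameter conditions, are deferred to Appendix \ref{section:proof_lemma_CQ123456789_1}.
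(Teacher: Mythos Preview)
Your plan has the right starting point (the four M\"obius families \eqref{eqn:P123_Mobius_1}--\eqref{eqn:P123_Mobius_4}) and the right goal, but the mechanism you propose for extracting constraints from the square property is not the one that actually works, and the shortcuts you suggest (biquadratics/discriminants for ruling out the mixed families; ``composing $K_1,K_2,K_3$ around the cuboctahedron'' for the sign $d_1d_2d_3=-1$) are not what carries the argument.

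The key device you are missing is this. The square equation $K_1(v_1,u_1,u_4,v_4)=0$ is not used by checking bidegrees of an eliminated relation; it is used only through its \emph{consequence} that $v_4$ is a function of $(v_1,u_1,u_4)$ alone. Combined with $Q_6$ and $Q_9$, this forces $\partial v_3/\partial u_2=0$ when $u_1,u_2,u_3,u_4,v_1$ are taken as initial data. Writing $v_3=F_1(u_3,u_5,v_2)$ from $Q_3$, $u_5=F_2(u_4,u_2,u_1)$ from $Q_7$, and $v_2=F_3(v_1,u_2,u_4)$ from $Q_2$, the chain rule gives the single scalar identity
\[
\frac{\partial F_2}{\partial u_2}\,\frac{\partial F_1}{\partial u_5}
\;+\;
\frac{\partial F_3}{\partial u_2}\,\frac{\partial F_1}{\partial v_2}\;=\;0,
\]
together with its $s_{23}$-image. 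These two identities, computed explicitly for each family with the $r_i$ in place, are what pin down the M\"obius transformations: in Case~1 they force each $r_i$ to be a constant multiple of $r_1$ or of $r_1^{-1}$ (whence, after $G^{(CO)}$ and a global conjugation, the four subpatterns of which only \eqref{eqn:P123_caseN3_1} and \eqref{eqn:P123_caseN3_3} survive a further check of the same identity); in Cases~2 and~3 they immediately produce a degenerate $2\times2$ determinant condition on the $d_{ij}$ that contradicts invertibility of some $r_i$, eliminating the mixed $\polyN2$ families without any appeal to biquadratics; and in Case~4 they force all $r_i$ to be affine with common denominator, after which the same identity reduces to $d_{11}d_{31}d_{51}=-1$, i.e.\ $d_1d_2d_3=-1$. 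So the sign is not a closing-up condition on three $K_i$'s but the residue of the single chain-rule constraint in the $\polyN1$ normal form. Without this lemma your direct ``eliminate and check bidegree'' route has no clear leverage on the $r_i$, and the discriminant heuristic does not substitute for it.
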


The proof of Lemma \ref{lemma:classification_Q123456789_1} is given in Appendix \ref{section:proof_lemma_CQ123456789_1}.

These canonical classes lead to corresponding quad-equations $\{Q_1,\dots,Q_9\}$ given by \eqref{eqns:Q123456789_CO3} with $k_1=k_2=k_3=4$
and $\{P_1,P_2,P_3\}$ being one of \eqref{eqn:P123_caseN3_1}--\eqref{eqn:P123_caseN1}. The resulting three types of quad-equations are expressed by Lemmas \ref{lemma:classification_CACO_CO1CO2CO3_1}, 
\ref{lemma:classification_CACO_CO1CO2CO3_2}, and
\ref{lemma:classification_CACO_CO1CO2CO3_3} respectively.

Each type leads to lengthy expressions satisfied by the parameters in the quad-equations. Due to the length of these expressions, the details are collected in Appendix \ref{section:list_theorem_conditions}, to enable concise references to the conditions on parameters in the statements of the following lemmas.

\begin{lemma}[Type I]
\label{lemma:classification_CACO_CO1CO2CO3_1}
Consider the set of polynomials $\{P_1,P_2,P_3\}$ defined by \eqref{eqn:P123_caseN3_1}.
Then, the cuboctahedron with quad-equations \eqref{eqns:Q123456789_CO3} has the CACO property and the square property, 
if and only if the parameters satisfy 
\begin{equation}\label{eqn:caseN3_CAO_B2}\tag{I,II}
\begin{split}
 B^{(2)}_1=B^{(1)}_4 B^{(3)}_1-B^{(1)}_1 B^{(3)}_3,\quad
 B^{(2)}_2=B^{(1)}_2 B^{(3)}_4-B^{(1)}_3 B^{(3)}_2,\\
 B^{(2)}_3=B^{(1)}_4 B^{(3)}_4-B^{(1)}_1 B^{(3)}_2,\quad
 B^{(2)}_4=B^{(1)}_2 B^{(3)}_1-B^{(1)}_3 B^{(3)}_3,
\end{split}
\end{equation}
and one of the conditions \eqref{eqn:typeI_cond1} and \eqref{eqn:typeI_cond2} 
(or equivalent conditions under the equivalence relation $\sim_{co}$).
Note that we can without loss of generality replace $B^{(2)}_i$, $i=1,\dots,4$, in the condition \eqref{eqn:caseN3_CAO_B2} with $cB^{(2)}_i$, $i=1,\dots,4$, where $c$ is an arbitrary non-zero complex parameter.
\end{lemma}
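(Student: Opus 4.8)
The plan is to read the statement off the two clauses of the CACO property (Definition~\ref{def:CACO_cuboctahedron}) together with the square property, and to treat them in increasing order of computational cost: first the CAO property of the interior octahedron $\{Q_7,Q_8,Q_9\}$, then the unique solvability of the outer vertices $v_j$ and the square property.

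First, putting $P_1=P_2=P_3=\polyN3$ (that is, taking \eqref{eqn:P123_caseN3_1}) in \eqref{eqns:Q123456789_CO3} makes $\{Q_7,Q_8,Q_9\}$ a triple of $\polyN3$ quad-equations in exactly the vertex arrangement of \eqref{eqn:Q123_N333}, with parameter vectors $(B^{(1)}_i)$, $(B^{(2)}_i)$, $(B^{(3)}_i)$. By the CAO classification (Lemma~\ref{lemma:classification_Q789_CAO}, i.e.\ the case \eqref{eqn:Q123_N333} of Lemma~\ref{lemma:classification_Q123_CAO}), this octahedron has the CAO property if and only if the parameters of $Q_9$ equal the bilinear expression in those of $Q_7$ and $Q_8$ appearing in \eqref{eqn:Q123_N333}. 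For fixed $(B^{(1)}_i)$ this is a linear system for $(B^{(2)}_i)$ that splits into two $2\times 2$ blocks, each of determinant $B^{(1)}_1B^{(1)}_2-B^{(1)}_3B^{(1)}_4$; since this quantity is nonzero by the irreducibility hypothesis on $Q_7$ (cf.\ \eqref{eqns:biquadratics_discriminants_N3}), Cramer's rule solves the system for $(B^{(2)}_i)$, and the solution is precisely \eqref{eqn:caseN3_CAO_B2} once the common scalar factor is absorbed into the rescaling $B^{(2)}_i\mapsto cB^{(2)}_i$ allowed by the homogeneity of $\polyN3$ in its parameters (the freedom noted in the statement). This settles clause (i) of the CACO property and the relation \eqref{eqn:caseN3_CAO_B2}, in both directions.

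Assume now \eqref{eqn:caseN3_CAO_B2}, and address clause (ii) and the square property. Following Remark~\ref{remark:CACO_2D}, fix $u_1,\dots,u_6$ with $Q_7=Q_8=Q_9=0$ together with one outer vertex, say $v_1$, and propagate around the outer hexagon clockwise through $Q_2,Q_3,Q_6$ and anticlockwise through $Q_5,Q_4,Q_1$. Since every $Q_i$ is linear in each $v_j$, both walks express $v_4$ as a fractional-linear function of $v_1$ with coefficients rational in $u_1,\dots,u_6$; clause (ii) is the requirement that these two functions agree modulo the octahedron relations. Clearing denominators and eliminating the intermediate vertices $u_2,u_5$ (resp.\ $u_3,u_6$) via $Q_7,Q_8,Q_9$ reduces this requirement to polynomial identities in the $A^{(j)}_i$. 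The same elimination carried out on one equatorial band yields a polynomial in $v_1,u_1,u_4,v_4$ alone, and one checks that its relevant irreducible factor has the degrees $\deg_x=\deg_w=1$, $1\le\deg_y,\deg_z$ demanded of a square equation, while the bands attached to the other two pairs produce consistent square equations. Solving the resulting polynomial system, after normalizing modulo $\sim_{co}$ to remove the remaining Möbius and $G^{(CO)}$ freedom, yields exactly the two alternatives \eqref{eqn:typeI_cond1} and \eqref{eqn:typeI_cond2}, which completes the remaining part of the equivalence.

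The principal obstacle is this last elimination. Producing the square polynomial and the compatibility conditions is a sizeable symbolic computation, and the real difficulty is to organize the resulting polynomial constraints on $\{A^{(j)}_i,B^{(j)}_i\}$ so that, after $\sim_{co}$-normalization, they collapse to the two stated cases rather than to a larger, unstructured family; throughout, one must keep track of the non-vanishing and non-factoring hypotheses implicit in the term ``quad-equation'' (see the remarks after Definition~\ref{def:quad-eqn}). Because of its length, this computation is carried out in Appendix~\ref{section:proof_lemma_CACO_CO1CO2CO3}, with the explicit conditions \eqref{eqn:typeI_cond1} and \eqref{eqn:typeI_cond2} recorded in Appendix~\ref{section:list_theorem_conditions}.
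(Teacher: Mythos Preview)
Your plan matches the paper's: the CAO constraint on $\{Q_7,Q_8,Q_9\}$ gives \eqref{eqn:caseN3_CAO_B2}, and then comparing the clockwise and anticlockwise values $v_4^{(r)},v_4^{(l)}$ around the outer hexagon produces the remaining parameter constraints, with the square equation read off directly from the common value of $v_4$ (in Type~I it imposes no further restriction). The structural point you will want for the appendix computation is that the numerator of $v_4^{(r)}-v_4^{(l)}$ factors as $(h_4u_1-h_1u_4)$ times a quadratic in $v_1$, which cleanly separates the problem into the branch \eqref{eqn:typeI_cond1} and a second branch that, after a determinant-style elimination organized by the $G^{(CO)}$ symmetries $s_{23}$ and $\iota$, collapses to \eqref{eqn:typeI_cond2}.
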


\begin{lemma}[Type II]
\label{lemma:classification_CACO_CO1CO2CO3_2}
Consider the set of polynomials $\{P_1,P_2,P_3\}$ defined by \eqref{eqn:P123_caseN3_3}.
Then, the cuboctahedron with quad-equations \eqref{eqns:Q123456789_CO3} has the CACO property and the square property, 
if and only if the parameters satisfy
\begin{align}
 &C^{(13)}=C^{(22)}=C^{(31)}=C^{(44)}=0,\label{eqn:typeII_cond0}\tag{II.a}\\
 &C^{(12)},C^{(14)},C^{(21)},C^{(23)},C^{(32)},C^{(34)},C^{(41)},C^{(43)}\neq0,\label{eqn:typeII_C_nonzero}\tag{II.b}\\
 &A^{(i)}_1,A^{(i)}_2,A^{(i)}_3,A^{(i)}_4,
 B^{(i)}_1,B^{(i)}_2,B^{(i)}_3,B^{(i)}_4\neq0\quad (i=1,3),\label{eqn:typeII_AB_nonzero}\tag{II.c}
\end{align}
along with condition \eqref{eqn:caseN3_CAO_B2} and one of the conditions \eqref{eqn:typeII_cond1} and \eqref{eqn:typeII_cond2}
(or equivalent conditions under the equivalence relation $\sim_{co}$).
Here, the parameters $C^{(ij)}$ are given by \eqref{eqn:typeII_defC11}--\eqref{eqn:typeII_defC46}.
\end{lemma}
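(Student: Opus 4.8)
The plan is to follow the same two-stage strategy used for Type I (Lemma \ref{lemma:classification_CACO_CO1CO2CO3_1}), specialized to the set $\{P_1,P_2,P_3\}$ given by \eqref{eqn:P123_caseN3_3}, in which the variables $z,w$ enter through the inversions $z\mapsto z^{-1}$, $w\mapsto w^{-1}$ dictated by the square-property reduction of Lemma \ref{lemma:classification_Q123456789_1}. First I would write out explicitly the nine quad-equations $\{Q_1,\dots,Q_9\}$ obtained by substituting \eqref{eqn:P123_caseN3_3} into \eqref{eqns:Q123456789_CO3} with $k_1=k_2=k_3=4$, recording separately the ``face'' equations $Q_1,\dots,Q_6$ (which carry the parameters $A^{(i)}_j$) and the ``octahedral'' equations $Q_7,Q_8,Q_9$ (which carry the $B^{(i)}_j$). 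The octahedral part $\{Q_7,Q_8,Q_9\}$ is an instance of the CAO classification, so by Lemma \ref{lemma:classification_Q123_CAO} (case \eqref{eqn:Q123_N333}, the $\polyN3$--$\polyN3$--$\polyN3$ case) the CAO property forces the parameters $B^{(2)}_j$ of $Q_8$ to be the prescribed bilinear combinations of the $B^{(1)}_j$ and $B^{(3)}_j$; this is exactly condition \eqref{eqn:caseN3_CAO_B2}, which therefore drops out immediately and need not be re-derived.

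The substantive work is Condition \ref{cond:CO1}(ii) together with Condition \ref{cond:CO2}: given $u_1,\dots,u_6$ satisfying the octahedral relations and one seed value $v_k$, the six face equations must determine the remaining $v_j$ consistently (the two walks around the outer hexagon of Remark \ref{remark:CACO_2D} must agree), and the square polynomials $K_1,K_2,K_3$ must exist. I would take $v_1$ (say) as the seed and propagate it two ways around the hexagon using $Q_2,Q_3,Q_6$ versus $Q_5,Q_4,Q_1$; setting the two resulting rational expressions for the antipodal vertex equal produces a polynomial identity in $u_1,\dots,u_6$ whose coefficients are polynomial expressions in the $A^{(i)}_j$. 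Collecting coefficients and eliminating (via resultants, using the octahedral relations to reduce) yields the constraints. It is at this stage that the auxiliary parameters $C^{(ij)}$ defined in \eqref{eqn:typeII_defC11}--\eqref{eqn:typeII_defC46} are the natural bookkeeping variables: the vanishing conditions \eqref{eqn:typeII_cond0} are forced because the corresponding coefficients in the consistency identity must vanish, the non-vanishing conditions \eqref{eqn:typeII_C_nonzero} are exactly the irreducibility / non-degeneracy requirements (the $C^{(ij)}$ that survive must be nonzero for the propagated maps to remain genuine quad-equations and not collapse), and \eqref{eqn:typeII_AB_nonzero} is forced because any $A^{(i)}_j=0$ (hence $B^{(i)}_j=0$ by Condition \ref{cond:CO3}) would degenerate $\polyN3$ into a lower $N$-type, contradicting membership in Type II. What remains after imposing \eqref{eqn:typeII_cond0}--\eqref{eqn:typeII_AB_nonzero} and \eqref{eqn:caseN3_CAO_B2} is a residual system on the $A^{(i)}_j$ that factors into the two alternative branches \eqref{eqn:typeII_cond1} and \eqref{eqn:typeII_cond2}; I would show the residual polynomial system is the union of these two components (up to the action of $G^{(CO)}$ and M\"obius transformations, i.e. modulo $\sim_{co}$), and conversely verify by direct substitution that each branch does satisfy CACO and the square property — producing the $K_i$ explicitly as in Remark \ref{rem:sqexample}.

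The main obstacle I expect is the elimination step: because of the inversions $z\mapsto z^{-1}$, $w\mapsto w^{-1}$ in \eqref{eqn:P123_caseN3_3}, the two hexagon walks give rational rather than polynomial expressions, so clearing denominators inflates the degree and introduces spurious factors that must be discarded using the octahedral relations and the non-vanishing conditions \eqref{eqn:typeII_C_nonzero}; keeping the computation organized so that it transparently factors into the $C^{(ij)}$ and then into the branches \eqref{eqn:typeII_cond1}/\eqref{eqn:typeII_cond2} is where the real care is needed. Since this is long and mechanical, the detailed computation is deferred to Appendix \ref{section:proof_lemma_CACO_CO1CO2CO3}, with the explicit parameter conditions collected in Appendix \ref{section:list_theorem_conditions}.
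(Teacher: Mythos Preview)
Your high-level strategy---compute the two hexagon walks from a seed $v_1$, equate the resulting antipodal values, and solve the coefficient system---is the same as the paper's, and your identification of \eqref{eqn:caseN3_CAO_B2} as an immediate consequence of the CAO classification is correct. But two points in your plan do not match how the argument actually has to go, and the first is a genuine gap.

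Your justification for \eqref{eqn:typeII_C_nonzero} and \eqref{eqn:typeII_AB_nonzero} is not right. You say the non-vanishing of the surviving $C^{(ij)}$ is ``exactly the irreducibility / non-degeneracy requirements'' and that \eqref{eqn:typeII_AB_nonzero} holds because ``any $A^{(i)}_j=0$ \dots\ would degenerate $\polyN3$ into a lower $N$-type, contradicting membership in Type II.'' Neither claim holds up. Type II is defined by the \emph{form} \eqref{eqn:P123_caseN3_3}, not by all parameters being nonzero; a single $A^{(i)}_j=0$ does not by itself force a change of type, and \eqref{eqn:typeII_AB_nonzero} is stated only for $i=1,3$, which your degeneracy argument does not explain. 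In the actual proof, the non-vanishing conditions are \emph{derived} consequences of CACO: one first shows (via Lemmas analysing poles and zeros of $v_4$ at $u_1=0$, $u_4=0$, $v_1=0$) that the four candidate conditions reduce to \eqref{eqn:typeII_cond0}, and then a separate multi-step argument (the paper's Lemma in five steps) deduces $C^{(23)},C^{(41)}\neq0$, then $C^{(14)},C^{(32)}\neq0$, then \eqref{eqn:typeII_AB_nonzero} from those, and finally $C^{(12)},C^{(21)},C^{(34)},C^{(43)}\neq0$ via explicit bilinear identities among the $C^{(ij)}$. These are not irreducibility constraints you can read off; they require ruling out several degenerate sub-branches one by one.

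The second point is methodological: rather than brute-force coefficient collection (which, as you correctly anticipate, blows up badly because of the inversions in \eqref{eqn:P123_caseN3_3}), the paper organizes the elimination by analysing the \emph{pole and zero structure} of $v_4^{(r)}$ and $v_4^{(l)}$ as rational functions of $u_1,u_4,v_1$. Expanding near $u_1\to0$, $u_4\to0$, $u_1\to\infty$, $v_1\to0$ and matching leading behaviours isolates the $C^{(ij)}$ one at a time and forces \eqref{eqn:typeII_cond0} without ever writing the full cleared-denominator polynomial. This is what makes the computation tractable and what eventually yields the clean relations $C^{(1j)}/C^{(3j)}=C^{(2j)}/C^{(4j)}=C^{(23)}/C^{(43)}$, whose sign ambiguity is exactly the split into \eqref{eqn:typeII_cond1} and \eqref{eqn:typeII_cond2}. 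Finally, note that the square equations $K_2,K_3$ are \emph{not} produced explicitly in general (contrary to your expectation from Remark \ref{rem:sqexample}): the paper only verifies their existence by checking $\partial v_5/\partial v_2\neq0$ etc.\ on a particular specialization of parameters.
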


\begin{lemma}[Type III]
\label{lemma:classification_CACO_CO1CO2CO3_3}
Consider the set of polynomials $\{P_1,P_2,P_3\}$ defined by \eqref{eqn:P123_caseN1}.
Then, the cuboctahedron with quad-equations \eqref{eqns:Q123456789_CO3} has the CACO property and the square property, 
if and only if the parameters satisfy 
\begin{equation}\label{eqn:caseN1_CAO_B2}\tag{III}
\begin{split}
 B^{(2)}_1=B^{(1)}_3 B^{(3)}_1-B^{(1)}_1 B^{(3)}_2,\quad
 B^{(2)}_2=B^{(1)}_3 B^{(3)}_3-B^{(1)}_1 B^{(3)}_4,\\
 B^{(2)}_3=B^{(1)}_4 B^{(3)}_1-B^{(1)}_2 B^{(3)}_2,\quad
 B^{(2)}_4=B^{(1)}_4 B^{(3)}_3-B^{(1)}_2 B^{(3)}_4,
 \end{split}
\end{equation}
and one of the conditions {\rm(Type III-1-1)}--{\rm(Type III-3-16)} in Appendix \ref{subsection:proof_cond_para_N1}
(or equivalent conditions under the equivalence relation $\sim_{co}$).
Note that we can without loss of generality replace $B^{(2)}_i$, $i=1,\dots,4$, in the condition \eqref{eqn:caseN1_CAO_B2} with $cB^{(2)}_i$, $i=1,\dots,4$, where $c$ is an arbitrary non-zero complex parameter.
\end{lemma}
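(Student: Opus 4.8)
The plan is to follow the same scheme as the proofs of Lemmas \ref{lemma:classification_CACO_CO1CO2CO3_1} and \ref{lemma:classification_CACO_CO1CO2CO3_2}, but now with the starting data being the $\polyN1$-type triple \eqref{eqn:P123_caseN1}. First I would write out the quad-equations $\{Q_1,\dots,Q_9\}$ explicitly using \eqref{eqns:Q123456789_CO3} with $k_1=k_2=k_3=4$, $P_i$ as in \eqref{eqn:P123_caseN1}, and the parameters $A_j^{(i)}$ (for the face equations) and $B_j^{(i)}$ (for the interior octahedron equations). The CACO property splits, by Definition \ref{def:CACO_cuboctahedron}, into two parts: property (i), that the octahedron $\{Q_7,Q_8,Q_9\}$ is CAO, and property (ii), the unique-propagation condition for the $v_k$. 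Part (i) is already handled by Lemma \ref{lemma:classification_Q123_CAO}: imposing that $\{Q_7,Q_8,Q_9\}=\{\polyN1,\polyN1,\polyN1\}$ be CAO forces exactly the relations \eqref{eqn:caseN1_CAO_B2} on the $B_j^{(i)}$, by comparison with \eqref{eqn:Q123_N111} (reading off $\al_i\mapsto B_i^{(1)}$, $\al_{4+i}\mapsto B_i^{(3)}$ and matching the third polynomial's coefficients). So the first step establishes \eqref{eqn:caseN1_CAO_B2} as a necessary and sufficient consequence of CAO for the interior octahedron.

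Next I would impose part (ii) of the CACO property together with the square property. Concretely, using Remark \ref{remark:CACO_2D}: starting from the inner hexagram values $u_1,\dots,u_6$ (constrained by $Q_7=Q_8=Q_9=0$) and one boundary value, say $v_1$, the six face equations $Q_1,\dots,Q_6$ must propagate $v_2,\dots,v_6$ unambiguously — i.e. the clockwise and anticlockwise walks around the outer hexagon must agree. This agreement is a rational identity in the $u_i$, $v_k$ and the parameters; clearing denominators and demanding that the resulting multi-affine polynomial vanish identically (modulo the ideal generated by $Q_7,Q_8,Q_9$) yields a system of polynomial equations on the $A_j^{(i)}$ and $B_j^{(i)}$. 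Simultaneously, the square property demands the existence of $K_i$ with $\deg_x K_i=\deg_w K_i=1$, $\deg_y K_i,\deg_z K_i\geq 1$ such that $K_i=0$ holds on the relevant quadrilateral; eliminating $u$'s from pairs of face equations and checking that the resulting relation between $v$'s factors through such a $K_i$ gives further constraints. For the $\polyN1$ case the presence of the extra scaling parameters $d_1,d_2,d_3$ (with $d_1d_2d_3=-1$) means the parameter space is larger than in Type I, and the consistency conditions break into many branches — this is the source of the long list \rm{(Type III-1-1)}--\rm{(Type III-3-16)} in Appendix \ref{subsection:proof_cond_para_N1}.

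The bulk of the work, and the main obstacle, is the branch analysis: after reducing to polynomial conditions on the parameters, one must systematically solve them, and at each stage a vanishing/non-vanishing dichotomy on some coefficient combination forks the computation. I would organize this by first exploiting the $G^{(CO)}$-symmetry (via the relations \eqref{eqn:relation_N1_2} and the $s_{ij},\iota$ actions) to cut down the number of genuinely distinct branches, retaining only representatives up to $\sim_{co}$; this is why the lemma only claims the list ``or equivalent conditions under $\sim_{co}$''. Then each surviving branch is checked directly: substitute the branch conditions, verify that the clockwise/anticlockwise identity and the square equation do hold identically, and read off the explicit $K_i$. The ``without loss of generality'' remark about rescaling $B^{(2)}_i\mapsto cB^{(2)}_i$ is immediate, since \eqref{eqns:Q123456789_CO3} determines $Q_8$ only up to an overall nonzero scalar and \eqref{eqn:caseN1_CAO_B2} is homogeneous of degree one in the $B^{(2)}_i$. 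The converse direction (sufficiency) is then a finite verification, one branch at a time, that the listed parameter conditions do produce a CACO cuboctahedron with the square property; the details are deferred to Appendix \ref{section:proof_lemma_CACO_CO1CO2CO3} and the parameter lists to Appendix \ref{section:list_theorem_conditions}.
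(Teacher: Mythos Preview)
Your proposal is correct and follows essentially the same strategy as the paper's proof in Appendix~\ref{subsection:caseN1}: derive \eqref{eqn:caseN1_CAO_B2} from the CAO property of $\{Q_7,Q_8,Q_9\}$ via Lemma~\ref{lemma:classification_Q123_CAO}, then impose $v_4^{(r)}=v_4^{(l)}$ (the clockwise/anticlockwise agreement), extract polynomial conditions coefficient-by-coefficient, branch, and verify the square property case by case. The one organizational detail you do not mention but which drives the paper's branching is that the primary case split is on the vanishing or non-vanishing of $A^{(i)}_1,B^{(i)}_1$ for $i=1,2,3$ (yielding the cases \eqref{eqn:typeIII_proof_c_1}--\eqref{eqn:typeIII_proof_c_5}, reduced by $G^{(CO)}$ to three), which is what produces the three families Type~III-1, III-2, III-3; within case~(iii) one normalizes $A^{(i)}_1=B^{(i)}_1=1$ and the further subcases arise from sign and degeneracy conditions on $d_2,d_3$ and the auxiliary quantities $C^{(5j)}$.
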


The proofs of Lemmas \ref{lemma:classification_CACO_CO1CO2CO3_1}--\ref{lemma:classification_CACO_CO1CO2CO3_3} are given in Appendices \ref{subsection:caseN3_1}--\ref{subsection:caseN1}, respectively.

The results of Lemmas \ref{lemma:classification_CACO_CO1CO2CO3_1}--\ref{lemma:classification_CACO_CO1CO2CO3_3} lead to the following theorem.

\begin{theorem}\label{theo:classification_CACO}
Under the equivalence relation $\sim_{co}$,
any cuboctahedron with quad-equations $\{Q_1,\dots,Q_9\}$ satisfying the properties \ref{cond:CO1}--\ref{cond:CO3} is equivalent to the cuboctahedron with quad-equations \eqref{eqns:Q123456789_CO3},
where the set of polynomials $\{P_1,P_2,P_3\}$ is given by one of \eqref{eqn:P123_caseN3_1}--\eqref{eqn:P123_caseN1},
and the conditions of the parameters are listed in Lemmas \ref{lemma:classification_CACO_CO1CO2CO3_1}--\ref{lemma:classification_CACO_CO1CO2CO3_3}.
\end{theorem}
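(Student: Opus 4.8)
The proof of Theorem~\ref{theo:classification_CACO} is essentially an assembly of the lemmas already established, so the plan is to trace the chain of reductions and verify that nothing is lost at each step. First I would recall that by Lemma~\ref{lemma:classification_Q789_CAO}, the sub-octahedron $\{Q_7,Q_8,Q_9\}$ of any cuboctahedron satisfying \ref{cond:CO1}--\ref{cond:CO3} is, under $\sim_{co}$, one of the four CAO systems \eqref{eqn:Q123_N333}--\eqref{eqn:Q123_N111}; in particular each of $Q_7$, $Q_8$, $Q_9$ is an $N$-type polynomial ($\polyN3$, $\polyN2$ or $\polyN1$). Combined with Condition~\ref{cond:CO3}, this forces the generating polynomials $P_1,P_2,P_3$ to be $N$-type as well (with the appropriate Möbius conjugations $r_i$), which is exactly the content of the intermediate lemma giving \eqref{eqn:P123_Mobius_1}--\eqref{eqn:P123_Mobius_4}.

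Next I would invoke Lemma~\ref{lemma:classification_Q123456789_1}: imposing the square property \ref{cond:CO2} collapses those four Möbius-parametrized families down to the three canonical classes \eqref{eqn:P123_caseN3_1}, \eqref{eqn:P123_caseN3_3}, \eqref{eqn:P123_caseN1}. At this point every cuboctahedron satisfying \ref{cond:CO1}--\ref{cond:CO3} is $\sim_{co}$-equivalent to the system \eqref{eqns:Q123456789_CO3} with $k_1=k_2=k_3=4$ and $\{P_1,P_2,P_3\}$ one of these three classes. The final step is to apply Lemmas~\ref{lemma:classification_CACO_CO1CO2CO3_1}--\ref{lemma:classification_CACO_CO1CO2CO3_3}, which state that within each of the three classes the CACO and square properties hold \emph{if and only if} the listed parameter conditions (Appendix~\ref{section:list_theorem_conditions}) are satisfied. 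Putting the "only if" directions of these three lemmas together with the reductions above gives the forward inclusion asserted by the theorem; the "if" directions guarantee that the enumerated systems genuinely possess the required properties, so the classification is exhaustive and non-redundant up to $\sim_{co}$.

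Concretely, the write-up would read: \emph{Let $\{Q_1,\dots,Q_9\}$ satisfy \ref{cond:CO1}--\ref{cond:CO3}. By \ref{cond:CO1}(i) the octahedron $\{Q_7,Q_8,Q_9\}$ has the CAO property, so by Lemma~\ref{lemma:classification_Q789_CAO} it is $\sim_{co}$-equivalent to one of \eqref{eqn:Q123_N333}--\eqref{eqn:Q123_N111}; hence by \ref{cond:CO3} the polynomials $P_i$ are $N$-type and, after the Möbius normalization of that lemma, take one of the forms \eqref{eqn:P123_Mobius_1}--\eqref{eqn:P123_Mobius_4}. Applying \ref{cond:CO2} and Lemma~\ref{lemma:classification_Q123456789_1} reduces these to \eqref{eqn:P123_caseN3_1}--\eqref{eqn:P123_caseN1}. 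Finally, Lemmas~\ref{lemma:classification_CACO_CO1CO2CO3_1}--\ref{lemma:classification_CACO_CO1CO2CO3_3} characterize exactly which parameter values are compatible with \ref{cond:CO1} and \ref{cond:CO2} within each class. This yields the stated classification.}

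The genuinely nontrivial work has all been discharged in the preceding lemmas (the CAO classification in Lemma~\ref{lemma:classification_Q123_CAO}, the square-property reduction in Lemma~\ref{lemma:classification_Q123456789_1}, and the case-by-case parameter analysis of Lemmas~\ref{lemma:classification_CACO_CO1CO2CO3_1}--\ref{lemma:classification_CACO_CO1CO2CO3_3}), so the theorem's proof itself is short. The main thing to be careful about is that the several $\sim_{co}$-equivalences used at different stages compose correctly --- i.e.\ that the Möbius transformations and $G^{(CO)}$-permutations applied to normalize the octahedron in Lemma~\ref{lemma:classification_Q789_CAO} are compatible with those used in Lemma~\ref{lemma:classification_Q123456789_1}, so that a single equivalence carries an arbitrary admissible cuboctahedron to one of the listed normal forms. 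I would state this compatibility explicitly (it is immediate because $G^{(CO)}$ acts on the full vertex set and the Möbius transformations act variable-by-variable), and otherwise the argument is a direct citation chain.
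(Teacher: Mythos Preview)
Your proposal is correct and follows essentially the same approach as the paper: the theorem is presented there simply as the consequence of the chain Lemma~\ref{lemma:classification_Q789_CAO} $\to$ the intermediate lemma giving \eqref{eqn:P123_Mobius_1}--\eqref{eqn:P123_Mobius_4} $\to$ Lemma~\ref{lemma:classification_Q123456789_1} $\to$ Lemmas~\ref{lemma:classification_CACO_CO1CO2CO3_1}--\ref{lemma:classification_CACO_CO1CO2CO3_3}, with no further argument given. Your additional remark about the compatibility of the $\sim_{co}$-equivalences at each stage is a reasonable clarification that the paper leaves implicit.
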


\section{Concluding remarks}
\label{ConcludingRemarks}
In this paper, we presented new definitions of consistency around an octahedron and a cuboctahedron.
Moreover, we showed a classification of quad-equations which have the CACO property (see Theorem \ref{theo:classification_CACO}).

In a separate paper \cite{JN:Preparing}, we will give the details of Proposition \ref{prop:red_A2_Painleve}, that is,
the reduction from the system \eqref{eqns:samplePDEs_P123456} to the $A_2^{(1)\ast}$-type discrete Painlev\'e equations.

There are many open questions. One is to consider different polyhedra that may lead to other consistent arrangements of quad-equations. Another important question is to ask the same question for polytopes in higher dimensions. Such consistent arrangements of quad-equations leave open the intriguing possibility that all discrete Painlev\'e equations in Sakai's diagram \cite[Tables 3 and 4]{SakaiH2001:MR1882403} may be found by reduction of \PDE s.
\subsection*{Acknowledgment}
N. Nakazono would like to thank Profs M. Noumi, Y. Ohta and Y. Yamada  and Drs M. Kanki and T. Mase for inspiring and fruitful discussions.
This research was supported by an Australian Laureate Fellowship \# FL120100094 and grant \# DP160101728 from the Australian Research Council and JSPS KAKENHI Grant Numbers JP19K14559 and JP17J00092.
N. Nakazono gratefully acknowledges support from Nordita that enabled his attendance at the conference on Elliptic integrable systems, special functions and quantum field theory where this work was presented in June 16-20, 2019.

\appendix
\section{Proof of Lemma \ref{lemma:classification_CAO_1}}
\label{section:proof_lemma_CAO}
Here we give a proof of Lemma \ref{lemma:classification_CAO_1}. 
The proof requires several steps. Before we consider the 22 cases, we deduce some conditions on the quad-equations $\{Q_1,Q_2,Q_3\}$ placed on a CAO octahedron, where
\begin{equation}
 Q_1=Q_1(u_4,u_2,u_1,u_5),\quad
 Q_2=Q_2(u_2,u_6,u_5,u_3),\quad
 Q_3=Q_3(u_6,u_4,u_3,u_1).
\end{equation}
We can assume that $Q_1$ is given by one of the polynomials in Remark \ref{rem:QHDN}, without loss of generality.

We assume that the polynomial $Q_2$ takes the following generic form:
\begin{align}\label{eqn:Q2_general_1}
 &A_1u_2u_6u_5u_3+A_2u_2u_6u_5+A_3u_2u_6u_3
 +A_4u_2u_5u_3+A_5u_6u_5u_3\notag\\
 &\quad+A_6u_2u_6+A_7u_2u_5+A_8u_2u_3+A_9u_6u_5+A_{10}u_6u_3+A_{11}u_5u_3\notag\\
 &\quad+A_{12}u_2+A_{13}u_6+A_{14}u_5+A_{15}u_3+A_{16},
\end{align} 
or, equivalently,
\begin{equation}\label{eqn:Q2_general_2}
 q_1+u_5q_2+u_2q_3+u_2u_5q_4,
\end{equation}
where
\begin{subequations}
\begin{align}
 &q_1=q(u_6,u_3;A_{10},A_{13},A_{15},A_{16}),
 &&q_2=q(u_6,u_3;A_5,A_9,A_{11},A_{14}),\\
 &q_3=q(u_6,u_3;A_3,A_6,A_8,A_{12}),
 &&q_4=q(u_6,u_3;A_1,A_2,A_4,A_7).
\end{align}
\end{subequations}
Here, $A_i$, $i=1,\dots,16$, are complex parameters and the polynomial $q(x,y;a,b,c,d)$ is defined by Equation \eqref{eqn:def_q}.

Note that the above description includes cases that violate the condition that $Q_2$ must be a quad-equation. Such inappropriate cases include
\begin{subequations}
\begin{align}
 &q_1=q_2=0\quad \Rightarrow\quad \text{$Q_2$ is reducible},\label{eqn:proof_q1_q2_0}\\
 &q_1=q_3=0\quad \Rightarrow\quad \text{$Q_2$ is reducible},\label{eqn:proof_q1_q3_0}\\
 &q_2=q_4=0\quad \Rightarrow\quad \deg_{u_5}{Q_2}=0,\label{eqn:proof_q2_q4_0}\\
 &q_3=q_4=0\quad \Rightarrow\quad \deg_{u_2}{Q_2}=0.\label{eqn:proof_q3_q4_0}
\end{align}
\end{subequations}

In general, $Q_1$ and $Q_2$ involve 6 variables, $u_i$, $1\le i\le 6$. We start by eliminating a variable common to both, $u_5$, say, from the equations $Q_1=0$ and $Q_2=0$. The result is an algebraic equation with variables $\{u_1,u_2,u_3,u_4,u_6\}$, which we can write in the following form:
\begin{equation}\label{eqn:powerseries_u2}
 \sum_{k=0}^2C^{(k)}(u_1,u_3,u_4,u_6){u_2}^k=0,
\end{equation}
where the coefficients $C^{(k)}$, $k=0,1,2$, are polynomials, linear in each variable $u_1,u_3,u_4,u_6$, i.e.,
\begin{equation}
 C^{(k)}\in\bbC[u_1,u_3,u_4,u_6],\quad
 \deg_{u_i}C^{(k)}\leq1,\quad i=1,3,4,6.
\end{equation}
Below, we explicitly find $C^{(k)}$ in each case.

Now we use $Q_3=0$ to express the variable $u_6$ as a rational function of $\{u_1,u_3,u_4\}$. Letting 
\begin{equation}
 u_6=f(u_1,u_3,u_4)\in\bbC(u_1,u_3,u_4),
\end{equation}
we obtain
\begin{equation}\label{eqn:Ck_u134}
 C^{(k)}=C^{(k)}\Big(u_1,u_3,u_4,f(u_1,u_3,u_4)\Big)\in\bbC(u_1,u_3,u_4),\quad k=0,1,2.
\end{equation}
The CAO property relies on the fact that $\{u_1,u_2,u_3,u_4\}$ are initial values, i.e., $u_2$ does not depend on $\{u_1,u_3,u_4\}$. In other words, the coefficients $C^{(k)}$ in Equation \eqref{eqn:powerseries_u2} must vanish identically, i.e.,
\begin{equation}\label{eqn:Ckequals0}
 C^{(k)}\Big(u_1,u_3,u_4,u_6\Big)=0,\quad
 k=0,1,2.
\end{equation}

Since the arguments of $C^{(k)}$ are the same as those of $Q_3$, we are led to conclude that one of the following properties must hold:
\begin{enumerate}[leftmargin=1.4cm,label={\rm (\roman*)}]
\item
$C^{(k)}\equiv0$, that is, $C^{(k)}$ vanishes identically; or,
\item
$C^{(k)}\equiv Q_3$, that is, $C^{(k)}$ is equal to a constant multiple of $Q_3$.
\end{enumerate}
Below, we shall prove 
Lemmas \ref{lemma:CAO_Q4321}--\ref{lemma:CAO_N1}
which together prove Lemma \ref{lemma:classification_CAO_1}.

\begin{lemma}\label{lemma:CAO_Q4321}
Let $Q_1$ be one of 
\begin{subequations}
\begin{align}
 &\polyQ4(u_4,u_2,u_1,u_5;\al,\be),\label{eqn:Q1_Q4}\\
 &\polyQ3(u_4,u_2,u_1,u_5;\al,\be;\de),\label{eqn:Q1_Q3}\\
 &\polyQ2(u_4,u_2,u_1,u_5;\al,\be),\label{eqn:Q1_Q2}\\
 &\polyQ1(u_4,u_2,u_1,u_5;\al,\be;\de).\label{eqn:Q1_Q1}
\end{align}
\end{subequations}
Then, the octahedron with quad-equations $\{Q_1,Q_2,Q_3\}$ does not have the CAO property.
\end{lemma}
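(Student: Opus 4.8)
The plan is to show that for each of the four choices of $Q_1$ in \eqref{eqn:Q1_Q4}--\eqref{eqn:Q1_Q1}, the elimination procedure set up just before the lemma cannot be completed consistently, so no admissible $Q_2$, $Q_3$ exist. I would start by treating $Q_1=\polyQ4$ as the generic case and then argue that the degenerations $\polyQ3$, $\polyQ2$, $\polyQ1$ either follow by the same computation or can be handled by the coalescence limits relating these polynomials. The key structural input is that $\polyQ4$ (and its degenerations) has \emph{non-degenerate} biquadratics: each $h^{ij}$ factors into lines in general position, and each discriminant $r_i$ has \emph{four simple} roots (for $\polyQ4$; three simple plus a double at infinity for $\polyQ3$, etc.). By contrast, the $N$-type, $D$-type and remaining $H$-type polynomials that do appear in Lemma \ref{lemma:classification_CAO_1} all have degenerate biquadratics. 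The obstruction to CAO should ultimately be a mismatch between the biquadratic/discriminant structure forced on $Q_3$ by the elimination and the structure $Q_3$ is allowed to have as a quad-equation.

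Concretely, the steps I would carry out are: (1) Fix $Q_1=\polyQ4(u_4,u_2,u_1,u_5;\al,\be)$ and $Q_2$ in the generic form \eqref{eqn:Q2_general_2}. Eliminate $u_5$ from $Q_1=0$ and $Q_2=0$ to obtain \eqref{eqn:powerseries_u2}, computing the three coefficients $C^{(0)},C^{(1)},C^{(2)}\in\bbC[u_1,u_3,u_4,u_6]$ explicitly (each multi-affine in $u_1,u_3,u_4,u_6$). Because $\polyQ4$ is linear in $u_5$ with both coefficients nonvanishing, the resultant in $u_5$ is exactly bilinear, so $C^{(2)}$ comes only from the $u_2$-quadratic part, i.e. $C^{(2)}$ is built from $q_4$ alone and its $\polyQ4$-counterpart. (2) Impose \eqref{eqn:Ckequals0}, i.e. for each $k$ either $C^{(k)}\equiv 0$ or $C^{(k)}\equiv Q_3$ (constant multiple). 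Since the three polynomials cannot all be proportional to a single quad-equation $Q_3$ unless two of them vanish (two non-proportional multi-affine polynomials in four variables cannot both equal $cQ_3$ with different $c$ unless one side is zero), at least two of $C^{(0)},C^{(1)},C^{(2)}$ must vanish identically; only one can be $\equiv Q_3$. (3) Analyze the vanishing conditions $C^{(k)}\equiv 0$ as polynomial identities in the $A_i$ and in $\al,\be$. I expect that forcing $C^{(2)}\equiv 0$ already collapses the $u_2$-quadratic part of the eliminant, which by the form of $\polyQ4$ forces $q_4\equiv 0$; combined with \eqref{eqn:proof_q2_q4_0}--\eqref{eqn:proof_q3_q4_0} and $C^{(0)}\equiv0$ or $C^{(1)}\equiv0$ one is driven to one of the forbidden degeneracies in \eqref{eqn:proof_q1_q2_0}--\eqref{eqn:proof_q3_q4_0}, or else to a contradiction with $\sn\al\sn\be\sn(\al+\be)\neq 0$ and $k\neq 0,1$. (4) Repeat for $\polyQ3,\polyQ2,\polyQ1$: here the argument is easier because the coalescence limits from $\polyQ4$ preserve the relevant non-degeneracy of the biquadratics, so the same contradiction persists; alternatively one runs the identical elimination with the explicit $\polyQ3,\polyQ2,\polyQ1$ in place of $\polyQ4$.

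The main obstacle I anticipate is step (3): the identities $C^{(k)}\equiv 0$ are systems of polynomial equations in sixteen parameters $A_1,\dots,A_{16}$ together with $\al,\be$ (and $k$ or $\de$), and showing that every solution lands in one of the excluded cases \eqref{eqn:proof_q1_q2_0}--\eqref{eqn:proof_q3_q4_0} requires careful bookkeeping — in particular one must be sure not to miss a branch where $C^{(1)}\equiv cQ_3$ with $Q_3$ a genuine quad-equation while $C^{(0)}\equiv C^{(2)}\equiv 0$. I would organize this by splitting on which one of the three coefficients is the nonzero one proportional to $Q_3$, and in each branch use the factored forms of the $\polyQ$-biquadratics to pin down the structure of $Q_3$, then check that the remaining two vanishing conditions are unsatisfiable without a degeneracy. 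A conceptually cleaner route, which I would present in parallel, is to note that the CAO property would make $Q_3$ expressible through $Q_1,Q_2$ and hence constrain its biquadratics $h^{ij}_{Q_3}$ to match combinations of those of $Q_1,Q_2$; since $\polyQ4$-type biquadratics are non-degenerate while \emph{every} admissible $Q_3$ arising here (by Lemma \ref{lemma:classification_JN_polys} together with the octahedron symmetry constraints) would have to be compatible with the degenerate side, one obtains the contradiction at the level of biquadratic types without solving the full parameter system.
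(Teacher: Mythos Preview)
Your step (2) contains a genuine logical error that derails the argument. You assert that at most one of the $C^{(k)}$ can be $\equiv Q_3$ (a nonzero multiple), so at least two must vanish identically. But nothing prevents $C^{(0)}$ and $C^{(2)}$ from being \emph{simultaneously} proportional to $Q_3$---that simply means they are proportional to each other, which is a condition to be analyzed, not an impossibility. In fact this is precisely the situation the paper encounters. Computing the coefficients for $\polyQ4$ gives
\[
C^{(0)}=-(\sn\al\,u_1+\sn\be\,u_4)\,q_1-\sn(\al+\be)(\sn\al\sn\be+u_1u_4)\,q_2,
\]
\[
C^{(2)}=\sn(\al+\be)(1+k^2\sn\al\sn\be\,u_1u_4)\,q_3+(\sn\be\,u_1+\sn\al\,u_4)\,q_4.
\]
Here $C^{(0)}\equiv 0$ forces $q_1=q_2=0$ (inadmissible by \eqref{eqn:proof_q1_q2_0}), and $C^{(2)}\equiv 0$ forces $q_3=q_4=0$ (inadmissible by \eqref{eqn:proof_q3_q4_0}). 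So \emph{neither} can vanish: both must satisfy $C^{(0)}\equiv Q_3\equiv C^{(2)}$. Your dichotomy ``two vanish, one survives'' never occurs, and you have not treated the branch that actually arises.

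The paper closes this branch as follows: since $C^{(0)}=0$ and $C^{(2)}=0$ cut out the same hypersurface $Q_3=0$, eliminating $u_4$ between them must yield an identity in $u_1,u_3,u_6$. Carrying out this elimination produces a polynomial in $u_1$ whose coefficients (built from $q_1q_4$, $q_2q_3$, etc.) must all vanish; tracking these forces either $q_1=q_2=0$ or $q_3=q_4=0$, both inadmissible. This second elimination step---between two $C^{(k)}$'s that are both $\equiv Q_3$---is the heart of the argument, and it is exactly the case you dismissed. Your biquadratic-mismatch heuristic at the end is suggestive but does not replace this computation: nothing in the setup restricts $Q_3$ to a particular ABS type a priori, so you cannot invoke a type clash without first deriving what $Q_3$ would have to be.
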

\begin{proof}
Consider the case \eqref{eqn:Q1_Q4}. 
(The argument is similar for the remaining cases.) 
Recall that $Q_2$ is given by Equation \eqref{eqn:Q2_general_2}.
Then, the coefficients in Equation \eqref{eqn:powerseries_u2} become
\begin{equation}
\begin{cases}
 C^{(0)}=-\Big(\sn{\al}u_1+\sn{\be}u_4\Big)q_1
 -\sn{\al+\be}\Big(\sn{\al} \sn{\be}+u_1 u_4\Big)q_2,\\
 C^{(1)}=\sn{\al+\be}\Big(1+k^2 \sn{\al} \sn{\be} u_1 u_4\Big) q_1
 +\Big(\sn{\be} u_1+\sn{\al} u_4\Big) q_2\\
 \hspace{3em}-\Big(\sn{\al} u_1+\sn{\be} u_4\Big) q_3
 -\sn{\al+\be}\Big(\sn{\al} \sn{\be}+u_1 u_4\Big) q_4,\\
 C^{(2)}=\sn{\al+\be}\Big(1+k^2 \sn{\al} \sn{\be} u_1 u_4\Big)q_3
 +\Big(\sn{\be} u_1+\sn{\al} u_4\Big) q_4.
\end{cases}
\end{equation}
If $C^{(0)}\equiv 0$, then $q_1=q_2=0$, and so we obtain the inadmissible case \eqref{eqn:proof_q1_q2_0}. 
By the argument given above, this implies $C^{(0)}\equiv Q_3$. 
In a similar manner, we can deduce that $C^{(2)}\equiv Q_3$.

Recalling Equation \eqref{eqn:Ckequals0}, we have two equations $C^{(0)}=0$ and $C^{(2)}=0$, which are linear in $u_4$. We eliminate this variable between the two equations to obtain
\begin{align}
 &\sn{\be}\sn{\al+\be}\Big(q_2 q_4-k^2\sn{\al}^2q_1 q_3\Big){u_1}^2
 +\sn{\be} \sn{\al+\be}\Big(q_1 q_3-\sn{\al}^2q_2 q_4\Big)\notag\\
 &-\Big((\sn{\al}^2-\sn{\be}^2) q_1 q_4-\sn{\al+\be}^2 (1-k^2 \sn{\al}^2 \sn{\be}^2) q_2 q_3\Big)u_1=0.
 \label{eqn:Q4algebraiceqn}
\end{align}
But, the condition that $C^{(0)}\equiv Q_3\equiv C^{(2)}$ means Equation \eqref{eqn:Q4algebraiceqn} must vanish identically.
Consequently, each coefficient of a power of $u_1$ must vanish.
These lead to conditions \eqref{eqn:proof_q1_q2_0} or \eqref{eqn:proof_q3_q4_0}, which are inadmissible. 
Therefore, the CAO property cannot hold.

%
%
%
%
\end{proof}

\begin{lemma}\label{lemma:CAO_D432}
Let $Q_1$ be one of 
\begin{subequations}
\begin{align}
 &\polyD4(u_4,u_2,u_1,u_5;\de_1,\de_2,\de_3),\label{eqn:Q1_D4}\\
 &\polyD3(u_4,u_2,u_1,u_5),\label{eqn:Q1_D3}\\
 &\polyD2(u_4,u_2,u_1,u_5;1),\label{eqn:Q1_D2}
\end{align}
\end{subequations}
where $(\de_1,\de_2)\neq(0,0)$.
Then, the octahedron with quad-equations $\{Q_1,Q_2,Q_3\}$ does not have the CAO property.
\end{lemma}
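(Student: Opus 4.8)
The plan is to follow exactly the same strategy as in the proof of Lemma~\ref{lemma:CAO_Q4321}, but now applied to the $H^6$-type polynomials $\polyD4$, $\polyD3$ and $\polyD2_{\de=1}$ in the role of $Q_1$. First I would write $Q_2$ in the generic form \eqref{eqn:Q2_general_2} as $q_1+u_5q_2+u_2q_3+u_2u_5q_4$, and then eliminate the shared variable $u_5$ between $Q_1=0$ and $Q_2=0$ to produce the coefficients $C^{(0)},C^{(1)},C^{(2)}$ of the polynomial \eqref{eqn:powerseries_u2} in $u_2$. Since $Q_1$ is now only bilinear in $(u_1,u_4)$ in a much simpler way than $\polyQ4$, these coefficients will be short explicit expressions: for instance, for $\polyD4(u_4,u_2,u_1,u_5;\de_1,\de_2,\de_3)=u_4u_1+u_2u_5+\de_1u_2u_1+\de_2u_1u_5+\de_3$ (reading off the variable order $x=u_4$, $y=u_2$, $z=u_1$, $w=u_5$), solving $Q_1=0$ for $u_5$ gives $u_5=-(u_4u_1+\de_1u_2u_1+\de_3)/(u_2+\de_2u_1)$, and substituting into $Q_2=0$ and clearing denominators yields the $C^{(k)}$.

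Next, as in Lemma~\ref{lemma:CAO_Q4321}, I would argue that neither $C^{(0)}$ nor $C^{(2)}$ can vanish identically without forcing one of the inadmissible degeneracies \eqref{eqn:proof_q1_q2_0}--\eqref{eqn:proof_q3_q4_0} on the $q_i$; hence, by the dichotomy established before Lemma~\ref{lemma:CAO_Q4321} (each $C^{(k)}$ is either $\equiv 0$ or $\equiv Q_3$ as a polynomial in $u_1,u_3,u_4,u_6$), we must have $C^{(0)}\equiv Q_3\equiv C^{(2)}$. This in particular means $C^{(0)}$ and $C^{(2)}$ are proportional as polynomials. Both are linear in $u_4$, so I would eliminate $u_4$ between $C^{(0)}=0$ and $C^{(2)}=0$ to obtain an algebraic relation of degree at most two in $u_1$ with coefficients that are polynomials in the $q_i(u_3,u_6)$; since $C^{(0)}\equiv Q_3\equiv C^{(2)}$, this relation must vanish identically, so every coefficient of a power of $u_1$ must vanish. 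Reading off these coefficients produces algebraic constraints on the bilinear forms $q_1,q_2,q_3,q_4$ in $u_3,u_6$, and I would show that (for each admissible choice of $\de_1,\de_2,\de_3$ with $(\de_1,\de_2)\ne(0,0)$, and for $\polyD3$ and $\polyD2_{\de=1}$) these constraints force one of the forbidden configurations in \eqref{eqn:proof_q1_q2_0}--\eqref{eqn:proof_q3_q4_0}. Therefore no choice of $Q_2$ makes the octahedron CAO.

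The main obstacle I anticipate is bookkeeping rather than conceptual difficulty: the case $Q_1=\polyD4$ splits into several subcases according to which of $\de_1,\de_2,\de_3$ vanish (recalling $(\de_1,\de_2)\ne(0,0)$ and each $\de_i\in\{0,1\}$), and in the subcases where $\de_2=0$ the elimination of $u_5$ is even simpler (it is linear and $u_5$ appears affinely), while if $\de_2=1$ one must be careful to track the denominator $u_2+\de_2u_1$ and confirm it does not spuriously cancel terms. A secondary subtlety is that the discriminant-type biquadratic structure of these $H^6$ polynomials is degenerate, so unlike the $Q$-type case one cannot appeal to nonvanishing of an elliptic modulus; instead one simply checks directly that the vanishing of the $u_1$-coefficients in the eliminant forces, e.g., $q_2 q_4\equiv 0$ together with $q_1 q_3\equiv 0$, which by a short case analysis gives one of \eqref{eqn:proof_q1_q2_0}--\eqref{eqn:proof_q3_q4_0}. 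The cases $Q_1=\polyD3$ and $Q_1=\polyD2_{\de=1}$ are handled identically with even shorter computations, since these polynomials have fewer terms. I would present the $\polyD4$ computation in detail and remark that the others are entirely analogous, in the same spirit as the proof of Lemma~\ref{lemma:CAO_Q4321}.
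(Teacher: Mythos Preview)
Your plan follows the Q4 template too literally and misses the key structural feature of these $H^6$-type cases. After you compute the $C^{(k)}$ for $\polyD4$, you will find
\[
C^{(0)}=\de_2 u_1 q_1-(\de_3+u_1u_4)q_2,\quad
C^{(1)}=q_1-\de_1 u_1 q_2+\de_2 u_1 q_3-(\de_3+u_1u_4)q_4,\quad
C^{(2)}=q_3-\de_1 u_1 q_4.
\]
The crucial point is that $C^{(2)}$ does not involve $u_4$ at all. Since $Q_3=Q_3(u_6,u_4,u_3,u_1)$ is a genuine quad-equation (hence affine of degree one in $u_4$), the dichotomy ``$C^{(2)}\equiv 0$ or $C^{(2)}\equiv Q_3$'' forces $C^{(2)}\equiv 0$, not $C^{(2)}\equiv Q_3$. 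So your step ``argue that neither $C^{(0)}$ nor $C^{(2)}$ can vanish identically \dots\ hence $C^{(0)}\equiv Q_3\equiv C^{(2)}$'' is backwards here, and your subsequent plan to ``eliminate $u_4$ between $C^{(0)}=0$ and $C^{(2)}=0$'' cannot be carried out since $C^{(2)}$ contains no $u_4$.

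The paper's argument proceeds in the opposite direction: from $C^{(2)}\equiv 0$ one reads off constraints on $q_3,q_4$. If $\de_1=1$ this gives $q_3=q_4=0$ directly (inadmissible \eqref{eqn:proof_q3_q4_0}); if $\de_1=0,\de_2=1$ one gets only $q_3=0$, and then one forms the combination $C^{(0)}-u_1C^{(1)}=(\de_3+u_1u_4)(u_1q_4-q_2)$, whose identical vanishing forces $q_2=q_4=0$ (inadmissible \eqref{eqn:proof_q2_q4_0}). For $\polyD3$ one finds $C^{(2)}=-q_4$, hence $q_4=0$, and then $C^{(0)}+u_1^2C^{(1)}$ factors through $(u_1+u_4)$ and forces $q_1=q_2=q_3=0$; for $\polyD2_{\de=1}$ one has $C^{(2)}=q_3-u_4q_4$, which does involve $u_4$ but still forces $q_3=q_4=0$ since $C^{(2)}\equiv 0$ is the only possibility consistent with the degree count. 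So the right template is not ``eliminate $u_4$ between two copies of $Q_3$'' but rather ``observe that some $C^{(k)}$ is too degenerate in $u_4$ to equal $Q_3$, hence must vanish, and then combine the remaining $C^{(k)}$ to extract a factorisation''.
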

\begin{proof}
We consider each case separately. 
Consider the first case \eqref{eqn:Q1_D4}. 
Then, the coefficients in Equation \eqref{eqn:powerseries_u2} become
\begin{equation}
\begin{cases}
 C^{(0)}=\de_2 u_1 q_1-(\de_3+u_1 u_4) q_2,\\
 C^{(1)}=q_1-\de_1 u_1 q_2+\de_2 u_1 q_3-(\de_3+u_1 u_4) q_4,\\
 C^{(2)}=q_3-\de_1 u_1 q_4.
\end{cases}
\end{equation}
Notice that $C^{(2)}$ does not involve $u_4$, therefore, $C^{(2)}\equiv0$.
(In what follows, we omit this element of the argument for simplicity.) 
We now have two possibilities.
\begin{description}
\item[(i)]
Let $\de_1=1$.
From $C^{(2)}\equiv0$ we obtain the condition \eqref{eqn:proof_q3_q4_0}, which is inadmissible.
\item[(ii)]
Let $\de_1=0$ and $\de_2=1$.
From $C^{(2)}\equiv0$ we obtain $q_3=0$.
Then, from
\begin{equation}
 C^{(0)}-u_1C^{(1)}=(\de_3+u_1 u_4)(u_1q_4-q_2)\equiv 0,
\end{equation}
we obtain the condition \eqref{eqn:proof_q2_q4_0}, which is inadmissible.
\end{description}
Therefore, the CAO property does not hold.

We next consider the second case \eqref{eqn:Q1_D3}.
Then, we obtain
\begin{equation}
\begin{cases}
 C^{(0)}=(u_1+u_4) q_1-u_1 u_4 q_2,\\
 C^{(1)}=-q_2+(u_1+u_4) q_3-u_1 u_4 q_4,\\
 C^{(2)}=-q_4.
\end{cases}
\end{equation}
From $C^{(2)}\equiv0$, we obtain $q_4=0$.
Then, from
\begin{equation}
 C^{(0)}+{u_1}^2C^{(1)}=(u_1+u_4)(q_1-q_2u_1+q_3{u_1}^2)\equiv 0,
\end{equation}
we obtain the condition \eqref{eqn:proof_q1_q2_0}, which is inadmissible.
Therefore, the CAO property does not hold.

We finally consider the case \eqref{eqn:Q1_D2}.
Then, we obtain
\begin{equation}
\begin{cases}
 C^{(0)}=-(u_1+u_4) q_2,\\
 C^{(1)}=q_1-u_4 q_2-(u_1+u_4) q_4,\\
 C^{(2)}=q_3-u_4 q_4.
\end{cases}
\end{equation}
From $C^{(2)}\equiv0$, we obtain the condition \eqref{eqn:proof_q3_q4_0}, which is inadmissible.
Therefore, the CAO property does not hold. 
\end{proof}

\begin{lemma}\label{lemma:CAO_H3}
Let $Q_1$ be one of 
\begin{subequations}
\begin{align}
 &\polyH3(u_4,u_2,u_1,u_5;\al,\be;\de_1,\de_2),\label{eqn:Q1_H3_1}\\
 &\polyH3(u_4,u_2,u_5,u_1;\al,\be;\de_1,\de_2),\label{eqn:Q1_H3_2}\\
 &\polyH3(u_4,u_1,u_2,u_5;\al,\be;\de_1,\de_2),\label{eqn:Q1_H3_3}
\end{align}
\end{subequations}
where $(\de_1,\de_2)\neq(0,0)$.
Then, the octahedron with quad-equations $\{Q_1,Q_2,Q_3\}$ does not have the CAO property.
\end{lemma}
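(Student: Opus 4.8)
The plan is to follow the template established in the proofs of Lemmas~\ref{lemma:CAO_Q4321} and~\ref{lemma:CAO_D432}: for each of the three orderings \eqref{eqn:Q1_H3_1}--\eqref{eqn:Q1_H3_3} I would take $Q_2$ in the generic form \eqref{eqn:Q2_general_2}, eliminate $u_5$ between $Q_1 = 0$ and $Q_2 = 0$, and read off the coefficients $C^{(0)}, C^{(1)}, C^{(2)}$ of Equation \eqref{eqn:powerseries_u2}, which are multi-affine in $u_1, u_3, u_4, u_6$. As in those lemmas, the CAO property forces $C^{(k)} \equiv 0$ or $C^{(k)} \equiv Q_3$ for each $k$, and the goal is to show that every branch leads to one of the inadmissible configurations \eqref{eqn:proof_q1_q2_0}--\eqref{eqn:proof_q3_q4_0}.

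For the first ordering \eqref{eqn:Q1_H3_1}, writing $Q_1 = u_5\bigl((\al u_1 - \be u_4) + \tfrac{(\al^2-\be^2)\de_2}{\al\be}u_2\bigr) + \bigl((\al u_4 - \be u_1)u_2 + (\al^2-\be^2)\de_1\bigr)$ and eliminating $u_5$ gives, in the notation of \eqref{eqn:Q2_general_2},
\begin{equation*}
 C^{(0)} = (\al u_1 - \be u_4)q_1 - (\al^2-\be^2)\de_1 q_2,\qquad
 C^{(2)} = \tfrac{(\al^2-\be^2)\de_2}{\al\be}q_3 - (\al u_4 - \be u_1)q_4,
\end{equation*}
together with a longer expression for $C^{(1)}$. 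The key observation is that in $C^{(0)}$ the term $(\al^2-\be^2)\de_1 q_2$ carries no $u_1,u_4$ dependence whereas $(\al u_1 - \be u_4)q_1$ does (using $(\al,\be)\neq(0,0)$ and $\al^2\neq\be^2$, which hold by the irreducibility of $\polyH3$); hence $C^{(0)}\equiv0$ forces $q_1 = 0$, and then, when $\de_1\neq0$, also $q_2 = 0$, i.e.\ the inadmissible case \eqref{eqn:proof_q1_q2_0}. The same reasoning applied to $C^{(2)}$ rules out $C^{(2)}\equiv0$ when $\de_2\neq0$. In the generic case $\de_1\de_2\neq0$ this yields $C^{(0)}\equiv Q_3\equiv C^{(2)}$ up to nonzero scalars; eliminating $u_4$ between $C^{(0)}=0$ and $C^{(2)}=0$ then produces a polynomial identity in $u_1$ whose coefficients must all vanish, and these collapse (by the same comparison of which variables occur in each term) to \eqref{eqn:proof_q1_q2_0} or \eqref{eqn:proof_q3_q4_0}. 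The remaining admissible pairs $(\de_1,\de_2)=(0,1)$ and $(1,0)$ need a separate but analogous argument, tracking the degeneration where the $u_2$-coefficient of the $u_5$-part of $Q_1$ (for $\de_2=0$), or the constant term of its $u_5$-free part (for $\de_1=0$), vanishes, and using $C^{(1)}$ to close off the one surviving branch.

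The orderings \eqref{eqn:Q1_H3_2} and \eqref{eqn:Q1_H3_3} are treated the same way; the only difference is that the $u_5$-coefficient and the $u_5$-free part of $Q_1$ now involve $u_2$ through different monomials, so one writes $Q_1 = u_5\,P(u_1,u_2,u_4) + R(u_1,u_2,u_4)$ with $P$ still affine-linear in $u_2$, and the elimination yields $C^{(k)}$ of the same shape. I expect the main obstacle to be the bookkeeping of the three orderings against the three admissible values of $(\de_1,\de_2)$: in each combination one must verify that the escape from $C^{(k)}\equiv0$ is genuinely closed before invoking $C^{(k)}\equiv Q_3$, and that the final $u_4$-elimination truly lands in an inadmissible configuration rather than producing a spurious consistent family of the $q_i$. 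A secondary point is to confirm that the irreducibility of $\polyH3$ supplies exactly the inequalities ($\al^2\neq\be^2$, and $\al\be\neq0$ whenever $\de_2\neq0$) on which the argument above relies.
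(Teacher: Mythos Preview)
Your proposal is correct and follows essentially the same template as the paper's proof: compute $C^{(0)},C^{(1)},C^{(2)}$ from the generic $Q_2$ of \eqref{eqn:Q2_general_2}, split into the three admissible values of $(\de_1,\de_2)$, and in each branch force one of the inadmissible configurations \eqref{eqn:proof_q1_q2_0}--\eqref{eqn:proof_q3_q4_0}; your formulae for $C^{(0)},C^{(2)}$ in case \eqref{eqn:Q1_H3_1} agree with the paper's up to the harmless overall factor $\al\be$. The one place the paper is quicker than your outline is the ordering \eqref{eqn:Q1_H3_3}: there $C^{(2)}$ comes out proportional to $\al q_3+\be u_1 q_4$, which is independent of $u_4$, so $C^{(2)}\equiv Q_3$ is impossible and $C^{(2)}\equiv 0$ immediately yields \eqref{eqn:proof_q3_q4_0} without any further elimination.
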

\begin{proof}
We consider each case separately. 
Consider the first case \eqref{eqn:Q1_H3_1}.
Then, we obtain
\begin{equation}
\begin{cases}
 C^{(0)}=\al\be(\al u_1-\be u_4)q_1-\al\be\de_1(\al^2-\be^2)q_2,\\
 C^{(1)}=\de_2(\al^2-\be^2)q_1+\al\be(\be u_1-\al u_4)q_2+\al\be(\al u_1-\be u_4)q_3\\
 \hspace{3em}-\de_1\al\be(\al^2-\be^2)q_4,\\
 C^{(2)}=\de_2(\al^2-\be^2)q_3+\al\be(\be u_1-\al u_4)q_4.
\end{cases}
\end{equation}
We now have three possibilities.
\begin{description}
\item[(i)]
Let $\de_1=\de_2=1$. 
Eliminating $u_4$ from $C^{(0)}=0$ and $C^{(2)}=0$, we obtain
\begin{equation}
 \al q_1 q_4 u_1-\al^2 q_2 q_4-q_1 q_3\equiv 0,
\end{equation}
which gives $q_1q_4=0$.
If $q_1=0$, then from $C^{(0)}=0$ we obtain the condition \eqref{eqn:proof_q1_q2_0}, which is inadmissible.
Moreover, if $q_4=0$, then from $C^{(2)}=0$ we obtain the condition \eqref{eqn:proof_q3_q4_0}, which is inadmissible.
\item[(ii)]
Let $\de_1=1$ and $\de_2=0$.
Then, from $C^{(2)}\equiv0$ we obtain $q_4=0$,
and then from $C^{(0)}=0$ and $C^{(1)}=0$ we respectively obtain
\begin{subequations}
\begin{align}
 &(\al u_1-\be u_4) q_1-(\al^2-\be^2) q_2=0,\\
 &(\be u_1-\al u_4) q_2+(\al u_1-\be u_4) q_3=0.
\end{align}
\end{subequations}
Eliminating $u_4$ from the equations above, we obtain
\begin{equation}
 q_1q_2 u_1-(\al q_2+\be q_3)q_2\equiv0,
\end{equation}
which gives $q_1q_2=0$.
From $C^{(0)}=0$ and $q_1q_2=0$, we obtain the condition \eqref{eqn:proof_q1_q2_0}, which is inadmissible.
\item[(iii)]
Let $\de_1=0$ and $\de_2=1$.
From $C^{(0)}\equiv 0$ we obtain $q_1=0$,
and then from $C^{(1)}=0$ and $C^{(2)}=0$ we respectively obtain
\begin{subequations}
\begin{align}
 &(\be u_1-\al u_4) q_2+(\al u_1-\be u_4)q_3=0,\\
 &(\al^2-\be^2) q_3+\al\be(\be u_1-\al u_4) q_4=0.
\end{align}
\end{subequations}
Eliminating $u_4$ from the equations above, we obtain
\begin{equation}
 \al\be q_3q_4u_1-(\al q_2+\be q_3)q_3\equiv 0,
\end{equation}
which gives $q_3q_4=0$.
Then, from $C^{(2)}=0$ and $q_3q_4=0$, we obtain the condition \eqref{eqn:proof_q3_q4_0}, which is inadmissible.
\end{description}
Therefore, the CAO property does not hold.

We next consider the second case \eqref{eqn:Q1_H3_2}.
Then, we obtain
\begin{equation}
 \begin{cases}
 C^{(0)}=\al\be\left(\al u_1 q_1+\Big(\be u_1 u_4-(\al^2-\be^2)\de_1\Big)q_2\right),\\
 C^{(1)}=-\al\be^2 q_1-\Big(\al^2\be u_4+(\al^2-\be^2)\de_2 u_1\Big)q_2+\al^2\be u_1 q_3\\
 \hspace{3em}+\al\be\Big(\be u_1 u_4-(\al^2-\be^2)\de_1\Big)q_4,\\
 C^{(2)}=-\al\be^2 q_3-\Big(\al^2\be u_4+(\al^2-\be^2)\de_2 u_1\Big) q_4.
\end{cases}
\end{equation}
In the exact same manner as the case \eqref{eqn:Q1_Q4}, 
we can show that the CAO property does not hold.

We finally consider the case \eqref{eqn:Q1_H3_3}.
Then, we obtain
\begin{equation}
 \begin{cases}
 C^{(0)}=-\Big(\al\be^2 u_4-(\al^2-\be^2)\de_2 u_1\Big)q_1-\al\be\Big(\al u_1 u_4+(\al^2-\be^2)\de_1\Big)q_2,\\
 C^{(1)}=\al^2\be q_1+\al\be^2 u_1q_2-\Big(\al\be^2 u_4-(\al^2-\be^2)\de_2 u_1\Big)q_3\\
 \hspace{3em}-\al\be\Big(\al u_1 u_4+(\al^2-\be^2)\de_1\Big)q_4,\\
 C^{(2)}=\al\be(\al q_3+\be u_1 q_4).
\end{cases}
\end{equation}
From $C^{(2)}\equiv0$, we obtain the condition \eqref{eqn:proof_q3_q4_0}, which is inadmissible.
Therefore, the CAO property does not hold.
\end{proof}

\begin{lemma}\label{lemma:CAO_H2}
Let $Q_1$ be one of 
\begin{subequations}
\begin{align}
 &\polyH2(u_4,u_2,u_1,u_5;\al,\be;\de),\label{eqn:Q1_H2_1}\\
 &\polyH2(u_4,u_2,u_5,u_1;\al,\be;\de),\label{eqn:Q1_H2_2}\\
 &\polyH2(u_4,u_1,u_2,u_5;\al,\be;\de).\label{eqn:Q1_H2_3}
\end{align}
\end{subequations}
Then, the octahedron with quad-equations $\{Q_1,Q_2,Q_3\}$ does not have the CAO property.
\end{lemma}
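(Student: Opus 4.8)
The proof follows the same template as Lemmas \ref{lemma:CAO_Q4321}--\ref{lemma:CAO_H3}. I take $Q_2$ in the generic form \eqref{eqn:Q2_general_2}, so that both $Q_1$ and $Q_2$ are affine in the common variable $u_5$, and eliminate $u_5$ between $Q_1=0$ and $Q_2=0$. This produces an equation of the shape \eqref{eqn:powerseries_u2}, namely $\sum_{k=0}^{2}C^{(k)}(u_1,u_3,u_4,u_6)\,u_2^{k}=0$, where each $C^{(k)}$ is multi-affine in $u_1,u_3,u_4,u_6$ and is assembled from $\al$, $\be$, $\de$ and the four factors $q_1,\dots,q_4$ of $Q_2$. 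Since $Q_3=Q_3(u_6,u_4,u_3,u_1)$ has exactly the arguments of the $C^{(k)}$, the CAO requirement that $u_2$ remain a free initial value forces, for each $k$, either $C^{(k)}\equiv 0$ or $C^{(k)}\equiv Q_3$ up to a non-zero constant, as in \eqref{eqn:Ckequals0} and the discussion around it.

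I would then treat the three placements \eqref{eqn:Q1_H2_1}, \eqref{eqn:Q1_H2_2}, \eqref{eqn:Q1_H2_3} one at a time, handling $\de\in\{0,1\}$ uniformly where possible (the case $\de=0$, in which the quadratic factor $(2y+\al+\be)(2w+\al+\be)$ of $\polyH2$ disappears, simplifies the coefficients considerably). In each placement a direct computation shows $C^{(0)}$ and $C^{(2)}$ are of degree at most one in $u_4$. For \eqref{eqn:Q1_H2_3} the coefficient $C^{(2)}$ turns out not to involve $u_4$ at all, so $C^{(2)}\equiv 0$; as $C^{(2)}$ is a non-trivial combination of $q_3$ and $q_4$, this forces $q_3=q_4=0$, i.e.\ the inadmissible case \eqref{eqn:proof_q3_q4_0}, exactly as for the $\polyH3$-placement \eqref{eqn:Q1_H3_3}. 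For \eqref{eqn:Q1_H2_1} and \eqref{eqn:Q1_H2_2} both $C^{(0)}$ and $C^{(2)}$ genuinely depend on $u_4$, with the $u_4$-coefficient of $C^{(2)}$ a non-zero multiple of $q_4$ and that of $C^{(0)}$ a combination of $q_1$ and $q_2$; eliminating $u_4$ between $C^{(0)}=0$ and $C^{(2)}=0$, just as $u_4$ was eliminated in the $\polyQ4$ and $\polyH3$ arguments, gives a polynomial identity in $u_1$ whose coefficients must all vanish, and tracking them forces a product $q_iq_j$ to vanish with $\{i,j\}$ among the inadmissible pairs underlying \eqref{eqn:proof_q1_q2_0}--\eqref{eqn:proof_q3_q4_0}. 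In every branch the conclusion is that one of \eqref{eqn:proof_q1_q2_0}--\eqref{eqn:proof_q3_q4_0} holds, so the CAO property fails.

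The only genuine difficulty is computational bookkeeping. Because $\polyH2$ carries the extra term $\de(\be-\al)(2u_2+\al+\be)(2u_5+\al+\be)$, the expressions for $C^{(0)},C^{(1)},C^{(2)}$ and for the $u_1$-coefficients of the eliminant carry more $\al,\be,\de$-dependent pieces than in the $\polyH3$ case, so the delicate point is to verify that no choice of $\al,\be$ with $\de\in\{0,1\}$ lets all the coefficient constraints be met while avoiding \eqref{eqn:proof_q1_q2_0}--\eqref{eqn:proof_q3_q4_0}, and to dispose of the degenerate sub-branches where a leading coefficient (e.g.\ $\be-\al$) vanishes. I would organise the write-up by recording $C^{(0)}$, $C^{(1)}$, $C^{(2)}$ explicitly in each placement and then, in each, either reading off $C^{(2)}\equiv 0$ or performing the single $u_4$-elimination; no idea beyond the template of the preceding lemmas is required.
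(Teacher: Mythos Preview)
Your proposal is correct and follows essentially the same template as the paper's proof. The paper likewise records $C^{(0)},C^{(1)},C^{(2)}$ explicitly for each of the three placements, disposes of \eqref{eqn:Q1_H2_3} immediately via $C^{(2)}\equiv 0\Rightarrow q_3=q_4=0$, and refers back to the $\polyQ4$ argument for \eqref{eqn:Q1_H2_2}.

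One wrinkle worth flagging for \eqref{eqn:Q1_H2_1}: the paper splits $\de=1$ and $\de=0$. For $\de=1$ the $u_4$-elimination between $C^{(0)}$ and $C^{(2)}$ gives $q_2q_4=0$ directly, as you anticipate. For $\de=0$, however, $C^{(2)}=-(\al-\be+u_1-u_4)q_4$ factors, so the resultant with $C^{(0)}$ only yields the alternative $q_4=0$ or $q_1=q_2=0$; the branch $q_4=0$ is not by itself inadmissible and you must then bring in $C^{(1)}$. The paper eliminates $u_4$ between $C^{(0)}$ and $C^{(1)}$, obtaining $q_2(q_2+q_3)=0$, and the residual sub-branch $q_2+q_3=0$ requires one further look at $C^{(1)}$ to close. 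This is exactly the kind of ``degenerate sub-branch'' you flag, but it is the one place where the argument does not reduce to a single $C^{(0)}$--$C^{(2)}$ elimination.
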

\begin{proof}
We consider each case separately.
Consider the first case \eqref{eqn:Q1_H2_1}.
Then, we obtain
\begin{equation}
\begin{cases}
 C^{(0)}=\left((\al-\be)\Big(1+2(\al+\be)\de\Big)-u_1+u_4\right) q_1\\
 \hspace{3em}-(\al-\be)\Big(\al+\be+2(\al^2+\be^2)\de+u_1+u_4\Big) q_2,\\
 C^{(1)}=4 (\al-\be)\de q_1
 -\Big(\al-\be+2(\al^2-\be^2)\de+u_1-u_4\Big) q_2\\
 \hspace{3em}+\Big(\al-\be+2(\al^2-\be^2)\de-u_1+u_4\Big) q_3\\
 \hspace{3em}-(\al-\be)\Big(\al+\be+2(\al^2+\be^2)\de+u_1+u_4\Big)q_4,\\
 C^{(2)}=4 (\al-\be)\de q_3-\Big(\al-\be+2(\al^2-\be^2)\de+u_1-u_4\Big) q_4.
\end{cases}
\end{equation}
We now have two possibilities.
\begin{description}
\item[(i)]
Let $\de=1$.
Eliminating $u_4$ from $C^{(0)}=0$ and $C^{(2)}=0$, we obtain
\begin{equation}
 q_2q_4u_1+q_1\Big(2q_3-(1+2\al+2\be)q_4\Big)-q_2\Big(2(\al-\be)q_3-\al(1+2\al)q_4\Big)\equiv0,
\end{equation}
which gives $q_2q_4=0$. 
If $q_2=0$, then from $C^{(0)}=0$ we obtain the condition \eqref{eqn:proof_q1_q2_0}, which is inadmissible, 
and if $q_4=0$, then from $C^{(2)}=0$ we obtain the condition \eqref{eqn:proof_q3_q4_0}, which is inadmissible.
\item[(ii)]
Let $\de=0$.
From $C^{(2)}\equiv0$ we obtain $q_4=0$,
and then from $C^{(0)}=0$ and $C^{(1)}=0$ we respectively obtain
\begin{subequations}
\begin{align}
 &(\al-\be-u_1+u_4) q_1-(\al-\be) (\al+\be+u_1+u_4) q_2=0,\\
 &(\al-\be+u_1-u_4) q_2-(\al-\be-u_1+u_4) q_3=0.
\end{align}
\end{subequations}
Eliminating $u_4$ from the equations above, we obtain
\begin{equation}
 q_2(q_2+q_3)u_1-q_2\Big(q_1-\al q_2-\be q_3\Big)\equiv0,
\end{equation}
which gives $q_2(q_2+q_3)=0$.
Because of the condition \eqref{eqn:proof_q2_q4_0}, which is inadmissible, we obtain $q_2+q_3=0$.
However, from $C^{(1)}=0$ and $q_2+q_3=0$, we obtain the condition \eqref{eqn:proof_q2_q4_0}, which is inadmissible.
\end{description}
Therefore, the CAO property does not hold.

We next consider the second case \eqref{eqn:Q1_H2_2}.
Then, we obtain
\begin{equation}
\begin{cases}
 C^{(0)}=-(u_1-\al+\be)q_1\\
 \hspace{1.5em}-\left((\al-\be)\Big(u_4+\al+\be+2(\al^2+\be^2)\de\Big)
 +u_1\Big(u_4+\al-\be+2(\al^2-\be^2)\de\Big)\right)q_2,\\
 C^{(1)}=q_1+\left(u_4-(\al-\be)\Big(1+2\de(2u_1+\al+\be)\Big)\right)q_2-(u_1-\al+\be)q_3\\
 \hspace{1.5em}-\left((\al-\be)\Big(u_4+\al+\be+2(\al^2+\be^2)\de\Big)
 +u_1\Big(u_4+\al-\be+2(\al^2-\be^2)\de\Big)\right)q_4,\\
 C^{(2)}=q_3+\Big(u_4-\al+\be-2\de(\al-\be)(2u_1+\al+\be)\Big)q_4.
\end{cases}
\end{equation}
In the exact same manner as the case \eqref{eqn:Q1_Q4}, 
we can show that the CAO property does not hold.

We finally consider the case \eqref{eqn:Q1_H2_3}. 
Then, we obtain
\begin{equation}
\begin{cases}
 C^{(0)}=-\Big(u_4+\al-\be+2(\al-\be)\de(2u_1+\al+\be)\Big)q_1\\
 \hspace{1.5em}-\left(u_1\Big(u_4-\al+\be-2(\al^2-\be^2)\de\Big)-(\al-\be)\Big(u_4+\al+\be+2(\al^2+\be^2)\de\Big)\right)q_2,\\
 C^{(1)}=q_1+(u_1+\al-\be)q_2-\Big(u_4+\al-\be+2\de(\al-\be)(2u_1+\al+\be)\Big)q_3\\
 \hspace{1.5em}-\left(u_1\Big(u_4-\al+\be-2(\al^2-\be^2)\de\Big)-(\al-\be)\Big(u_4+\al+\be+2(\al^2+\be^2)\de\Big)\right)q_4,\\
 C^{(2)}=q_3+(u_1+\al-\be)q_4.
\end{cases}
\end{equation}
From $C^{(2)}\equiv0$, we obtain the condition \eqref{eqn:proof_q3_q4_0}, which is inadmissible.
Therefore, the CAO property does not hold.
\end{proof}

\begin{lemma}\label{lemma:CAO_N3}
Let $Q_1$ be one of 
\begin{subequations}
\begin{align}
 &\polyN3(u_4,u_2,u_1,u_5;\al_1,\al_2,\al_3,\al_4),\label{eqn:Q1_N3_1}\\
 &\polyN3(u_4,u_2,u_5,u_1;\al_1,\al_2,\al_3,\al_4),\label{eqn:Q1_N3_2}\\
 &\polyN3(u_4,u_1,u_2,u_5;\al_1,\al_2,\al_3,\al_4).\label{eqn:Q1_N3_3}
\end{align}
\end{subequations}
Then, we obtain quad-equations \eqref{eqn:Q123_N3}, which have the CAO property.
\end{lemma}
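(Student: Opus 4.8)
The plan is to re-use the elimination scheme already set up for Lemma~\ref{lemma:classification_CAO_1}, this time reading off its positive content rather than a contradiction. Take $Q_2$ in the generic multi-affine form \eqref{eqn:Q2_general_2}, and for each of the three choices \eqref{eqn:Q1_N3_1}--\eqref{eqn:Q1_N3_3} of $Q_1$, solve $Q_1=0$ for $u_5$: the $u_5$-free and $u_5$-linear parts of an $\polyN3$ polynomial are each the product of a monomial (or $1$) with a linear form in two of the remaining $u_i$, so this gives $u_5$ as an explicit rational function of the other variables. Substituting into $Q_2=0$ and clearing the denominator produces \eqref{eqn:powerseries_u2} with coefficients that are bilinear combinations of the $q_i$ with linear forms $\al_j u_1+\al_k u_4$; for instance in case \eqref{eqn:Q1_N3_1} I expect
\begin{align*}
 &C^{(0)}=(\al_2 u_1+\al_3 u_4)\,q_1,\qquad
 C^{(2)}=-(\al_1 u_4+\al_4 u_1)\,q_4,\\
 &C^{(1)}=(\al_2 u_1+\al_3 u_4)\,q_3-(\al_1 u_4+\al_4 u_1)\,q_2 .
\end{align*}

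Next I would apply the dichotomy established in the proof of Lemma~\ref{lemma:classification_CAO_1}: each $C^{(k)}$ is either identically zero or a constant multiple of the irreducible polynomial $Q_3$. Because $C^{(0)}$ and $C^{(2)}$ are products of a nonconstant linear form in $\{u_1,u_4\}$ with one of the $q_i$ (a polynomial in $\{u_3,u_6\}$), whereas $Q_3$ is irreducible and, being a quad-equation, genuinely depends on all four of $u_1,u_3,u_4,u_6$, such a product can be a nonzero constant multiple of $Q_3$ only if one of its factors is a nonzero constant; ruling out the degeneracies $(\al_2,\al_3)=(0,0)$ and $(\al_1,\al_4)=(0,0)$ (each of which makes $Q_1$ reducible, hence not a quad-equation) then forces $C^{(0)}\equiv C^{(2)}\equiv 0$, i.e.\ $q_1\equiv q_4\equiv 0$. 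The admissibility conditions \eqref{eqn:proof_q1_q2_0}--\eqref{eqn:proof_q3_q4_0} now give $q_2\not\equiv 0$ and $q_3\not\equiv 0$, so $C^{(1)}\not\equiv 0$ (identical vanishing of $C^{(1)}$ would force $Q_1$ to factor), whence $C^{(1)}\equiv Q_3$. After relabelling the surviving parameters ($A_i\mapsto B_i$), this identifies $Q_2=u_5q_2+u_2q_3$ with the second line of \eqref{eqn:Q123_N3} and $Q_3$ with its third line. The remaining cases \eqref{eqn:Q1_N3_2} and \eqref{eqn:Q1_N3_3} are treated by the same elimination; there the clearing of denominators involves $u_2$ and the branching is a little more delicate, but the symmetry relations of $\polyN3$ (such as \eqref{eqn:relation_N3_2} and the identities recorded before Lemma~\ref{lemma:classification_Q789_CAO}) show each resulting system is $\sim_o$-equivalent to \eqref{eqn:Q123_N3}.

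Finally I would verify the converse, namely that the system \eqref{eqn:Q123_N3} genuinely has the CAO property, i.e.\ each of its three equations follows from the other two. The implication $\{Q_1=0,\,Q_2=0\}\Rightarrow Q_3=0$ is exactly the computation above, and $\{Q_1,Q_3\}\Rightarrow Q_2$ and $\{Q_2,Q_3\}\Rightarrow Q_1$ follow from the analogous elimination of $u_6$ (respectively $u_3$), the point being that the coefficients of $Q_3$ in \eqref{eqn:Q123_N3} were chosen precisely as the resultant-type combination that makes this cyclic relationship hold.

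The step I expect to be the main obstacle is the case-by-case bookkeeping in the second paragraph: cleanly separating the genuine CAO branches from the spurious ones in which $Q_1$ or $Q_2$ silently degenerates into a reducible polynomial, and checking that no hidden constraint is imposed on the eight parameters $B_i$ or the four parameters $\al_i$, so that \eqref{eqn:Q123_N3} is the \emph{full} general form rather than a proper specialisation of it. The converse verification is routine but has to be carried out for all three derivations.
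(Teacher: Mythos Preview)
Your treatment of case \eqref{eqn:Q1_N3_1} is correct and matches the paper's: the same $C^{(k)}$ arise, the factored form of $C^{(0)}$ and $C^{(2)}$ forces them to vanish identically (since $Q_3$ is irreducible), and $C^{(1)}\equiv Q_3$ yields \eqref{eqn:Q123_N3} in full generality.

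For cases \eqref{eqn:Q1_N3_2} and \eqref{eqn:Q1_N3_3} your sketch is on the right track but slightly misreads what happens. The paper does not show that these cases produce systems $\sim_o$-equivalent to the \emph{generic} \eqref{eqn:Q123_N3}; rather, the elimination yields conditions of the form $\al_i q_j=0$ (e.g.\ in case \eqref{eqn:Q1_N3_2} one finds $\al_2q_1=\al_3q_2=\al_4q_3=\al_1q_4=0$), and the admissibility constraints \eqref{eqn:proof_q1_q2_0}--\eqref{eqn:proof_q3_q4_0} then force a pair of the $\al_i$ to vanish. The resulting \emph{specialised} $Q_1$ is then rewritten via a M\"obius transformation such as $u_i\mapsto u_i^{-1}$ (not merely the algebraic identities of $\polyN3$) as a special case of \eqref{eqn:Q1_N3_1}. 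So your final worry is inverted: in cases (b) and (c) hidden constraints \emph{are} imposed, and recognising them as special cases of (a) is precisely the content of the argument. The pure symmetry relations you cite are not enough by themselves, since Lemma~\ref{lemma:Q1_3type} already records that the three placements are inequivalent under~$\sim_o$ for generic parameters.
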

\begin{proof}
We consider each case separately. We show below that the case \eqref{eqn:Q1_N3_1} gives the generic system of quad-equations \eqref{eqn:Q123_N3}, while the remaining cases \eqref{eqn:Q1_N3_2} and \eqref{eqn:Q1_N3_3} give special cases of this system.
  
Consider the first case \eqref{eqn:Q1_N3_1}.
Then, we obtain
\begin{equation}
\begin{cases}
 C^{(0)}=(\al_2 u_1+\al_3 u_4)q_1,\\
 C^{(1)}=-(\al_4 u_1+\al_1 u_4)q_2+(\al_2 u_1+\al_3 u_4)q_3,\\
 C^{(2)}=-(\al_4 u_1+\al_1 u_4)q_4.
\end{cases}
\end{equation}
From $C^{(0)}\equiv0$ and $C^{(2)}\equiv0$, we obtain
\begin{equation}
 \al_2 q_1=0,\quad
 \al_3 q_1=0,\quad
 \al_4 q_4=0,\quad
 \al_1 q_4=0.
\end{equation}
Since $(\al_1,\al_4),(\al_2,\al_3)\neq(0,0)$, we obtain $q_1=q_4=0$, which gives
\begin{equation}
 Q_2=u_5q_2+u_2q_3,\qquad
 Q_3=(\al_4 u_1+\al_1 u_4)q_2-(\al_2 u_1+\al_3 u_4)q_3.
\end{equation}
Therefore, we obtain the CAO octahedron with quad-equations \eqref{eqn:Q123_N3}.

We next consider the second case \eqref{eqn:Q1_N3_2}.
Then, we obtain
\begin{equation}
\begin{cases}
 C^{(0)}=u_1(\al_2 q_1-\al_3 u_4 q_2),\\
 C^{(1)}=\al_4 q_1-\al_1 u_4 q_2+\al_2 u_1 q_3-\al_3 u_1 u_4 q_4,\\
 C^{(2)}=\al_4 q_3-\al_1 u_4 q_4.
\end{cases}
\end{equation}
From $C^{(0)}\equiv0$ and $C^{(2)}\equiv0$, we obtain
\begin{equation}
 \al_2 q_1=0,\quad
 \al_3 q_2=0,\quad
 \al_4 q_3=0,\quad
 \al_1 q_4=0.
\end{equation}
We now have two possibilities.
\begin{description}
\item[(i)]
If $\al_2\neq0$, we obtain $q_1=0$ and $q_2,q_3\neq0$ which gives $\al_3=\al_4=0$, $\al_1\neq0$ and $q_4=0$.
Then, we obtain
\begin{equation}
 Q_1=\polyN3(u_4,u_2,u_5,u_1;\al_1,\al_2,0,0)
 =\polyN3(u_4,u_2,u_1,u_5;\al_1,\al_2,0,0),
\end{equation}
which is special case of the case \eqref{eqn:Q1_N3_1}.
\item[(ii)]
If $\al_2=0$ and $\al_3,\al_4\neq0$, we obtain $q_2=q_3=0$, $q_1,q_4\neq0$ and $\al_1=0$.
Then, we obtain
\begin{equation}
 Q_1=\polyN3(u_4,u_2,u_5,u_1;0,0,\al_3,\al_4)
 =u_1u_5\polyN3(u_4,u_2,{u_1}^{-1},{u_5}^{-1};0,0,\al_3,\al_4),
\end{equation}
which is special case of the case \eqref{eqn:Q1_N3_1}.
\end{description}

We finally consider the case \eqref{eqn:Q1_N3_3}.
Then, we obtain
\begin{equation}
\begin{cases}
 C^{(0)}=u_4(\al_3 q_1-\al_1 u_1 q_2),\\
 C^{(1)}=\al_2 q_1-\al_4 u_1 q_2+\al_3 u_4 q_3-\al_1 u_1 u_4 q_4,\\
 C^{(2)}=\al_2 q_3-\al_4 u_1 q_4.
\end{cases}
\end{equation}
From $C^{(0)}\equiv0$ and $C^{(2)}\equiv0$, we obtain
\begin{equation}
 \al_3 q_1=0,\quad
 \al_1 q_2=0,\quad
 \al_2 q_3=0,\quad
 \al_4 q_4=0.
\end{equation}
We now have two possibilities.
\begin{description}
\item[(i)]
If $\al_3\neq0$, we obtain $q_1=0$ and $q_2,q_3\neq0$ which gives $\al_1=\al_2=0$, $\al_4\neq0$ and $q_4=0$.
Then, we obtain
\begin{equation}
 Q_1=\polyN3(u_4,u_1,u_2,u_5;0,0,\al_3,\al_4)
 =\polyN3(u_4,u_2,u_1,u_5;0,0,\al_3,\al_4),
\end{equation}
which is special case of the case \eqref{eqn:Q1_N3_1}.
\item[(ii)]
If $\al_3=0$ and $\al_1,\al_2\neq0$, then we obtain $q_2=q_3=0$, $q_1,q_4\neq0$ and $\al_4=0$.
Then, we obtain
\begin{equation}
 Q_1=\polyN3(u_4,u_1,u_2,u_5;\al_1,\al_2,0,0)
 =u_1u_2\polyN3(u_4,{u_2}^{-1},{u_1}^{-1},u_5;\al_1,\al_2,0,0),
\end{equation}
which is special case of the case \eqref{eqn:Q1_N3_1}.
\end{description}
This completes the proof.
\end{proof}

\begin{lemma}\label{lemma:CAO_N2}
Let $Q_1$ be one of 
\begin{subequations}
\begin{align}
 &\polyN2(u_4,u_2,u_1,u_5;\al_1,\al_2,\al_3,\al_4),\label{eqn:Q1_N2_1}\\
 &\polyN2(u_4,u_2,u_5,u_1;\al_1,\al_2,\al_3,\al_4),\label{eqn:Q1_N2_2}\\
 &\polyN2(u_4,u_1,u_2,u_5;\al_1,\al_2,\al_3,\al_4).\label{eqn:Q1_N2_3}
\end{align}
\end{subequations}
Then, we obtain quad-equations \eqref{eqn:Q123_N2}, which have the CAO property.
\end{lemma}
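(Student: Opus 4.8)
The plan is to follow the argument already given for Lemma~\ref{lemma:CAO_N3}, using the general elimination set up in the proof of Lemma~\ref{lemma:classification_CAO_1}. We keep $Q_2$ in the generic form \eqref{eqn:Q2_general_2}, with $q_1,q_2,q_3,q_4$ the bi-affine functions of $u_6,u_3$ appearing there, and eliminate $u_5$ between $Q_1=0$ and $Q_2=0$, obtaining a relation \eqref{eqn:powerseries_u2}, $\sum_{k=0}^{2}C^{(k)}(u_1,u_3,u_4,u_6)\,u_2^{k}=0$. By the CAO requirement each $C^{(k)}$ must be either identically zero or a constant multiple of $Q_3$, since its arguments $u_1,u_3,u_4,u_6$ coincide with those of $Q_3$.

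For the first case \eqref{eqn:Q1_N2_1} the elimination step is immediate, since $\polyN2$ is linear in $u_5$. Writing $P:=\al_1(u_1+u_4)+\al_3$ and $R:=\al_2(u_1+u_4)+\al_4$, so that $Q_1=u_2P+u_5R$, one gets $u_5=-u_2\,P/R$, and clearing denominators gives
\[
 C^{(0)}=R\,q_1,\qquad C^{(1)}=R\,q_3-P\,q_2,\qquad C^{(2)}=-P\,q_4 .
\]
Both $C^{(0)}$ and $C^{(2)}$ factor as a polynomial in $u_1,u_4$ times one in $u_3,u_6$; since $Q_3$ is an irreducible multi-affine polynomial depending on all four of its arguments, neither can be a non-zero multiple of $Q_3$, so both vanish identically. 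As $Q_1=\polyN2(u_4,u_2,u_1,u_5;\al_1,\al_2,\al_3,\al_4)$ is a quad-equation we have $\al_1\al_4\neq\al_2\al_3$, hence $(\al_1,\al_3)\neq(0,0)$ and $(\al_2,\al_4)\neq(0,0)$, so $P\not\equiv0$ and $R\not\equiv0$; therefore $q_1\equiv q_4\equiv0$. What remains is $Q_2=u_5q_2+u_2q_3$ and $C^{(1)}=R\,q_3-P\,q_2$. The latter cannot vanish identically: because $\al_1\al_4\neq\al_2\al_3$, $C^{(1)}\equiv0$ would force $q_2\equiv q_3\equiv0$, hence $Q_2\equiv0$; so $C^{(1)}\equiv Q_3$. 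Writing $q_2=q(u_6,u_3;B_1,B_2,B_3,B_4)$ and $q_3=q(u_6,u_3;B_5,B_6,B_7,B_8)$, this is exactly the system \eqref{eqn:Q123_N2}. I would then confirm that \eqref{eqn:Q123_N2} does have the CAO property by the two remaining eliminations: from $Q_2=0$ and $Q_3=0$ one recovers $u_2P+u_5R=0$, i.e.\ $Q_1=0$; and eliminating $u_1+u_4$ between $Q_1=0$ and $Q_3=0$ yields $(\al_1\al_4-\al_2\al_3)(u_5q_2+u_2q_3)=0$, equivalent to $Q_2=0$ since $\al_1\al_4\neq\al_2\al_3$.

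The cases \eqref{eqn:Q1_N2_2} and \eqref{eqn:Q1_N2_3} are handled by the same elimination. Here the resulting $C^{(k)}$ either carry a spurious monomial factor (such as $u_1$) or fail to depend on one of $u_1,u_4$, so once more they cannot be constant multiples of the irreducible $Q_3$ and must vanish identically. Propagating these vanishing conditions together with $\al_1\al_4\neq\al_2\al_3$ and the inadmissibility list \eqref{eqn:proof_q1_q2_0}--\eqref{eqn:proof_q3_q4_0}, I expect every branch to collapse to an inadmissible degeneracy (two of the $q_j$ vanishing, or $Q_2$ losing its dependence on $u_2$ or on $u_5$), possibly after a parameter specialisation, so that no new CAO octahedron arises; if some branch survives it would only reproduce a special case of \eqref{eqn:Q1_N2_1}, perhaps after a M\"obius change $u_i\mapsto u_i^{-1}$.

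The only laborious part is this last case analysis --- tracking which subsets of $\{\al_i\}$ vanish, which $q_j$ are then forced to be zero, and matching each branch against the inadmissibility list. The single point needing care conceptually is the repeatedly used claim that a $C^{(k)}$ which visibly factors, or which fails to involve some argument of $Q_3$, cannot be a constant multiple of $Q_3$; this rests precisely on $Q_3$ being an irreducible multi-affine polynomial in all four of its arguments, i.e.\ a quad-equation.
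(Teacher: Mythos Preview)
Your approach is the paper's, and your treatment of case~\eqref{eqn:Q1_N2_1} is correct and complete; the explicit check that the triple~\eqref{eqn:Q123_N2} actually has the CAO property is even a small addition, since the paper merely asserts it.

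For the remaining two cases your sketch needs two corrections. First, the outcome differs from Lemma~\ref{lemma:CAO_N3}: here no branch survives at all --- the paper shows that both~\eqref{eqn:Q1_N2_2} and~\eqref{eqn:Q1_N2_3} are entirely inadmissible for CAO --- so your hedge about surviving branches specialising to~\eqref{eqn:Q1_N2_1} (as happened for $\polyN3$) does not materialise. Second, your ``monomial factor or missing variable'' argument does not dispatch every $C^{(k)}$. In case~\eqref{eqn:Q1_N2_3} one has $C^{(0)}=(\al_2 u_4+\al_4)q_1-(\al_1 u_4+\al_3)u_1 q_2$, which generically involves both $u_1$ and $u_4$ and does not factor, so it cannot be forced to vanish at the outset. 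The paper's route there is to start from $C^{(2)}=\al_2 q_3-\al_1 u_1 q_4$ (which does miss $u_4$, hence $C^{(2)}\equiv 0$), branch on whether $\al_2=0$, and then observe that after each specialisation the simplified $C^{(1)}$ no longer involves $u_4$, forcing $C^{(1)}\equiv 0$ and a contradiction with the inadmissibility list. Case~\eqref{eqn:Q1_N2_2} is similar but easier, since there both $C^{(0)}$ (with the $u_1$ factor) and $C^{(2)}$ (missing $u_1$) are immediately handled as you describe; the contradiction then comes from the four relations $\al_2 q_1=\al_4 q_2$, $\al_2 q_2=0$, $\al_1 q_3=\al_3 q_4$, $\al_1 q_4=0$ together with $(\al_1,\al_3),(\al_2,\al_4)\neq(0,0)$.
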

\begin{proof}
We consider each case separately. 
We show below that the case \eqref{eqn:Q1_N2_1} gives the generic system of quad-equations \eqref{eqn:Q123_N2}, while the remaining two cases do not lead to the CAO property.
  
Consider the first case \eqref{eqn:Q1_N2_1}.
Then, we obtain
\begin{equation}
\begin{cases}
 C^{(0)}=(\al_2 u_1+\al_2 u_4+\al_4)q_1,\\
 C^{(1)}=-(\al_1 u_1+\al_1 u_4+\al_3)q_2+(\al_2 u_1+\al_2 u_4+\al_4)q_3,\\
 C^{(2)}=-(\al_1 u_1+\al_1 u_4+\al_3)q_4.
\end{cases}
\end{equation}
From $C^{(0)}\equiv0$ and $C^{(2)}\equiv0$, we obtain
\begin{equation}
 \al_2 q_1=0,\quad
 \al_4 q_1=0,\quad
 \al_1 q_4=0,\quad
 \al_3 q_4=0.
\end{equation}
Since $(\al_1,\al_3),(\al_2,\al_4)\neq(0,0)$, we obtain $q_1=q_4=0$, which gives
\begin{equation}
 Q_2=u_5q_2+u_2q_3,\qquad
 Q_3=(\al_1 u_1+\al_1 u_4+\al_3)q_2-(\al_2 u_1+\al_2 u_4+\al_4)q_3.
\end{equation}
Therefore, we obtain quad-equations \eqref{eqn:Q123_N2}, which satisfy the CAO property.

Now consider the second case \eqref{eqn:Q1_N2_2}.
Then, we obtain
\begin{equation}
\begin{cases}
 C^{(0)}=u_1(\al_2 q_1-\al_4 q_2-\al_2 q_2 u_4),\\
 C^{(1)}=\al_1 q_1-(\al_1 u_4+\al_3)q_2+\al_2 u_1 q_3-u_1(\al_2 u_4+\al_4)q_4,\\
 C^{(2)}=\al_1 q_3-\al_3 q_4-\al_1 q_4 u_4.
\end{cases}
\end{equation}
From $C^{(0)}\equiv0$ and $C^{(2)}\equiv0$, we obtain
\begin{equation}
 \al_2 q_1=\al_4 q_2,\quad
 \al_2 q_2=0,\quad
 \al_1 q_3=\al_3 q_4,\quad
 \al_1 q_4=0.
\end{equation}
We now have two possibilities.
\begin{description}
\item[(i)]
If $\al_2\neq0$, we obtain $q_2=0$ and $q_1,q_4\neq0$ which gives $\al_1=0$ and $\al_3\neq0$.
These conditions are inconsistent with $\al_2 q_1=\al_4 q_2$.
\item[(ii)]
If $\al_2=0$ and $\al_1,\al_4\neq0$, then from $\al_2 q_1=\al_4 q_2$, we obtain $q_2=0$ and $q_1,q_4\neq0$.
These conditions are inconsistent with $\al_1 q_4=0$.
\end{description}
Therefore, the CAO property does not hold in either sub-case.

We finally consider the case \eqref{eqn:Q1_N2_3}.
Then, we obtain
\begin{equation}
\begin{cases}
 C^{(0)}=(\al_2 u_4+\al_4)q_1-(\al_1 u_4+\al_3)u_1 q_2,\\
 C^{(1)}=\al_2 q_1-\al_1 u_1 q_2+(\al_2 u_4+\al_4)q_3-u_1(\al_1 u_4+\al_3)q_4,\\
 C^{(2)}=\al_2 q_3-\al_1 u_1 q_4.
\end{cases}
\end{equation}
From $C^{(2)}\equiv0$, we obtain
\begin{equation}
 \al_2 q_3=0,\quad
 \al_1 q_4=0.
\end{equation}
We now have two possibilities.
\begin{description}
\item[(i)]
If $\al_2\neq0$, we obtain $q_3=0$ and $q_1,q_4\neq0$ which gives $\al_1=0$ and $\al_3\neq0$.
Then, we obtain
\begin{equation}
 C^{(1)}=\al_2 q_1-(\al_1 q_2+\al_3 q_4)u_1\equiv0,
\end{equation}
which is inconsistent with $\al_2q_1\neq0$.
\item[(ii)]
If $\al_2=0$ and $\al_1,\al_4\neq0$, then we obtain $q_4=0$ and $q_2,q_3\neq0$.
Then, we obtain
\begin{equation}
 C^{(1)}=\al_4 q_3-\al_1 q_2 u_1\equiv0,
\end{equation}
which is inconsistent with $\al_4q_3\neq0$.
\end{description}
Therefore, the CAO property does not hold in either sub-case.
\end{proof}

\begin{lemma}\label{lemma:CAO_N1}
Let $Q_1$ be one of 
\begin{subequations}
\begin{align}
 &\polyN1(u_4,u_2,u_1,u_5;\al_1,\al_2,\al_3,\al_4),\label{eqn:Q1_N1_1}\\
 &\polyN1(u_4,u_2,u_5,u_1;\al_1,\al_2,\al_3,\al_4),\label{eqn:Q1_N1_2}\\
 &\polyN1(u_4,u_1,u_2,u_5;\al_1,\al_2,\al_3,\al_4).\label{eqn:Q1_N1_3}
\end{align}
\end{subequations}
Then, we obtain the CAO octahedron with quad-equations \eqref{eqn:Q123_N1}.
\end{lemma}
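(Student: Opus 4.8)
The plan is to follow, essentially step for step, the argument used for Lemmas~\ref{lemma:CAO_N3} and \ref{lemma:CAO_N2}, treating the three admissible placements \eqref{eqn:Q1_N1_1}--\eqref{eqn:Q1_N1_3} of $Q_1$ in turn. In each case I would solve $Q_1=0$ for $u_5$, substitute the result into the generic form \eqref{eqn:Q2_general_2} of $Q_2$, clear denominators, and read off the coefficients $C^{(0)},C^{(1)},C^{(2)}$ of the powers of $u_2$ as in \eqref{eqn:powerseries_u2}. To each $C^{(k)}$ one then applies the dichotomy established at the start of this appendix: either $C^{(k)}\equiv0$, or $C^{(k)}$ is a constant multiple of $Q_3$.

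For the first placement $Q_1=\polyN1(u_4,u_2,u_1,u_5;\al_1,\al_2,\al_3,\al_4)$ I would use that this polynomial depends on $u_1,u_4$ only through $u_1+u_4$, so that $Q_1=\Big(\al_1(u_1+u_4)+\al_3\Big)(u_2+u_5)+\Big(\al_2(u_1+u_4)+\al_4\Big)$ and $u_2+u_5$ is a rational function of $u_1+u_4$ alone. Substituting into \eqref{eqn:Q2_general_2} I expect
\begin{align*}
 C^{(2)}&=-\Big(\al_1(u_1+u_4)+\al_3\Big)q_4,\\
 C^{(1)}&=\Big(\al_1(u_1+u_4)+\al_3\Big)(q_3-q_2)-\Big(\al_2(u_1+u_4)+\al_4\Big)q_4,\\
 C^{(0)}&=\Big(\al_1(u_1+u_4)+\al_3\Big)q_1-\Big(\al_2(u_1+u_4)+\al_4\Big)q_2.
\end{align*}
Here $C^{(2)}$ visibly factors through $\al_1(u_1+u_4)+\al_3$, whereas $Q_3=Q_3(u_6,u_4,u_3,u_1)$ is irreducible and genuinely involves all four of $u_1,u_3,u_4,u_6$; hence $C^{(2)}$ cannot be a constant multiple of $Q_3$, and the dichotomy forces $C^{(2)}\equiv0$, i.e. $q_4=0$. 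With $q_4=0$ the same reasoning applied to $C^{(1)}=\Big(\al_1(u_1+u_4)+\al_3\Big)(q_3-q_2)$ gives $C^{(1)}\equiv0$, i.e. $q_2=q_3$. Thus $Q_2=q_1+(u_2+u_5)q_2$, and eliminating $u_2$ and $u_5$ between $Q_1=0$ and $Q_2=0$ yields $Q_3=\Big(\al_1(u_1+u_4)+\al_3\Big)q_1-\Big(\al_2(u_1+u_4)+\al_4\Big)q_2$. The admissibility conditions \eqref{eqn:proof_q1_q2_0}--\eqref{eqn:proof_q3_q4_0}, together with irreducibility of $Q_2$, rule out $q_1\equiv0$, so $q_1,q_2\not\equiv0$ and the resulting system is precisely \eqref{eqn:Q123_N1}; conversely one checks directly that in \eqref{eqn:Q123_N1} each of the three equations is recoverable from the other two, so the octahedron is CAO.

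For the two remaining placements $\polyN1(u_4,u_2,u_5,u_1;\{\al_i\})$ and $\polyN1(u_4,u_1,u_2,u_5;\{\al_i\})$, the same elimination now produces a $C^{(0)}$ carrying a $u_1u_4$ cross term, so that imposing $C^{(0)}\equiv0$ forces several of the $\al_i$ and of the $q_j$ to vanish. Chasing the resulting finite case split, I expect either to land in one of the inadmissible configurations \eqref{eqn:proof_q1_q2_0}--\eqref{eqn:proof_q3_q4_0} (as happens for the corresponding cases of Lemma~\ref{lemma:CAO_N2}) or, after using the symmetry \eqref{eqn:relation_N1_2}, the $x\leftrightarrow z$ and $y\leftrightarrow w$ invariances of $\polyN1$, and a M\"obius transformation, to recognise a special case of the first placement (as happens for Lemma~\ref{lemma:CAO_N3}); in either event the conclusion \eqref{eqn:Q123_N1} stands. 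The main obstacle is bookkeeping rather than ideas: one must carefully discard the branches $C^{(k)}\equiv Q_3$ by invoking irreducibility of $Q_3$, and then run the sub-case analysis for the two non-generic placements before the symmetry reductions can be applied; the underlying algebra is elementary throughout.
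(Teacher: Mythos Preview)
Your treatment of the first placement \eqref{eqn:Q1_N1_1} matches the paper exactly: the coefficients $C^{(0)},C^{(1)},C^{(2)}$ you write down are the paper's, and the chain $C^{(2)}\equiv0\Rightarrow q_4=0$, then $C^{(1)}\equiv0\Rightarrow q_2=q_3$, yielding \eqref{eqn:Q123_N1}, is precisely the paper's argument.

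For the two remaining placements the paper takes a more direct path than your plan. Rather than working from $C^{(0)}$, it first disposes of the sub-case $\al_1=0$ by the identity
\[
 \polyN1(x,y,w,z;0,\al_2,\al_3,\al_4)=\polyN1\Big(x,y,\tfrac{\al_3}{\al_2}z,\tfrac{\al_2}{\al_3}w;0,\al_2,\al_3,\al_4\Big),
\]
which shows that \eqref{eqn:Q1_N1_2} and \eqref{eqn:Q1_N1_3} with $\al_1=0$ are already special cases of \eqref{eqn:Q1_N1_1}. With $\al_1\neq0$ assumed, the paper then looks at $C^{(2)}$, not $C^{(0)}$: in both remaining cases $C^{(2)}=\al_1 q_3-(\al_1 u_\ast+\text{const})\,q_4$ (with $u_\ast=u_4$ for \eqref{eqn:Q1_N1_2} and $u_\ast=u_1$ for \eqref{eqn:Q1_N1_3}) has degree~$0$ in the \emph{other} of $u_1,u_4$, so $C^{(2)}\equiv Q_3$ is excluded by degree and $C^{(2)}\equiv0$ immediately forces the inadmissible $q_3=q_4=0$. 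Your route via $C^{(0)}$ would eventually work, but note that for these two placements $C^{(0)}$ is genuinely multi-affine in $u_1,u_3,u_4,u_6$, so you cannot simply ``impose $C^{(0)}\equiv0$''; you would have to carry the branch $C^{(0)}\equiv Q_3$ along. The paper's choice of $C^{(2)}$ together with the upfront $\al_1=0$ reduction avoids that bookkeeping entirely.
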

\begin{proof}
Consider the first case \eqref{eqn:Q1_N1_1}.
Then, we obtain
\begin{equation}
\begin{cases}
 C^{(0)}=\Big(\al_1(u_1+u_4)+\al_3\Big)q_1-\Big(\al_2(u_1+u_4)+\al_4\Big)q_2,\\
 C^{(1)}=-\Big(\al_1(u_1+u_4)+\al_3\Big)(q_2-q_3)-\Big(\al_2(u_1+u_4)+\al_4\Big)q_4,\\
 C^{(2)}=-\Big(\al_1(u_1+u_4)+\al_3\Big)q_4.
\end{cases}
\end{equation}
From $C^{(2)}\equiv0$, we obtain $q_4=0$.
Then, we obtain
\begin{equation}
 C^{(1)}=-\Big(\al_1(u_1+u_4)+\al_3\Big)(q_2-q_3)\equiv0,
\end{equation}
which gives $q_2=q_3$.
Therefore, we obtain
\begin{subequations}
\begin{align}
 Q_2=&q_1+(u_5+u_2)q_2,\\
 Q_3=&\Big(\al_1(u_1+u_4)+\al_3\Big)q_1-\Big(\al_2(u_1+u_4)+\al_4\Big)q_2,
\end{align}
\end{subequations}
which gives the CAO octahedron with quad-equations \eqref{eqn:Q123_N1}.

Note here that because of the following relation:
\begin{equation}
 \polyN1(x,y,w,z;0,\al_2,\al_3,\al_4)=\polyN1\left(x,y,\frac{\al_3}{\al_2}z,\frac{\al_2}{\al_3}w;0,\al_2,\al_3,\al_4\right),
\end{equation}
the cases \eqref{eqn:Q1_N1_2} and \eqref{eqn:Q1_N1_3} with $\al_1=0$ can be regarded as special cases of \eqref{eqn:Q1_N1_1}.
Therefore, we can assume $\al_1\neq0$, without loss of generality, for the cases \eqref{eqn:Q1_N1_2} and \eqref{eqn:Q1_N1_3}.

We now consider these cases. 
The case \eqref{eqn:Q1_N1_2} leads to
\begin{equation}
\begin{cases}
 C^{(0)}=(\al_1 u_1+\al_2)q_1-(\al_1 u_1 u_4+\al_2 u_4+\al_3 u_1+\al_4)q_2,\\
 C^{(1)}=\al_1 q_1-(\al_1 u_4+\al_3)q_2+(\al_1 u_1+\al_2)q_3\\
 \hspace{3em}-(\al_1 u_1 u_4+\al_2 u_4+\al_3 u_1+\al_4)q_4,\\
 C^{(2)}=\al_1 q_3-(\al_1 u_4+\al_3)q_4,
\end{cases}
\end{equation}
while from the case \eqref{eqn:Q1_N1_3} we obtain
\begin{equation}
\begin{cases}
 C^{(0)}=(\al_1 u_4+\al_3)q_1-( \al_1 u_1 u_4+\al_2 u_4+\al_3 u_1+\al_4)q_2,\\
 C^{(1)}=\al_1 q_1-(\al_1 u_1+\al_2)q_2+(\al_1 u_4+\al_3)q_3\\
 \hspace{3em}-(\al_1 u_1 u_4+\al_2 u_4+\al_3 u_1+\al_4)q_4,\\
 C^{(2)}=\al_1 q_3-(\al_1 u_1+\al_2)q_4.
\end{cases}
\end{equation}
In both cases, from $C^{(2)}\equiv0$ we obtain the condition \eqref{eqn:proof_q3_q4_0}, which is inadmissible.
Therefore, in each case the CAO property does not hold.
\end{proof}

\section{Proof of Lemma \ref{lemma:classification_rational_r}}
\label{section:proof_lemma_r}
Here, we give the proof of Lemma \ref{lemma:classification_rational_r}.
We show that the rational function $r(x,y;\{ B_i\})$ given by \eqref{eqn:def_r_qq} can be reduced to the rational functions $r_1(x, y; \{A_i\})$, $r_2(x, y; B)$ or $r_3(x, y; C_1,C_2)$, which are respectively given by \eqref{eqn:r_type1}, \eqref{eqn:r_type2} and \eqref{eqn:r_type3}.

Firstly, we show that by using a M\"obius transformation we can without loss of generality assume $B_5=0$.
Indeed, assuming $B_5\neq0$, then by the M\"obius transformation $x\to x'$, where
\begin{equation}
 x'=\dfrac{1}{x}-\dfrac{B_7}{B_5},
\end{equation}
we obtain
\begin{equation}
 r(x',y;B_1,B_2,B_3,B_4,B_5,B_6,B_7,B_8)
 =r(x,y;B_1',B_2',B_3',B_4',0,B_6',B_7',B_8'),
\end{equation}
where
\begin{align}
 &B_1'=B_3-\dfrac{B_1B_7}{B_5},\quad
 B_2'=B_4-\dfrac{B_2B_7}{B_5},\quad
 B_3'=B_1,\quad
 B_4'=B_2,\notag\\
 &B_6'=B_8-\dfrac{B_6B_7}{B_5},\quad
 B_7'=B_5,\quad
 B_8'=B_6.
\end{align}
Therefore, we can without loss of generality let $B_5=0$ for $r(x,y;B_1,\dots,B_8)$.
(In what follows, we omit this element of the argument for simplicity.)

In the following, dividing the conditions of the parameters $B_6$ and $B_7$ into four cases \eqref{eqn:typeIII_proof_c_1}--\eqref{eqn:typeIII_proof_c_4},
we prove Lemma \ref{lemma:classification_rational_r}.
\begin{description}
\item[(i) Case $B_6,B_7\neq0$]
By M\"obius transformation we can without loss of generality let $B_8=0$
and then again by M\"obius transformatio we can let $B_1=0$.
In the following, we divide the conditions of the parameters into two cases.
\begin{description}
\item[(i-1) Case $B_2B_7=B_3B_6$]
By the M\"obius transformation $(x,y)\to (x',y')$, where 
\begin{equation}
 x'=\frac{B_4}{B_6}x,\quad
 y'=\frac{B_4}{B_7}y,
\end{equation}
we obtain
\begin{equation}
 r\left(x',y',0,\dfrac{B_3B_6}{B_7},B_3,B_4,0,B_6,B_7,0\right)
 =r_2\left(x,y;\dfrac{B_3}{B_7}\right).
\end{equation}
\item[(i-2) Case $B_2B_7\neq B_3B_6$]
By the M\"obius transformation $(x,y)\to (x',y')$, where 
\begin{equation}
 x'=x-\frac{B_4 B_7}{B_2 B_7-B_3 B_6},\quad
 y'=y+\frac{B_4 B_6}{B_2 B_7-B_3 B_6},
\end{equation}
we obtain
\begin{equation}
 r(x',y',0,B_2,B_3,B_4,0,B_6,B_7,0)
 =r_1(x,y;B_2,B_3,B_6,B_7).
\end{equation}
\end{description}
\item[(ii) Case $B_6=B_7=0$]
We have
\begin{equation}\label{eqn:proof_class_r_ii}
 r(x,y;B_1,B_2,B_3,B_4,0,0,0,B_8)
 =\frac{B_1xy+B_2x+B_3y+B_4}{B_8}.
\end{equation}
In the following, we divide the conditions of the parameter $B_1$ into two cases.
\begin{description}
\item[(ii-1) Case $B_1\neq0$]
By the M\"obius transformation $(x,y)\to (x',y')$, where
\begin{equation}
 x'=\frac{1}{x}-\frac{B_3}{B_1},\quad
 y'=\frac{1}{y}-\frac{B_2}{B_1},
\end{equation}
we obtain
\begin{equation}
 r(x',y';B_1,B_2,B_3,B_4,0,0,0,B_8)
 =r_1(x,y;B_1B_4-B_2B_3,{B_1}^2,B_1B_8,0).
\end{equation}
\item[(ii-2) Case $B_1=0$]
Because of $\partial r/\partial x\neq0$ and $\partial r/\partial y\neq0$, 
we have $B_2,B_3\neq0$.
Therefore, by the M\"obius transformation $(x,y)\to (x',y')$, where
\begin{equation}
 x'=\frac{B_8x-B_4}{B_2},\quad
 y'=\frac{B_8y}{B_3},
\end{equation}
we obtain
\begin{equation}
 r(x',y';0,B_2,B_3,B_4,0,0,0,B_8)
 =r_3(x,y).
\end{equation}
\end{description}
\item[(iii) Case $B_6=0$ and $B_7\neq0$]
By the M\"obius transformation $y\to y'$, where
\begin{equation}
 y'=\dfrac{1}{y}-\dfrac{B_8}{B_7},
\end{equation}
we obtain
\begin{align}
 &r\left(x,y',B_1,B_2,B_3,B_4,0,0,B_7,B_8\right)\notag\\
 &=\dfrac{(B_2B_7-B_1B_8)xy+B_1B_7 x+(B_4B_7-B_3 B_8)y+B_3B_7}{{B_7}^2},
\end{align}
which is same form as Equation \eqref{eqn:proof_class_r_ii}.
Therefore, we can rewrite it in a similar manner to the case {\bf (ii)}.
\item[(iv) Case $B_6\neq0$ and $B_7=0$]
By the M\"obius transformation $x\to x'$, where
\begin{equation}
 x'=\dfrac{1}{x}-\dfrac{B_8}{B_6},
\end{equation}
we obtain
\begin{align}
 &r\left(x',y,B_1,B_2,B_3,B_4,0,B_6,0,B_8\right)\notag\\
 &=\dfrac{(B_3B_6-B_1B_8)xy+(B_4B_6-B_2 B_8)x+B_1B_6 y+B_2B_6}{{B_6}^2},
\end{align}
which is same form as Equation \eqref{eqn:proof_class_r_ii}.
Therefore, we can rewrite it in a similar manner to the case {\bf (ii)}.
\end{description}
Therefore, we have completed the proof.

\section{Proof of Lemma \ref{lemma:classification_Q123456789_1}}
\label{section:proof_lemma_CQ123456789_1}
In this section, we give the proof of Lemma \ref{lemma:classification_Q123456789_1}.

Let $\{Q_1,\dots,Q_9\}$ be the quad-equations on a cuboctahedron satisfying the properties \ref{cond:CO1}, \ref{cond:CO2} and \ref{cond:CO3} in Condition \ref{condition:CO}. 
Then, the following lemma holds.
\begin{lemma}\label{lemma:proof_F1234G1234}
The following properties hold:
\begin{description}
\item[(i)]
If $u_i$, $i=1,\dots,4$, and $v_1$ are initial values, then the following hold:
\begin{equation}\label{eqn:rel_F1234}
 \dfrac{\partial F_2}{\partial u_2}\dfrac{\partial F_1}{\partial u_5}
 +\dfrac{\partial F_3}{\partial u_2}\dfrac{\partial F_1}{\partial v_2}=0,
\end{equation}
where
\begin{align}
 &v_3=f_1(u_3,u_5,v_2)=:F_1,\quad
 u_5=f_2(u_4,u_2,u_1)=:F_2,\notag\\
 &v_2=f_3(v_1,u_2,u_4)=:F_3,
\end{align}
are solutions of $Q_3=0$, $Q_7=0$ and $Q_2=0$, respectively.
\item[(ii)]
If $u_i$, $i=1,4,5,6$, and $v_4$ are initial values, then the following hold:
\begin{equation}\label{eqn:rel_G1234}
 \dfrac{\partial G_2}{\partial u_6}\dfrac{\partial G_1}{\partial u_3}
 +\dfrac{\partial G_3}{\partial u_6}\dfrac{\partial G_1}{\partial v_3}=0,
\end{equation}
where
\begin{align}
 &v_2=g_1(u_3,u_5,v_3)=:G_1,\quad
 u_3=g_2(u_6,u_4,u_1)=:G_2,\notag\\
 &v_3=g_3(v_4,u_4,u_6)=:G_3,
\end{align}
are solutions of $Q_3=0$, $Q_9=0$ and $Q_6=0$, respectively.
\end{description}
\end{lemma}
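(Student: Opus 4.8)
The plan is to exhibit, in each part, a \emph{second} way of computing the quantity of interest ($v_3$ in (i), $v_2$ in (ii)) that manifestly does not involve the variable being differentiated; the stated identity then drops out of the chain rule applied to a function that is constant in that variable.

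\textbf{Part (i).} Take $u_1,u_2,u_3,u_4,v_1$ as the free initial data, so that $u_5,u_6$ are fixed by the CAO property of $\{Q_7,Q_8,Q_9\}$ and all $v_j$ are then fixed by the CACO property (Definition \ref{def:CACO_cuboctahedron}, part (ii), with $k=1$). Following $Q_7(u_4,u_2,u_1,u_5)=0$, $Q_2(v_2,v_1,u_2,u_4)=0$, $Q_3(u_3,u_5,v_3,v_2)=0$ — each linear in the unknown being isolated — produces $u_5=F_2$, $v_2=F_3$, and then $v_3=F_1=f_1(u_3,u_5,v_2)$, whose only dependence on $u_2$ runs through $F_2$ and $F_3$. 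I would then build an alternative route to $v_3$: solve $Q_5(u_1,u_3,v_1,v_6)=0$ for $v_6$, solve $Q_9(u_6,u_4,u_3,u_1)=0$ for $u_6$, and solve the square equation $K_3(v_3,u_3,u_6,v_6)=0$ for $v_3$ (linear in $v_3$ since $\deg_x K_3=1$). Every quantity used here depends only on $u_1,u_3,u_4,v_1$, so this second expression for $v_3$ is independent of $u_2$. Since the CACO and square properties force $v_3$ to be uniquely determined by the initial data, the two expressions agree; hence $v_3$ obtained from the first route is independent of $u_2$, and differentiating $f_1(u_3,F_2,F_3)$ with respect to $u_2$ yields exactly \eqref{eqn:rel_F1234}.

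\textbf{Part (ii).} The argument is symmetric, now with $u_1,u_4,u_5,u_6,v_4$ as the free initial data, so that $u_2,u_3$ are fixed by CAO — crucially, $u_2$ is recoverable from $Q_7(u_4,u_2,u_1,u_5)=0$ without reference to $u_6$. From $Q_9(u_6,u_4,u_3,u_1)=0$, $Q_6(v_4,v_3,u_4,u_6)=0$, $Q_3(u_3,u_5,v_3,v_2)=0$ one obtains $u_3=G_2$, $v_3=G_3$, $v_2=G_1=g_1(u_3,u_5,v_3)$, whose $u_6$-dependence runs only through $G_2$ and $G_3$. The alternative route: solve $Q_1(u_5,u_1,v_5,v_4)=0$ for $v_5$, solve $Q_7(u_4,u_2,u_1,u_5)=0$ for $u_2$, and solve the square equation $K_2(v_2,u_2,u_5,v_5)=0$ for $v_2$; this uses only $u_1,u_4,u_5,v_4$ and so is independent of $u_6$. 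Uniqueness of $v_2$ (CACO with $k=4$, together with the square property) gives equality of the two expressions, and the chain rule then produces \eqref{eqn:rel_G1234}.

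The point requiring the most care is the justification that the alternative route computes the \emph{same} value: this rests on (a) CACO consistency, which guarantees that the $v_j$ are uniquely determined by the octahedron data $u_1,\dots,u_6$ together with one chosen $v_k$, and (b) the square property, which supplies the square equations $K_2$, $K_3$ and their degree-one dependence (hence unique solvability) in the relevant $v$-variable. One also checks at each step that the equation being used is genuinely linear in the unknown being isolated — automatic for the quad-equations $Q_1,\dots,Q_9$ and guaranteed for $K_2,K_3$ by $\deg_x K_i=\deg_w K_i=1$. Everything else is a routine application of the chain rule.
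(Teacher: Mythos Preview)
Your proof is correct and follows the same strategy as the paper's: exhibit an alternative route to $v_3$ (resp.\ $v_2$) that is manifestly independent of $u_2$ (resp.\ $u_6$), then apply the chain rule. The only difference is the choice of path. For part (i) the paper uses $Q_6$, $Q_9$, and the square equation $K_1$ (solving $K_1(v_1,u_1,u_4,v_4)=0$ for $v_4$, then $Q_6$ for $v_3$), whereas you use $Q_5$, $Q_9$, and $K_3$ (solving $Q_5$ for $v_6$, then $K_3(v_3,u_3,u_6,v_6)=0$ for $v_3$). Both routes are valid since $\deg_x K_i=\deg_w K_i=1$ guarantees unique solvability in the relevant $v$-variable, and both avoid $u_2$; the difference is purely a matter of which pair of opposite square faces one traverses. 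Your discussion of the consistency step (that the two routes must agree because of CACO uniqueness plus the square property) is more explicit than the paper's, which simply asserts the alternative expression and moves on.
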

\begin{proof}
Consider the case {\bf (i)}. (The argument is similar for the case {\bf (ii)}.) 
Using $Q_6=0$, $v_3$ can be expressed by the rational function in terms of $u_4$, $u_6$ and $v_4$.
Moreover, from $Q_9=0$, $u_6$ can be expressed by the rational function in terms of $u_1$, $u_3$ and $u_4$
and from the square equation $K_1(v_1,u_1,u_4,v_4)=0$, $v_4$ can be expressed by the rational function in terms of $u_1$, $u_4$ and $v_1$.
Therefore, we obtain $\partial v_3/\partial u_2=0$.
Hence, we obtain
\begin{equation}
 0=\dfrac{\partial v_3}{\partial u_2}
 =\dfrac{\partial F_1}{\partial u_2}
 =\dfrac{\partial u_5}{\partial u_2}\dfrac{\partial F_1}{\partial u_5}
 +\dfrac{\partial v_2}{\partial u_2}\dfrac{\partial F_1}{\partial v_2}
 =\dfrac{\partial F_2}{\partial u_2}\dfrac{\partial F_1}{\partial u_5}
 +\dfrac{\partial F_3}{\partial u_2}\dfrac{\partial F_1}{\partial v_2}.
\end{equation}
Therefore, we have completed the proof.
\end{proof}

\begin{remark}
We note that we can without loss of generality define $F_1$ as solving $Q_3=0$ by $r(v_3)$, where $r$ is a M\"obius transformation, instead of $v_3$.
It is also true for $F_4$, $G_1$ and $G_4$.
Hereinafter, we without saying define $F_1$, $F_4$, $G_1$ and $G_4$ with suitable M\"obius transformations.
For example, in \S\ref{subsection:Q123456789_1_case1}, 
$r_5(v_3)=F_1$ and $r_3(v_2)=G_1$ are solutions of $Q_3=0$,
and $r_6(v_5)=F_4$ and $r_2(v_6)=G_4$ are solutions of $Q_4=0$.
\end{remark}

In the following subsections, we divide the proof of Lemma \ref{lemma:classification_Q123456789_1} into four cases:
Case 1 is given by \eqref{eqn:P123_Mobius_1}, 
Case 2 is given by \eqref{eqn:P123_Mobius_2}, 
Case 3 is given by \eqref{eqn:P123_Mobius_3} and 
Case 4 is given by \eqref{eqn:P123_Mobius_4}.
In each case, we define some of $p_i$, $i=1,\dots,4$, satisfying
\begin{equation}
 p_1+p_2=0,\quad p_3+p_4=0,
\end{equation}
which respectively follow from Equations \eqref{eqn:rel_F1234} and \eqref{eqn:rel_G1234}.

As a result, we show that
Case 1 gives \eqref{eqn:P123_caseN3_1} and \eqref{eqn:P123_caseN3_3},
Case 4 gives \eqref{eqn:P123_caseN1}
and the others are inadmissible cases.

\subsection{Case 1: \eqref{eqn:P123_Mobius_1}}\label{subsection:Q123456789_1_case1}
Consider the quad-equations $\{Q_1,\dots,Q_9\}$ defined by the system of equations \eqref{eqns:Q123456789_CO3} with \eqref{eqn:P123_Mobius_1}.
Define $p_i$, $i=1,\dots,4$, as the following:
\begin{subequations}
\begin{align}
 &p_1=\frac{{\Gamma_1}^2}{A_{1234}^{(2)}r_2(u_3)}\dfrac{\partial F_2}{\partial u_2}\dfrac{\partial F_1}{\partial u_5},
 &&p_2=\frac{{\Gamma_1}^2}{A_{1234}^{(2)}r_2(u_3)}\dfrac{\partial F_3}{\partial u_2}\dfrac{\partial F_1}{\partial v_2},\\
 &p_3=\frac{{\Gamma_2}^2}{A_{1234}^{(2)}r_6(u_5)}\dfrac{\partial G_2}{\partial u_6}\dfrac{\partial G_1}{\partial u_3},
 &&p_4=\frac{{\Gamma_2}^2}{A_{1234}^{(2)}r_6(u_5)}\dfrac{\partial G_3}{\partial u_6}\dfrac{\partial G_1}{\partial v_3},
\end{align}
\end{subequations}
where $A_{1234}^{(2)}=A^{(2)}_1 A^{(2)}_2-A^{(2)}_3 A^{(2)}_4$ and
\begin{subequations}
\begin{align}
 \Gamma_1
 =&(d_{13} u_2+d_{14})
 (d_{23} u_2+d_{24})
 (d_{43} u_4+d_{44})
 (d_{53} u_4+d_{54})\notag\\
 &(d_{63}u_5+d_{64})
 (d_{33}v_2+d_{34})
 \Big(A^{(2)}_4 r_6(u_5)+A^{(2)}_2 r_3(v_2)\Big)\notag\\
 &\Big(d_{51} \Big(B^{(1)}_2 r_1(u_1)+B^{(1)}_3 r_4(u_4)\Big)+d_{53} r_2(u_2) \Big(B^{(1)}_4 r_1(u_1)+B^{(1)}_1 r_4(u_4)\Big)\Big)\notag\\
 &\Big(d_{41} \Big(A^{(1)}_3 r_5(u_4)+A^{(1)}_1 r_2(v_1)\Big)+d_{43} r_1(u_2) \Big(A^{(1)}_2 r_5(u_4)+A^{(1)}_4 r_2(v_1)\Big)\Big),\\
 \Gamma_2
 =&(d_{43} u_4+d_{44})
 (d_{33} u_4+d_{34})
 (d_{13} u_6+d_{14})
 (d_{63} u_6+d_{64})\notag\\
 &(d_{23} u_3+d_{24})
 (d_{53} v_3+d_{54})
 \Big(A^{(2)}_3 r_2(u_3)+A^{(2)}_2 r_5(v_3)\Big)\notag\\
 &\Big(d_{31} \Big(B^{(3)}_2 r_1(u_1)+B^{(3)}_4 r_4(u_4)\Big)+d_{33} r_6(u_6) \Big(B^{(3)}_3 r_1(u_1)+B^{(3)}_1 r_4(u_4)\Big)\Big)\notag\\
 &\Big(d_{41} \Big(A^{(3)}_4 r_3(u_4)+A^{(3)}_1 r_6(v_4)\Big)+d_{43} r_1(u_6) \Big(A^{(3)}_2 r_3(u_4)+A^{(3)}_3 r_6(v_4)\Big)\Big).
\end{align}
\end{subequations}
The relation $-p_1=p_2$ gives
\begin{align}\label{eqn:case1_p1p2}
 &(d_{22} d_{23}-d_{21} d_{24}) (d_{52} d_{53}-d_{51} d_{54}) (d_{62} d_{63}-d_{61} d_{64})(d_{14}+d_{13} u_2)^2\notag\\
 &\quad\Biggm((d_{32} d_{41}-d_{31} d_{42}) +(d_{32} d_{43}-d_{31} d_{44}) r_1(u_2) \dfrac{A^{(1)}_2 r_5(u_4)+A^{(1)}_4 r_2(v_1)}{A^{(1)}_3 r_5(u_4)+A^{(1)}_1 r_2(v_1)}\Biggm)\notag\\
 &\quad\Biggm((d_{34} d_{41}-d_{33} d_{42}) \dfrac{A^{(1)}_3 r_5(u_4)+A^{(1)}_1 r_2(v_1)}{A^{(1)}_2 r_5(u_4)+A^{(1)}_4 r_2(v_1)}+(d_{34} d_{43}-d_{33} d_{44}) r_1(u_2) \Biggm)\notag\\
 &=(d_{12} d_{13}-d_{11} d_{14}) (d_{32} d_{33}-d_{31} d_{34}) (d_{42} d_{43}-d_{41} d_{44})(d_{24}+d_{23} u_2)^2\notag\\
 &\quad\Biggm((d_{52} d_{61}-d_{51} d_{62}) +(d_{54} d_{61}-d_{53} d_{62}) r_2(u_2) \dfrac{B^{(1)}_1 r_4(u_4)+B^{(1)}_4 r_1(u_1)}{B^{(1)}_3 r_4(u_4)+B^{(1)}_2 r_1(u_1)}\Biggm)\notag\\
 &\quad\Biggm((d_{52} d_{63}-d_{51} d_{64}) \dfrac{B^{(1)}_3 r_4(u_4)+B^{(1)}_2 r_1(u_1)}{B^{(1)}_1 r_4(u_4)+B^{(1)}_4 r_1(u_1)}+(d_{54} d_{63}-d_{53} d_{64}) r_2(u_2) \Biggm).
\end{align}
Since the left-hand side of Equation \eqref{eqn:case1_p1p2} does not depend on $u_1$ and 
the right-hand side of Equation \eqref{eqn:case1_p1p2} does not depend on $v_1$,
we obtain 
\begin{equation}\label{eqn:case1_p1p2_1}
 d_{32} d_{41}-d_{31} d_{42}=d_{34} d_{43}-d_{33} d_{44}=0
 ~\,\text{or}~\,
 d_{32} d_{43}-d_{31} d_{44}=d_{34} d_{41}-d_{33} d_{42}=0,
\end{equation}
and
\begin{equation}\label{eqn:case1_p1p2_2}
 d_{52} d_{61}-d_{51} d_{62}=d_{54} d_{63}-d_{53} d_{64}=0
 ~\,\text{or}~\,
 d_{54} d_{61}-d_{53} d_{62}=d_{52} d_{63}-d_{51} d_{64}=0.
\end{equation}
In either case, from Equation \eqref{eqn:case1_p1p2} we obtain the condition that 
\begin{equation}
 \dfrac{(d_{11} u_2+d_{12})(d_{13} u_2+d_{14})}{(d_{21} u_2+d_{22})(d_{23} u_2+d_{24})}
\end{equation}
does not depend on $u_2$, which gives
\begin{equation}\label{eqn:case1_p1p2_3}
 d_{12} d_{21}-d_{11} d_{22}=d_{14} d_{23}-d_{13} d_{24}=0
 ~\,\text{or}~\,
 d_{12} d_{23}-d_{11} d_{24}=d_{14} d_{21}-d_{13} d_{22}=0.
\end{equation}
In a similar manner, from $-p_3=p_4$, we obtain the following conditions:
\begin{align}
 &d_{42} d_{51}-d_{41} d_{52}=d_{44} d_{53}-d_{43} d_{54}=0
 ~\,\text{or}~\,
 d_{44} d_{51}-d_{43} d_{52}=d_{42} d_{53}-d_{41} d_{54}=0,
 \label{eqn:case1_p3p4_1}\\
 &d_{22} d_{31}-d_{21} d_{32}=d_{24} d_{33}-d_{23} d_{34}=0
 ~\,\text{or}~\,
 d_{22} d_{33}-d_{21} d_{34}=d_{24} d_{31}-d_{23} d_{32}=0,
 \label{eqn:case1_p3p4_2}\\
\intertext{and}
 &d_{12} d_{61}-d_{11} d_{62}=d_{14} d_{63}-d_{13} d_{64}=0
 ~\,\text{or}~\,
 d_{12} d_{63}-d_{11} d_{64}=d_{14} d_{61}-d_{13} d_{62}=0.
 \label{eqn:case1_p3p4_3}
\end{align}
The conditions \eqref{eqn:case1_p1p2_1}, \eqref{eqn:case1_p1p2_2}, \eqref{eqn:case1_p1p2_3}--\eqref{eqn:case1_p3p4_3}
are equivalent to the conditions that each $r_i$, $i=2,\dots,6$, can be expressed by
\begin{equation}
 r_i(x)=\text{const.}\times r_1(x)
 \quad\text{or}\quad
 r_i(x)=\text{const.}\times \frac{1}{r_1(x)}.
\end{equation}
Using the M\"obius transformations of the variables $u_i$, $i=1,\dots,6$, and $v_j$, $j=1,\dots,6$, given by
\begin{equation}
 u_i\to r_1^{-1}(u_i)=-\dfrac{d_{12}-d_{14}u_i}{d_{11}-d_{13}u_i},\quad
 v_j\to r_1^{-1}(v_j)=-\dfrac{d_{12}-d_{14}v_j}{d_{11}-d_{13}v_j},
\end{equation}
and gauge transformations of the parameters $A_i^{(j)}$ and $B_i^{(j)}$,
we can without loss of generality let $r_1(x)=x$ and 
each $r_i$, $i=2,\dots,6$, be
\begin{equation}
 r_i(x)=x
 \quad\text{or}\quad
 r_i(x)=x^{-1}.
\end{equation}
Moreover, because of the following relations:
\begin{subequations}
\begin{align}
 &\polyN3\Big(x^{-1},y,z,w;\al_1,\al_2,\al_3,\al_4\Big)
 =\dfrac{z}{x}\, \polyN3\Big(x,y,z^{-1},w;\al_4,\al_3,\al_2,\al_1\Big),\\
 &\polyN3\Big(x,y^{-1},z,w;\al_1,\al_2,\al_3,\al_4\Big)
 =\dfrac{w}{y}\, \polyN3\Big(x,y,z,w^{-1};\al_3,\al_4,\al_1,\al_2\Big),
\end{align}
\end{subequations}
we can without loss of generality let \eqref{eqn:P123_Mobius_1} be
\begin{equation}
 \begin{cases}
 P_1(x,y,z,w)=\polyN3\Big(x, y, r_4(z), r_{25}(w);A_1,A_2,A_3,A_4\Big),\\
 P_2(x,y,z,w)=\polyN3\Big(x, y, r_{25}(z), r_{63}(w);A_1,A_2,A_3,A_4\Big),\\
 P_3(x,y,z,w)=\polyN3\Big(x, y, r_{63}(z), r_4(w);A_1,A_2,A_3,A_4\Big),
 \end{cases}
\end{equation}
where $r_{25}=r_2\circ r_5$ and $r_{63}=r_6\circ r_3$.
Furthermore, because of the symmetry of the cuboctahedron $G^{(co)}$, 
it is sufficient to consider \eqref{eqn:P123_caseN3_1}, \eqref{eqn:P123_caseN3_3} and
\begin{align}
 &\begin{cases}
 P_1=\polyN3\Big(x, y, z, w^{-1};A_1,A_2,A_3,A_4\Big),\\
 P_2=\polyN3\Big(x, y, z^{-1}, w;A_1,A_2,A_3,A_4\Big),\\
 P_3=\polyN3\Big(x, y, z, w;A_1,A_2,A_3,A_4\Big),
 \end{cases}\label{eqn:P123_caseN3_2}\\
 &\begin{cases}
 P_1=\polyN3\Big(x, y, z^{-1}, w^{-1};A_1,A_2,A_3,A_4\Big),\\
 P_2=\polyN3\Big(x, y, z^{-1}, w^{-1};A_1,A_2,A_3,A_4\Big),\\
 P_3=\polyN3\Big(x, y, z^{-1}, w^{-1};A_1,A_2,A_3,A_4\Big).
 \end{cases}\label{eqn:P123_caseN3_4}
\end{align}
We can easily verify that the cases \eqref{eqn:P123_caseN3_2} and \eqref{eqn:P123_caseN3_4} are inadequate.
Indeed, in the case \eqref{eqn:P123_caseN3_2}, we obtain
\begin{align}
 &\dfrac{\partial F_2}{\partial u_2}\dfrac{\partial F_1}{\partial u_5}
 +\dfrac{\partial F_3}{\partial u_2}\dfrac{\partial F_1}{\partial v_2}\notag\\
 &=\dfrac{2u_2(A^{(2)}_3A^{(2)}_4-A^{(2)}_1A^{(2)}_2)(B^{(1)}_4u_1+B^{(1)}_1u_4)(B^{(1)}_2u_1+B^{(1)}_3u_4)(A^{(1)}_3+A^{(1)}_1u_4v_1)(A^{(1)}_2+A^{(1)}_4u_4v_1)}
 {u_3\left(A^{(2)}_3{u_2}^2(A^{(1)}_2+A^{(1)}_4u_4v_1)(B^{(1)}_4u_1+B^{(1)}_1u_4)+A^{(2)}_1(A^{(1)}_3+A^{(1)}_1u_4v_1)(B^{(1)}_2u_1+B^{(1)}_3u_4)\right)^2}\notag\\
 &\neq0,
\end{align}
and in the case \eqref{eqn:P123_caseN3_4}, we obtain
\begin{align}
 &\dfrac{\partial F_2}{\partial u_2}\dfrac{\partial F_1}{\partial u_5}
 +\dfrac{\partial F_3}{\partial u_2}\dfrac{\partial F_1}{\partial v_2}\notag\\
 &=\dfrac{2u_2(A^{(2)}_3A^{(2)}_4-A^{(2)}_1A^{(2)}_2)(B^{(1)}_4+B^{(1)}_1u_1u_4)(B^{(1)}_2+B^{(1)}_3u_1u_4)(A^{(1)}_3+A^{(1)}_1u_4v_1)(A^{(1)}_2+A^{(1)}_4u_4v_1)}
 {u_3\left(A^{(2)}_3{u_2}^2(A^{(1)}_3+A^{(1)}_1u_4v_1)(B^{(1)}_4+B^{(1)}_1u_1u_4)+
A^{(2)}_1(A^{(1)}_2+A^{(1)}_4u_4v_1)(B^{(1)}_2+B^{(1)}_3u_1u_4)\right)^2}\notag\\
 &\neq0.
\end{align}
Therefore, from Case 1 we obtain \eqref{eqn:P123_caseN3_1} and \eqref{eqn:P123_caseN3_3}.
\subsection{Case 2: \eqref{eqn:P123_Mobius_2}}\label{subsection:Q123456789_1_case2}
Consider the quad-equations $\{Q_1,\dots,Q_9\}$ defined by the system of equations \eqref{eqns:Q123456789_CO3} with \eqref{eqn:P123_Mobius_2}.
Define $p_i$, $i=3,4$, as the following:
\begin{equation}
p_3=\frac{{\Gamma_2}^2}{A_{2314}^{(2)}r_6(u_5)}\dfrac{\partial G_2}{\partial u_6}\dfrac{\partial G_1}{\partial u_3},\quad
 p_4=\frac{{\Gamma_2}^2}{A_{2314}^{(2)}r_6(u_5)}\dfrac{\partial G_3}{\partial u_6}\dfrac{\partial G_1}{\partial v_3},
\end{equation}
where $A_{2314}^{(2)}=A^{(2)}_2 A^{(2)}_3-A^{(2)}_1 A^{(2)}_4$ and
\begin{align}
 \Gamma_2
 =&(d_{33} u_4+d_{34}) 
 (d_{43} u_4+d_{44}) 
 (d_{13} u_6+d_{14}) 
 (d_{63} u_6+d_{64})\notag\\
 &(d_{23} u_3+d_{24}) 
 (d_{53} v_3+d_{54}) 
 \Big(A^{(2)}_4+A^{(2)}_2 r_2(u_3)+A^{(2)}_2 r_5(v_3)\Big)\notag\\
 &\Big(d_{41} \Big(A^{(3)}_4 r_3(u_4)+A^{(3)}_1 r_6(v_4)\Big)+d_{43} r_1(u_6) \Big(A^{(3)}_2 r_3(u_4)+A^{(3)}_3 r_6(v_4)\Big)\Big)\notag\\
 &\Big(d_{31} \Big(B^{(3)}_2 r_1(u_1)+B^{(3)}_4 r_4(u_4)\Big)+d_{33} r_6(u_6) \Big(B^{(3)}_3 r_1(u_1)+B^{(3)}_1 r_4(u_4)\Big)\Big).
\end{align}
The relation $p_3=-p_4$ gives
\begin{align}\label{eqn:case2_p3p4}
 &(d_{13} u_6+d_{14})^2
 (d_{22} d_{23}-d_{21} d_{24}) (d_{32} d_{33}-d_{31} d_{34}) (d_{62} d_{63}-d_{61} d_{64})\notag\\
 &\quad\Biggm(d_{42} d_{53}-d_{41} d_{54}
 +(d_{44} d_{53}-d_{43} d_{54}) r_1(u_6)\dfrac{A^{(3)}_2 r_3(u_4)+A^{(3)}_3 r_6(v_4)}{A^{(3)}_4 r_3(u_4)+A^{(3)}_1 r_6(v_4)}\Biggm)\notag\\
 &\quad\Biggm((d_{42} d_{53}-d_{41} d_{54})\dfrac{A^{(3)}_4 r_3(u_4)+A^{(3)}_1 r_6(v_4)}{A^{(3)}_2 r_3(u_4)+A^{(3)}_3 r_6(v_4)}
 +(d_{44} d_{53}-d_{43} d_{54}) r_1(u_6)\Biggm)\notag\\ 
 &=-(d_{63} u_6+d_{64})^2
 (d_{12} d_{13}-d_{11} d_{14}) (d_{42} d_{43}-d_{41} d_{44}) (d_{52} d_{53}-d_{51} d_{54})\notag\\ 
 &\quad\Biggm(d_{24} d_{31}-d_{23} d_{32} 
 +(d_{24} d_{33}-d_{23} d_{34}) r_6(u_6) \dfrac{B^{(3)}_1 r_4(u_4)+B^{(3)}_3 r_1(u_1)}{B^{(3)}_4 r_4(u_4)+B^{(3)}_2 r_1(u_1)}\Biggm)\notag\\
 &\quad\Biggm((d_{24} d_{31}-d_{23} d_{32}) \dfrac{B^{(3)}_4 r_4(u_4)+B^{(3)}_2 r_1(u_1)}{B^{(3)}_1 r_4(u_4)+B^{(3)}_3 r_1(u_1)}
 +(d_{24} d_{33}-d_{23} d_{34}) r_6(u_6)\Biggm).
\end{align}
Since the left-hand side of Equation \eqref{eqn:case2_p3p4} does not depend on $u_1$,
we obtain 
\begin{equation}
 d_{24} d_{31}-d_{23} d_{32}=d_{24} d_{33}-d_{23} d_{34}=0,
\end{equation}
which gives the inappropriate condition for the M\"obius transformation $r_3$, that is, $d_{31}d_{34}=d_{32}d_{33}$.
Therefore, this case is inadequate.
\subsection{Case 3: \eqref{eqn:P123_Mobius_3}}\label{subsection:Q123456789_1_case3}
Consider the quad-equations $\{Q_1,\dots,Q_9\}$ defined by the system of equations \eqref{eqns:Q123456789_CO3} with \eqref{eqn:P123_Mobius_3}.
Define $p_i$, $i=1,2$, as the following:
\begin{equation}
 p_1=\frac{{\Gamma_1}^2}{A^{(2)}_{2314}r_2(u_3)}\dfrac{\partial F_2}{\partial u_2}\dfrac{\partial F_1}{\partial u_5},\quad
 p_2=\frac{{\Gamma_1}^2}{A^{(2)}_{2314}r_2(u_3)}\dfrac{\partial F_3}{\partial u_2}\dfrac{\partial F_1}{\partial v_2},
\end{equation}
where $A^{(2)}_{2314}=A^{(2)}_2 A^{(2)}_3-A^{(2)}_1 A^{(2)}_4$ and
\begin{align*}
 \Gamma_1
 =&(d_{13} u_2+d_{14}) 
 (d_{23} u_2+d_{24}) 
 (d_{43} u_4+d_{44}) 
 (d_{53} u_4+d_{54})\notag\\
 &(d_{63} u_5+d_{64}) 
 (d_{33} v_2+d_{34}) 
 \Big(A^{(2)}_4+A^{(2)}_2 \Big(r_3(v_2)+r_6(u_5)\Big)\Big)\notag\\
 &\biggm(\Big(A^{(1)}_2 d_{41}+A^{(1)}_4 d_{43}+A^{(1)}_2 d_{43} r_1(u_2)\Big) r_5(u_4)
 +\Big(A^{(1)}_1 d_{41}+A^{(1)}_3 d_{43}+A^{(1)}_1 d_{43} r_1(u_2)\Big) r_2(v_1) \biggm)\notag\\
 &\biggm(d_{53} \Big(B^{(1)}_3+B^{(1)}_1 r_1(u_1)+B^{(1)}_1 r_4(u_4)\Big) r_2(u_2)
 +d_{51} \Big(B^{(1)}_4+B^{(1)}_2 r_1(u_1)+B^{(1)}_2 r_4(u_4)\Big)\biggm).
\end{align*}
In the following, dividing the conditions of the parameters $A^{(1)}_1$ and $B^{(1)}_1$ into two cases, we show that Case 3 is inadmissible case.
\subsubsection{\rm\bf Case: $A^{(1)}_1,B^{(1)}_1\neq0$.}
The relation $-p_1=p_2$ gives
\begin{align}\label{eqn:case31_p1p2}
 &-(A^{(1)}_1)^{-2}(d_{13} u_2+d_{14})^2
 (d_{22} d_{23}-d_{21} d_{24})
 (d_{52} d_{53}-d_{51} d_{54})
 (d_{62} d_{63}-d_{61} d_{64})\notag\\
 &~\Biggm((d_{34} d_{43}-d_{33} d_{44})
 \Biggm(\dfrac{\Big(A^{(1)}_2 A^{(1)}_3-A^{(1)}_1 A^{(1)}_4\Big) r_5(u_4)}{A^{(1)}_2 r_5(u_4)+A^{(1)}_1 r_2(v_1)}
 -A^{(1)}_3-A^{(1)}_1 r_1(u_2)\Biggm)\notag\\
 &\quad-A^{(1)}_1 (d_{34} d_{41}-d_{33} d_{42})\Biggm)^2\notag\\
 &=(d_{23} u_2+d_{24})^2
 (d_{12} d_{13}-d_{11} d_{14})
 (d_{32} d_{33}-d_{31} d_{34})
 (d_{42} d_{43}-d_{41} d_{44})\notag\\
 &~\Biggm((d_{54} d_{63}-d_{53} d_{64}) r_2(u_2) 
 +(d_{52} d_{63}-d_{51} d_{64}) \dfrac{B^{(1)}_4+B^{(1)}_2 \Big(r_1(u_1)+r_4(u_4)\Big)}{B^{(1)}_3+B^{(1)}_1 \Big(r_1(u_1)+r_4(u_4)\Big)}\Biggm)\notag\\
 &~\Biggm((d_{54} d_{63}-d_{53} d_{64}) r_2(u_2) \dfrac{B^{(1)}_3+B^{(1)}_1 \Big(r_1(u_1)+r_4(u_4)\Big)}{B^{(1)}_4+B^{(1)}_2 \Big(r_1(u_1)+r_4(u_4)\Big)}
 +d_{52} d_{63}-d_{51} d_{64}\Biggm).
\end{align}
Since the left-hand side of Equation \eqref{eqn:case31_p1p2} does not depend on $u_1$,
we obtain 
\begin{equation}\label{eqn:case31_d56}
 d_{54} d_{63}-d_{53} d_{64}=d_{52} d_{63}-d_{51} d_{64}=0,
\end{equation}
which gives the inappropriate condition for the M\"obius transformation $r_5$, that is, $d_{51}d_{54}=d_{52}d_{53}$.
Therefore, this case is inadequate.
\subsubsection{\rm\bf Case: $A^{(1)}_1=B^{(1)}_1=0$.}
Note here that since $Q_1$ and $Q_7$ are quad-equations, we have
\begin{equation}
 A^{(1)}_2,A^{(1)}_3,B^{(1)}_2,B^{(1)}_3\neq0.
\end{equation}
The relation $-p_1=p_2$ gives
\begin{align}\label{eqn:case32_p1p2}
 &-B^{(1)}_3(d_{22} d_{23}-d_{21} d_{24}) (d_{52} d_{53}-d_{51} d_{54}) (d_{62} d_{63}-d_{61} d_{64})
 (d_{14}+d_{13} u_2)^2\notag\\
 &\Biggm(A^{(1)}_2 (d_{34} d_{41}-d_{33} d_{42}) r_5(u_4)\notag\\
 &\quad+(d_{34} d_{43}-d_{33} d_{44}) \Big(A^{(1)}_4 r_5(u_4)+A^{(1)}_3 r_2(v_1)+A^{(1)}_2 r_1(u_2) r_5(u_4)\Big)\Biggm)\notag\\
 &=(A^{(1)}_2)^2 (d_{12} d_{13}-d_{11} d_{14}) (d_{32} d_{33}-d_{31} d_{34}) (d_{42} d_{43}-d_{41} d_{44})
 (d_{24}+d_{23} u_2)^2 
 r_5(u_4)^2 \notag\\
 &\Biggm((d_{54} d_{63}-d_{53} d_{64})\dfrac{B^{(1)}_3 r_2(u_2)}{B^{(1)}_4+B^{(1)}_2 r_4(u_4)+B^{(1)}_2 r_1(u_1)}+d_{52} d_{63}-d_{51} d_{64}\Biggm)\notag\\
 &\Biggm(B^{(1)}_3 (d_{54} d_{63}-d_{53} d_{64}) r_2(u_2)\notag\\
 &\quad+(d_{52} d_{63}-d_{51} d_{64}) \Big(B^{(1)}_4+B^{(1)}_2 r_4(u_4)+B^{(1)}_2 r_1(u_1)\Big)\Biggm).
\end{align}
Since the left-hand side of Equation \eqref{eqn:case32_p1p2} does not depend on $u_1$,
we obtain the inappropriate condition \eqref{eqn:case31_d56}.
Therefore, this case is inadequate.
\subsection{Case 4: \eqref{eqn:P123_Mobius_4}}\label{subsection:Q123456789_1_case4}
Consider the quad-equations $\{Q_1,\dots,Q_9\}$ defined by the system of equations \eqref{eqns:Q123456789_CO3} with \eqref{eqn:P123_Mobius_4}.
Define $p_i$, $i=1,\dots,4$, as the following:
\begin{align*}
 p_1=\frac{{\Gamma_1}^2}{A^{(2)}_{1423}}\dfrac{\partial F_2}{\partial u_2}\dfrac{\partial F_1}{\partial u_5},~
 p_2=\frac{{\Gamma_1}^2}{A^{(2)}_{1423}}\dfrac{\partial F_3}{\partial u_2}\dfrac{\partial F_1}{\partial v_2},~
 p_3=\frac{{\Gamma_2}^2}{A^{(2)}_{1423}}\dfrac{\partial G_2}{\partial u_6}\dfrac{\partial G_1}{\partial u_3},~
 p_4=\frac{{\Gamma_2}^2}{A^{(2)}_{1423}}\dfrac{\partial G_3}{\partial u_6}\dfrac{\partial G_1}{\partial v_3},
\end{align*}
where $A^{(2)}_{1423}=A^{(2)}_1 A^{(2)}_4-A^{(2)}_2 A^{(2)}_3$ and
\begin{subequations}
\begin{align}
 \Gamma_1
 =&(d_{13} u_2+d_{14})
 (d_{23} u_2+d_{24}) 
 (d_{43} u_4+d_{44}) 
 (d_{53} u_4+d_{54})\notag\\
 &(d_{63} u_5+d_{64}) 
 (d_{33} v_2+d_{34}) 
 \Big(A^{(2)}_2+A^{(2)}_1 \Big(r_6(u_5)+r_3(v_2)\Big)\Big)\notag\\
 &\Biggm(A^{(1)}_2 d_{41}+A^{(1)}_4 d_{43}+A^{(1)}_2 d_{43} r_1(u_2)\notag\\
 &\quad+\Big(A^{(1)}_1 d_{41}+A^{(1)}_3 d_{43}+A^{(1)}_1 d_{43} r_1(u_2)\Big)\Big(r_2(v_1)+r_5(u_4)\Big)\Biggm)\notag\\
 &\Biggm(B^{(1)}_3 d_{51}+B^{(1)}_4 d_{53}+B^{(1)}_3 d_{53} r_2(u_2)\notag\\
 &\quad+\Big(B^{(1)}_1 d_{51}+B^{(1)}_2 d_{53}+B^{(1)}_1 d_{53} r_2(u_2)\Big)\Big(r_1(u_1)+r_4(u_4)\Big)\Biggm),\\
 \Gamma_2
 =&(d_{33} u_4+d_{34}) 
 (d_{43} u_4+d_{44}) 
 (d_{13} u_6+d_{14}) 
 (d_{63} u_6+d_{64})\notag\\
 &(d_{23} u_3+d_{24}) 
 (d_{53} v_3+d_{54}) 
 \Big(A^{(2)}_3+A^{(2)}_1 \Big(r_2(u_3)+r_5(v_3)\Big)\Big)\notag\\
 &\Biggm(A^{(3)}_3 d_{41}+A^{(3)}_4 d_{43}+A^{(3)}_3 d_{43} r_1(u_6)\notag\\
 &\quad+\Big(A^{(3)}_1 d_{41}+A^{(3)}_2 d_{43}+A^{(3)}_1 d_{43} r_1(u_6)\Big)\Big(r_3(u_4)+r_6(v_4)\Big)\Biggm)\notag\\
 &\Biggm(B^{(3)}_2 d_{31}+B^{(3)}_4 d_{33}+B^{(3)}_2 d_{33} r_6(u_6)\notag\\
 &\quad+\Big(B^{(3)}_1 d_{31}+B^{(3)}_3 d_{33}+B^{(3)}_1 d_{33} r_6(u_6)\Big)\Big(r_1(u_1)+r_4(u_4)\Big)\Biggm).
\end{align}
\end{subequations}
Note here that since $Q_1$ , $Q_5$, $Q_7$ and $Q_9$ are quad-equations, 
if $A^{(1)}_1=B^{(1)}_1=0$, then we obtain
\begin{equation}
 A^{(1)}_2,A^{(1)}_3,B^{(1)}_2,B^{(1)}_3\neq0.
\end{equation}
The relation $p_1=-p_2$ gives
\begin{align}\label{eqn:case4_p1p2_1}
 &(A^{(1)}_1)^{-2}(d_{13} u_2+d_{14})^2
 (d_{22} d_{23}-d_{21} d_{24}) 
 (d_{52} d_{53}-d_{51} d_{54}) 
 (d_{62} d_{63}-d_{61} d_{64})\notag\\
 &\Biggm((d_{34} d_{43}-d_{33} d_{44})
 \Biggm(\dfrac{A^{(1)}_2 A^{(1)}_3-A^{(1)}_1 A^{(1)}_4}{A^{(1)}_2+A^{(1)}_1 \Big(r_5(u_4)+r_2(v_1)\Big)}
 -A^{(1)}_3-A^{(1)}_1 r_1(u_2)\Biggm)\notag\\
 &\quad-A^{(1)}_1 (d_{34} d_{41}-d_{33} d_{42})\Biggm)^2\notag\\
 &=-(B^{(1)}_1)^{-2}(d_{23} u_2+d_{24})^2
 (d_{12} d_{13}-d_{11} d_{14})
 (d_{32} d_{33}-d_{31} d_{34})
 (d_{42} d_{43}-d_{41} d_{44})\notag\\
 &\Biggm((d_{54} d_{63}-d_{53} d_{64})
 \Biggm(\dfrac{B^{(1)}_2 B^{(1)}_3-B^{(1)}_1 B^{(1)}_4}{B^{(1)}_3+B^{(1)}_1 \Big(r_1(u_1)+r_4(u_4)\Big)}
 -B^{(1)}_2-B^{(1)}_1 r_2(u_2)\Biggm)\notag\\
 &\quad-B^{(1)}_1 (d_{52} d_{63}-d_{51} d_{64})\Biggm)^2,
\end{align}
if $A^{(1)}_1,B^{(1)}_1\neq0$, and
\begin{align}\label{eqn:case4_p1p2_2}
 &(B^{(1)}_3)^2 (d_{13} u_2+d_{14})^2
 (d_{22} d_{23}-d_{21} d_{24}) (d_{52} d_{53}-d_{51} d_{54}) (d_{62} d_{63}-d_{61} d_{64})\notag\\
 &\Biggm(A^{(1)}_2 (d_{34} d_{41}-d_{33} d_{42})\notag\\
 &\quad+(d_{34} d_{43}-d_{33} d_{44}) \Big(A^{(1)}_4+A^{(1)}_3 r_2(v_1)+A^{(1)}_2 r_1(u_2)+A^{(1)}_3 r_5(u_4)\Big)\Biggm)^2\notag\\
 &=-(A^{(1)}_2)^2 (d_{23} u_2+d_{24})^2
 (d_{12} d_{13}-d_{11} d_{14}) (d_{32} d_{33}-d_{31} d_{34}) (d_{42} d_{43}-d_{41} d_{44})\notag\\
 &\Biggm(B^{(1)}_3 (d_{52} d_{63}-d_{51} d_{64})\notag\\
 &\quad+(d_{54} d_{63}-d_{53} d_{64}) \Big(B^{(1)}_4+B^{(1)}_2 r_1(u_1)+B^{(1)}_3 r_2(u_2)+B^{(1)}_2 r_4(u_4)\Big)\Biggm)^2,
\end{align}
if $A^{(1)}_1=B^{(1)}_1=0$.
Since the left-hand sides of Equations \eqref{eqn:case4_p1p2_1} and \eqref{eqn:case4_p1p2_2} do not depend on $u_1$ and 
the right-hand sides of Equations \eqref{eqn:case4_p1p2_1} and \eqref{eqn:case4_p1p2_2} do not depend on $v_1$,
we obtain 
\begin{equation}\label{eqn:case4_condition_d_1}
 d_{34} d_{43}-d_{33} d_{44}=d_{54} d_{63}-d_{53} d_{64}=0.
\end{equation}
Moreover, using the condition \eqref{eqn:case4_condition_d_1}, from Equations \eqref{eqn:case4_p1p2_1} and \eqref{eqn:case4_p1p2_2} we obtain the condition that 
\begin{equation}
 \dfrac{d_{13} u_2+d_{14}}{d_{23} u_2+d_{24}}
\end{equation}
does not depend on $u_2$, which gives
\begin{equation}\label{eqn:case4_condition_d_2}
 d_{13} d_{24}-d_{14} d_{23}=0.
\end{equation}
In a similar manner, from $-p_3=p_4$, we obtain the following conditions:
\begin{equation}\label{eqn:case4_condition_d_3}
 d_{44} d_{53}-d_{43} d_{54}=d_{24} d_{33}-d_{23} d_{34}=d_{13} d_{64}-d_{14} d_{63}=0.
\end{equation}

The conditions \eqref{eqn:case4_condition_d_1}, \eqref{eqn:case4_condition_d_2} and \eqref{eqn:case4_condition_d_3}
are equivalent to the conditions that each $r_i$, $i=2,\dots,6$, can be expressed by
\begin{equation}\label{eqn:case4_condition_d_35}
 r_i(x)=\frac{\text{const.}}{d_{13}x+d_{14}}\,(d_{i1}x+d_{i2}).
\end{equation}
This leads that we can without loss of generality let 
\begin{equation}
 r_i(x)=d_{i1}x+d_{i2},\quad i=1,\dots,6.
\end{equation}
Indeed, 
if $d_{13}=0$, then it is obvious from \eqref{eqn:case4_condition_d_35} and
if $d_{13}\neq0$, then the M\"obius transformations of the variables $u_i$, $i=1,\dots,6$, and $v_j$, $j=1,\dots,6$, given by
\begin{equation}
 u_i\to \frac{1}{u_i}-\dfrac{d_{14}}{d_{13}},\quad
 v_j\to \frac{1}{v_i}-\dfrac{d_{14}}{d_{13}},
\end{equation}
causes to change from \eqref{eqn:case4_condition_d_35} to
\begin{equation}
 r_i(x)=\text{const.}\times \Big((d_{i2}d_{13}-d_{i1}d_{14})x+d_{i1}d_{13}\Big).
\end{equation}
Therefore, for the system \eqref{eqn:P123_Mobius_4} we can without loss of generality let 
\begin{equation}\label{eqn:case4_condition_d_4}
 d_{13}=\cdots=d_{63}=0,\quad d_{14}=\cdots=d_{64}=1.
\end{equation}

Moreover, under the condition \eqref{eqn:case4_condition_d_4} 
the system of equations \eqref{eqn:P123_Mobius_4} can be rewritten as the following:
\begin{equation}
 \begin{cases}
 P_1(x,y,z,w)=\polyN1\Biggm(x, y, \dfrac{d_{11}}{d_{41}}z, \dfrac{d_{51}}{d_{21}}w;A_{11},A_{12},A_{13},A_{14}\Biggm),\\
 P_2(x,y,z,w)=\polyN1\Biggm(x, y, \dfrac{d_{51}}{d_{21}}z, \dfrac{d_{31}}{d_{61}}w;A_{21},A_{22},A_{23},A_{24}\Biggm),\\
 P_3(x,y,z,w)=\polyN1\Biggm(x, y, \dfrac{d_{31}}{d_{61}}z, \dfrac{d_{11}}{d_{41}}w;A_{31},A_{32},A_{33},A_{34}\Biggm),
 \end{cases}
\end{equation}
where
\begin{subequations}
\begin{align}
 &A_{11}=A_1 d_{21} d_{41},\quad
 A_{12}=d_{41} \Big(A_2+A_1 (d_{22}+d_{52})\Big),\\
 &A_{13}=d_{21} \Big(A_3+A_1 (d_{12}+d_{42})\Big),\\
 &A_{14}=A_4+A_2 (d_{12}+d_{42})+\Big(A_3+A_1 (d_{12}+d_{42})\Big) (d_{22}+d_{52}),\\
 &A_{21}=A_1 d_{61} d_{21},\quad
 A_{22}= d_{21} \Big(A_2+A_1 (d_{62}+d_{32})\Big),\\
 &A_{23}=d_{61} \Big(A_3+A_1 (d_{52}+d_{22})\Big),\\
 &A_{24}=A_4+A_2 (d_{52}+d_{22})+\Big(A_3+A_1 (d_{52}+d_{22})\Big) (d_{62}+d_{32}),\\
 &A_{31}=A_1 d_{41} d_{61},\quad
 A_{32}= d_{61} \Big(A_2+A_1 (d_{42}+d_{12})\Big),\\
 &A_{33}=d_{41} \Big(A_3+A_1 (d_{32}+d_{62})\Big),\\
 &A_{34}=A_4+A_2 (d_{32}+d_{62})+\Big(A_3+A_1 (d_{32}+d_{62})\Big) (d_{42}+d_{12}).
\end{align}
\end{subequations}
Therefore, for the system \eqref{eqn:P123_Mobius_4} with \eqref{eqn:case4_condition_d_4} we can without loss of generality let 
\begin{equation}\label{eqn:case4_condition_d_5}
 d_{12}=\cdots=d_{62}=0,\quad d_{21}=d_{41}=d_{61}=1.
\end{equation}

Under the conditions \eqref{eqn:case4_condition_d_4} and \eqref{eqn:case4_condition_d_5},
$p_i$, $i=1,2,3,4$, are given as the following:
\begin{subequations}
\begin{align}
 &p_1=-d_{51} 
 \Big(B^{(1)}_3+B^{(1)}_1 (d_{11} u_1+u_4)\Big)^2 
 \Big(A^{(1)}_2+A^{(1)}_1 (d_{51} u_4+v_1)\Big)^2,\\
 &p_2=-d_{11} d_{31} {d_{51}}^2 
 \Big(B^{(1)}_3+B^{(1)}_1 (d_{11} u_1+u_4)\Big)^2 
 \Big(A^{(1)}_2+A^{(1)}_1 (d_{51} u_4+v_1)\Big)^2,\\
 &p_3=-d_{31} 
 \Big(B^{(3)}_2+B^{(3)}_1 (d_{11} u_1+u_4)\Big)^2
 \Big(A^{(3)}_3+A^{(3)}_1 (d_{31} u_4+v_4)\Big)^2,\\
 &p_4=-d_{11} {d_{31}}^2 d_{51} 
 \Big(B^{(3)}_2+B^{(3)}_1 (d_{11} u_1+u_4)\Big)^2 
 \Big(A^{(3)}_3+A^{(3)}_1 (d_{31} u_4+v_4)\Big)^2.
\end{align}
\end{subequations}
Then, $p_1+p_2=p_3+p_4=0$ gives the condition $d_{11} d_{31} d_{51}=-1$.
Therefore, we obtain \eqref{eqn:P123_caseN1}.

\section{Conditions in Lemmas \ref{lemma:classification_CACO_CO1CO2CO3_1}--\ref{lemma:classification_CACO_CO1CO2CO3_3}}
\label{section:list_theorem_conditions}
This appendix collects and presents all the explicit conditions needed for the statements of Lemmas \ref{lemma:classification_CACO_CO1CO2CO3_1}--\ref{lemma:classification_CACO_CO1CO2CO3_3}. 
(Note that any terms appearing in the denominators of the equations in this appendix are assumed to be non-zero.)

\subsection{For Lemma \ref{lemma:classification_CACO_CO1CO2CO3_1}}
Lemma \ref{lemma:classification_CACO_CO1CO2CO3_1} describes system of quad-equations of Type I, which has two sub-cases called Type I-1 and Type I-2. 
The first case is given by\\
{\rm(Type I-1):}
\begin{equation}\tag{I-1}\label{eqn:typeI_cond1}
\begin{cases}
 (A^{(1)}_3A^{(2)}_1B^{(1)}_1+A^{(1)}_2A^{(2)}_3B^{(1)}_3)A^{(3)}_4B^{(3)}_1
 =-(A^{(1)}_3A^{(2)}_4B^{(1)}_1+A^{(1)}_2A^{(2)}_2B^{(1)}_3)A^{(3)}_2B^{(3)}_4,\\
 \Big((A^{(1)}_2A^{(2)}_3B^{(1)}_2+A^{(1)}_3A^{(2)}_1B^{(1)}_4)A^{(3)}_4
 -(A^{(1)}_1A^{(2)}_1B^{(1)}_1+A^{(1)}_4A^{(2)}_3B^{(1)}_3)A^{(3)}_1\Big)B^{(3)}_1\\
 \hspace{0.7em}+(A^{(1)}_3A^{(2)}_4B^{(1)}_1+A^{(1)}_2A^{(2)}_2B^{(1)}_3)A^{(3)}_2B^{(3)}_2
 +(A^{(1)}_3A^{(2)}_1B^{(1)}_1+A^{(1)}_2A^{(2)}_3B^{(1)}_3)A^{(3)}_4B^{(3)}_3\\
 \hspace{0.7em}+\Big((A^{(1)}_2A^{(2)}_2B^{(1)}_2+A^{(1)}_3A^{(2)}_4B^{(1)}_4)A^{(3)}_2-(A^{(1)}_1A^{(2)}_4B^{(1)}_1+A^{(1)}_4A^{(2)}_2B^{(1)}_3)A^{(3)}_3\Big)B^{(3)}_4=0,
 \end{cases}
 \end{equation}
and the images of these equations under the symmetry $1\leftrightarrow 2$ and $3\leftrightarrow 4$ applied to the subscripts.
The second case is given by\\
{\rm(Type I-2):}
\begin{equation}\tag{I-2}\label{eqn:typeI_cond2}
\begin{cases}
 A^{(1)}_3=B^{(1)}_3=A^{(2)}_3=B^{(2)}_3=A^{(3)}_3=B^{(3)}_3=0,\quad
 B^{(2)}_1=B^{(1)}_4 B^{(3)}_1,\\
 B^{(2)}_2=-\dfrac{A^{(1)}_1 A^{(2)}_4 B^{(1)}_4 B^{(3)}_4}{A^{(1)}_4 A^{(2)}_2},\
 B^{(2)}_4=-\dfrac{A^{(1)}_1 A^{(2)}_4 B^{(1)}_4 B^{(3)}_1}{A^{(1)}_4 A^{(2)}_2},\quad
 B^{(3)}_2=\dfrac{B^{(1)}_4 B^{(3)}_4}{B^{(1)}_1},\\
 A^{(1)}_2=-\dfrac{A^{(1)}_4 A^{(3)}_1 B^{(1)}_1}{A^{(3)}_4 B^{(1)}_4},\quad
 B^{(1)}_2=-\dfrac{A^{(1)}_1 A^{(2)}_4 B^{(1)}_4}{A^{(1)}_4 A^{(2)}_2},\quad
 A^{(3)}_2=-\dfrac{A^{(2)}_1 A^{(3)}_4 B^{(3)}_1}{A^{(2)}_4 B^{(3)}_4}.
\end{cases}
\end{equation}
\subsection{For Lemma \ref{lemma:classification_CACO_CO1CO2CO3_2}}~\\
There are two sub-cases in Type II, which correspond to the choice of sign in the equations below.\\
{\rm(Type II-1):}
\begin{equation}\tag{II-1}\label{eqn:typeII_cond1}
\begin{split}
 &C^{(32)}=C^{(12)},\quad
 C^{(33)}=C^{(15)},\quad
 C^{(34)}=C^{(14)},\quad 
 C^{(35)}=C^{(11)},\quad
 C^{(36)}=C^{(16)},\\ 
 &C^{(41)}=C^{(21)},\quad 
 C^{(42)}=C^{(26)},\quad
 C^{(43)}=C^{(23)},\quad
 C^{(45)}=C^{(25)},\quad
 C^{(46)}=C^{(24)}.
\end{split}
\end{equation}
{\rm(Type II-2):}
\begin{equation}\tag{II-2}\label{eqn:typeII_cond2}
\begin{split}
 &C^{(32)}=-C^{(12)},\ 
 C^{(33)}=-C^{(15)},\
 C^{(34)}=-C^{(14)},\ 
 C^{(35)}=-C^{(11)},\
 C^{(36)}=-C^{(16)},\\ 
 &C^{(41)}=-C^{(21)},\
 C^{(42)}=-C^{(26)},\ 
 C^{(43)}=-C^{(23)},\
 C^{(45)}=-C^{(25)},\ 
 C^{(46)}=-C^{(24)}.
\end{split}
\end{equation}
Here, the parameters $C^{(11)}$, \dots, $C^{(16)}$, \dots, $C^{(41)}$, \dots, $C^{(46)}$ are given by 
\begin{subequations}
\begin{align}
 C^{(11)}
 =&A^{(1)}_2B^{(1)}_3(A^{(3)}_2B^{(3)}_4A^{(2)}_1+A^{(3)}_4B^{(3)}_1A^{(2)}_4)
 +A^{(1)}_3B^{(1)}_1(A^{(3)}_4B^{(3)}_1A^{(2)}_2+A^{(3)}_2B^{(3)}_4A^{(2)}_3),\label{eqn:typeII_defC11}\\
 C^{(12)}
 =&A^{(1)}_4B^{(1)}_3(A^{(3)}_2B^{(3)}_4A^{(2)}_1+A^{(3)}_4B^{(3)}_1A^{(2)}_4)
 +A^{(1)}_1B^{(1)}_1(A^{(3)}_4B^{(3)}_1A^{(2)}_2+A^{(3)}_2B^{(3)}_4A^{(2)}_3),\\
 C^{(13)}
 =&A^{(1)}_2B^{(1)}_2(A^{(3)}_2B^{(3)}_2A^{(2)}_1+A^{(3)}_4B^{(3)}_3A^{(2)}_4)
 +A^{(1)}_3B^{(1)}_4(A^{(3)}_4B^{(3)}_3A^{(2)}_2+A^{(3)}_2B^{(3)}_2A^{(2)}_3),\\
 C^{(14)}
 =&A^{(1)}_4B^{(1)}_2(A^{(3)}_2B^{(3)}_2A^{(2)}_1+A^{(3)}_4B^{(3)}_3A^{(2)}_4)
 +A^{(1)}_1B^{(1)}_4(A^{(3)}_4B^{(3)}_3A^{(2)}_2+A^{(3)}_2B^{(3)}_2A^{(2)}_3),\\
 C^{(15)}
 =&A^{(1)}_2(A^{(2)}_1A^{(3)}_2\hat{B}_1+A^{(2)}_4A^{(3)}_4\hat{B}_4)
 +A^{(1)}_3(A^{(2)}_2A^{(3)}_4\hat{B}_2+A^{(2)}_3A^{(3)}_2\hat{B}_3),\\
 C^{(16)}
 =&A^{(1)}_4(A^{(2)}_1A^{(3)}_2\hat{B}_1+A^{(2)}_4A^{(3)}_4\hat{B}_4)
 +A^{(1)}_1(A^{(2)}_2A^{(3)}_4\hat{B}_2+A^{(2)}_3A^{(3)}_2\hat{B}_3),
\end{align}
\end{subequations}
\vspace{-1.5em}
\begin{subequations}
\begin{align}
 C^{(21)}
 =&A^{(1)}_2B^{(1)}_3(A^{(3)}_3B^{(3)}_4A^{(2)}_1+A^{(3)}_1B^{(3)}_1A^{(2)}_4)
 +A^{(1)}_3B^{(1)}_1(A^{(3)}_1B^{(3)}_1A^{(2)}_2+A^{(3)}_3B^{(3)}_4A^{(2)}_3),\\
 C^{(22)}
 =&A^{(1)}_4B^{(1)}_3(A^{(3)}_3B^{(3)}_4A^{(2)}_1+A^{(3)}_1B^{(3)}_1A^{(2)}_4)
 +A^{(1)}_1B^{(1)}_1(A^{(3)}_1B^{(3)}_1A^{(2)}_2+A^{(3)}_3B^{(3)}_4A^{(2)}_3),\\
 C^{(23)}
 =&A^{(1)}_2B^{(1)}_2(A^{(3)}_3B^{(3)}_2A^{(2)}_1+A^{(3)}_1B^{(3)}_3A^{(2)}_4)
 +A^{(1)}_3B^{(1)}_4(A^{(3)}_1B^{(3)}_3A^{(2)}_2+A^{(3)}_3B^{(3)}_2A^{(2)}_3),\\
 C^{(24)}
 =&A^{(1)}_4B^{(1)}_2(A^{(3)}_3B^{(3)}_2A^{(2)}_1+A^{(3)}_1B^{(3)}_3A^{(2)}_4)
 +A^{(1)}_1B^{(1)}_4(A^{(3)}_1B^{(3)}_3A^{(2)}_2+A^{(3)}_3B^{(3)}_2A^{(2)}_3),\\
 C^{(25)}
 =&A^{(1)}_2(A^{(2)}_1A^{(3)}_3\hat{B}_1+A^{(2)}_4A^{(3)}_1\hat{B}_4)
 +A^{(1)}_3(A^{(2)}_2A^{(3)}_1\hat{B}_2+A^{(2)}_3A^{(3)}_3\hat{B}_3),\\
 C^{(26)}
 =&A^{(1)}_4(A^{(2)}_1A^{(3)}_3\hat{B}_1+A^{(2)}_4A^{(3)}_1\hat{B}_4)
 +A^{(1)}_1(A^{(2)}_2A^{(3)}_1\hat{B}_2+A^{(2)}_3A^{(3)}_3\hat{B}_3),
\end{align}
\end{subequations}
\vspace{-1.5em}
\begin{subequations}
\begin{align}
 C^{(31)}
 =&A^{(1)}_3B^{(1)}_3(A^{(3)}_4B^{(3)}_4A^{(2)}_1+A^{(3)}_2B^{(3)}_1A^{(2)}_4)
 +A^{(1)}_2B^{(1)}_1(A^{(3)}_2B^{(3)}_1A^{(2)}_2+A^{(3)}_4B^{(3)}_4A^{(2)}_3),\\
 C^{(32)}
 =&A^{(1)}_3B^{(1)}_3(A^{(3)}_1B^{(3)}_4A^{(2)}_1+A^{(3)}_3B^{(3)}_1A^{(2)}_4)
 +A^{(1)}_2B^{(1)}_1(A^{(3)}_3B^{(3)}_1A^{(2)}_2+A^{(3)}_1B^{(3)}_4A^{(2)}_3),\\
 C^{(33)}
 =&A^{(1)}_3B^{(1)}_2(A^{(3)}_4B^{(3)}_2A^{(2)}_1+A^{(3)}_2B^{(3)}_3A^{(2)}_4)
 +A^{(1)}_2B^{(1)}_4(A^{(3)}_2B^{(3)}_3A^{(2)}_2+A^{(3)}_4B^{(3)}_2A^{(2)}_3),\\
 C^{(34)}
 =&A^{(1)}_3B^{(1)}_2(A^{(3)}_1B^{(3)}_2A^{(2)}_1+A^{(3)}_3B^{(3)}_3A^{(2)}_4)
 +A^{(1)}_2B^{(1)}_4(A^{(3)}_3B^{(3)}_3A^{(2)}_2+A^{(3)}_1B^{(3)}_2A^{(2)}_3),\\
 C^{(35)}
 =&A^{(1)}_3(A^{(2)}_1A^{(3)}_4\hat{B}_1+A^{(2)}_4A^{(3)}_2\hat{B}_4)
 +A^{(1)}_2(A^{(2)}_2A^{(3)}_2\hat{B}_2+A^{(2)}_3A^{(3)}_4\hat{B}_3),\\
 C^{(36)}
 =&A^{(1)}_3(A^{(2)}_1A^{(3)}_1\hat{B}_1+A^{(2)}_4A^{(3)}_3\hat{B}_4)
 +A^{(1)}_2(A^{(2)}_2A^{(3)}_3\hat{B}_2+A^{(2)}_3A^{(3)}_1\hat{B}_3),
\end{align}
\end{subequations}
\vspace{-1.5em}
\begin{subequations}
\begin{align}
 C^{(41)}
 =&A^{(1)}_1B^{(1)}_3(A^{(3)}_4B^{(3)}_4A^{(2)}_1+A^{(3)}_2B^{(3)}_1A^{(2)}_4)
 +A^{(1)}_4B^{(1)}_1(A^{(3)}_2B^{(3)}_1A^{(2)}_2+A^{(3)}_4B^{(3)}_4A^{(2)}_3),\\
 C^{(42)}
 =&A^{(1)}_1B^{(1)}_3(A^{(3)}_1B^{(3)}_4A^{(2)}_1+A^{(3)}_3B^{(3)}_1A^{(2)}_4)
 +A^{(1)}_4B^{(1)}_1(A^{(3)}_3B^{(3)}_1A^{(2)}_2+A^{(3)}_1B^{(3)}_4A^{(2)}_3),\\
 C^{(43)}
 =&A^{(1)}_1B^{(1)}_2(A^{(3)}_4B^{(3)}_2A^{(2)}_1+A^{(3)}_2B^{(3)}_3A^{(2)}_4)
 +A^{(1)}_4B^{(1)}_4(A^{(3)}_2B^{(3)}_3A^{(2)}_2+A^{(3)}_4B^{(3)}_2A^{(2)}_3),\\
 C^{(44)}
 =&A^{(1)}_1B^{(1)}_2(A^{(3)}_1B^{(3)}_2A^{(2)}_1+A^{(3)}_3B^{(3)}_3A^{(2)}_4)
 +A^{(1)}_4B^{(1)}_4(A^{(3)}_3B^{(3)}_3A^{(2)}_2+A^{(3)}_1B^{(3)}_2A^{(2)}_3),\\
 C^{(45)}
 =&A^{(1)}_1(A^{(2)}_1A^{(3)}_4\hat{B}_1+A^{(2)}_4A^{(3)}_2\hat{B}_4)
 +A^{(1)}_4(A^{(2)}_2A^{(3)}_2\hat{B}_2+A^{(2)}_3A^{(3)}_4\hat{B}_3),\\
 C^{(46)}
 =&A^{(1)}_1(A^{(2)}_1A^{(3)}_1\hat{B}_1+A^{(2)}_4A^{(3)}_3\hat{B}_4)
 +A^{(1)}_4(A^{(2)}_2A^{(3)}_3\hat{B}_2+A^{(2)}_3A^{(3)}_1\hat{B}_3),\label{eqn:typeII_defC46}
\end{align}
\end{subequations}
where
\begin{align}
 &\hat{B}_1=B^{(1)}_3B^{(3)}_2+B^{(1)}_2B^{(3)}_4,\quad
 \hat{B}_2=B^{(1)}_4B^{(3)}_1+B^{(1)}_1B^{(3)}_3,\quad
 \hat{B}_3=B^{(1)}_1B^{(3)}_2+B^{(1)}_4B^{(3)}_4,\notag\\
 &\hat{B}_4=B^{(1)}_2B^{(3)}_1+B^{(1)}_3B^{(3)}_3.
\end{align}
\subsection{For Lemma \ref{lemma:classification_CACO_CO1CO2CO3_3}}
\label{subsection:proof_cond_para_N1}~\\
Type III contains 32 subcases, which are listed below.
~\\
{\rm(Type III-1-1):}\quad
the condition
\begin{equation}\tag{III-1}\label{eqn:typeIII_cond1}
 A^{(1)}_1=B^{(1)}_1=A^{(2)}_1=B^{(2)}_1=A^{(3)}_1=B^{(3)}_1=0,
\end{equation}
and
\begin{equation}\tag{III-1-1}\label{eqn:typeIII_cond1_1}
\begin{cases}
 A^{(3)}_3=\dfrac{A^{(1)}_2A^{(2)}_2A^{(3)}_2}{A^{(1)}_3A^{(2)}_3},\quad
 B^{(3)}_3=-\dfrac{A^{(2)}_3B^{(3)}_2d_3(A^{(1)}_2B^{(1)}_2+A^{(1)}_3B^{(1)}_3{d_2}^2d_3)}{A^{(1)}_2A^{(2)}_2B^{(1)}_3d_2},\\
 B^{(3)}_4=\dfrac{A^{(2)}_3B^{(3)}_2d_3\Big(A^{(1)}_3A^{(3)}_4(1-d_2)+A^{(1)}_4A^{(3)}_2(1-d_3)\Big)}{2A^{(1)}_2A^{(2)}_2A^{(3)}_2}\\
 \hspace{3.em}+\dfrac{B^{(3)}_2\Big(B^{(1)}_3A^{(2)}_4(1+d_2d_3)-2A^{(2)}_3B^{(1)}_4d_3\Big)}{2A^{(2)}_2B^{(1)}_3d_2}.
\end{cases}
\end{equation}
{\rm(Type III-1-2):}\quad
the condition \eqref{eqn:typeIII_cond1} and
\begin{equation}\tag{III-1-2}\label{eqn:typeIII_cond1_2}
 A^{(3)}_3=-\dfrac{A^{(1)}_2A^{(2)}_2A^{(3)}_2}{A^{(1)}_3A^{(2)}_3},\quad
 A^{(3)}_4=\dfrac{A^{(3)}_2\Big(A^{(1)}_4A^{(2)}_3d_2d_3(1+d_3)+A^{(1)}_2A^{(2)}_4(1-d_2d_3)\Big)}{A^{(1)}_3A^{(2)}_3d_2d_3(1+d_2)}.
\end{equation}
{\rm(Type III-1-3):}\quad
the condition \eqref{eqn:typeIII_cond1} and
\begin{equation}\tag{III-1-3}\label{eqn:typeIII_cond1_3}
 A^{(3)}_3=-\dfrac{A^{(1)}_2A^{(2)}_2A^{(3)}_2}{A^{(1)}_3A^{(2)}_3},\quad
 A^{(2)}_4=\dfrac{A^{(1)}_4A^{(2)}_3d_3}{A^{(1)}_2},\quad
 d_2=-1.
\end{equation}
{\rm(Type III-1-4):}\quad
the condition \eqref{eqn:typeIII_cond1} and
\begin{equation}\tag{III-1-4}\label{eqn:typeIII_cond1_4}
 A^{(3)}_3=-\dfrac{A^{(1)}_2A^{(2)}_2A^{(3)}_2}{A^{(1)}_3A^{(2)}_3},\quad
 d_2=d_3=-1.
\end{equation}
{\rm(Type III-2-1):}\quad
the condition
\begin{equation}\tag{III-2}\label{eqn:typeIII_cond2}
 A^{(1)}_1=B^{(1)}_1=0,\quad A^{(2)}_1,B^{(2)}_1,A^{(3)}_1,B^{(3)}_1\neq0,
\end{equation}
and
\begin{equation}\tag{III-2-1}\label{eqn:typeIII_cond2_1}
\begin{cases}
 B^{(1)}_2=\dfrac{A^{(1)}_3B^{(1)}_3}{A^{(1)}_2},\quad
 B^{(3)}_3=-\dfrac{A^{(2)}_3B^{(3)}_1}{A^{(2)}_1}-\dfrac{A^{(3)}_2B^{(3)}_1}{A^{(3)}_1},\\
 A^{(3)}_4=\dfrac{A^{(1)}_2A^{(3)}_1(A^{(2)}_2A^{(2)}_3-A^{(2)}_1A^{(2)}_4)}{A^{(1)}_3{A^{(2)}_1}^2}+\dfrac{A^{(3)}_2A^{(3)}_3}{A^{(3)}_1},\\
 B^{(1)}_4=-\dfrac{(A^{(1)}_4A^{(3)}_1+A^{(1)}_3A^{(3)}_3)B^{(1)}_3}{A^{(1)}_2A^{(3)}_1}-\dfrac{A^{(2)}_2B^{(1)}_3}{A^{(2)}_1},\quad
 d_2=d_3=-1.
\end{cases}
\end{equation}
{\rm(Type III-2-2):}\quad
the condition \eqref{eqn:typeIII_cond2} and
\begin{equation}\tag{III-2-2}\label{eqn:typeIII_cond2_2}
\begin{cases}
 B^{(1)}_2=\dfrac{A^{(1)}_3B^{(1)}_3}{A^{(1)}_2},\quad
 B^{(3)}_3=-\dfrac{(A^{(2)}_3A^{(3)}_1+A^{(2)}_1A^{(3)}_2)B^{(3)}_1}{A^{(2)}_1A^{(3)}_1},\\
 B^{(1)}_4=-\dfrac{(A^{(1)}_4A^{(3)}_1+A^{(1)}_3A^{(3)}_3)B^{(1)}_3}{A^{(1)}_2A^{(3)}_1}-\dfrac{A^{(2)}_2B^{(1)}_3}{A^{(2)}_1},\\
 A^{(3)}_4=-\dfrac{A^{(1)}_2A^{(3)}_1(A^{(2)}_2A^{(2)}_3-A^{(2)}_1A^{(2)}_4)}{A^{(1)}_3{A^{(2)}_1}^2}+\dfrac{A^{(3)}_2A^{(3)}_3}{A^{(3)}_1},\\
 B^{(3)}_4=-\dfrac{(A^{(2)}_3A^{(3)}_1+A^{(2)}_1A^{(3)}_2)B^{(3)}_2}{A^{(2)}_1A^{(3)}_1},\quad
 d_2=d_3=-1.
\end{cases}
\end{equation}
{\rm(Type III-2-3):}\quad
the condition \eqref{eqn:typeIII_cond2} and
\begin{equation}\tag{III-2-3}\label{eqn:typeIII_cond2_3}
\begin{cases}
 B^{(1)}_2=\dfrac{A^{(1)}_3B^{(1)}_3}{A^{(1)}_2},\quad
 A^{(2)}_4=-\dfrac{A^{(2)}_3B^{(1)}_4(1-d_2)}{B^{(1)}_3{d_2}^2}-\dfrac{A^{(1)}_3A^{(2)}_1A^{(3)}_4}{A^{(1)}_2A^{(3)}_1},\\
 A^{(3)}_2=\dfrac{A^{(2)}_3A^{(3)}_1}{A^{(2)}_1},\quad
 A^{(3)}_3=-\dfrac{A^{(1)}_2A^{(3)}_1\Big(A^{(2)}_1B^{(1)}_4(1-d_2)+A^{(2)}_2B^{(1)}_3{d_2}^2\Big)}{A^{(1)}_3A^{(2)}_1B^{(1)}_3{d_2}^2},\\
 B^{(3)}_3=-\dfrac{A^{(2)}_3B^{(3)}_1(1-d_2)}{A^{(2)}_1{d_2}^2},\quad
 B^{(3)}_4=-\dfrac{A^{(2)}_3B^{(3)}_2(1-d_2)}{A^{(2)}_1{d_2}^2},\quad
 d_3=-\dfrac{1}{{d_2}^2}.
\end{cases}
\end{equation}
{\rm(Type III-2-4):}\quad
the condition \eqref{eqn:typeIII_cond2} and
\begin{equation}\tag{III-2-4}\label{eqn:typeIII_cond2_4}
\begin{cases}
A^{(3)}_3=-\dfrac{A^{(3)}_1\Big(A^{(2)}_1B^{(1)}_4(1-d_2)+A^{(2)}_2B^{(1)}_3{d_2}^2\Big)}{A^{(2)}_1B^{(1)}_2{d_2}^2},\quad
 A^{(1)}_3=\dfrac{A^{(1)}_2B^{(1)}_2}{B^{(1)}_3},\\
 A^{(1)}_4=\dfrac{A^{(1)}_2B^{(1)}_4}{B^{(1)}_3},\quad
 A^{(2)}_4=-\dfrac{A^{(2)}_3A^{(3)}_1B^{(1)}_4(1-d_2)+A^{(2)}_1A^{(3)}_4B^{(1)}_2{d_2}^2}{A^{(3)}_1B^{(1)}_3{d_2}^2},\\
 A^{(3)}_2=\dfrac{A^{(2)}_3A^{(3)}_1}{A^{(2)}_1},\quad
 B^{(3)}_3=-\dfrac{A^{(2)}_3B^{(3)}_1(1-d_2)}{A^{(2)}_1{d_2}^2},\quad
 d_3=-\dfrac{1}{{d_2}^2}.
\end{cases}
\end{equation}
{\rm(Type III-2-5):}\quad
the condition \eqref{eqn:typeIII_cond2} and
\begin{equation}\tag{III-2-5}\label{eqn:typeIII_cond2_5}
\begin{cases}
 B^{(1)}_2=-\dfrac{A^{(1)}_3B^{(1)}_3}{A^{(1)}_2},\quad
 A^{(3)}_2=\dfrac{A^{(2)}_3A^{(3)}_1(1-d_2)}{A^{(2)}_1(1+d_2)},\\
 A^{(2)}_4=\dfrac{2A^{(1)}_4A^{(2)}_3(1-d_2)}{A^{(1)}_2d_2(1+d_2)^2}+\dfrac{2A^{(2)}_2A^{(2)}_3(1+{d_2}^2)}{A^{(2)}_1(1+d_2)^2}-\dfrac{A^{(1)}_3A^{(2)}_1A^{(3)}_4}{A^{(1)}_2A^{(3)}_1}\\
 \hspace{2.5em}-\dfrac{A^{(2)}_3B^{(1)}_4(1-d_2)^2}{B^{(1)}_3{d_2}^2(1+d_2)},\\
 A^{(3)}_3=-\dfrac{A^{(1)}_2A^{(3)}_1(1-d_2)}{A^{(1)}_3}\left(\dfrac{B^{(1)}_4}{B^{(1)}_3{d_2}^2}-\dfrac{A^{(2)}_2}{A^{(2)}_1(1+d_2)}\right)+\dfrac{2A^{(1)}_4A^{(3)}_1}{A^{(1)}_3d_2(1+d_2)},\\
 B^{(3)}_3=\dfrac{A^{(2)}_3B^{(3)}_1(1+{d_2}^2)}{A^{(2)}_1{d_2}^2(1+d_2)},\quad
 B^{(3)}_4=\dfrac{A^{(2)}_3B^{(3)}_2(1+{d_2}^2)}{A^{(2)}_1{d_2}^2(1+d_2)},\quad
 d_3=\dfrac{1}{{d_2}^2}.
\end{cases}
\end{equation}
{\rm(Type III-2-6):}\quad
the condition \eqref{eqn:typeIII_cond2} and
\begin{equation}\tag{III-2-6}\label{eqn:typeIII_cond2_6}
\begin{cases}
 B^{(1)}_2=-\dfrac{A^{(1)}_3B^{(1)}_3}{A^{(1)}_2},\quad
 A^{(1)}_4=\dfrac{(1-d_2)A^{(1)}_2A^{(3)}_1B^{(1)}_4+2A^{(1)}_3A^{(3)}_3B^{(1)}_3{d_2}^2}{A^{(3)}_1B^{(1)}_3(1+d_2)},\\
 A^{(2)}_2=\dfrac{A^{(2)}_1\Big(A^{(1)}_3A^{(3)}_3B^{(1)}_3(1-d_2){d_2}^2+A^{(1)}_2A^{(3)}_1B^{(1)}_4(1+{d_2}^2)\Big)}{A^{(1)}_2A^{(3)}_1B^{(1)}_3{d_2}^2(1+d_2)},\\
 A^{(2)}_4=\dfrac{2A^{(1)}_3A^{(2)}_3A^{(3)}_3(1-d_2)}{A^{(1)}_2A^{(3)}_1(1+d_2)}+\dfrac{A^{(2)}_3B^{(1)}_4(1+{d_2}^2)}{B^{(1)}_3{d_2}^2(1+d_2)}-\dfrac{A^{(1)}_3A^{(2)}_1A^{(3)}_4}{A^{(1)}_2A^{(3)}_1},\\
 A^{(3)}_2=\dfrac{A^{(2)}_3A^{(3)}_1(1-d_2)}{A^{(2)}_1(1+d_2)},\quad
 B^{(3)}_3=\dfrac{A^{(2)}_3B^{(3)}_1(1+{d_2}^2)}{A^{(2)}_1{d_2}^2(1+d_2)},\quad
 d_3=\dfrac{1}{{d_2}^2}.
\end{cases}
\end{equation}
{\rm(Type III-2-7):}\quad
the condition \eqref{eqn:typeIII_cond2} and
\begin{equation}\tag{III-2-7}\label{eqn:typeIII_cond2_7}
\begin{cases}
 B^{(1)}_2=\dfrac{A^{(1)}_3B^{(1)}_3}{A^{(1)}_2},\quad
 A^{(1)}_4=-\dfrac{A^{(1)}_2A^{(2)}_2}{A^{(2)}_1}-\dfrac{A^{(1)}_3A^{(3)}_3}{A^{(3)}_1}-\dfrac{A^{(1)}_2B^{(1)}_4}{B^{(1)}_3},\\
 A^{(2)}_4=\dfrac{A^{(2)}_2A^{(2)}_3}{A^{(2)}_1}+\dfrac{A^{(1)}_3A^{(2)}_1(A^{(3)}_2A^{(3)}_3-A^{(3)}_1A^{(3)}_4)}{A^{(1)}_2{A^{(3)}_1}^2},\quad
 d_2=d_3=-1.
\end{cases}
\end{equation}
{\rm(Type III-2-8):}\quad
the condition \eqref{eqn:typeIII_cond2} and
\begin{equation}\tag{III-2-8}\label{eqn:typeIII_cond2_8}
\begin{cases}
 B^{(1)}_2=\dfrac{A^{(1)}_3B^{(1)}_3}{A^{(1)}_2},\
 A^{(1)}_4=\dfrac{A^{(1)}_2B^{(1)}_4}{B^{(1)}_3},\
 A^{(2)}_2=-\dfrac{A^{(2)}_1B^{(1)}_4(1-d_2)}{B^{(1)}_3{d_2}^2}-\dfrac{A^{(1)}_3A^{(2)}_1A^{(3)}_3}{A^{(1)}_2A^{(3)}_1},\\
 A^{(2)}_4=-\dfrac{(1-d_2)A^{(2)}_3B^{(1)}_4}{B^{(1)}_3{d_2}^2}-\dfrac{A^{(1)}_3A^{(2)}_1A^{(3)}_4}{A^{(1)}_2A^{(3)}_1},\quad
 A^{(3)}_2=\dfrac{A^{(2)}_3A^{(3)}_1}{A^{(2)}_1},\quad
 d_3=-\dfrac{1}{{d_2}^2}.
\end{cases}
\end{equation}
{\rm(Type III-2-9):}\quad
the condition \eqref{eqn:typeIII_cond2} and
\begin{equation}\tag{III-2-9}\label{eqn:typeIII_cond2_9}
\begin{cases}
 B^{(1)}_2=-\dfrac{A^{(1)}_3B^{(1)}_3}{A^{(1)}_2},\quad
 A^{(1)}_4=\dfrac{(1-d_2)A^{(1)}_2A^{(3)}_1B^{(1)}_4+2A^{(1)}_3A^{(3)}_3B^{(1)}_3{d_2}^2}{A^{(3)}_1B^{(1)}_3(1+d_2)},\\
 A^{(2)}_2=\dfrac{A^{(2)}_1B^{(1)}_4(1+{d_2}^2)}{B^{(1)}_3{d_2}^2(1+d_2)}+\dfrac{A^{(1)}_3A^{(2)}_1A^{(3)}_3(1-d_2)}{A^{(1)}_2A^{(3)}_1(1+d_2)},\quad
 A^{(3)}_2=\dfrac{A^{(2)}_3A^{(3)}_1(1-d_2)}{A^{(2)}_1(1+d_2)},\\
 A^{(2)}_4=\dfrac{2A^{(1)}_3A^{(2)}_3A^{(3)}_3(1-d_2)}{A^{(1)}_2A^{(3)}_1(1+d_2)}+\dfrac{A^{(2)}_3B^{(1)}_4(1+{d_2}^2)}{B^{(1)}_3{d_2}^2(1+d_2)}-\dfrac{A^{(1)}_3A^{(2)}_1A^{(3)}_4}{A^{(1)}_2A^{(3)}_1},\
 d_3=\dfrac{1}{{d_2}^2}.
\end{cases}
\end{equation}
{\rm(Type III-2-10):}\quad
the condition \eqref{eqn:typeIII_cond2} and
\begin{equation}\tag{III-2-10}\label{eqn:typeIII_cond2_10}
\begin{cases}
 A^{(2)}_3=-\dfrac{A^{(2)}_1(A^{(3)}_2B^{(3)}_1+A^{(3)}_1B^{(3)}_3)}{A^{(3)}_1B^{(3)}_1},\quad
 d_2=d_3=-1,\\
 A^{(2)}_4=\dfrac{A^{(1)}_3A^{(2)}_1(A^{(3)}_2A^{(3)}_3-A^{(3)}_1A^{(3)}_4)}{A^{(1)}_2{A^{(3)}_1}^2}-\dfrac{A^{(2)}_2(A^{(3)}_2B^{(3)}_1+A^{(3)}_1B^{(3)}_3)}{A^{(3)}_1B^{(3)}_1}\\
 \hspace{2.5em}+\dfrac{A^{(2)}_1(A^{(1)}_2B^{(1)}_2-A^{(1)}_3B^{(1)}_3)(B^{(3)}_2B^{(3)}_3-B^{(3)}_1B^{(3)}_4)}{A^{(1)}_2B^{(1)}_3{B^{(3)}_1}^2},\\
 B^{(1)}_4=\dfrac{B^{(3)}_2(A^{(1)}_2B^{(1)}_2-A^{(1)}_3B^{(1)}_3)}{A^{(1)}_2B^{(3)}_1}-\dfrac{B^{(1)}_3(A^{(1)}_4A^{(3)}_1+A^{(1)}_3A^{(3)}_3)}{A^{(1)}_2A^{(3)}_1}-\dfrac{A^{(2)}_2B^{(1)}_3}{A^{(2)}_1}.
\end{cases}
\end{equation}
{\rm(Type III-2-11):}\quad
the condition \eqref{eqn:typeIII_cond2} and
\begin{equation}\tag{III-2-11}\label{eqn:typeIII_cond2_11}
\begin{cases}
 A^{(2)}_3=B^{(2)}_3=0,\quad
 A^{(1)}_4=\dfrac{A^{(1)}_2A^{(2)}_2}{A^{(2)}_1},\quad
 d_2=-1,\quad d_3=1,\\
 A^{(2)}_4=\dfrac{A^{(1)}_3A^{(2)}_1(A^{(3)}_1B^{(3)}_4-A^{(3)}_4B^{(3)}_1)}{A^{(1)}_2A^{(3)}_1B^{(3)}_1}+\dfrac{A^{(2)}_1(B^{(1)}_2B^{(3)}_4-B^{(1)}_4B^{(3)}_3)}{B^{(1)}_3B^{(3)}_1},\\
 A^{(3)}_2=\dfrac{A^{(3)}_1B^{(3)}_3}{B^{(3)}_1},\quad
 B^{(3)}_2=\dfrac{(A^{(1)}_3A^{(3)}_3B^{(1)}_3+A^{(1)}_2A^{(3)}_1B^{(1)}_4)B^{(3)}_1}{A^{(3)}_1(A^{(1)}_2B^{(1)}_2+A^{(1)}_3B^{(1)}_3)}.
\end{cases}
\end{equation}
{\rm(Type III-2-12):}\quad
the condition \eqref{eqn:typeIII_cond2} and
\begin{equation}\tag{III-2-12}\label{eqn:typeIII_cond2_12}
\begin{cases}
 A^{(3)}_2=B^{(3)}_2=0,\quad
 A^{(1)}_4=\dfrac{A^{(1)}_3A^{(3)}_3}{A^{(3)}_1},\quad
 A^{(2)}_3=\dfrac{A^{(2)}_1B^{(3)}_3}{B^{(3)}_1},\\
 A^{(3)}_4=\dfrac{A^{(1)}_2A^{(3)}_1(A^{(2)}_2B^{(3)}_3-A^{(2)}_4B^{(3)}_1)}{A^{(1)}_3A^{(2)}_1B^{(3)}_1}-\dfrac{A^{(3)}_1B^{(3)}_4(A^{(1)}_2B^{(1)}_2+A^{(1)}_3B^{(1)}_3)}{A^{(1)}_3B^{(1)}_3B^{(3)}_1},\\
 B^{(1)}_4=\dfrac{A^{(2)}_2B^{(1)}_3}{A^{(2)}_1},\quad
 d_2=d_3=1.
\end{cases}
\end{equation}
{\rm(Type III-3-1):}\quad
the condition
\begin{equation}\tag{III-3}\label{eqn:typeIII_cond3}
 A^{(i)}_1=B^{(i)}_1=1,\quad i=1,2,3,
\end{equation}
and
\begin{equation}\tag{III-3-1}\label{eqn:typeIII_cond3_1}
\begin{cases}
a^{(1)}_4=b^{(1)}_4=-\dfrac{(A^{(1)}_3+A^{(2)}_2+B^{(1)}_2)(A^{(1)}_2+A^{(3)}_3+B^{(3)}_2)}{1+A^{(1)}_2+A^{(3)}_3+B^{(3)}_2},\quad
B^{(1)}_3=1+B^{(3)}_2\\
a^{(2)}_4=-(A^{(1)}_3+A^{(2)}_2+B^{(1)}_2)(A^{(2)}_3+A^{(3)}_2+B^{(3)}_3),\quad
d_2=d_3=-1\\
a^{(3)}_4=b^{(3)}_4=\dfrac{(1+A^{(1)}_2+A^{(3)}_3+B^{(3)}_2)(A^{(2)}_3+A^{(3)}_2+B^{(3)}_3)}{A^{(1)}_2+A^{(3)}_3+B^{(3)}_2}.
\end{cases}
\end{equation}
Here, parameters $a^{(i)}_4$ and $b^{(i)}_4$, $i=1,2,3$, are given by
\begin{equation}\label{eqn:typeIII_def_ab}
 a^{(i)}_4=A^{(i)}_4-A^{(i)}_2A^{(i)}_3,\quad
 b^{(i)}_4=B^{(i)}_4-B^{(i)}_2B^{(i)}_3.
\end{equation}
Note that under the condition \eqref{eqn:typeIII_cond3} and setting \eqref{eqn:typeIII_def_ab},
the condition \eqref{eqn:caseN1_CAO_B2} can be rewritten as
\begin{equation}\label{eqn:caseN1_CAO_B2_2}
 B^{(1)}_3=1+B^{(3)}_2,\quad
 B^{(2)}_2=B^{(3)}_3-b^{(3)}_4,\quad
 B^{(2)}_3=B^{(1)}_2+b^{(1)}_4,\quad
 b^{(2)}_4=b^{(1)}_4 b^{(3)}_4.
\end{equation}
{\rm(Type III-3-2):}\quad
the condition \eqref{eqn:typeIII_cond3} and the following condition with \eqref{eqn:typeIII_def_ab}{\rm :}
\begin{equation}\tag{III-3-2}\label{eqn:typeIII_cond3_2}
\begin{cases}
 a^{(1)}_4=\dfrac{(A^{(2)}_2-B^{(1)}_2)(1+d_2)}{(1-d_2){d_2}^2},\quad
 a^{(2)}_4=-\dfrac{(A^{(2)}_2-B^{(1)}_2)(A^{(2)}_3-B^{(3)}_3)(1+d_2)^2}{(1-d_2)^2},\\
 a^{(3)}_4=-\dfrac{(A^{(2)}_3-B^{(3)}_3)(1+d_2)}{(1-d_2){d_2}^2},\quad
 b^{(1)}_4=-\dfrac{(A^{(2)}_2-B^{(1)}_2)(1+d_2)}{(1-d_2)d_2},\quad
 d_3=d_2,\\
 b^{(3)}_4=\dfrac{(A^{(2)}_3-B^{(3)}_3)(1+d_2)}{(1-d_2)d_2},\quad
 A^{(1)}_3=-\dfrac{B^{(1)}_2(1+{d_2}^2)-2A^{(2)}_2d_2}{(1-d_2){d_2}^2},\\
 B^{(1)}_3=1+B^{(3)}_2,\quad
 A^{(3)}_2=-\dfrac{B^{(3)}_3(1+{d_2}^2)-2A^{(2)}_3d_2}{(1-d_2){d_2}^2},\quad
 1-{d_2}^2+2{d_2}^4=0,\\
 A^{(3)}_3=\dfrac{1-2d_2(1+B^{(3)}_2d_2)}{2(1-d_2)},\quad
 A^{(1)}_2=-\dfrac{1+2{d_2}^2B^{(3)}_2}{2(1-d_2)}+d_2.
\end{cases}
\end{equation}
{\rm(Type III-3-3):}\quad
the condition \eqref{eqn:typeIII_cond3} and the following condition with \eqref{eqn:typeIII_def_ab}{\rm :}
\begin{equation}\tag{III-3-3}\label{eqn:typeIII_cond3_3}
\begin{cases}
 a^{(1)}_4=\dfrac{A^{(2)}_2}{{d_2}^2}-\dfrac{B^{(1)}_2(1-d_2)}{{d_2}^2(1+d_2)},\quad
 a^{(3)}_4=\dfrac{B^{(3)}_3(1+d_2)}{{d_2}^2(1-d_2)}-\dfrac{A^{(2)}_3}{{d_2}^2},\\
 a^{(2)}_4=\dfrac{2A^{(2)}_2B^{(3)}_3}{1-d_2}+\dfrac{2A^{(2)}_3B^{(1)}_2}{1+d_2}-(A^{(2)}_2+B^{(1)}_2)(A^{(2)}_3+B^{(3)}_3),\\
 b^{(1)}_4=\dfrac{A^{(2)}_2}{d_2}-\dfrac{B^{(1)}_2(1-d_2)}{d_2(1+d_2)},\quad
 b^{(3)}_4=\dfrac{A^{(2)}_3}{d_2}-\dfrac{B^{(3)}_3(1+d_2)}{d_2(1-d_2)},\\
 A^{(1)}_3=\dfrac{B^{(1)}_2(1+{d_2}^2)}{{d_2}^2(1+d_2)},\quad
 B^{(1)}_3=\dfrac{1}{d_2}-\dfrac{1}{2{d_2}^2}+\dfrac{A^{(1)}_2(1+{d_2}^2)}{{d_2}^2(1+d_2)},\\
 A^{(3)}_2=\dfrac{B^{(3)}_3(1+{d_2}^2)}{{d_2}^2(1-d_2)},\quad
 A^{(3)}_3=-\dfrac{(1-d_2)(1-A^{(1)}_2+d_2)}{1+d_2},\\
 B^{(3)}_2=-1-\dfrac{1}{2{d_2}^2}+\dfrac{1}{d_2}+\dfrac{A^{(1)}_2(1+{d_2}^2)}{{d_2}^2(1+d_2)},\\
 d_3=-d_2,\quad
 1-{d_2}^2+2{d_2}^4=0.
\end{cases}
\end{equation}
{\rm(Type III-3-4):}\quad
the condition \eqref{eqn:typeIII_cond3} and the following condition with \eqref{eqn:typeIII_def_ab}{\rm :}
\begin{equation}\tag{III-3-4}\label{eqn:typeIII_cond3_4}
\begin{cases}
 a^{(1)}_4=\dfrac{(1+{d_2}^2{d_3}^2)\Big(A^{(2)}_2(1+d_2)-B^{(1)}_2(1+d_3)\Big)}{{d_2}^2{d_3}^2(1-d_3)^2(1+d_3)},\\
 a^{(2)}_4=\dfrac{\Big(B^{(3)}_3(1+d_2)-A^{(2)}_3(1+d_3)\Big)\Big(A^{(2)}_2(1+d_2)-B^{(1)}_2(1+d_3)\Big)}{(1-d_2)(1-d_3)},\\
 a^{(3)}_4=\dfrac{(1+{d_2}^2{d_3}^2)\Big(B^{(3)}_3(1+d_2)-A^{(2)}_3(1+d_3)\Big)}{{d_2}^2{d_3}^2(1-d_2)^2)(1+d_2)},\\
 b^{(1)}_4=-\dfrac{(1+{d_2}^2{d_3}^2)\Big(A^{(2)}_2(1+d_2)-B^{(1)}_2(1+d_3)\Big)}{{d_2}^2d_3(1-d_3)^2(1+d_3)},\\
 b^{(3)}_4=-\dfrac{(1+{d_2}^2{d_3}^2)\Big(B^{(3)}_3(1+d_2)-A^{(2)}_3(1+d_3)\Big)}{d_2{d_3}^2(1-d_2)^2(1+d_2)},\\
 A^{(1)}_3=\dfrac{A^{(2)}_2(d_2+d_3)-B^{(1)}_2(1+{d_3}^2)}{d_2d_3(1-d_3)},\
 A^{(3)}_3=\dfrac{1+{d_2}^2{d_3}^2+A^{(1)}_2d_2d_3(1+d_3)}{d_2d_3(1+d_2)},\\
 B^{(1)}_3=-\dfrac{1+{d_2}^2{d_3}^2}{{d_2}^2{d_3}^2(1+d_2)}-\dfrac{1-{d_2}^2+2{d_2}^2{d_3}^2}{{d_2}^2({d_2}^2-{d_3}^2)}-\dfrac{A^{(1)}_2(1-d_2d_3)}{d_2d_3(1+d_2)},\\
 A^{(3)}_2=-\dfrac{B^{(3)}_3(1+{d_2}^2)-A^{(2)}_3(d_2+d_3)}{(1-d_2)d_2d_3},\
 1+{d_2}^2{d_3}^2+{d_2}^4{d_3}^2+{d_2}^2{d_3}^4=0,\\
 B^{(3)}_2=-\dfrac{1-{d_2}^3{d_3}^2}{{d_2}^2{d_3}^2(1+d_2)}-\dfrac{1-{d_2}^2+2{d_2}^4}{{d_2}^2({d_2}^2-{d_3}^2)}-\dfrac{A^{(1)}_2(1-d_2d_3)}{d_2d_3(1+d_2)}.
\end{cases}
\end{equation}
{\rm(Type III-3-5):}\quad
the condition \eqref{eqn:typeIII_cond3} and the following condition with \eqref{eqn:typeIII_def_ab}{\rm :}
\begin{equation}\tag{III-3-5}\label{eqn:typeIII_cond3_5}
\begin{cases}
 a^{(1)}_4=\dfrac{(2+\ii\sqrt{2})A^{(1)}_3}{3}-2A^{(2)}_2,\quad
 a^{(2)}_4=\dfrac{\Big((2+\ii\sqrt{2})A^{(1)}_3-6A^{(2)}_2\Big)a^{(3)}_4}{12},\\
 b^{(1)}_4=\dfrac{(1-\ii\sqrt{2})A^{(1)}_3}{3}+\ii\sqrt{2}A^{(2)}_2,\quad
 b^{(3)}_4=-\dfrac{\ii a^{(3)}_4}{\sqrt{2}},\quad
 B^{(1)}_2=\dfrac{(2+\ii\sqrt{2})A^{(1)}_3}{6},\\
 B^{(1)}_3=1-\ii\sqrt{2}+(2-\ii\sqrt{2})A^{(1)}_2,\\
 A^{(3)}_2=\dfrac{(2-\ii\sqrt{2})(2A^{(2)}_3-a^{(3)}_4)}{2},\quad
 A^{(3)}_3=\dfrac{2-\ii\sqrt{2}}{2}+A^{(1)}_2,\\
 B^{(3)}_2=-\ii\sqrt{2}+(2-\ii\sqrt{2})A^{(1)}_2,\quad
 B^{(3)}_3=A^{(2)}_3-\dfrac{a^{(3)}_4}{2},\quad
 d_2=d_3=\dfrac{\ii}{\sqrt{2}},
\end{cases}
\end{equation}
where $\ii=\sqrt{-1}$.\\
{\rm(Type III-3-6):}\quad
the condition \eqref{eqn:typeIII_cond3} and the following condition with \eqref{eqn:typeIII_def_ab}{\rm :}
\begin{equation}\tag{III-3-6}\label{eqn:typeIII_cond3_6}
\begin{cases}
 a^{(1)}_4=\dfrac{(2-\ii\sqrt{2})A^{(1)}_3}{3}-2A^{(2)}_2,\quad
 a^{(2)}_4=\dfrac{\Big((2-\ii\sqrt{2})A^{(1)}_3-6A^{(2)}_2\Big)a^{(3)}_4}{12},\\
 b^{(1)}_4=\dfrac{(1+\ii\sqrt{2})A^{(1)}_3}{3}-\ii\sqrt{2}A^{(2)}_2,\quad
 b^{(3)}_4=\dfrac{\ii a^{(3)}_4}{\sqrt{2}},\\
 B^{(1)}_2=\dfrac{(2-\ii\sqrt{2})A^{(1)}_3}{6},\quad
 B^{(1)}_3=1+\ii\sqrt{2}+(2+\ii\sqrt{2})A^{(1)}_2,\\
 A^{(3)}_2=\dfrac{(2+\ii\sqrt{2})(2A^{(2)}_3-a^{(3)}_4)}{2},\quad
 A^{(3)}_3=\dfrac{2+\ii\sqrt{2}}{2}+A^{(1)}_2,\\
 B^{(3)}_2=\ii\sqrt{2}+(2+\ii\sqrt{2})A^{(1)}_2,\quad
 B^{(3)}_3=A^{(2)}_3-\dfrac{a^{(3)}_4}{2},\quad
 d_2=d_3=-\dfrac{\ii}{\sqrt{2}}.
\end{cases}
\end{equation}
{\rm(Type III-3-7):}\quad
the condition \eqref{eqn:typeIII_cond3} and the following condition with \eqref{eqn:typeIII_def_ab}{\rm :}
\begin{equation}\tag{III-3-7}\label{eqn:typeIII_cond3_7}
\begin{cases}
 a^{(1)}_4=-\dfrac{(2-\ii\sqrt{2})A^{(1)}_3+2A^{(2)}_2}{3},\quad
 a^{(2)}_4=-\dfrac{\Big((2-\ii\sqrt{2})A^{(1)}_3+2A^{(2)}_2\Big)a^{(3)}_4}{12},\\
 A^{(3)}_2=-\dfrac{(2-\ii\sqrt{2})(2A^{(2)}_3-3a^{(3)}_4)}{6},\quad
 A^{(3)}_3=\dfrac{(2\ii+\sqrt{2})\Big(3\ii-(2\ii+\sqrt{2})A^{(1)}_2\Big)}{6},\\
 B^{(1)}_2=\dfrac{4\ii A^{(2)}_2-(\sqrt{2}-\ii)A^{(1)}_3}{3\sqrt{2}},\quad
 B^{(1)}_3=1-\ii\sqrt{2}-\dfrac{(2-\ii\sqrt{2})A^{(1)}_2}{3},\\
 b^{(1)}_4=-\dfrac{(1+\ii\sqrt{2})A^{(1)}_3+\ii\sqrt{2} A^{(2)}_2}{3},\quad
 B^{(3)}_2=-\ii\sqrt{2}-\dfrac{(2-\ii\sqrt{2})A^{(1)}_2}{3},\\
 B^{(3)}_3=\dfrac{(1-2\ii\sqrt{2})A^{(2)}_3}{3}-\dfrac{a^{(3)}_4}{2},\quad
 b^{(3)}_4=-\dfrac{\ii a^{(3)}_4}{\sqrt{2}},\quad
 d_2=\dfrac{\ii}{\sqrt{2}},\quad
 d_3=-\dfrac{\ii}{\sqrt{2}}.
\end{cases}
\end{equation}
{\rm(Type III-3-8):}\quad
the condition \eqref{eqn:typeIII_cond3} and the following condition with \eqref{eqn:typeIII_def_ab}{\rm :}
\begin{equation}\tag{III-3-8}\label{eqn:typeIII_cond3_8}
\begin{cases}
 a^{(1)}_4=-\dfrac{(2+\ii\sqrt{2})A^{(1)}_3+2A^{(2)}_2}{3},\quad
 a^{(2)}_4=-\dfrac{\Big((2+\ii\sqrt{2})A^{(1)}_3+2A^{(2)}_2\Big)a^{(3)}_4}{12},\\
 b^{(1)}_4=-\dfrac{(1-\ii\sqrt{2})A^{(1)}_3-\sqrt{2}\ii A^{(2)}_2}{3},\quad
 b^{(3)}_4=\dfrac{\ii a^{(3)}_4}{\sqrt{2}},\quad
 d_2=-d_3=-\dfrac{\ii}{\sqrt{2}},\\
 B^{(1)}_2=-\dfrac{(\sqrt{2}+\ii)A^{(1)}_3+4\ii A^{(2)}_2}{3\sqrt{2}},\quad
 B^{(1)}_3=1+\ii\sqrt{2}-\dfrac{(2+\ii\sqrt{2})A^{(1)}_2}{3},\\
 A^{(3)}_2=-\dfrac{(2+\ii\sqrt{2})(2A^{(2)}_3-3a^{(3)}_4)}{6},\quad
 B^{(3)}_2=\ii\sqrt{2}-\dfrac{(2+\ii\sqrt{2})A^{(1)}_2}{3},\\
 A^{(3)}_3=-\dfrac{(2+\ii\sqrt{2})(3-(2+\ii\sqrt{2})A^{(1)}_2}{6},\quad
 B^{(3)}_3=\dfrac{(1+2\sqrt{2}\,\ii)A^{(2)}_3}{3}-\dfrac{a^{(3)}_4}{2}.
\end{cases}
\end{equation}
{\rm(Type III-3-9):}\quad
the condition \eqref{eqn:typeIII_cond3} and the following condition with \eqref{eqn:typeIII_def_ab}{\rm :}
\begin{equation}\tag{III-3-9}\label{eqn:typeIII_cond3_9}
\begin{cases}
 a^{(1)}_4=-\dfrac{(1-{d_2}^2)\Big(A^{(2)}_2(1-d_2d_3)+A^{(1)}_3d_2d_3(1+d_3)\Big)}{(1-{d_3}^2)(1-{d_2}^2{d_3}^2)},\\
 a^{(2)}_4=-\dfrac{\Big(A^{(2)}_3(1-d_2d_3)+A^{(3)}_2d_2d_3(1+d_2)\Big)\Big(A^{(2)}_2(1-d_2d_3)+A^{(1)}_3d_2d_3(1+d_3)\Big)}{(1-{d_2}^2)(1-{d_3}^2)},\\
 a^{(3)}_4=\dfrac{(1-{d_3}^2)\Big(A^{(2)}_3(1-d_2d_3)+A^{(3)}_2d_2(1+d_2)d_3\Big)}{(1-{d_2}^2)(1-{d_2}^2{d_3}^2)},\\
 b^{(1)}_4=\dfrac{d_3(1-{d_2}^2)\Big(A^{(2)}_2(1-d_2d_3)+A^{(1)}_3d_2d_3(1+d_3)\Big)}{(1-{d_3}^2)(1-{d_2}^2{d_3}^2)},\\
 b^{(3)}_4=-\dfrac{d_2(1-{d_3}^2)\Big(A^{(2)}_3(1-d_2d_3)+A^{(3)}_2d_2(1+d_2)d_3\Big)}{(1-{d_2}^2)(1-{d_2}^2{d_3}^2)},\\
 B^{(1)}_2=-\dfrac{A^{(1)}_3d_2d_3(1+d_3)-A^{(2)}_2(d_2-d_3)}{1-{d_3}^2},\\
 B^{(1)}_3=-\dfrac{(1-{d_2}^2)(1-{d_3}^2)}{({d_2}^2-{d_3}^2)(1-{d_2}^2{d_3}^2)}-\dfrac{d_2(1-{d_3}^2}{1-{d_2}^2{d_3}^2}+\dfrac{2(1-{d_3}^2}{{d_2}^2-{d_3}^2}-\dfrac{A^{(1)}_2(1-d_2d_3)}{d_2d_3(1+d_2)},\\
 B^{(3)}_2=\dfrac{(2+d_2)(1-d_2)}{{d_2}^2-{d_3}^2}+\dfrac{(1-d_2)\Big(d_2{d_3}^2(1-{d_3}^2)-(1-{d_2}^2)\Big)}{({d_2}^2-{d_3}^2)(1-{d_2}^2{d_3}^2)}-\dfrac{A^{(1)}_2(1-d_2d_3)}{d_2d_3(1+d_2)},\\
 A^{(3)}_3=-\dfrac{d_2d_3(1-d_2)(1-{d_3}^2)}{1-{d_2}^2{d_3}^2}+\dfrac{(1+d_3)A^{(1)}_2}{1+d_2},\\
 B^{(3)}_3=-\dfrac{A^{(2)}_3(d_2-d_3)+A^{(3)}_2d_2d_3(1+d_2)}{1-{d_2}^2},\quad
 1-3{d_2}^2{d_3}^2+{d_2}^4{d_3}^2+{d_2}^2{d_3}^4=0.
\end{cases}
\end{equation}
{\rm(Type III-3-10):}\quad
the condition \eqref{eqn:typeIII_cond3} and the following condition with \eqref{eqn:typeIII_def_ab}{\rm :}
\begin{equation}\tag{III-3-10}\label{eqn:typeIII_cond3_10}
\begin{cases}
 a^{(2)}_4=-\dfrac{(A^{(1)}_3+A^{(2)}_2+B^{(1)}_2)^2b^{(3)}_4}{A^{(1)}_3-2a^{(1)}_4+A^{(2)}_2+B^{(1)}_2},\quad
 a^{(3)}_4=-\dfrac{a^{(1)}_4b^{(3)}_4}{A^{(1)}_3-2a^{(1)}_4+A^{(2)}_2+B^{(1)}_2},\\
 b^{(1)}_4=-A^{(1)}_3+2a^{(1)}_4-A^{(2)}_2-B^{(1)}_2,\quad
 B^{(1)}_3=\dfrac{1-2A^{(1)}_2-2A^{(3)}_3}{2},\\
 B^{(3)}_2=-\dfrac{1+2A^{(1)}_2+2A^{(3)}_3}{2},\\
 B^{(3)}_3=-A^{(2)}_3-A^{(3)}_2+\dfrac{(A^{(1)}_3+A^{(2)}_2+B^{(1)}_2)b^{(3)}_4}{A^{(1)}_3-2a^{(1)}_4+A^{(2)}_2+B^{(1)}_2},\quad
 d_2=d_3=-1.
\end{cases}
\end{equation}
{\rm(Type III-3-11):}\quad
the condition \eqref{eqn:typeIII_cond3} and the following condition with \eqref{eqn:typeIII_def_ab}{\rm :}
\begin{equation}\tag{III-3-11}\label{eqn:typeIII_cond3_11}
\begin{cases}
 a^{(2)}_4=\dfrac{a^{(1)}_4a^{(3)}_4(A^{(1)}_3+A^{(2)}_2+B^{(1)}_2)^2}{(A^{(1)}_3-a^{(1)}_4+A^{(2)}_2+B^{(1)}_2+b^{(1)}_4)^2},\\
 b^{(3)}_4=a^{(3)}_4-\dfrac{a^{(1)}_4a^{(3)}_4(a^{(1)}_4-b^{(1)}_4)}{(A^{(1)}_3-a^{(1)}_4+A^{(2)}_2+B^{(1)}_2+b^{(1)}_4)^2},\\
 B^{(1)}_3=1-A^{(1)}_2-A^{(3)}_3-\dfrac{a^{(1)}_4}{A^{(1)}_3+A^{(2)}_2+B^{(1)}_2+b^{(1)}_4},\\
 B^{(3)}_2=-A^{(1)}_2-A^{(3)}_3-\dfrac{a^{(1)}_4}{A^{(1)}_3+A^{(2)}_2+B^{(1)}_2+b^{(1)}_4},\\
 B^{(3)}_3=-A^{(2)}_3-A^{(3)}_2-\dfrac{a^{(1)}_4a^{(3)}_4(A^{(1)}_3+A^{(2)}_2+B^{(1)}_2)}{(A^{(1)}_3-a^{(1)}_4+A^{(2)}_2+B^{(1)}_2+b^{(1)}_4)^2},\quad
 d_2=d_3=-1.
\end{cases}
\end{equation}
{\rm(Type III-3-12):}\quad
the condition \eqref{eqn:typeIII_cond3} and the following condition with \eqref{eqn:typeIII_def_ab}{\rm :}
\begin{equation}\tag{III-3-12}\label{eqn:typeIII_cond3_12}
\begin{cases}
 a^{(1)}_4=\dfrac{{d_2}^2(1-d_3)(1-{d_3}^2)\Big(A^{(2)}_2(1+d_2)-B^{(1)}_2(1+d_3)\Big)}{(1-{d_2}^2{d_3}^2)^2},\\
 a^{(2)}_4=\dfrac{b^{(3)}_4(1-{d_2}^2{d_3}^2)\Big(A^{(2)}_2(1+d_2)-B^{(1)}_2(1+d_3)\Big)}{d_2(1-d_3)(1+d_3)^2},\\
 a^{(3)}_4=\dfrac{b^{(3)}_4(1-{d_2}^2)^2{d_3}^2}{d_2(1-{d_3}^2)(1-{d_2}^2{d_3}^2)},\\
 b^{(1)}_4=\dfrac{d_3(1-{d_2}^2)\Big(A^{(2)}_2(1+d_2)-B^{(1)}_2(1+d_3)\Big)}{(1+d_3)(1-{d_2}^2{d_3}^2)},\\
 A^{(1)}_3=-\dfrac{B^{(1)}_2(1-{d_3}^2)-A^{(2)}_2(d_2-d_3)}{d_2d_3(1+d_3)},\\
 B^{(1)}_3=\dfrac{(1-d_2)(1+d_2{d_3}^2)}{1-{d_2}^2{d_3}^2}-\dfrac{A^{(1)}_2(1-d_2d_3)}{d_2d_3(1+d_2)},\\
 A^{(3)}_2=-\dfrac{b^{(3)}_4(1-d_2)(1-{d_2}^2{d_3}^2)}{{d_2}^2d_3(1-{d_3}^2)}-\dfrac{A^{(2)}_3(1-d_2d_3)}{d_2d_3(1+d_2)},\\
 B^{(3)}_2=-\dfrac{d_2(1-{d_3}^2)}{1-{d_2}^2{d_3}^2}-\dfrac{A^{(1)}_2(1-d_2d_3)}{d_2d_3(1+d_2)},\\
 A^{(3)}_3=-\dfrac{d_2d_3(1-d_2)(1-{d_3}^2)}{1-{d_2}^2{d_3}^2}+\dfrac{A^{(1)}_2(1+d_3)}{1+d_2},\\
 B^{(3)}_3=\dfrac{A^{(2)}_3(1+d_3)}{1+d_2}+\dfrac{b^{(3)}_4(1-{d_2}^2{d_3}^2)}{d_2(1-{d_3}^2)}.
\end{cases}
\end{equation}
{\rm(Type III-3-13):}\quad
the condition \eqref{eqn:typeIII_cond3} and the following condition with \eqref{eqn:typeIII_def_ab}{\rm :}
\begin{equation}\tag{III-3-13}\label{eqn:typeIII_cond3_13}
\begin{cases}
 a^{(1)}_4=-\dfrac{(A^{(2)}_2-B^{(1)}_2){d_2}^2(1+{d_2}^2)}{(1-d_2)^2(1+{d_2}^4)},\quad
 a^{(2)}_4=-\dfrac{(A^{(2)}_2-B^{(1)}_2)b^{(3)}_4(1+{d_2}^4)}{(1-d_2)^2d_2},\\
 a^{(3)}_4=\dfrac{b^{(3)}_4d_2(1+{d_2}^2)}{(1-{d_2}^2)^2},\quad
 b^{(1)}_4=-\dfrac{(A^{(2)}_2-B^{(1)}_2)d_2(1+d_2)^2}{1+{d_2}^4},\\
 A^{(1)}_3=-\dfrac{B^{(1)}_2(1+{d_2}^2)-2A^{(2)}_2d_2}{(1-d_2){d_2}^2},\quad
 B^{(3)}_3=A^{(2)}_3-\dfrac{b^{(3)}_4(1+{d_2}^4)}{d_2(1+d_2)^2},\\
 B^{(1)}_3=\dfrac{1}{(1+d_2)^2}+\dfrac{d_2}{1+d_2}-\dfrac{A^{(1)}_2(1-d_2)}{{d_2}^2},\\
 A^{(3)}_2=\dfrac{b^{(3)}_4(1-{d_2}^8)}{{d_2}^3(1+d_2)(1-{d_2}^2)^2}-\dfrac{A^{(2)}_3d_2(1-{d_2})}{{d_2}^3},\quad
 d_3=d_2,\\
 B^{(3)}_2=-\dfrac{d_2}{(1+d_2)^2}-\dfrac{A^{(1)}_2(1-d_2)}{{d_2}^2},\quad
 A^{(3)}_3=A^{(1)}_2-\dfrac{{d_2}^2(1+{d_2}^2)}{(1-d_2)(1+d_2)^2}.
\end{cases}
\end{equation}
{\rm(Type III-3-14):}\quad
the condition \eqref{eqn:typeIII_cond3} and the following condition with \eqref{eqn:typeIII_def_ab}{\rm :}
\begin{equation}\tag{III-3-14}\label{eqn:typeIII_cond3_14}
\begin{cases}
 a^{(1)}_4=\dfrac{{d_2}^2(1+{d_2}^2)\Big(B^{(1)}_2(1-d_2)-A^{(2)}_2(1+d_2)\Big)}{(1-d_2)(1+d_2)^2(1+{d_2}^4)},\\
 a^{(2)}_4=\dfrac{b^{(3)}_4(1+{d_2}^4)\Big(B^{(1)}_2(1-d_2)-A^{(2)}_2(1+d_2)\Big)}{d_2(1-d_2)(1+d_2)^2},\\
 a^{(3)}_4=\dfrac{b^{(3)}_4d_2(1+{d_2}^2)}{(1-{d_2}^2)^2},\quad
 b^{(1)}_4=-\dfrac{d_2(1-d_2)\Big(B^{(1)}_2(1-d_2)-A^{(2)}_2(1+d_2)\Big)}{1+{d_2}^4},\\
 A^{(1)}_3=\dfrac{B^{(1)}_2(1+{d_2}^2)}{{d_2}^2(1+d_2)},\quad
 B^{(1)}_3=\dfrac{{d_2}^2(1-{d_2}^3)+A^{(1)}_2(1-{d_2}^4)}{{d_2}^2(1+d_2)^2(1-d_2)},\\
 A^{(3)}_2=\dfrac{(1+{d_2}^2)\Big(A^{(2)}_3d_2(1-{d_2}^2)-b^{(3)}_4(1+{d_2}^4)\Big)}{{d_2}^3(1-d_2)(1+d_2)^2},\\
 B^{(3)}_2=\dfrac{A^{(1)}_2(1+{d_2}^2)}{{d_2}^2(1+d_2)}-\dfrac{d_2}{(1+d_2)^2},\
 B^{(3)}_3=\dfrac{A^{(2)}_3d_2(1-{d_2}^2)-b^{(3)}_4(1+{d_2}^4)}{d_2(1+d_2)^2},\\
 A^{(3)}_3=\dfrac{{d_2}^2(1+{d_2}^2)}{(1-d_2)(1+d_2)^2}+\dfrac{A^{(1)}_2(1-d_2)}{1+d_2},\quad
 d_3=-d_2.
\end{cases}
\end{equation}
{\rm(Type III-3-15):}\quad
the condition \eqref{eqn:typeIII_cond3} and the following condition with \eqref{eqn:typeIII_def_ab}{\rm :}
\begin{equation}\tag{III-3-15}\label{eqn:typeIII_cond3_15}
\begin{cases}
 a^{(1)}_4=-\dfrac{{d_2}^2\Big(A^{(2)}_2(1+d_2)-B^{(1)}_2(1+d_3)\Big)}{(1-d_3)(1+{d_2}^2{d_3}^2)},\\
 a^{(2)}_4=-\dfrac{b^{(3)}_4(1+{d_2}^2{d_3}^2)\Big(A^{(2)}_2(1+d_2)-B^{(1)}_2(1+d_3)\Big)}{d_2(1-d_3)^2(1+d_3)},\\
 a^{(3)}_4=\dfrac{b^{(3)}_4{d_3}^2}{d_2(1-{d_3}^2)},\quad
 b^{(1)}_4=-\dfrac{d_3(1-{d_2}^2)\Big(A^{(2)}_2(1+d_2)-B^{(1)}_2(1+d_3)\Big)}{(1-d_3)(1+{d_2}^2{d_3}^2)},\\
 A^{(1)}_3=\dfrac{A^{(2)}_2(d_2+d_3)-B^{(1)}_2(1+{d_3}^2)}{d_2d_3(1-d_3)},\\
 B^{(1)}_3=\dfrac{1}{(1+d_2)(1+{d_2}^2{d_3}^2)}+\dfrac{d_2{d_3}^2(d_2+{d_3}^2-2d_2{d_3}^2)}{({d_2}^2-{d_3}^2)(1+
{d_2}^2{d_3}^2)}\\
 \hspace{2.5em}+\dfrac{{d_2}^4(1-{d_3}^2)}{(1+d_2)({d_2}^2-{d_3}^2)(1+{d_2}^2{d_3}^2)}-\dfrac{A^{(1)}_2(1-d_2d_3)}{d_2d_3(1+d_2)},\\
 A^{(3)}_2=-\dfrac{A^{(2)}_3(1-d_2d_3)}{d_2d_3(1+d_2)}+\dfrac{b^{(3)}_4(1+{d_2}^2)(1+{d_2}^2{d_3}^2)}{{d_2}^2d_3(1+d_2)(1-{d_3}^2)},\\
 B^{(3)}_2=-\dfrac{(d_2(1-d_2)}{(1+d_2)(1+{d_2}^2{d_3}^2)}+\dfrac{d_2{d_3}^2(1-d_2)(2d_2+{d_2}^2+{d_3}^2)}{({d_2}^2-{d_3}^2)(1+{d_2}^2{d_3}^2)}\\
 \hspace{2.5em}-\dfrac{A^{(1)}_2(1-d_2d_3)}{d_2d_3(1+d_2)},\\
 A^{(3)}_3=\dfrac{A^{(1)}_2(1+d_3)}{1+d_2}-\dfrac{d_2d_3(1+{d_2}^2+{d_3}^2-{d_2}^2{d_3}^2)}{(1+d_2)(1+{d_2}^2{d_3}^2)},\\
 B^{(3)}_3=\dfrac{A^{(2)}_3(1+d_3)}{1+d_2}-\dfrac{b^{(3)}_4(1-d_2)(1+{d_2}^2{d_3}^2)}{d_2(1+d_2)(1-{d_3}^2)},\quad
 {d_2}^2+{d_3}^2=-2.
\end{cases}
\end{equation}
{\rm(Type III-3-16):}\quad
the condition \eqref{eqn:typeIII_cond3} and the following condition with \eqref{eqn:typeIII_def_ab}{\rm :}
\begin{equation}\tag{III-3-16}\label{eqn:typeIII_cond3_16}
\begin{cases}
 a^{(1)}_4=-\dfrac{{d_2}^2(1+{d_3}^2)^2\Big(A^{(2)}_2(1+d_2)-B^{(1)}_2(1+d_3)\Big)}{(1-d_3)(1-{d_2}^4{d_3}^4)},\\
 a^{(2)}_4=-\dfrac{b^{(3)}_4(1+{d_2}^2{d_3}^2)\Big(A^{(2)}_2(1+d_2)-B^{(1)}_2(1+d_3)\Big)}{d_2(1-d_3)^2(1+d_3)},\\
 a^{(3)}_4=\dfrac{b^{(3)}_4{d_3}^2(1+{d_2}^2)^2}{d_2(1-{d_3}^2)(1-{d_2}^2{d_3}^2)},\\
 b^{(1)}_4=-\dfrac{d_3(1-{d_2}^2)\Big(A^{(2)}_2(1+d_2)-B^{(1)}_2(1+d_3)\Big)}{(1-d_3)(1+{d_2}^2{d_3}^2)},\\
 A^{(1)}_3=\dfrac{A^{(2)}_2(d_2+d_3)-B^{(1)}_2(1+{d_3}^2)}{d_2d_3(1-d_3)},\\
 B^{(1)}_3=-\dfrac{A^{(1)}_2(1-d_2d_3)}{d_2d_3(1+d_2)}+\dfrac{(1+{d_2}^2)(1-d_2{d_3}^2)}{(1+d_2)(1-{d_2}^2{d_3}^2)},\\
 A^{(3)}_2=-\dfrac{A^{(2)}_3(1-d_2d_3)}{d_2d_3(1+d_2)}+\dfrac{b^{(3)}_4(1+{d_2}^2)(1+{d_2}^2{d_3}^2)}{{d_2}^2d_3(1+d_2)(1-{d_3}^2)},\\
 B^{(3)}_2=-\dfrac{A^{(1)}_2(1-d_2d_3)}{d_2d_3(1+d_2)}-\dfrac{d_2(1-d_2)(1+{d_3}^2)}{(1+d_2)(1-{d_2}^2{d_3}^2)},\\
 A^{(3)}_3=-\dfrac{d_2d_3(1+{d_2}^2)(1+{d_3}^2)}{(1+d_2)(1-{d_2}^2{d_3}^2)}+\dfrac{A^{(1)}_2(1+d_3)}{1+d_2},\\
 B^{(3)}_3=\dfrac{A^{(2)}_3(1+d_3)}{1+d_2}-\dfrac{b^{(3)}_4(1-d_2)(1+{d_2}^2{d_3}^2)}{d_2(1+d_2)(1-{d_3}^2)}.
\end{cases}
\end{equation}
\section{Proofs of Lemmas \ref{lemma:classification_CACO_CO1CO2CO3_1}--\ref{lemma:classification_CACO_CO1CO2CO3_3}}
\label{section:proof_lemma_CACO_CO1CO2CO3}
In this section, we give the proofs of Lemmas \ref{lemma:classification_CACO_CO1CO2CO3_1}--\ref{lemma:classification_CACO_CO1CO2CO3_3}.

Let $\{Q_1,\dots,Q_9\}$ be a system of quad-equations satisfying the properties \ref{cond:CO1}, \ref{cond:CO2} and \ref{cond:CO3} in Condition \ref{condition:CO}. 
Then, from the property \ref{cond:CO3} and Lemma \ref{lemma:classification_Q123456789_1}, 
the quad-equations $\{Q_1,\dots,Q_9\}$ are given by \eqref{eqn:Q1_P1_CO3}--\eqref{eqn:Q9_P3_CO3},
where the set of polynomials $\{P_1,P_2,P_3\}$ is given by one of \eqref{eqn:P123_caseN3_1}--\eqref{eqn:P123_caseN1}.
Note that without saying separately
the conditions of parameters are assumed so as to $Q_i$, $i=1,\dots,9$, be quad equations.
Because of the CAO property of the octahedron with quad-equations $\{Q_7,Q_8,Q_9\}$, 
when $\{P_1,P_2,P_3\}$ is given by \eqref{eqn:P123_caseN3_1} or \eqref{eqn:P123_caseN3_3}, we obtain the condition \eqref{eqn:caseN3_CAO_B2}
and when $\{P_1,P_2,P_3\}$ is given by \eqref{eqn:P123_caseN1}, we obtain the condition \eqref{eqn:caseN1_CAO_B2}.

Let $u_i$, $i=1,2,3,4$, and $v_1$ be initial values. 
Then, there are two ways to obtain the value of $v_4=v_4(u_1,u_2,u_3,u_4,v_1)$ (see Remark \ref{remark:CACO_2D}).
In the following, we denote the value of $v_4$ given by the clockwise manner by $v_4^{(r)}=v_4^{(r)}(u_1,u_2,u_3,u_4,v_1)$
and the value of $v_4$ given by the anti-clockwise manner by $v_4^{(l)}=v_4^{(l)}(u_1,u_2,u_3,u_4,v_1)$.
In a similar manner with $v_4^{(r)}$ and $v_4^{(l)}$,
we also define 
$v_5^{(r)}=v_5^{(r)}(u_2,u_3,u_4,u_5,v_2)$,
$v_5^{(l)}=v_5^{(l)}(u_2,u_3,u_4,u_5,v_2)$,
$v_6^{(r)}=v_6^{(r)}(u_3,u_4,u_5,u_6,v_3)$ and
$v_6^{(l)}=v_6^{(l)}(u_3,u_4,u_5,u_6,v_3)$.
The CACO property is none other than 
\begin{equation}
 v_4^{(r)}=v_4^{(l)},\quad
 v_5^{(r)}=v_5^{(l)},\quad
 \text{or}\quad
 v_6^{(r)}=v_6^{(l)}.
\end{equation}

\subsection{Proof of Lemma \ref{lemma:classification_CACO_CO1CO2CO3_1}}
\label{subsection:caseN3_1}
In this section, we consider quad-equations \eqref{eqn:Q1_P1_CO3}--\eqref{eqn:Q9_P3_CO3} with \eqref{eqn:P123_caseN3_1}.
Note that in this setting, the transformations $s_{12}$, $s_{23}$, $s_{13}$ and $\iota$ defined in \eqref{eqn:Gco} keep invariant the system of quad-equations $\{Q_1,\dots,Q_9\}$ under the appropriate transformations of parameters (see Remark \ref{remark:symmetry_GCO}), e.g.
\begin{equation}
\begin{split}
 s_{12}:~&
 A^{(1)}_3\leftrightarrow A^{(1)}_4,\quad
 A^{(2)}_1\leftrightarrow A^{(3)}_1,\quad
 A^{(2)}_2\leftrightarrow A^{(3)}_2,\quad
 A^{(2)}_3\leftrightarrow A^{(3)}_4,\quad
 A^{(2)}_4\leftrightarrow A^{(3)}_3,\\
 &B^{(1)}_3\leftrightarrow B^{(1)}_4,\quad
 B^{(2)}_1\leftrightarrow B^{(3)}_1,\quad
 B^{(2)}_2\leftrightarrow B^{(3)}_2,\quad
 B^{(2)}_3\leftrightarrow B^{(3)}_4,\quad
 B^{(2)}_4\leftrightarrow B^{(3)}_3.
\end{split}
\end{equation}

In this setting, $v_4^{(r)}$ and $v_4^{(l)}$ are given by
\begin{equation}\label{eqn:case1_v4rl}
 v_4^{(r)}=-\dfrac{u_4(u_4h_1+h_2v_1)}{u_4h_3+h_4v_1},\quad
 v_4^{(l)}=-\dfrac{u_1(u_1h_4+h_2v_1)}{u_1h_3+h_1v_1},
\end{equation}
where
\begin{subequations}
\begin{align}
 &h_1=A^{(1)}_2(B^{(1)}_2u_1+B^{(1)}_3u_4)
 \left(A^{(2)}_3A^{(3)}_4(B^{(3)}_3u_1+B^{(3)}_1u_4)+A^{(2)}_2A^{(3)}_2(B^{(3)}_2u_1+B^{(3)}_4u_4)\right)\notag\\
 &\quad+A^{(1)}_3(B^{(1)}_4u_1+B^{(1)}_1u_4)
 \left(A^{(2)}_1A^{(3)}_4(B^{(3)}_3u_1+B^{(3)}_1u_4)+A^{(2)}_4A^{(3)}_2(B^{(3)}_2u_1+B^{(3)}_4u_4)\right),\\
 &h_2=A^{(1)}_4(B^{(1)}_2u_1+B^{(1)}_3u_4)
 \left(A^{(2)}_3A^{(3)}_4(B^{(3)}_3u_1+B^{(3)}_1u_4)+A^{(2)}_2A^{(3)}_2(B^{(3)}_2u_1+B^{(3)}_4u_4)\right)\notag\\
 &\quad+A^{(1)}_1(B^{(1)}_4u_1+B^{(1)}_1u_4)
 \left(A^{(2)}_1A^{(3)}_4(B^{(3)}_3u_1+B^{(3)}_1u_4)+A^{(2)}_4A^{(3)}_2(B^{(3)}_2u_1+B^{(3)}_4u_4)\right),\\
 &h_3=A^{(1)}_2(B^{(1)}_2u_1+B^{(1)}_3u_4)
 \left(A^{(2)}_3A^{(3)}_1(B^{(3)}_3u_1+B^{(3)}_1u_4)+A^{(2)}_2A^{(3)}_3(B^{(3)}_2u_1+B^{(3)}_4u_4)\right)\notag\\
 &\quad+A^{(1)}_3(B^{(1)}_4u_1+B^{(1)}_1u_4)
 \left(A^{(2)}_1A^{(3)}_1(B^{(3)}_3u_1+B^{(3)}_1u_4)+A^{(2)}_4A^{(3)}_3(B^{(3)}_2u_1+B^{(3)}_4u_4)\right),\\
 &h_4=A^{(1)}_4(B^{(1)}_2u_1+B^{(1)}_3u_4)
 \left(A^{(2)}_3A^{(3)}_1(B^{(3)}_3u_1+B^{(3)}_1u_4)+A^{(2)}_2A^{(3)}_3(B^{(3)}_2u_1+B^{(3)}_4u_4)\right)\notag\\
 &\quad+A^{(1)}_1(B^{(1)}_4u_1+B^{(1)}_1u_4)
 \left(A^{(2)}_1A^{(3)}_1(B^{(3)}_3u_1+B^{(3)}_1u_4)+A^{(2)}_4A^{(3)}_3(B^{(3)}_2u_1+B^{(3)}_4u_4)\right).
\end{align}
\end{subequations}
Note that $h_i$, $i=1,\dots,4$, do not depend on $v_1$.
The condition $v_4^{(r)}=v_4^{(l)}$ gives
\begin{equation}
 (h_4u_1-h_1u_4)\left(h_3u_1u_4+(h_4u_1+h_1u_4)v_1+h_2{v_1}^2\right)=0.
\end{equation}
Then, we obtain the following two cases:
\begin{subequations}
\begin{align}
 &h_4u_1-h_1u_4=0;\label{eqn:case1_h_cond1}\\
 &h_2=h_3=0,\quad h_4u_1+h_1u_4=0.\label{eqn:case1_h_cond2}
\end{align}
\end{subequations}
Under the condition \eqref{eqn:case1_h_cond1} we obtain the following square equation $K_1$:
\begin{equation}
 (h_3 u_1+h_1 v_1)v_4+u_1(h_1 u_4+h_2 v_1)=0,
\end{equation}
and under the condition \eqref{eqn:case1_h_cond2} we obtain the following square equation $K_1$:
\begin{equation}
 v_1v_4-u_1u_4=0.
\end{equation}
Because of the symmetry $\iota\circ s_{12}\circ s_{23}$ whose action on the variables $u_i$ and $v_j$ is given by
\begin{equation}
 \iota\circ s_{12}\circ s_{23}:u_i\mapsto u_{i+1},\quad v_j\mapsto v_{j+1},
\end{equation}
where $i,j\in\bbZ/(6\bbZ)$,
it is obvious that in a similar manner as the case $v_4^{(r)}=v_4^{(r)}$, 
under the necessary and sufficient condition of $v_5^{(r)}=v_5^{(r)}$ the square equation $K_2$ exists,
and under the necessary and sufficient condition of $v_6^{(r)}=v_6^{(r)}$ the square equation $K_3$ exists.
Moreover, since the necessary and sufficient condition of the CACO property is given by \eqref{eqn:case1_h_cond1} or \eqref{eqn:case1_h_cond2}, under the condition \eqref{eqn:case1_h_cond1} or \eqref{eqn:case1_h_cond2} the square property holds.

Next, we consider the conditions of the parameters $A^{(i)}_j$ and $B^{(i)}_j$ for the conditions \eqref{eqn:case1_h_cond1} and \eqref{eqn:case1_h_cond2}.
We can easily verify that the condition \eqref{eqn:case1_h_cond1} is equivalent to the condition \eqref{eqn:typeI_cond1}.
Therefore, we here consider the condition \eqref{eqn:case1_h_cond2}.
The condition \eqref{eqn:case1_h_cond2} is equivalent to
\begin{equation}\label{eqn:proof_case1_2_C}
 C^{(i)}=0,\quad i=1,\dots,10,
\end{equation}
where
\begin{align*}
 C^{(1)}=&B^{(1)}_1A^{(1)}_1(B^{(3)}_1A^{(3)}_4A^{(2)}_1+A^{(3)}_2B^{(3)}_4A^{(2)}_4)
 +B^{(1)}_3A^{(1)}_4(A^{(3)}_2B^{(3)}_4A^{(2)}_2+B^{(3)}_1A^{(3)}_4A^{(2)}_3),\\
 C^{(2)}=&A^{(1)}_1B^{(1)}_4(B^{(3)}_3A^{(3)}_4A^{(2)}_1+B^{(3)}_2A^{(3)}_2A^{(2)}_4)
 +B^{(1)}_2A^{(1)}_4(B^{(3)}_2A^{(3)}_2A^{(2)}_2+B^{(3)}_3A^{(3)}_4A^{(2)}_3),\\
 C^{(3)}=&B^{(1)}_1A^{(1)}_3(B^{(3)}_1A^{(3)}_1A^{(2)}_1+A^{(3)}_3B^{(3)}_4A^{(2)}_4)
 +A^{(1)}_2B^{(1)}_3(A^{(3)}_3B^{(3)}_4A^{(2)}_2+B^{(3)}_1A^{(3)}_1A^{(2)}_3),\\
 C^{(4)}=&A^{(1)}_3B^{(1)}_4(A^{(3)}_1B^{(3)}_3A^{(2)}_1+B^{(3)}_2A^{(3)}_3A^{(2)}_4)
 +B^{(1)}_2A^{(1)}_2(B^{(3)}_2A^{(3)}_3A^{(2)}_2+A^{(3)}_1B^{(3)}_3A^{(2)}_3),\\
 C^{(5)}=&B^{(1)}_1A^{(1)}_3(B^{(3)}_1A^{(3)}_4A^{(2)}_1+A^{(3)}_2B^{(3)}_4A^{(2)}_4)
 +A^{(1)}_2B^{(1)}_3(A^{(3)}_2B^{(3)}_4A^{(2)}_2+B^{(3)}_1A^{(3)}_4A^{(2)}_3),\\
 C^{(6)}=&A^{(1)}_1B^{(1)}_4(A^{(3)}_1B^{(3)}_3A^{(2)}_1+B^{(3)}_2A^{(3)}_3A^{(2)}_4)
 +B^{(1)}_2A^{(1)}_4(B^{(3)}_2A^{(3)}_3A^{(2)}_2+A^{(3)}_1B^{(3)}_3A^{(2)}_3),\\
 C^{(7)}=&(B^{(1)}_4B^{(3)}_1+B^{(1)}_1B^{(3)}_3)A^{(1)}_1A^{(3)}_4A^{(2)}_1
 +(B^{(1)}_3B^{(3)}_2+B^{(1)}_2B^{(3)}_4)A^{(1)}_4A^{(3)}_2A^{(2)}_2\notag\\
 &+(B^{(1)}_2B^{(3)}_1+B^{(1)}_3B^{(3)}_3)A^{(1)}_4A^{(3)}_4A^{(2)}_3
 +(B^{(1)}_1B^{(3)}_2+B^{(1)}_4B^{(3)}_4)A^{(1)}_1A^{(3)}_2A^{(2)}_4,\\
 C^{(8)}=&(B^{(1)}_4B^{(3)}_1+B^{(1)}_1B^{(3)}_3)A^{(1)}_3A^{(3)}_1A^{(2)}_1
 +(B^{(1)}_3B^{(3)}_2+B^{(1)}_2B^{(3)}_4)A^{(1)}_2A^{(3)}_3A^{(2)}_2\notag\\
 &+(B^{(1)}_2B^{(3)}_1+B^{(1)}_3B^{(3)}_3)A^{(1)}_2A^{(3)}_1A^{(2)}_3
 +(B^{(1)}_1B^{(3)}_2+B^{(1)}_4B^{(3)}_4)A^{(1)}_3A^{(3)}_3A^{(2)}_4,\\
 C^{(9)}=&(B^{(1)}_1A^{(1)}_1B^{(3)}_1A^{(3)}_1+A^{(1)}_3B^{(1)}_4B^{(3)}_1A^{(3)}_4+B^{(1)}_1A^{(1)}_3B^{(3)}_3A^{(3)}_4)A^{(2)}_1\notag\\
 &+(A^{(1)}_2B^{(1)}_3B^{(3)}_2A^{(3)}_2+B^{(1)}_2A^{(1)}_2A^{(3)}_2B^{(3)}_4+B^{(1)}_3A^{(1)}_4A^{(3)}_3B^{(3)}_4)A^{(2)}_2\notag\\
 &+(B^{(1)}_3A^{(1)}_4B^{(3)}_1A^{(3)}_1+B^{(1)}_2A^{(1)}_2B^{(3)}_1A^{(3)}_4+A^{(1)}_2B^{(1)}_3B^{(3)}_3A^{(3)}_4)A^{(2)}_3\notag\\
 &+(B^{(1)}_1A^{(1)}_3B^{(3)}_2A^{(3)}_2+A^{(1)}_3B^{(1)}_4A^{(3)}_2B^{(3)}_4+B^{(1)}_1A^{(1)}_1A^{(3)}_3B^{(3)}_4)A^{(2)}_4,\\
 C^{(10)}=&(A^{(1)}_1B^{(1)}_4B^{(3)}_1A^{(3)}_1+B^{(1)}_1A^{(1)}_1A^{(3)}_1B^{(3)}_3+A^{(1)}_3B^{(1)}_4B^{(3)}_3A^{(3)}_4)A^{(2)}_1\notag\\
 &+(B^{(1)}_2A^{(1)}_2B^{(3)}_2A^{(3)}_2+B^{(1)}_3A^{(1)}_4B^{(3)}_2A^{(3)}_3+B^{(1)}_2A^{(1)}_4A^{(3)}_3B^{(3)}_4)A^{(2)}_2\notag\\
 &+(B^{(1)}_2A^{(1)}_4B^{(3)}_1A^{(3)}_1+B^{(1)}_3A^{(1)}_4A^{(3)}_1B^{(3)}_3+B^{(1)}_2A^{(1)}_2B^{(3)}_3A^{(3)}_4)A^{(2)}_3\notag\\
 &+(A^{(1)}_3B^{(1)}_4B^{(3)}_2A^{(3)}_2+B^{(1)}_1A^{(1)}_1B^{(3)}_2A^{(3)}_3+A^{(1)}_1B^{(1)}_4A^{(3)}_3B^{(3)}_4)A^{(2)}_4.
\end{align*}

For a set of four equations $x_j=0$, $i=1,\dots,4$, which have the following forms:
\begin{equation}
 x_i=c_{i1}A^{(2)}_1+c_{i2}A^{(2)}_2+c_{i3}A^{(2)}_3+c_{i4}A^{(2)}_4,
\end{equation}
where $c_{ij}$ do not depend of $A^{(2)}_i$, $i=1,\dots,4$,
we introduce a bracket $[x_1,x_2,x_3,x_4]_{A^{(2)}}$ as the following:
\begin{equation}
 [x_1,x_2,x_3,x_4]_{A^{(2)}}
 =\mathrm{det}(c_{ij})_{i,j=1}^4.
\end{equation}
Note that since $(A^{(2)}_1,A^{(2)}_2,A^{(2)}_3,A^{(2)}_4)\neq(0,0,0,0)$, the following relation holds:
\begin{equation}
 [x_1,x_2,x_3,x_4]_{A^{(2)}}=0.
\end{equation}
The equations
\begin{subequations}
\begin{align}
 &[C^{(1)},C^{(3)},C^{(5)},C^{(9)}]_{A^{(2)}}=0,
 &&[C^{(1)},C^{(5)},C^{(7)},C^{(9)}]_{A^{(2)}}=0,\\
 &[C^{(2)},C^{(4)},C^{(6)},C^{(10)}]_{A^{(2)}}=0,
 &&[C^{(2)},C^{(6)},C^{(7)},C^{(10)}]_{A^{(2)}}=0,\\
 &[C^{(3)},C^{(5)},C^{(8)},C^{(9)}]_{A^{(2)}}=0,
 &&[C^{(4)},C^{(6)},C^{(8)},C^{(10)}]_{A^{(2)}}=0,
\end{align}
\end{subequations}
give the conditions
\begin{subequations}
\begin{align}
 &B^{(1)}_1B^{(1)}_3B^{(3)}_1B^{(3)}_4=0,\quad
 A^{(3)}_2A^{(3)}_4B^{(1)}_1B^{(1)}_3=0,\quad
 B^{(1)}_2B^{(1)}_4B^{(3)}_2B^{(3)}_3=0,\\
 &A^{(1)}_1A^{(1)}_4B^{(3)}_2B^{(3)}_3=0,\quad
 A^{(1)}_2A^{(1)}_3B^{(3)}_1B^{(3)}_4=0,\quad
 A^{(3)}_1A^{(3)}_3B^{(1)}_2B^{(1)}_4=0,
\end{align}
\end{subequations}
respectively.
Therefore, we obtain the following six cases:
\begin{subequations}
\begin{align}
 &A^{(1)}_1=B^{(1)}_1=A^{(1)}_2=B^{(1)}_2=0;\label{eqn:theo_proof_case1_21}\\
 &A^{(1)}_3=B^{(1)}_3=A^{(1)}_4=B^{(1)}_4=0;\label{eqn:theo_proof_case1_22}\\
 &A^{(3)}_1=B^{(3)}_1=A^{(3)}_2=B^{(3)}_2=0;\label{eqn:theo_proof_case1_23}\\
 &A^{(3)}_3=B^{(3)}_3=A^{(3)}_4=B^{(3)}_4=0;\label{eqn:theo_proof_case1_24}\\
 &A^{(1)}_3=B^{(1)}_3=A^{(3)}_3=B^{(3)}_3=0;\label{eqn:theo_proof_case1_25}\\
 &A^{(1)}_4=B^{(1)}_4=A^{(3)}_4=B^{(3)}_4=0.\label{eqn:theo_proof_case1_26}
\end{align}
\end{subequations}
Considering the condition \eqref{eqn:theo_proof_case1_21} and 
considering the condition \eqref{eqn:theo_proof_case1_23}
mean the same under the symmetry $s_{23}$ whose action on the parameters is given by
\begin{equation}
\begin{split}
 s_{23}:~&
 A^{(1)}_1\leftrightarrow A^{(3)}_1,\quad
 A^{(1)}_2\leftrightarrow A^{(3)}_2,\quad
 A^{(1)}_3\leftrightarrow A^{(3)}_4,\quad
 A^{(1)}_4\leftrightarrow A^{(3)}_3,\quad
 A^{(2)}_3\leftrightarrow A^{(2)}_4,\\
 &B^{(1)}_1\leftrightarrow B^{(3)}_1,\quad
 B^{(1)}_2\leftrightarrow B^{(3)}_2,\quad
 B^{(1)}_3\leftrightarrow B^{(3)}_4,\quad
 B^{(1)}_4\leftrightarrow B^{(3)}_3,\quad
 B^{(2)}_3\leftrightarrow B^{(2)}_4.
\end{split}
\end{equation}
Similarly, considering the condition \eqref{eqn:theo_proof_case1_22} and 
considering the condition \eqref{eqn:theo_proof_case1_24}
mean the same under the symmetry $s_{23}$,
and 
considering the condition \eqref{eqn:theo_proof_case1_25} and 
considering the condition \eqref{eqn:theo_proof_case1_26}
mean the same under the symmetry $\iota$ whose action on the parameters is given by
\begin{equation}
\begin{split}
\iota:~&
 A^{(1)}_1\leftrightarrow A^{(1)}_2,\quad
 A^{(1)}_3\leftrightarrow A^{(1)}_4,\quad
 A^{(2)}_1\leftrightarrow A^{(2)}_2,\quad
 A^{(2)}_3\leftrightarrow A^{(2)}_4,\quad
 A^{(3)}_1\leftrightarrow A^{(3)}_2,\\
 &A^{(3)}_3\leftrightarrow A^{(3)}_4,\quad
 B^{(1)}_1\leftrightarrow B^{(1)}_2,\quad
 B^{(1)}_3\leftrightarrow B^{(1)}_4,\quad
 B^{(2)}_1\leftrightarrow B^{(2)}_2,\quad
 B^{(2)}_3\leftrightarrow B^{(2)}_4,\\\
 &B^{(3)}_1\leftrightarrow B^{(3)}_2,\quad
 B^{(3)}_3\leftrightarrow B^{(3)}_4.
\end{split}
\end{equation}
Therefore, it is sufficient to consider the conditions \eqref{eqn:theo_proof_case1_21}, \eqref{eqn:theo_proof_case1_22} and \eqref{eqn:theo_proof_case1_25}.
Below, we show that the condition \eqref{eqn:theo_proof_case1_25} gives the condition \eqref{eqn:typeI_cond2} and the others are inadmissible cases.

Firstly, we assume \eqref{eqn:theo_proof_case1_21}.
Then, the equations
\begin{equation}
 [C^{(1)},C^{(7)},C^{(9)},C^{(10)}]_{A^{(2)}}=0,\quad
 [C^{(4)},C^{(8)},C^{(9)},C^{(10)}]_{A^{(2)}}=0,
\end{equation}
give the conditions
\begin{equation}
 A^{(1)}_3 B^{(1)}_3 A^{(1)}_4 B^{(1)}_4 A^{(3)}_2 A^{(3)}_4=0,\quad
 A^{(1)}_3 B^{(1)}_3 A^{(1)}_4 B^{(1)}_4 A^{(3)}_1 A^{(3)}_3=0,
\end{equation}
respectively.
Hence, we obtain the following two cases:
\begin{subequations}
\begin{align}
 &A^{(3)}_1=B^{(3)}_1=A^{(3)}_2=B^{(3)}_2=0;\\
 &A^{(3)}_3=B^{(3)}_3=A^{(3)}_4=B^{(3)}_4=0.
\end{align}
\end{subequations}
If $A^{(3)}_1=B^{(3)}_1=A^{(3)}_2=B^{(3)}_2=0$, then we obtain
\begin{equation}
 C^{(7)}=A^{(1)}_4 A^{(2)}_3 A^{(3)}_4 B^{(1)}_3 B^{(3)}_3,\quad
 C^{(9)}=A^{(1)}_4 A^{(2)}_2 A^{(3)}_3 B^{(1)}_3 B^{(3)}_4,
\end{equation}
and if $A^{(3)}_3=B^{(3)}_3=A^{(3)}_4=B^{(3)}_4=0$, then we obtain
\begin{equation}
 C^{(7)}=A^{(1)}_4 A^{(2)}_2 A^{(3)}_2 B^{(1)}_3 B^{(3)}_2,\quad
 C^{(9)}=A^{(1)}_4 A^{(2)}_3 A^{(3)}_1 B^{(1)}_3 B^{(3)}_1.
\end{equation}
In the both cases $C^{(7)}$ and $C^{(9)}$ cannot be simultaneously zero.
Therefore, the case \eqref{eqn:theo_proof_case1_21} is inadequate.

Secondly, we assume \eqref{eqn:theo_proof_case1_22}.
From the equations
\begin{equation}
 [C^{(1)},C^{(7)},C^{(9)},C^{(10)}]_{A^{(2)}}=0,\quad
 [C^{(4)},C^{(8)},C^{(9)},C^{(10)}]_{A^{(2)}}=0,
\end{equation}
we obtain the conditions
\begin{equation}\label{eqn:no_case1_22}
 A^{(1)}_1B^{(1)}_1A^{(1)}_2B^{(1)}_2A^{(3)}_2A^{(3)}_4=0,\quad
 A^{(1)}_1B^{(1)}_1A^{(1)}_2B^{(1)}_2A^{(3)}_1A^{(3)}_3=0.
\end{equation}
In a similar manner as the case \eqref{eqn:theo_proof_case1_21}, we can show that 
under the condition \eqref{eqn:no_case1_22}, $C^{(7)}$ and $C^{(9)}$ cannot be simultaneously zero.
Therefore, the cases \eqref{eqn:theo_proof_case1_22} is inadequate.

Lastly, we consider the case \eqref{eqn:theo_proof_case1_25}.
From $C^{(8)}=0$, we obtain the following condition:
\begin{equation}\label{eqn:theo_proof_case1_25_1}
 A^{(2)}_3=B^{(2)}_3=0.
\end{equation}
Then, from the condition \eqref{eqn:caseN3_CAO_B2}, we have
\begin{equation}\label{eqn:theo_proof_case1_25_2}
 B^{(2)}_1=B^{(1)}_4 B^{(3)}_1,\quad
 B^{(2)}_2=B^{(1)}_2 B^{(3)}_4,\quad
 B^{(3)}_2=\dfrac{B^{(1)}_4 B^{(3)}_4}{B^{(1)}_1},\quad
 B^{(2)}_4=B^{(1)}_2 B^{(3)}_1.
\end{equation}
From the equations
\begin{subequations}
\begin{align}
 &C^{(1)}=A^{(1)}_1 B^{(1)}_1 (A^{(2)}_1 A^{(3)}_4 B^{(3)}_1 + A^{(2)}_4 A^{(3)}_2 B^{(3)}_4)=0,\\
 &C^{(2)}=A^{(3)}_2 B^{(3)}_2 (A^{(1)}_4 A^{(2)}_2 B^{(1)}_2 + A^{(1)}_1 A^{(2)}_4 B^{(1)}_4)=0,\\
 &C^{(9)}=A^{(1)}_1 A^{(2)}_1 A^{(3)}_1 B^{(1)}_1 B^{(3)}_1 + A^{(1)}_2 A^{(2)}_2 A^{(3)}_2 B^{(1)}_2 B^{(3)}_4=0,
\end{align}
\end{subequations}
we obtain the conditiuons
\begin{equation}\label{eqn:theo_proof_case1_25_3}
 A^{(1)}_2=-\dfrac{A^{(1)}_4 A^{(3)}_1 B^{(1)}_1}{A^{(3)}_4 B^{(1)}_4},\quad 
 B^{(1)}_2=-\dfrac{A^{(1)}_1 A^{(2)}_4 B^{(1)}_4}{A^{(1)}_4 A^{(2)}_2},\quad
 A^{(3)}_2=-\dfrac{A^{(2)}_1 A^{(3)}_4 B^{(3)}_1}{A^{(2)}_4 B^{(3)}_4}.
\end{equation}
We can easily verify that under the conditions \eqref{eqn:theo_proof_case1_25}, \eqref{eqn:theo_proof_case1_25_1}, \eqref{eqn:theo_proof_case1_25_2} and \eqref{eqn:theo_proof_case1_25_3}, 
the condition \eqref{eqn:proof_case1_2_C} holds,
and these conditions with \eqref{eqn:caseN3_CAO_B2} are equivalent to the condition \eqref{eqn:typeI_cond2}.
Therefore, we have completed the proof.
\subsection{Proof of Lemma \ref{lemma:classification_CACO_CO1CO2CO3_2}}
\label{subsection:caseN3_3}
In this section, we consider quad-equations \eqref{eqn:Q1_P1_CO3}--\eqref{eqn:Q9_P3_CO3} with \eqref{eqn:P123_caseN3_3}.
Note that in this setting, the transformations $s_{23}$ and $\iota$ defined in \eqref{eqn:Gco} keep invariant the system of quad-equations $\{Q_1,\dots,Q_9\}$ under the appropriate transformations of parameters (see Remark \ref{remark:symmetry_GCO}).

We obtain
\begin{equation}
 v_4^{(r)}=\dfrac{F_1}{F_2},\quad
 v_4^{(l)}=\dfrac{G_1}{G_2},
\end{equation}
where
\begin{align*}
 F_1=&-(C^{(11)}+C^{(12)}u_4v_1){u_4}^2
 -(C^{(13)}+C^{(14)}u_4v_1){u_1}^2
 -(C^{(15)}+C^{(16)}u_4v_1)u_1u_4,\\
 F_2=&u_4\Big((C^{(21)}+C^{(22)}u_4v_1){u_4}^2+(C^{(23)}+C^{(24)}u_4v_1){u_1}^2+(C^{(25)}+C^{(26)}u_4v_1)u_1u_4\Big),\\
 G_1=&-(C^{(31)}+C^{(32)}u_1v_1){u_4}^2-(C^{(33)}+C^{(34)}u_1v_1){u_1}^2-(C^{(35)}+C^{(36)}u_1v_1)u_1u_4,\\
 G_2=&u_1\Big((C^{(41)}+C^{(42)}u_1v_1){u_4}^2+(C^{(43)}+C^{(44)}u_1v_1){u_1}^2+(C^{(45)}+C^{(46)}u_1v_1)u_1u_4\Big),
\end{align*}
where the parameters $C^{(11)}$, \dots, $C^{(16)}$, \dots, $C^{(41)}$, \dots, $C^{(46)}$ are given by \eqref{eqn:typeII_defC11}--\eqref{eqn:typeII_defC46}.

\begin{remark}\label{remark:case3_pole_zero_v4}
Since the symmetry $\iota\circ s_{23}$ acts on the variables $u_i$ and $v_j$, $i,j=1,\dots,6$, as the following:
\begin{equation}
 \iota\circ s_{23}:~u_1\leftrightarrow u_4,\quad
 u_2\leftrightarrow \frac{1}{u_3},\quad
 u_5\leftrightarrow \frac{1}{u_6},\quad
 v_2\leftrightarrow \frac{1}{v_6},\quad
 v_3\leftrightarrow \frac{1}{v_5},
\end{equation}
if there exists a condition of parameters such that 
the CACO property holds and $v_4$ has a pole at $u_1=0$, then
there exists a condition of parameters such that 
the CACO property holds and $v_4$ has a pole at $u_4=0$,
and their orders are same.
The same is true in the zeros of $v_4$ at $u_1=0$ and $u_4=0$.
\end{remark}

We show that $u_1=0$ and $u_4=0$ are not a pole nor a zero of $v_4$, and $v_1=0$ is not a pole of $v_4$ by the following four lemmas.

\begin{lemma}\label{lemma:case3_u1u4_pole1}
There exist no condition of parameters such that 
the CACO property holds and $v_4$ has a simple pole at $u_1=0$ or $u_4=0$.
\end{lemma}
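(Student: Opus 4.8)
The plan is to analyze the explicit rational expressions for $v_4^{(r)}$ and $v_4^{(l)}$ and determine when a simple pole at $u_1=0$ is compatible with the CACO equality $v_4^{(r)}=v_4^{(l)}$. First I would note that, by Remark \ref{remark:case3_pole_zero_v4} and the symmetry $\iota\circ s_{23}$, it suffices to treat the case $u_1=0$; the case $u_4=0$ then follows automatically with the same order of pole. So I would set $u_1=0$ in $v_4^{(r)}=F_1/F_2$ and $v_4^{(l)}=G_1/G_2$ and examine the leading behaviour in $u_1$. From the formulas above, at $u_1=0$ one has $F_1\big|_{u_1=0}=-(C^{(11)}+C^{(12)}u_4v_1)u_4^2$, $F_2\big|_{u_1=0}=u_4\big((C^{(21)}+C^{(22)}u_4v_1)u_4^2\big)$, and $G_1=O(u_1)$, $G_2=O(u_1)$ near $u_1=0$, so $v_4^{(r)}$ is generically finite at $u_1=0$ while $v_4^{(l)}$ involves a ratio of quantities both vanishing at $u_1=0$.

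The key step is therefore to Taylor-expand $G_1$ and $G_2$ in $u_1$ about $u_1=0$: writing $G_1 = u_1 g_1 + O(u_1^2)$ and $G_2 = u_1 g_2 + O(u_1^2)$ with $g_1 = -(C^{(35)} + C^{(36)}u_1v_1)u_4 + \dots$ evaluated appropriately, $g_2 = (C^{(41)}+C^{(42)}u_1v_1)u_4^2 + \dots$ — more precisely I would extract the exact coefficients of $u_1^0$ and $u_1^1$ in $G_1$ and $G_2$. Then $v_4^{(l)}$ has a simple pole at $u_1=0$ precisely when the coefficient of $u_1^0$ in $G_2$ vanishes but that in $G_1$ does not (after dividing out the overall factor $u_1$ in $G_2$, one sees the genuine leading term). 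Imposing $v_4^{(r)} = v_4^{(l)}$ as an identity in $u_1$ forces the pole orders on both sides to match; since $v_4^{(r)}$ is finite at $u_1=0$ (its numerator and denominator both start at order $u_1^0$ with the denominator not identically vanishing there, given $C^{(21)},\dots\neq 0$ as quad-equation conditions), we would get a contradiction unless the putative pole of $v_4^{(l)}$ is spurious, i.e. cancels.

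Concretely I would argue as follows. Suppose for contradiction that $v_4$ has a simple pole at $u_1=0$. By the above, this pole can only come from $v_4^{(l)}=G_1/G_2$, so the order-$u_1^1$ coefficient of $G_2$ (the true leading coefficient, after the trivial $u_1$ factor) must vanish identically in $u_4,v_1$, while the corresponding coefficient of $G_1$ must not. The vanishing of the leading coefficient of $G_2$ gives $C^{(41)}=C^{(43)}=C^{(45)}=0$ (reading off coefficients of $u_4^2$, $u_1^2\to$ the relevant monomials, and $u_1u_4$ in the $u_1^1$-part), or some subset forced by the structure; combined with the CACO-equality applied at generic $u_4,v_1$ this over-determines the $C^{(ij)}$ and $A^{(i)}_j,B^{(i)}_j$. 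I would then substitute these back into the definitions \eqref{eqn:typeII_defC11}--\eqref{eqn:typeII_defC46} and show that they are incompatible with the standing non-degeneracy hypotheses (the parameters must be nonzero for $Q_1,\dots,Q_9$ to be genuine quad-equations, as recorded at the start of Appendix \ref{section:proof_lemma_CACO_CO1CO2CO3}), yielding the contradiction.

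\textbf{Main obstacle.} The hard part will be the bookkeeping: the $C^{(ij)}$ are degree-five polynomials in the eight-plus parameters $A^{(i)}_j, B^{(i)}_j$, so extracting the $u_1$-expansion coefficients of $G_1, G_2$ and then showing that the vanishing conditions plus $v_4^{(r)}=v_4^{(l)}$ cannot be satisfied without violating a non-vanishing requirement is a delicate elimination. I expect to organize this by first using the CACO identity to reduce the number of independent $C^{(ij)}$ (e.g. relations like $C^{(3k)}=\pm C^{(1k)}$ familiar from conditions \eqref{eqn:typeII_cond1}--\eqref{eqn:typeII_cond2}), and only afterward imposing the pole condition; the symmetry reductions via $s_{23}$ and $\iota$ should cut the casework roughly in half. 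If a direct contradiction is not immediate, the fallback is to show that any parameter choice realizing the simple pole actually forces $F_2$ or $G_2$ to vanish identically, so that $v_4$ is not well-defined — again contradicting the CACO property.
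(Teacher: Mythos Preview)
Your reduction to $u_1=0$ via the symmetry $\iota\circ s_{23}$ is correct, but the core argument that follows has two related errors.

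First, you assert that $v_4^{(r)}$ is finite at $u_1=0$ because ``$C^{(21)},\dots\neq 0$ as quad-equation conditions''. This is false: the $C^{(ij)}$ are quintic combinations of the $A^{(i)}_j,B^{(i)}_j$, and the irreducibility of $Q_1,\dots,Q_9$ does not prevent individual $C^{(ij)}$ from vanishing. Under CACO one has $v_4^{(r)}=v_4^{(l)}$ identically, so if $v_4$ has a simple pole at $u_1=0$ then \emph{both} expressions do. The correct first step (and the paper's) is to impose the pole on $v_4^{(r)}$: since $F_1=-u_4^{2}(C^{(11)}+u_4v_1C^{(12)})+O(u_1)$ and $F_2=u_4^{3}(C^{(21)}+u_4v_1C^{(22)})+O(u_1)$, a pole at $u_1=0$ forces $C^{(21)}=C^{(22)}=0$. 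From these two equations one extracts, via suitable linear combinations, the dichotomy $A^{(1)}_1=B^{(1)}_1=0$ or $A^{(1)}_3=B^{(1)}_3=0$, and each branch is then driven to a contradiction by further comparison of the $u_1$-expansions of $F_iG_j$.

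Second, your analysis of $G_2$ is inverted. The expression for $G_2$ carries an overall factor $u_1$, with leading term $u_1\,u_4^{2}C^{(41)}$; likewise $G_1=-u_4^{2}C^{(31)}+O(u_1)$. Thus $v_4^{(l)}$ has a \emph{simple} pole at $u_1=0$ precisely when $C^{(41)}\neq 0$ and $C^{(31)}\neq 0$. Requiring $C^{(41)}=C^{(43)}=C^{(45)}=0$, as you propose, kills the $u_1^{1}$-term of $G_2$ and would produce a pole of order at least two, not one. So the vanishing conditions you plan to impose are the wrong ones, and the subsequent elimination would not address the actual simple-pole case.
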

\begin{proof}
Because of Remark \ref{remark:case3_pole_zero_v4}, it is sufficient to prove that 
there exist no condition of parameters such that the CACO property holds and $v_4$ has a simple pole at $u_1=0$.

Assume that $u_1=0$ is a simple pole of $v_4$.
Since the following hold:
\begin{equation*}
 F_1=-{u_4}^2(C^{(11)}+u_4v_1C^{(12)})+O(u_1),\quad
 F_2={u_4}^3(C^{(21)}+u_4v_1C^{(22)})+O(u_1),
\end{equation*}
as $u_1\to0$, we obtain 
\begin{equation}
 C^{(21)}=C^{(22)}=0.
\end{equation} 
Then, the following relations hold:
\begin{subequations}
\begin{align}
 0&=(A^{(1)}_1A^{(2)}_2B^{(1)}_1+A^{(1)}_4A^{(2)}_4B^{(1)}_3)C^{(21)}-(A^{(1)}_3A^{(2)}_2B^{(1)}_1+A^{(1)}_2A^{(2)}_4B^{(1)}_3)C^{(22)}\notag\\
 &=(A^{(1)}_1A^{(1)}_2-A^{(1)}_3A^{(1)}_4)(A^{(2)}_1A^{(2)}_2-A^{(2)}_3A^{(2)}_4)B^{(1)}_1B^{(1)}_3A^{(3)}_3B^{(3)}_4,\\
 0&=(A^{(1)}_1A^{(2)}_3B^{(1)}_1+A^{(1)}_4A^{(2)}_1B^{(1)}_3)C^{(21)}-(A^{(1)}_3A^{(2)}_3B^{(1)}_1+A^{(1)}_2A^{(2)}_1B^{(1)}_3)C^{(22)}\notag\\
 &=-(A^{(1)}_1A^{(1)}_2-A^{(1)}_3A^{(1)}_4)(A^{(2)}_1A^{(2)}_2-A^{(2)}_3A^{(2)}_4)B^{(1)}_1B^{(1)}_3A^{(3)}_1B^{(3)}_1.
\end{align}
\end{subequations}
Therefore, we obtain the following two cases:
\begin{subequations}
\begin{align}
 &A^{(1)}_1=B^{(1)}_1=0;\label{eqn:caseN3_3_spole_con_1}\\
 &A^{(1)}_3=B^{(1)}_3=0.\label{eqn:caseN3_3_spole_con_2}
\end{align}
\end{subequations}
Below, we show that the both cases are inadmissible.

Firstly, we consider the case \eqref{eqn:caseN3_3_spole_con_1}. 
Then, we obtain
\begin{equation}
 C^{(41)}=C^{(42)}=0.
\end{equation}
Moreover, since
\begin{equation*}
 G_1= -{u_4}^2C^{(31)}+O(u_1),\quad
 G_2={u_1}^2u_4C^{(45)}+O({u_1}^3),
\end{equation*}
as $u_1\to0$ and $v_4^{(l)}$ has a simple pole at $u_1=0$,
we obtain
\begin{equation}
 C^{(31)}=0.
\end{equation}

We here prove $C^{(26)}\neq0$.
Assume $C^{(26)}=0$. 
Then, from $C^{(22)}=0$ and $C^{(26)}=0$, we obtain
\begin{align*}
 &0=(B^{(1)}_2B^{(3)}_1+B^{(1)}_3B^{(3)}_3)C^{(22)}-B^{(1)}_3B^{(3)}_1C^{(26)}
 =-(B^{(3)}_1B^{(3)}_2-B^{(3)}_3B^{(3)}_4)A^{(1)}_4A^{(2)}_1A^{(3)}_3{B^{(1)}_3}^2,\\
 &0=(B^{(1)}_3B^{(3)}_2+B^{(1)}_2B^{(3)}_4)C^{(22)}-B^{(1)}_3B^{(3)}_4C^{(26)}
 =(B^{(3)}_1B^{(3)}_2-B^{(3)}_3B^{(3)}_4)A^{(1)}_4A^{(2)}_4A^{(3)}_1{B^{(1)}_3}^2,
\end{align*}
which give the following two cases:
\begin{subequations}
\begin{align}
 &A^{(2)}_1=B^{(2)}_1=A^{(3)}_1=B^{(3)}_1=0;\label{eqn:proof_case3_11}\\
 &A^{(2)}_4=B^{(2)}_4=A^{(3)}_3=B^{(3)}_3=0.\label{eqn:proof_case3_12}
\end{align}
\end{subequations}
In the case \eqref{eqn:proof_case3_11}, we obtain
\begin{equation}
 C^{(11)}=C^{(12)}=C^{(24)}=C^{(32)}=C^{(46)}=0,
\end{equation}
which gives
\begin{equation}
v_4^{(r)}=-\dfrac{u_1C^{(13)}+u_1u_4v_1C^{(14)}+u_4C^{(15)}+{u_4}^2v_1C^{(16)}}{u_4(u_1C^{(23)}+u_4C^{(25)})}.
\end{equation}
Because of 
\begin{equation}
 C^{(25)}=A^{(1)}_3B^{(1)}_4A^{(2)}_3A^{(3)}_3B^{(3)}_4\neq0,
\end{equation}
the case \eqref{eqn:proof_case3_11} is inconsistent with the assumption that $u_1=0$ is a simple pole of $v_4$.
Moreover, in the case \eqref{eqn:proof_case3_12}, 
from $C^{(31)}=0$ we obtain
\begin{equation}
 A^{(3)}_4=B^{(3)}_4=0,
\end{equation}
which gives
\begin{equation}
 C^{(11)}=C^{(12)}=C^{(23)}=C^{(24)}=C^{(32)}=C^{(33)}=C^{(43)}=C^{(46)}=0.
\end{equation}
Then, we have
\begin{equation}
v_4^{(r)}=-\dfrac{u_1C^{(13)}+u_1u_4v_1C^{(14)}+u_4C^{(15)}+{u_4}^2v_1C^{(16)}}{{u_4}^2C^{(25)}},
\end{equation}
which is inconsistent with the assumption that $u_1=0$ is a simple pole of $v_4$.
Therefore, $C^{(26)}\neq0$ holds.

The following hold:
\begin{align*}
 &F_1 G_2=-{u_1}^2{u_4}^3C^{(45)}(C^{(11)}+u_4v_1C^{(12)})+O({u_1}^3),\\
 &F_2 G_1=-{u_1}^2{u_4}^3(C^{(25)}+u_4v_1C^{(26)})(C^{(35)}+u_4v_1C^{(32)})+O({u_1}^3).
\end{align*}
as $u_1\to 0$.
Since $C^{(26)}\neq0$, from $F_1 G_2=F_2 G_1$ we obtain
\begin{equation}
 C^{(32)}=0.
\end{equation}
Then, from 
\begin{subequations}
\begin{align}
 0&=A^{(3)}_1C^{(31)}-A^{(3)}_4C^{(32)}
 =(A^{(3)}_1A^{(3)}_2-A^{(3)}_3A^{(3)}_4)A^{(1)}_3B^{(1)}_3A^{(2)}_4B^{(3)}_1,\\
 0&=A^{(3)}_2C^{(32)}-A^{(3)}_3C^{(31)}
 =(A^{(3)}_1A^{(3)}_2-A^{(3)}_3A^{(3)}_4)A^{(1)}_3B^{(1)}_3A^{(2)}_1B^{(3)}_4,
\end{align}
\end{subequations}
we obtain 
\begin{equation}
 A^{(2)}_1=B^{(2)}_1=A^{(3)}_1=B^{(3)}_1=0
\quad\text{or}\quad
 A^{(2)}_4=B^{(2)}_4=A^{(3)}_4=B^{(3)}_4=0.
\end{equation}
Because of $C^{(26)}\neq0$, the condition $A^{(2)}_1=B^{(2)}_1=A^{(3)}_1=B^{(3)}_1=0$ is ineligible 
and then the condition $A^{(2)}_4=B^{(2)}_4=A^{(3)}_4=B^{(3)}_4=0$ holds.
Then, we obtain $C^{(11)}=C^{(12)}=0$.
Therefore, we obtain 
\begin{equation}
 v_4^{(r)}
=-\dfrac{u_1C^{(13)}+u_1u_4v_1C^{(14)}+u_4C^{(15)}+{u_4}^2v_1C^{(16)}}
{u_4(u_1C^{(23)}+u_1u_4v_1C^{(24)}+u_4C^{(25)}+{u_4}^2v_1C^{(26)})},
\end{equation}
which is inconsistent with the assumption that $u_1=0$ is a simple pole of $v_4$ because of $C^{(26)}\neq0$.
Therefore, the case \eqref{eqn:caseN3_3_spole_con_1} is inadequate.

We next consider the case \eqref{eqn:caseN3_3_spole_con_2}.
Then, we obtain
\begin{equation}
 C^{(11)}=0.
\end{equation}
Since the right-hand sides of the following equations cannot be simultaneously zero:
\begin{align*}
 &A^{(1)}_2B^{(1)}_2A^{(2)}_4C^{(22)}-A^{(1)}_1B^{(1)}_1A^{(2)}_2C^{(25)}
 =-(A^{(2)}_1A^{(2)}_2-A^{(2)}_3A^{(2)}_4)A^{(1)}_1B^{(1)}_1A^{(1)}_2B^{(1)}_2A^{(3)}_3B^{(3)}_4,\\
 &A^{(1)}_2B^{(1)}_2A^{(2)}_1C^{(22)}-A^{(1)}_1B^{(1)}_1A^{(2)}_3C^{(25)}
 =(A^{(2)}_1A^{(2)}_2-A^{(2)}_3A^{(2)}_4)A^{(1)}_1B^{(1)}_1A^{(1)}_2B^{(1)}_2A^{(3)}_1B^{(3)}_1,
\end{align*}
and $C^{(22)}=0$, we obtain
\begin{equation}
 C^{(25)}\neq0.
\end{equation}
Since the following relation holds:
\begin{equation*}
 F_1G_2-F_2G_1
=\left({u_4}^4C^{(25)}C^{(31)}+{u_4}^5v_1(C^{(26)}C^{(31)}-C^{(12)}C^{(41)})\right)u_1+O({u_1}^2),
\end{equation*}
as $u_1\to0$, we obtain 
\begin{equation}
 C^{(31)}=0.
\end{equation}
Moreover, since 
\begin{equation*}
 G_1=-u_1u_4(u_4v_1C^{(32)}+C^{(35)})+O({u_1}^2),\quad
 G_2=u_1{u_4}^2C^{(41)}+O({u_1}^2),
\end{equation*}
as $u_1\to0$ and $u_1=0$ is a simple pole of $v_4^{(l)}$, we obtain
\begin{equation}
 C^{(41)}=0.
\end{equation}
Then, from
\begin{align*}
 F_1G_2-F_2G_1
 =&\Big({u_4}^3C^{(25)}C^{(35)}+{u_4}^4v_1(C^{(25)}C^{(32)}+C^{(26)}C^{(35)}-C^{(12)}C^{(45)})\notag\\
 &+{u_4}^5{v_1}^2(C^{(26)}C^{(32)}-C^{(12)}C^{(42)})\Big){u_1}^2+O({u_1}^3),
\end{align*}
as $u_1\to0$, we obtain
\begin{equation}
 C^{(35)}=0.
\end{equation}
From 
\begin{align*}
 &0=(B^{(1)}_1B^{(3)}_2+B^{(1)}_4B^{(3)}_4)C^{(31)}-B^{(1)}_1B^{(3)}_4C^{(35)}
 =(B^{(3)}_1B^{(3)}_2-B^{(3)}_3B^{(3)}_4)A^{(1)}_2A^{(2)}_2A^{(3)}_2{B^{(1)}_1}^2,\\
 &0=(B^{(1)}_1B^{(3)}_3+B^{(1)}_4B^{(3)}_1)C^{(31)}-B^{(1)}_1B^{(3)}_1C^{(35)}
 =(B^{(3)}_3B^{(3)}_4-B^{(3)}_1B^{(3)}_2)A^{(1)}_2A^{(2)}_3A^{(3)}_4{B^{(1)}_1}^2,
\end{align*}
we obtain 
\begin{equation}
 A^{(2)}_2=B^{(2)}_2=A^{(3)}_4=B^{(3)}_4=0
\quad\text{or}\quad
 A^{(2)}_3=B^{(2)}_3=A^{(3)}_2=B^{(3)}_2=0.
\end{equation}
Assume 
\begin{equation}
 A^{(2)}_2=B^{(2)}_2=A^{(3)}_4=B^{(3)}_4=0.
\end{equation}
Then, we have
\begin{equation}
 C^{(12)}=C^{(15)}=C^{(32)}=C^{(33)}=C^{(35)}=C^{(42)}=0,
\end{equation}
and
\begin{equation}
 C^{(45)}=A^{(1)}_1A^{(2)}_4A^{(3)}_2B^{(1)}_2B^{(3)}_1\neq0.
\end{equation}
We obtain
\begin{equation}
v_4^{(l)}
=-\dfrac{v_1(u_1C^{(34)}+u_4C^{(36)})}{u_1(C^{(43)}+u_1v_1C^{(44)}+u_4v_1C^{(46)})+u_4C^{(45)}},
\end{equation}
which is inconsistent with the assumption that $u_1=0$ is a simple pole of $v_4$ because of $C^{(45)}\neq0$.
Therefore, we obtain
\begin{equation}
 A^{(2)}_3=B^{(2)}_3=A^{(3)}_2=B^{(3)}_2=0.
\end{equation}
From
\begin{equation}
 C^{(22)}=A^{(1)}_1A^{(2)}_2A^{(3)}_1B^{(1)}_1B^{(3)}_1,
\end{equation}
we obtain
\begin{equation}
 A^{(3)}_1=B^{(3)}_1=0.
\end{equation}
Then, we obtain
\begin{equation}
 C^{(12)}=C^{(15)}=C^{(23)}=C^{(24)}=C^{(32)}=C^{(33)}=C^{(35)}=C^{(42)}=C^{(43)}=0.
\end{equation}
We obtain
\begin{equation}
 v_4^{(r)}
 =-\dfrac{u_1C^{(13)}+u_1u_4v_1C^{(14)}+{u_4}^2v_1C^{(16)}}{{u_4}^2(C^{(25)}+u_4v_1C^{(26)})},
\end{equation}
which is inconsistent with the assumption that $u_1=0$ is a simple pole of $v_4$.
Therefore, the case \eqref{eqn:caseN3_3_spole_con_2} is also inadequate.
Hence, we have completed the proof.
\end{proof}

\begin{lemma}\label{lemma:case3_u1u4_pole2}
There exist no condition of parameters such that 
the CACO property holds and $v_4$ has a pole of order 2 at $u_1=0$ or $u_4=0$.
\end{lemma}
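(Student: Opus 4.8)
The plan is to mirror, essentially verbatim, the strategy of the proof of Lemma \ref{lemma:case3_u1u4_pole1}, the only structural change being that one now tracks a pole of order two rather than one. The underlying observation making this a finite task is that $F_2=u_4\bigl((C^{(21)}+C^{(22)}u_4v_1)u_4^2+(C^{(23)}+C^{(24)}u_4v_1)u_1^2+(C^{(25)}+C^{(26)}u_4v_1)u_1u_4\bigr)$ is of degree at most two in $u_1$, so that $v_4^{(r)}=F_1/F_2$, viewed as a rational function of $u_1$ with coefficients rational in $u_4,v_1$, can have a pole at $u_1=0$ of order at most two. Hence once Lemmas \ref{lemma:case3_u1u4_pole1} and \ref{lemma:case3_u1u4_pole2} are both in hand, $u_1=0$ is neither a simple nor a double pole, so it is no pole at all, and by the symmetry $\iota\circ s_{23}$ recorded in Remark \ref{remark:case3_pole_zero_v4} the same then holds at $u_4=0$. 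As in Remark \ref{remark:case3_pole_zero_v4}, it therefore suffices to treat $u_1=0$.

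First I would suppose, for contradiction, that parameters exist for which the CACO property holds and $v_4$ has a pole of order exactly two at $u_1=0$. Expanding $F_1$ and $F_2$ as polynomials in $u_1$ and demanding that the coefficients of $u_1^0$ and $u_1^1$ in $F_2$ vanish identically in $u_4$ and $v_1$, while the value $F_1|_{u_1=0}=-u_4^2(C^{(11)}+C^{(12)}u_4v_1)$ does not vanish identically and $F_2\not\equiv0$, one obtains exactly
\begin{equation}
 C^{(21)}=C^{(22)}=C^{(25)}=C^{(26)}=0,\qquad (C^{(11)},C^{(12)})\neq(0,0),\qquad (C^{(23)},C^{(24)})\neq(0,0).
\end{equation}
(The two extra equations $C^{(25)}=C^{(26)}=0$, beyond the pair $C^{(21)}=C^{(22)}=0$ already used in Lemma \ref{lemma:case3_u1u4_pole1}, are precisely what upgrades a simple pole to a double pole.)

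Next I would convert these vanishing relations into constraints on the parameters $A^{(i)}_j,B^{(i)}_j$ using the explicit formulas \eqref{eqn:typeII_defC11}--\eqref{eqn:typeII_defC46}, the CAO condition \eqref{eqn:caseN3_CAO_B2}, and the biquadratic non-degeneracy of $Q_1,\dots,Q_9$ (so that quantities such as $A^{(1)}_1A^{(1)}_2-A^{(1)}_3A^{(1)}_4$ and $A^{(2)}_1A^{(2)}_2-A^{(2)}_3A^{(2)}_4$ are non-zero). Exactly as in Lemma \ref{lemma:case3_u1u4_pole1}, from $C^{(21)}=C^{(22)}=0$ one peels off the dichotomy \eqref{eqn:caseN3_3_spole_con_1} versus \eqref{eqn:caseN3_3_spole_con_2}; within each branch I would then feed in $C^{(25)}=C^{(26)}=0$, apply the bracket operator $[\cdot,\cdot,\cdot,\cdot]_{A^{(2)}}$ to suitable quadruples of the $C^{(ij)}$ in order to eliminate $A^{(2)}_1,\dots,A^{(2)}_4$, use the pairing rule (each $A^{(i)}_j$ is zero iff $B^{(i)}_j$ is) to split products into sub-cases, and — crucially — impose $v_4^{(r)}=v_4^{(l)}$, i.e. $F_1G_2=F_2G_1$, expanding the latter to order $u_1$, $u_1^2$ and $u_1^3$ near $u_1=0$ just as in the displayed expansions of the previous proof, to extract the vanishing of further $C^{(ij)}$. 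Each resulting leaf of the case tree terminates in a contradiction of one of the following standard kinds: a coefficient forced to be non-zero for the double pole to exist (some surviving $C^{(23)}$ or $C^{(25)}$, say) is made to vanish; two $C^{(ij)}$ that cannot vanish simultaneously are both forced to zero; or one of $Q_1,\dots,Q_9$ fails to be a quad-equation.

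The main obstacle is purely organisational: the factorisations emerging from $C^{(21)}=C^{(22)}=C^{(25)}=C^{(26)}=0$ produce a sizeable tree of sub-cases, slightly larger than in Lemma \ref{lemma:case3_u1u4_pole1} because of the two additional equations, and in several of them the contradiction only becomes visible after the $O(u_1^2)$ and $O(u_1^3)$ expansions of $F_1G_2-F_2G_1$ are carried out and combined with \eqref{eqn:caseN3_CAO_B2}. No new idea is needed beyond those already used for the simple-pole case; the bracket trick and the pairing rule handle everything, and the bookkeeping is the only real work.
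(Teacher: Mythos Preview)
Your approach is correct and would lead to a valid proof, but it is considerably more laborious than the paper's, and your expectation that the case tree is ``slightly larger than in Lemma \ref{lemma:case3_u1u4_pole1}'' is the opposite of what actually happens. The paper's argument is much \emph{shorter} than for the simple-pole lemma, precisely because the two extra vanishing conditions $C^{(25)}=C^{(26)}=0$ make the system so overdetermined that the CACO identity $F_1G_2=F_2G_1$ is never invoked at all. After obtaining $C^{(21)}=C^{(22)}=C^{(25)}=C^{(26)}=0$ (and, from the $G$-side, $C^{(41)}=0$, though this is not even used), the paper derives the same dichotomy $A^{(1)}_1=B^{(1)}_1=0$ versus $A^{(1)}_3=B^{(1)}_3=0$ from $C^{(21)}=C^{(22)}=0$; then in each branch a single pair of well-chosen linear combinations of $C^{(22)},C^{(25)},C^{(26)}$ evaluates to $\pm(A^{(2)}_1A^{(2)}_2-A^{(2)}_3A^{(2)}_4)\,A^{(1)}_iB^{(1)}_iA^{(1)}_jB^{(1)}_j\,A^{(3)}_kB^{(3)}_l$ with $\{k,l\}$ running over two complementary choices, so that both cannot vanish simultaneously --- and that is the entire proof. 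Your plan of expanding $F_1G_2-F_2G_1$ to several orders in $u_1$ and applying the bracket operator would also reach a contradiction, but it introduces machinery that the stronger hypotheses render unnecessary; what you gain from the paper's route is a dramatic collapse of the case analysis.
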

\begin{proof}
Because of Remark \ref{remark:case3_pole_zero_v4}, it is sufficient to prove that 
there exist no condition of parameters such that the CACO property holds and $v_4$ has a pole of order 2 at $u_1=0$.

Assume that $u_1=0$ is a pole of order 2 of $v_4$.
Since 
\begin{align*}
 &F_1=-{u_4}^2(C^{(11)}+u_4v_1C^{(12)})+O(u_1),\\
 &F_2={u_4}^3(C^{(21)}+u_4v_1C^{(22)})+u_1{u_4}^2(C^{(25)}+u_4v_1C^{(26)})+O({u_1}^2),\\
 &G_1=-{u_4}^2C^{(31)}+O(u_1),\\
 &G_2=u_1{u_4}^2C^{(41)}+O({u_1}^2),
\end{align*}
as $u_1\to 0$, we obtain 
\begin{equation}
 C^{(21)}=C^{(22)}=C^{(25)}=C^{(26)}=C^{(41)}=0.
\end{equation}
Because of the following relations:
\begin{subequations}
\begin{align}
 0&=(A^{(1)}_1A^{(2)}_2B^{(1)}_1+A^{(1)}_4A^{(2)}_4B^{(1)}_3)C^{(21)}-(A^{(1)}_3A^{(2)}_2B^{(1)}_1+A^{(1)}_2A^{(2)}_4B^{(1)}_3)C^{(22)}\notag\\
 &=(A^{(1)}_1A^{(1)}_2-A^{(1)}_3A^{(1)}_4)(A^{(2)}_1A^{(2)}_2-A^{(2)}_3A^{(2)}_4)B^{(1)}_1B^{(1)}_3A^{(3)}_3B^{(3)}_4,\\
 0&=(A^{(1)}_1A^{(2)}_3B^{(1)}_1+A^{(1)}_4A^{(2)}_1B^{(1)}_3)C^{(21)}-(A^{(1)}_3A^{(2)}_3B^{(1)}_1+A^{(1)}_2A^{(2)}_1B^{(1)}_3)C^{(22)}\notag\\
 &=-(A^{(1)}_1A^{(1)}_2-A^{(1)}_3A^{(1)}_4)(A^{(2)}_1A^{(2)}_2-A^{(2)}_3A^{(2)}_4)B^{(1)}_1B^{(1)}_3A^{(3)}_1B^{(3)}_1,
\end{align}
\end{subequations}
we obtain 
\begin{equation}
 A^{(1)}_1=B^{(1)}_1=0
\quad\text{or}\quad
 A^{(1)}_3=B^{(1)}_3=0.
\end{equation}
However, both cases are inadequate.
Indeed, if $A^{(1)}_1=B^{(1)}_1=0$, then we obtain
\begin{subequations}
\begin{align}
 &A^{(1)}_3B^{(1)}_4A^{(2)}_2C^{(22)}-B^{(1)}_3A^{(2)}_4(A^{(1)}_4C^{(25)}-A^{(1)}_2C^{(26)})\notag\\
 &\quad=(A^{(2)}_1A^{(2)}_2-A^{(2)}_3A^{(2)}_4)A^{(1)}_3B^{(1)}_3A^{(1)}_4B^{(1)}_4A^{(3)}_3B^{(3)}_4,\\
 &A^{(1)}_3B^{(1)}_4A^{(2)}_3C^{(22)}-B^{(1)}_3A^{(2)}_1(A^{(1)}_4C^{(25)}-A^{(1)}_2C^{(26)})\notag\\
 &\quad=-(A^{(2)}_1A^{(2)}_2-A^{(2)}_3A^{(2)}_4)A^{(1)}_3B^{(1)}_3A^{(1)}_4B^{(1)}_4A^{(3)}_1B^{(3)}_1,
\end{align}
\end{subequations}
whose left-hand sides are zero but right-hand sides cannot be simultaneously zero, and
if $A^{(1)}_3=B^{(1)}_3=0$, then we obtain
\begin{align*}
 &A^{(1)}_2B^{(1)}_2A^{(2)}_4C^{(22)}-A^{(1)}_1B^{(1)}_1A^{(2)}_2C^{(25)}
 =-(A^{(2)}_1A^{(2)}_2-A^{(2)}_3A^{(2)}_4)A^{(1)}_1B^{(1)}_1A^{(1)}_2B^{(1)}_2A^{(3)}_3B^{(3)}_4,\\
 &A^{(1)}_2B^{(1)}_2A^{(2)}_1C^{(22)}-A^{(1)}_1B^{(1)}_1A^{(2)}_3C^{(25)}
 =(A^{(2)}_1A^{(2)}_2-A^{(2)}_3A^{(2)}_4)A^{(1)}_1B^{(1)}_1A^{(1)}_2B^{(1)}_2A^{(3)}_1B^{(3)}_1,
\end{align*}
whose left-hand sides are zero but right-hand sides cannot be simultaneously zero.
Therefore, we have completed the proof.
\end{proof}

\begin{lemma}\label{lemma:case3_u1u4_zero}
There exist no condition of parameters such that 
the CACO property holds and $v_4$ has a zero at $u_1=0$ or $u_4=0$.
\end{lemma}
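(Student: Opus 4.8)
The plan is to follow the pattern established in the proofs of Lemmas~\ref{lemma:case3_u1u4_pole1} and~\ref{lemma:case3_u1u4_pole2}. First, by the symmetry $\iota\circ s_{23}$ recorded in Remark~\ref{remark:case3_pole_zero_v4}, which interchanges $u_1$ and $u_4$ while fixing $v_4$, it suffices to show that there is no choice of parameters for which the CACO property holds and $v_4$ has a zero at $u_1=0$. Since the CACO property means $v_4=v_4^{(r)}=v_4^{(l)}$, such a zero forces both $v_4^{(r)}=F_1/F_2$ and $v_4^{(l)}=G_1/G_2$ to have a zero at $u_1=0$, i.e.\ the order of vanishing in $u_1$ of $F_1$ (resp.\ $G_1$) strictly exceeds that of $F_2$ (resp.\ $G_2$).

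Next I would expand $F_1,F_2,G_1,G_2$ as polynomials in $u_1$ and read off the resulting constraints on the $C^{(ij)}$. The coefficient of $u_1^0$ in $F_2$ is $u_4^3\big(C^{(21)}+C^{(22)}u_4v_1\big)$ and that in $F_1$ is $-u_4^2\big(C^{(11)}+C^{(12)}u_4v_1\big)$, so either $C^{(21)}=C^{(22)}=0$, or $F_2$ does not vanish at $u_1=0$ and then $C^{(11)}=C^{(12)}=0$. On the other side $G_2$ is divisible by $u_1$ with $u_1$-coefficient $u_4^2C^{(41)}$, while $G_1=-u_4^2C^{(31)}-u_1\big(u_4^2v_1C^{(32)}+u_4C^{(35)}\big)+O(u_1^2)$, so a zero of $v_4^{(l)}$ at $u_1=0$ forces at least $C^{(31)}=C^{(32)}=C^{(35)}=0$, with an additional layer of conditions in the subcase $C^{(41)}=0$. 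Each such vanishing condition is then translated, via the explicit formulas \eqref{eqn:typeII_defC11}--\eqref{eqn:typeII_defC46} and the linear syzygies among the $C^{(ij)}$ already exploited in Lemmas~\ref{lemma:case3_u1u4_pole1} and~\ref{lemma:case3_u1u4_pole2} (together with the non-degeneracies $A^{(i)}_1A^{(i)}_2-A^{(i)}_3A^{(i)}_4\neq0$ and the relations \eqref{eqn:caseN3_CAO_B2} coming from the octahedral CAO property and from the $Q_i$ being quad-equations), into the vanishing of pairs of the underlying parameters, e.g.\ $A^{(1)}_1=B^{(1)}_1=0$ or $A^{(1)}_3=B^{(1)}_3=0$.

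I would then carry out the case analysis. In each branch I substitute the vanishing parameters back into the remaining $C^{(ij)}$, simplify $v_4^{(r)}$ and $v_4^{(l)}$, and compare the $u_1$-orders of numerator and denominator. The expectation, as in the two preceding lemmas, is that every branch collapses: either $v_4$ turns out to be regular and nonzero at $u_1=0$, or it acquires a pole there (contradicting Lemmas~\ref{lemma:case3_u1u4_pole1} and~\ref{lemma:case3_u1u4_pole2}), or the accumulated conditions become incompatible with the $Q_i$ being genuine quad-equations. The residual symmetries $s_{23}$ and $\iota$, whose actions on the parameters keep the system \eqref{eqn:P123_caseN3_3} invariant (cf.\ Remark~\ref{remark:symmetry_GCO}), are used to identify equivalent branches and thereby prune the tree.

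The main obstacle will be precisely this case explosion: each vanishing $C^{(ij)}$ typically annihilates a pair of parameters and spawns further subcases, so the bookkeeping required to verify that no surviving branch yields an honest zero of $v_4$ is long. Organising the argument so that symmetry-equivalent branches are treated once, and recognising as early as possible when a branch degenerates to a nonvanishing constant or to a pole, is the crux; the individual computations within each branch are routine once the branching is under control.
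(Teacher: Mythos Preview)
Your overall plan (reduce by symmetry to a zero at $u_1=0$, expand $F_i,G_i$, convert the vanishing $C^{(ij)}$ into parameter constraints via the syzygies, then case-split) is the same scaffold the paper uses, but you are missing the one idea that prevents the case explosion you anticipate.

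First, your dichotomy on $F_2$ is unnecessary: a zero of $v_4^{(r)}$ at $u_1=0$ forces $C^{(11)}=C^{(12)}=0$ regardless of whether $C^{(21)},C^{(22)}$ vanish, since the $u_1$-order of $F_1$ must strictly exceed that of $F_2$ in every case. So from the start you have $C^{(11)}=C^{(12)}=C^{(31)}=C^{(32)}=C^{(35)}=0$.

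The key step you are missing is to feed back the already established Lemmas~\ref{lemma:case3_u1u4_pole1}--\ref{lemma:case3_u1u4_pole2} at $u_4=0$, not at $u_1=0$. Since $u_4=0$ is not a pole of $v_4$, the expansion $F_1=-u_1^2C^{(13)}+O(u_4)$, $F_2=u_1^2u_4C^{(23)}+O(u_4^2)$ forces the extra relation $C^{(13)}=0$. With this additional constraint the syzygies among $C^{(13)},C^{(14)},C^{(23)},C^{(31)},C^{(32)}$ collapse the analysis to exactly four parameter branches (two choices from $\{A^{(1)}_1=B^{(1)}_1=0,\ A^{(1)}_3=B^{(1)}_3=0\}$ times two from $\{A^{(3)}_1=B^{(3)}_1=0,\ A^{(3)}_4=B^{(3)}_4=0\}$), rather than the open-ended tree you describe.

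The endgame in each of these four branches is then uniform and short: one checks that the substitutions force $C^{(21)}=C^{(22)}=0$ while $C^{(16)}$ reduces to a single nonzero monomial in the surviving parameters. Then $F_1$ and $F_2$ both vanish to exactly first order in $u_1$, with leading ratio $-\dfrac{C^{(15)}+u_4v_1C^{(16)}}{u_4(C^{(25)}+u_4v_1C^{(26)})}$, which is not identically zero since $C^{(16)}\neq0$. This contradicts the assumed zero of $v_4$ at $u_1=0$, and no further branching is needed.
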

\begin{proof}
Because of Remark \ref{remark:case3_pole_zero_v4}, it is sufficient to prove that 
there exist no condition of parameters such that the CACO property holds and $v_4$ has a zero at $u_1=0$.

Assume that $u_1=0$ is a zero of $v_4$.
Since 
\begin{align*}
 &F_1=-{u_4}^2(C^{(11)}+u_4v_1C^{(12)})+O(u_1),
 &&F_2={u_4}^3(C^{(21)}+u_4v_1C^{(22)})+O(u_1),\\
 &G_1=-{u_4}^2C^{(31)}-u_1u_4(u_4v_1C^{(32)}+C^{(35)})+O({u_1}^2),
 &&G_2=u_1{u_4}^2C^{(41)}+O({u_1}^2),
\end{align*}
as $u_1\to 0$, we obtain 
\begin{equation}
 C^{(11)}=C^{(12)}=C^{(31)}=C^{(32)}=C^{(35)}=0.
\end{equation}
Moreover, since
\begin{equation*}
 F_1=-{u_1}^2C^{(13)}+O(u_4),\quad
 F_2={u_1}^2u_4C^{(23)}+O({u_4}^2),
\end{equation*}
as $u_4\to0$ and $u_4=0$ is not a pole of $v_4$, we obtain
\begin{equation}
 C^{(13)}=0.
\end{equation}
Because of the following relations:
\begin{subequations}
\begin{align}
 0&=(A^{(1)}_1A^{(2)}_2B^{(1)}_1+A^{(1)}_4A^{(2)}_4B^{(1)}_3)C^{(11)}-(A^{(1)}_3A^{(2)}_2B^{(1)}_1+A^{(1)}_2A^{(2)}_4B^{(1)}_3)C^{(12)}\notag\\
 &=(A^{(1)}_1A^{(1)}_2-A^{(1)}_3A^{(1)}_4)(A^{(2)}_1A^{(2)}_2-A^{(2)}_3A^{(2)}_4)B^{(1)}_1B^{(1)}_3A^{(3)}_2B^{(3)}_4,\\
 0&=(A^{(1)}_1A^{(2)}_3B^{(1)}_1+A^{(1)}_4A^{(2)}_1B^{(1)}_3)C^{(11)}-(A^{(1)}_3A^{(2)}_3B^{(1)}_1+A^{(1)}_2A^{(2)}_1B^{(1)}_3)C^{(12)}\notag\\
 &=-(A^{(1)}_1A^{(1)}_2-A^{(1)}_3A^{(1)}_4)(A^{(2)}_1A^{(2)}_2-A^{(2)}_3A^{(2)}_4)B^{(1)}_1B^{(1)}_3B^{(3)}_1A^{(3)}_4,\\
 0&=(A^{(2)}_4A^{(3)}_3B^{(3)}_1+A^{(2)}_1A^{(3)}_1B^{(3)}_4)C^{(31)}-(A^{(2)}_4A^{(3)}_2B^{(3)}_1+A^{(2)}_1A^{(3)}_4B^{(3)}_4)C^{(32)}\notag\\
 &=(A^{(2)}_1A^{(2)}_2-A^{(2)}_3A^{(2)}_4)(A^{(3)}_1A^{(3)}_2-A^{(3)}_3A^{(3)}_4)B^{(1)}_1A^{(1)}_2B^{(3)}_1B^{(3)}_4,\\
 0&=(A^{(2)}_2A^{(3)}_3B^{(3)}_1+A^{(2)}_3A^{(3)}_1B^{(3)}_4)C^{(31)}-(A^{(2)}_2A^{(3)}_2B^{(3)}_1+A^{(2)}_3A^{(3)}_4B^{(3)}_4)C^{(32)}\notag\\
 &=-(A^{(2)}_1A^{(2)}_2-A^{(2)}_3A^{(2)}_4)(A^{(3)}_1A^{(3)}_2-A^{(3)}_3A^{(3)}_4)A^{(1)}_3B^{(1)}_3B^{(3)}_1B^{(3)}_4,
\end{align}
\end{subequations}
we obtain the following four cases:
\begin{subequations}
\begin{align}
 &A^{(1)}_1=B^{(1)}_1=A^{(3)}_1=B^{(3)}_1=0;\label{eqn:N3_3_cond_u1zero_1}\\
 &A^{(1)}_1=B^{(1)}_1=A^{(3)}_4=B^{(3)}_4=0;\label{eqn:N3_3_cond_u1zero_2}\\
 &A^{(1)}_3=B^{(1)}_3=A^{(3)}_1=B^{(3)}_1=0;\label{eqn:N3_3_cond_u1zero_3}\\
 &A^{(1)}_3=B^{(1)}_3=A^{(3)}_4=B^{(3)}_4=0.\label{eqn:N3_3_cond_u1zero_4}
\end{align}
\end{subequations}
Below, we show that all cases are inadmissible.

Firstly, we consider the case \eqref{eqn:N3_3_cond_u1zero_1}. 
From 
\begin{equation}
 0=C^{(31)}=A^{(1)}_3B^{(1)}_3A^{(2)}_1A^{(3)}_4B^{(3)}_4,
\end{equation}
we obtain $A^{(2)}_1=B^{(2)}_1=0$.
Then, we obtain
\begin{equation}
 C^{(21)}=C^{(22)}=0,\quad
 C^{(16)}=B^{(1)}_3A^{(1)}_4A^{(2)}_4A^{(3)}_4B^{(3)}_3\neq0.
\end{equation}
Since $C^{(16)}\neq0$ and 
\begin{equation*}
 F_1=-u_1u_4(C^{(15)}+u_4v_1C^{(16)})+O({u_1}^2),\quad
 F_2=u_1{u_4}^2(C^{(25)}+u_4v_1C^{(26)})+O({u_1}^2),
\end{equation*}
as $u_1\to0$, $u_1=0$ cannot be a zero of $v_4$. 
Therefore, the case \eqref{eqn:N3_3_cond_u1zero_1} is inadequate.

Secondly, we consider the case \eqref{eqn:N3_3_cond_u1zero_2}.
From 
\begin{equation}
 0=C^{(31)}=A^{(1)}_3B^{(1)}_3A^{(2)}_4B^{(3)}_1A^{(3)}_2,
\end{equation}
we obtain $A^{(2)}_4=B^{(2)}_4=0$.
Then, we obtain
\begin{equation}
 C^{(21)}=C^{(22)}=0,\quad
 C^{(16)}=B^{(1)}_3A^{(1)}_4A^{(2)}_1A^{(3)}_2B^{(3)}_2\neq0.
\end{equation}
Since $C^{(16)}\neq0$ and
\begin{equation*}
 F_1=-u_1u_4(C^{(15)}+u_4v_1C^{(16)})+O({u_1}^2),\quad
 F_2=u_1{u_4}^2(C^{(25)}+u_4v_1C^{(26)})+O({u_1}^2),
\end{equation*}
as $u_1\to0$, $u_1=0$ cannot be a zero of $v_4$. 
Therefore, the case \eqref{eqn:N3_3_cond_u1zero_2} is inadequate.

Thirdly, We consider the case \eqref{eqn:N3_3_cond_u1zero_3}.
From 
\begin{equation}
 0=C^{(31)}=B^{(1)}_1A^{(1)}_2A^{(2)}_3A^{(3)}_4B^{(3)}_4,
\end{equation}
we obtain $A^{(2)}_3=B^{(2)}_3=0$.
Moreover, from
\begin{equation}
 0=C^{(35)}=B^{(1)}_1A^{(1)}_2A^{(2)}_2A^{(3)}_2B^{(3)}_3,
\end{equation}
we obtain $A^{(3)}_2=B^{(3)}_2=0$.
Then, we obtain
\begin{equation}
 C^{(21)}=C^{(22)}=0,\quad
 C^{(16)}=A^{(1)}_1B^{(1)}_1A^{(2)}_2B^{(3)}_3A^{(3)}_4\neq0.
\end{equation}
Since $C^{(16)}\neq0$ and
\begin{equation*}
 F_1=-u_1u_4(C^{(15)}+u_4v_1C^{(16)})+O({u_1}^2),\quad
 F_2=u_1{u_4}^2(C^{(25)}+u_4v_1C^{(26)})+O({u_1}^2),
\end{equation*}
as $u_1\to0$, $u_1=0$ cannot be a zero of $v_4$. 
Therefore, the case \eqref{eqn:N3_3_cond_u1zero_3} is inadequate.

Lastly, we consider the case \eqref{eqn:N3_3_cond_u1zero_4}.
From 
\begin{equation}
 0=C^{(31)}=B^{(1)}_1A^{(1)}_2A^{(2)}_2B^{(3)}_1A^{(3)}_2,
\end{equation}
we obtain $A^{(2)}_2=B^{(2)}_2=0$.
Then, we obtain
\begin{equation}
 C^{(21)}=C^{(22)}=0,\quad
 C^{(16)}=A^{(1)}_1B^{(1)}_1A^{(2)}_3A^{(3)}_2B^{(3)}_2\neq0.
\end{equation}
Since $C^{(16)}\neq0$ and 
\begin{equation*}
 F_1=-u_1u_4(C^{(15)}+u_4v_1C^{(16)})+O({u_1}^2),\quad
 F_2=u_1{u_4}^2(C^{(25)}+u_4v_1C^{(26)})+O({u_1}^2),
\end{equation*}
as $u_1\to0$,
$u_1=0$ cannot be a zero of $v_4$. 
Therefore, the case \eqref{eqn:N3_3_cond_u1zero_4} is inadequate.
Hence, we have completed the proof.
\end{proof}

\begin{lemma}\label{lemma:case3_v1_pole}
There exist no condition of parameters such that 
the CACO property holds and $v_4$ has a pole at $v_1=0$.
\end{lemma}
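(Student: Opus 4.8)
The plan is to argue by contradiction, in the style of the proofs of Lemmas \ref{lemma:case3_u1u4_pole1}--\ref{lemma:case3_u1u4_zero}. Suppose there is a parameter choice for which the system \eqref{eqn:Q1_P1_CO3}--\eqref{eqn:Q9_P3_CO3} with \eqref{eqn:P123_caseN3_3} has the CACO property and $v_4$ has a pole at $v_1=0$. Recall from the beginning of this subsection that $v_4^{(r)}=F_1/F_2$ and $v_4^{(l)}=G_1/G_2$, where $F_1,F_2$ (and $G_1,G_2$) are affine in $v_1$. Since $v_1,u_1,u_2,u_3,u_4$ are independent, a pole of $v_4^{(r)}$ at $v_1=0$ forces the $v_1$-free part $u_4\big(C^{(21)}{u_4}^2+C^{(23)}{u_1}^2+C^{(25)}u_1u_4\big)$ of $F_2$ to vanish identically in $u_1,u_4$, hence $C^{(21)}=C^{(23)}=C^{(25)}=0$; applying the same reasoning to $v_4^{(l)}=G_1/G_2$ (equal to $v_4^{(r)}$ by CACO) gives $C^{(41)}=C^{(43)}=C^{(45)}=0$. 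At the same time the pole assumption requires the $v_1$-free part $-\big(C^{(11)}{u_4}^2+C^{(13)}{u_1}^2+C^{(15)}u_1u_4\big)$ of $F_1$ to be $\not\equiv0$, so at least one of $C^{(11)},C^{(13)},C^{(15)}$ is nonzero. (One may additionally invoke Lemmas \ref{lemma:case3_u1u4_pole1}--\ref{lemma:case3_u1u4_zero} to exclude poles and zeros of $v_4$ at $u_1=0$ and $u_4=0$, which shortens the case analysis below.)

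Next I would exploit that each of the six vanishing forms $C^{(21)},C^{(23)},C^{(25)},C^{(41)},C^{(43)},C^{(45)}$, written out via \eqref{eqn:typeII_defC11}--\eqref{eqn:typeII_defC46}, is linear in $(A^{(2)}_1,A^{(2)}_2,A^{(2)}_3,A^{(2)}_4)$. Since this vector is nonzero, the bracket $[\,\cdot\,,\cdot\,,\cdot\,,\cdot\,]_{A^{(2)}}$ introduced in Appendix \ref{subsection:caseN3_1}, applied to any four of these forms, vanishes. I would evaluate a convenient collection of such $4\times4$ determinants — for instance $[C^{(21)},C^{(23)},C^{(25)},C^{(41)}]_{A^{(2)}}$, $[C^{(21)},C^{(23)},C^{(25)},C^{(43)}]_{A^{(2)}}$ and their images under the symmetries $s_{23}$ and $\iota$, which preserve the present setting. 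Each factors as a product of parameters $A^{(1)}_j,B^{(1)}_j,A^{(3)}_j,B^{(3)}_j$ and non-degeneracy factors of the shape $A^{(i)}_1A^{(i)}_2-A^{(i)}_3A^{(i)}_4$ and $B^{(i)}_1B^{(i)}_2-B^{(i)}_3B^{(i)}_4$; the latter are nonzero because $Q_1,Q_5,Q_7,Q_9$ are $\polyN3$-type quad-equations, so their defining polynomials satisfy the corresponding irreducibility conditions $A^{(i)}_1A^{(i)}_2\neq A^{(i)}_3A^{(i)}_4$ and $B^{(i)}_1B^{(i)}_2\neq B^{(i)}_3B^{(i)}_4$ for $i\in\{1,3\}$. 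Hence each bracket equation forces a product of the $A^{(i)}_j,B^{(i)}_j$ to vanish, which reduces the problem to a short list of cases with whole blocks of parameters set to zero (analogous to \eqref{eqn:theo_proof_case1_21}--\eqref{eqn:theo_proof_case1_26}); the symmetries $s_{23}$ and $\iota$ then collapse this list to a handful of essentially distinct cases.

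Finally, in each remaining case I would substitute the vanishing parameters back into \eqref{eqn:typeII_defC11}--\eqref{eqn:typeII_defC46}. I expect that in every branch one of two things happens: either one of $C^{(21)},C^{(23)},C^{(25)}$ (or of $C^{(41)},C^{(43)},C^{(45)}$) collapses to a single monomial in the surviving parameters that cannot vanish, contradicting the relations obtained in the first step; or all of $C^{(11)},C^{(13)},C^{(15)}$ reduce to zero, so that $F_1$ carries the factor $v_1$ and $v_4^{(r)}=F_1/F_2$ is regular at $v_1=0$, contradicting the pole assumption. Either horn yields a contradiction, so no such parameter choice exists, which proves the lemma. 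The conceptual part of the argument is short; the main obstacle is the bookkeeping in this last step — computing and factoring the brackets, and tracking exactly which $C^{(ij)}$ remain nonzero in each branch — precisely the kind of case-by-case verification that dominates the proofs of Lemmas \ref{lemma:case3_u1u4_pole1}--\ref{lemma:case3_u1u4_zero}. I would organise the write-up so that each branch terminates at a single nonzero monomial, giving the cleanest available contradiction.
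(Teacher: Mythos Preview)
Your plan is in the right spirit --- argue by contradiction, read off vanishing of various $C^{(ij)}$, case-split on parameter blocks, and terminate at a non-vanishing monomial --- and it would almost certainly succeed. However, the paper's execution differs from your outline in two respects that make the bookkeeping much shorter.

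First, the paper does not work with only the six conditions $C^{(21)}=C^{(23)}=C^{(25)}=C^{(41)}=C^{(43)}=C^{(45)}=0$ coming from the $v_1\to0$ behaviour of $F_2$ and $G_2$. It immediately combines this with Lemmas \ref{lemma:case3_u1u4_pole1}--\ref{lemma:case3_u1u4_zero} (which you mention only as an optional shortcut) to extract six further conditions from the $u_1\to0$ and $u_4\to0$ behaviour of $G_1/G_2$ and $F_1/F_2$: namely $C^{(13)}=C^{(14)}=C^{(15)}=C^{(31)}=C^{(32)}=C^{(35)}=0$. So the paper works with \emph{twelve} vanishing $C^{(ij)}$ from the outset. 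In particular this already kills $C^{(13)}$ and $C^{(15)}$, so your ``either/or'' in the last paragraph collapses: the only way $v_4$ can still have a pole at $v_1=0$ is through $C^{(11)}$ (and the analogous $C^{(33)}$).

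Second, the paper does not use the $4\times4$ brackets $[\,\cdot\,,\cdot\,,\cdot\,,\cdot\,]_{A^{(2)}}$ at all here. Instead it takes eight specific \emph{pairwise} linear combinations such as
\[
(A^{(1)}_4A^{(2)}_4B^{(1)}_2+A^{(1)}_1A^{(2)}_2B^{(1)}_4)\,C^{(13)}-(A^{(1)}_2A^{(2)}_4B^{(1)}_2+A^{(1)}_3A^{(2)}_2B^{(1)}_4)\,C^{(14)},
\]
each of which factors completely into a product of non-degeneracy factors and a single parameter from $\{A^{(1)}_j,B^{(1)}_j,A^{(3)}_j,B^{(3)}_j\}$. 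This directly yields exactly four cases \eqref{eqn:case3_v1_pole_1}--\eqref{eqn:case3_v1_pole_4} (all of the shape $A^{(1)}_a=B^{(1)}_a=A^{(1)}_b=B^{(1)}_b=A^{(3)}_c=B^{(3)}_c=A^{(3)}_d=B^{(3)}_d=0$), and in each case the contradiction is that $C^{(13)}$ and $C^{(45)}$ reduce to single monomials that cannot both vanish. Your bracket approach would presumably land in the same place, but the pairwise combinations avoid computing and factoring $4\times4$ determinants and produce the case list in one step.
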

\begin{proof}
Assume that $v_1=0$ is a pole of $v_4$.
Since 
\begin{align*}
 &G_1=-{u_4}^2C^{(31)}-u_1u_4(u_4v_1C^{(32)}+C^{(35)})+O({u_1}^2)&&(u_1\to0),\\
 &G_2=u_1u_4^2C^{(41)}+O({u_1}^2)&&(u_1\to0),\\
 &F_1=-{u_1}^2C^{(13)}-u_1u_4(u_1v_1C^{(14)}+C^{(15)})+O({u_4}^2)&&(u_4\to0),\\
 &F_2={u_1}^2u_4C^{(23)}+O({u_4}^2)&&(u_4\to0),\\
 &F_1=-({u_4}^2C^{(11)}+{u_1}^2C^{(13)}+u_1u_4C^{(15)})+O(v_1)&&(v_1\to0),\\
 &F_2=u_4({u_4}^2C^{(21)}+{u_1}^2C^{(23)}+u_1u_4C^{(25)})+O(v_1)&&(v_1\to0),\\
 &G_1=-({u_4}^2C^{(31)}+{u_1}^2C^{(33)}+u_1u_4C^{(35)})+O(v_1)&&(v_1\to0),\\
 &G_2=u_1({u_4}^2C^{(41)}+{u_1}^2C^{(43)}+u_1u_4C^{(45)})+O(v_1)&&(v_1\to0),
\end{align*}
$u_1=0$ and $u_4=0$ are not a pole nor zero of $v_4$
and $v_1=0$ is a pole of $v_4$, we obtain
\begin{align}
 C^{(13)}&=C^{(14)}=C^{(15)}=C^{(21)}=C^{(23)}=C^{(25)}=C^{(31)}=C^{(32)}=C^{(35)}\notag\\
 &=C^{(41)}=C^{(43)}=C^{(45)}=0.
\end{align}
Because of the following relations:
\begin{subequations}
\begin{align}
 0&=(A^{(1)}_4A^{(2)}_4B^{(1)}_2+A^{(1)}_1A^{(2)}_2B^{(1)}_4)C^{(13)}-(A^{(1)}_2A^{(2)}_4B^{(1)}_2+A^{(1)}_3A^{(2)}_2B^{(1)}_4)C^{(14)}\notag\\
 &=(A^{(1)}_1A^{(1)}_2-A^{(1)}_3A^{(1)}_4)(A^{(2)}_1A^{(2)}_2-A^{(2)}_3A^{(2)}_4)B^{(1)}_2B^{(1)}_4A^{(3)}_2B^{(3)}_2,\\
 0&=(A^{(1)}_4A^{(2)}_1B^{(1)}_2+A^{(1)}_1A^{(2)}_3B^{(1)}_4)C^{(13)}-(A^{(1)}_2A^{(2)}_1B^{(1)}_2+A^{(1)}_3A^{(2)}_3B^{(1)}_4)C^{(14)}\notag\\
 &=-(A^{(1)}_1A^{(1)}_2-A^{(1)}_3A^{(1)}_4)(A^{(2)}_1A^{(2)}_2-A^{(2)}_3A^{(2)}_4)B^{(1)}_2B^{(1)}_4B^{(3)}_3A^{(3)}_4,\\
 0&=(A^{(2)}_4A^{(3)}_3B^{(3)}_1+A^{(2)}_1A^{(3)}_1B^{(3)}_4)C^{(31)}-(A^{(2)}_4A^{(3)}_2B^{(3)}_1+A^{(2)}_1A^{(3)}_4B^{(3)}_4)C^{(32)}\notag\\
 &=(A^{(2)}_1A^{(2)}_2-A^{(2)}_3A^{(2)}_4)(A^{(3)}_1A^{(3)}_2-A^{(3)}_3A^{(3)}_4)B^{(1)}_1A^{(1)}_2B^{(3)}_1B^{(3)}_4,\\
 0&=(A^{(2)}_2A^{(3)}_3B^{(3)}_1+A^{(2)}_3A^{(3)}_1B^{(3)}_4)C^{(31)}-(A^{(2)}_2A^{(3)}_2B^{(3)}_1+A^{(2)}_3A^{(3)}_4B^{(3)}_4)C^{(32)}\notag\\
 &=-(A^{(2)}_1A^{(2)}_2-A^{(2)}_3A^{(2)}_4)(A^{(3)}_1A^{(3)}_2-A^{(3)}_3A^{(3)}_4)A^{(1)}_3B^{(1)}_3B^{(3)}_1B^{(3)}_4,\\
 0&=(A^{(2)}_3A^{(3)}_3B^{(3)}_2+A^{(2)}_2A^{(3)}_1B^{(3)}_3)C^{(13)}-(A^{(2)}_3A^{(3)}_2B^{(3)}_2+A^{(2)}_2A^{(3)}_4B^{(3)}_3)C^{(23)}\notag\\
 &=(A^{(2)}_1A^{(2)}_2-A^{(2)}_3A^{(2)}_4)(A^{(3)}_1A^{(3)}_2-A^{(3)}_3A^{(3)}_4)A^{(1)}_2B^{(1)}_2B^{(3)}_2B^{(3)}_3,\\
 0&=(A^{(2)}_1A^{(3)}_3B^{(3)}_2+A^{(2)}_4A^{(3)}_1B^{(3)}_3)C^{(13)}-(A^{(2)}_1A^{(3)}_2B^{(3)}_2+A^{(2)}_4A^{(3)}_4B^{(3)}_3)C^{(23)}\notag\\
 &=-(A^{(2)}_1A^{(2)}_2-A^{(2)}_3A^{(2)}_4)(A^{(3)}_1A^{(3)}_2-A^{(3)}_3A^{(3)}_4)A^{(1)}_3B^{(1)}_4B^{(3)}_2B^{(3)}_3,\\
 0&=(A^{(1)}_4A^{(2)}_3B^{(1)}_1+A^{(1)}_1A^{(2)}_1B^{(1)}_3)C^{(31)}-(A^{(1)}_2A^{(2)}_3B^{(1)}_1+A^{(1)}_3A^{(2)}_1B^{(1)}_3)C^{(41)}\notag\\
 &=(A^{(1)}_1A^{(1)}_2-A^{(1)}_3A^{(1)}_4)(A^{(2)}_1A^{(2)}_2-A^{(2)}_3A^{(2)}_4)B^{(1)}_1B^{(1)}_3B^{(3)}_1A^{(3)}_2,\\
 0&=(A^{(1)}_4A^{(2)}_2B^{(1)}_1+A^{(1)}_1A^{(2)}_4B^{(1)}_3)C^{(31)}-(A^{(1)}_2A^{(2)}_2B^{(1)}_1+A^{(1)}_3A^{(2)}_4B^{(1)}_3)C^{(41)}\notag\\
 &=-(A^{(1)}_1A^{(1)}_2-A^{(1)}_3A^{(1)}_4)(A^{(2)}_1A^{(2)}_2-A^{(2)}_3A^{(2)}_4)B^{(1)}_1B^{(1)}_3A^{(3)}_4B^{(3)}_4,
\end{align}
\end{subequations}
we obtain the following four cases:
\begin{subequations}
\begin{align}
 &A^{(1)}_1=B^{(1)}_1=A^{(1)}_2=B^{(1)}_2=A^{(3)}_1=B^{(3)}_1=A^{(3)}_2=B^{(3)}_2=0;\label{eqn:case3_v1_pole_1}\\
 &A^{(1)}_1=B^{(1)}_1=A^{(1)}_2=B^{(1)}_2=A^{(3)}_3=B^{(3)}_3=A^{(3)}_4=B^{(3)}_4=0;\label{eqn:case3_v1_pole_2}\\
 &A^{(1)}_3=B^{(1)}_3=A^{(1)}_4=B^{(1)}_4=A^{(3)}_1=B^{(3)}_1=A^{(3)}_2=B^{(3)}_2=0;\label{eqn:case3_v1_pole_3}\\
 &A^{(1)}_3=B^{(1)}_3=A^{(1)}_4=B^{(1)}_4=A^{(3)}_3=B^{(3)}_3=A^{(3)}_4=B^{(3)}_4=0.\label{eqn:case3_v1_pole_4}
\end{align}
\end{subequations}
However, all cases are inadmissible.
Indeed, in the case \eqref{eqn:case3_v1_pole_1} we obtain
\begin{equation}
 C^{(13)}=A^{(1)}_3B^{(1)}_4A^{(2)}_2B^{(3)}_3A^{(3)}_4,\quad
 C^{(45)}=A^{(1)}_4B^{(1)}_4A^{(2)}_3A^{(3)}_4B^{(3)}_4,
\end{equation}
which cannot be simultaneously zero,
in the case \eqref{eqn:case3_v1_pole_2} we obtain
\begin{equation}
 C^{(13)}=A^{(1)}_3B^{(1)}_4A^{(2)}_3A^{(3)}_2B^{(3)}_2,\quad
 C^{(45)}=A^{(1)}_4B^{(1)}_4A^{(2)}_2B^{(3)}_1A^{(3)}_2,
\end{equation}
which cannot be simultaneously zero,
in the case \eqref{eqn:case3_v1_pole_3} we obtain
\begin{equation}
 C^{(13)}=A^{(1)}_2B^{(1)}_2A^{(2)}_4A^{(3)}_4B^{(3)}_3,\quad
 C^{(45)}=A^{(1)}_1B^{(1)}_2A^{(2)}_1A^{(3)}_4B^{(3)}_4,
\end{equation}
which cannot be simultaneously zero, and 
in the case \eqref{eqn:case3_v1_pole_4} we obtain
\begin{equation}
 C^{(13)}=A^{(1)}_2B^{(1)}_2A^{(2)}_1A^{(3)}_2B^{(3)}_2,\quad
 C^{(45)}=A^{(1)}_1B^{(1)}_2A^{(2)}_4A^{(3)}_2B^{(3)}_1,
\end{equation}
which cannot be simultaneously zero,
Therefore, we have completed the proof.
\end{proof}

Note here that from Lemmas \ref{lemma:case3_u1u4_pole1}, \ref{lemma:case3_u1u4_pole2}, \ref{lemma:case3_u1u4_zero} and \ref{lemma:case3_v1_pole}, we have the results that $u_1=0$ and $u_4=0$ are not a pole nor a zero of $v_4$, and $v_1=0$ is not a pole of $v_4$. 

Since 
\begin{align*}
 &G_1=-{u_4}^2C^{(31)}+O(u_1),
 &&G_2=u_1u_4^2C^{(41)}+O({u_1}^2)\qquad(u_1\to0),\\
 &F_1=-{u_1}^2C^{(13)}+O(u_4),
 &&F_2={u_1}^2u_4C^{(23)}+O({u_4}^2)\qquad (u_4\to0),
\end{align*}
and $u_1=0$ and $u_4=0$ are not a pole nor a zero of $v_4$, we obtain
\begin{equation}
 C^{(13)}=C^{(31)}=0.
\end{equation}
Since $v_1=0$ is not a pole of $v_4$, $v_4^{(r)}$ and $v_4^{(l)}$ can be expressed as
\begin{equation}
 v_4^{(r)}=-\dfrac{F_{11}+F_{12}v_1}{1+F_{21}v_1},\quad
 v_4^{(l)}=-\dfrac{G_{11}+G_{12}v_1}{1+G_{21}v_1},
\end{equation}
where
\begin{align*}
 &F_{11}=\dfrac{u_4C^{(11)}+u_1C^{(15)}}{{u_4}^2C^{(21)}+u_1(u_1C^{(23)}+u_4C^{(25)})},
 &&F_{12}=\dfrac{{u_4}^2C^{(12)}+u_1(u_1C^{(14)}+u_4C^{(16)})}{{u_4}^2C^{(21)}+u_1(u_1C^{(23)}+u_4C^{(25)})},\\
 &F_{21}=\dfrac{u_4\Big({u_4}^2C^{(22)}+u_1(u_1C^{(24)}+u_4C^{(26)})\Big)}{{u_4}^2C^{(21)}+u_1(u_1C^{(23)}+u_4C^{(25)})},
 &&G_{11}=\dfrac{u_1C^{(33)}+u_4C^{(35)}}{{u_4}^2C^{(41)}+u_1(u_1C^{(43)}+u_4C^{(45)})},\\
 &G_{12}=\dfrac{{u_4}^2C^{(32)}+u_1(u_1C^{(34)}+u_4C^{(36)})}{{u_4}^2C^{(41)}+u_1(u_1C^{(43)}+u_4C^{(45)})},
 &&G_{21}=\dfrac{u_1\Big({u_4}^2C^{(42)}+u_1(u_1C^{(44)}+u_4C^{(46)})\Big)}{{u_4}^2C^{(41)}+u_1(u_1C^{(43)}+u_4C^{(45)})}.
\end{align*}
Because of $v_4^{(r)}=v_4^{(l)}$, we obtain $F_{21}=G_{21}$, which is equivalent to
\begin{align}
 &{u_4}^5C^{(22)}C^{(41)}-{u_1}^5C^{(23)}C^{(44)}+u_1{u_4}^4(C^{(26)}C^{(41)}-C^{(21)}C^{(42)}+C^{(22)}C^{(45)})\notag\\
 &\quad+{u_1}^2{u_4}^3(C^{(24)}C^{(41)}-C^{(25)}C^{(42)}+C^{(22)}C^{(43)}+C^{(26)}C^{(45)}-C^{(21)}C^{(46)})\notag\\
 &\quad+{u_1}^4u_4(C^{(24)}C^{(43)}-C^{(25)}C^{(44)}-C^{(23)}C^{(46)})\notag\\
 &\quad-{u_1}^3{u_4}^2(C^{(23)}C^{(42)}-C^{(26)}C^{(43)}+C^{(21)}C^{(44)}-C^{(24)}C^{(45)}+C^{(25)}C^{(46)})=0.
\end{align}
Therefore, we obtain
\begin{equation}
 C^{(23)}C^{(44)}=0,\quad
 C^{(22)}C^{(41)}=0.
\end{equation}
Moreover, the coefficients of $u_1{u_4}^6{v_1}^2$ and ${u_1}^6u_4{v_1}^2$ in $F_1G_2-F_2G_1=0$ give
\begin{equation}
 C^{(22)}C^{(32)}=0,\quad
 C^{(14)}C^{(44)}=0,
\end{equation}
respectively.
Therefore, we obtain the following four cases:
\begin{subequations}
\begin{align}
 &C^{(13)}=C^{(14)}=C^{(23)}=C^{(31)}=C^{(32)}=C^{(41)}=0;\label{eqn:case3_condp1}\\
 &C^{(13)}=C^{(31)}=C^{(32)}=C^{(41)}=C^{(44)}=0;\label{eqn:case3_condp2}\\
 &C^{(13)}=C^{(14)}=C^{(22)}=C^{(23)}=C^{(31)}=0;\label{eqn:case3_condp3}\\
 &C^{(13)}=C^{(22)}=C^{(31)}=C^{(44)}=0.\label{eqn:case3_condp4_1}
\end{align}
\end{subequations}
The following three lemmas show that conditions \eqref{eqn:case3_condp1}--\eqref{eqn:case3_condp3} are inadequate.

\begin{lemma}\label{lemma:case3_condp1}
The condition \eqref{eqn:case3_condp1} is inadequate.
\end{lemma}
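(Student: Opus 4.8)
\textbf{Proof strategy for Lemma \ref{lemma:case3_condp1}.}

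The plan is to assume the condition \eqref{eqn:case3_condp1}, namely $C^{(13)}=C^{(14)}=C^{(23)}=C^{(31)}=C^{(32)}=C^{(41)}=0$, and derive a contradiction with the requirement that $Q_1,\dots,Q_9$ are genuine quad-equations (i.e.\ that the relevant parameter products do not all vanish). The main engine is the same as in the preceding lemmas: each vanishing $C^{(ij)}$ is a bilinear (in fact multilinear) expression in the parameters $A^{(k)}_\ell$, $B^{(k)}_\ell$, and combining two such vanishings with a determinant-type (bracket) argument forces a four-parameter block of the form $\{A^{(k)}_\ell = B^{(k)}_\ell = A^{(k)}_{\ell'} = B^{(k)}_{\ell'} = 0\}$ to vanish. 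So first I would write out $C^{(13)}$, $C^{(23)}$, $C^{(14)}$, $C^{(31)}$, $C^{(32)}$, $C^{(41)}$ from \eqref{eqn:typeII_defC11}--\eqref{eqn:typeII_defC46} and look for linear combinations analogous to those used in Lemmas \ref{lemma:case3_u1u4_pole1}--\ref{lemma:case3_v1_pole} — e.g.\ expressions of the shape $\alpha\,C^{(13)} - \beta\,C^{(14)}$ and $\alpha\,C^{(31)} - \beta\,C^{(32)}$ that reduce to $(A^{(1)}_1A^{(1)}_2 - A^{(1)}_3A^{(1)}_4)(A^{(2)}_1A^{(2)}_2 - A^{(2)}_3A^{(2)}_4)\,(\text{monomial})$. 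Since the factors $A^{(k)}_1A^{(k)}_2 - A^{(k)}_3A^{(k)}_4$ are nonzero (they are the relevant multi-affine determinants for $Q_i$ to be quad-equations), each such relation pins down a product of parameters to be zero, splitting into a small number of cases.

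Next I would carry out the case analysis. After applying the brackets I expect to land on a handful of sub-cases, each asserting that one or two four-parameter blocks vanish (for instance $A^{(1)}_3 = B^{(1)}_3 = A^{(1)}_4 = B^{(1)}_4 = 0$, or $A^{(3)}_1 = B^{(3)}_1 = A^{(3)}_2 = B^{(3)}_2 = 0$, etc.). In each sub-case I would substitute back into the remaining $C^{(ij)}$ and exhibit two of them — the pattern in the earlier lemmas is to display a pair such as $C^{(\cdot)}$ and $C^{(45)}$ (or $C^{(25)}$) — that reduce to single nonzero monomials in the surviving parameters, hence cannot both vanish. Because the system must also satisfy \eqref{eqn:caseN3_CAO_B2}, any relation among the $B^{(2)}_i$ forced by those constraints can be used to eliminate further cases. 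The symmetries $s_{23}$ and $\iota$ from $G^{(CO)}$ (acting on the parameters as recorded in \S\ref{subsection:caseN3_3} and \S\ref{subsection:caseN3_1}) will collapse several of the sub-cases into one another, so in practice only two or three genuinely distinct sub-cases need to be checked.

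The main obstacle I anticipate is purely bookkeeping: identifying, among the dozens of possible linear combinations of the six vanishing $C^{(ij)}$, the ones that factor cleanly through $(A^{(k)}_1A^{(k)}_2 - A^{(k)}_3A^{(k)}_4)$, and then, in each resulting sub-case, locating the ``surviving'' pair of $C$'s that are nonzero monomials. This is the same difficulty as in Lemmas \ref{lemma:case3_u1u4_pole1}--\ref{lemma:case3_v1_pole}, and I would handle it the same way — by systematically using the bracket operator $[\,\cdot\,,\cdot\,,\cdot\,,\cdot\,]_{A^{(2)}}$ (or its analogue for $A^{(1)}$, $A^{(3)}$) to kill the $A^{(2)}$-dependence and reduce to purely numerical sign/nonvanishing checks. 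Once every sub-case yields a contradiction with the quad-equation hypothesis, the condition \eqref{eqn:case3_condp1} is excluded, completing the proof.
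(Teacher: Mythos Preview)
Your first two stages match the paper: forming linear combinations of the six vanishing $C^{(ij)}$ that factor through $(A^{(1)}_1A^{(1)}_2-A^{(1)}_3A^{(1)}_4)(A^{(2)}_1A^{(2)}_2-A^{(2)}_3A^{(2)}_4)$ and $(A^{(2)}_1A^{(2)}_2-A^{(2)}_3A^{(2)}_4)(A^{(3)}_1A^{(3)}_2-A^{(3)}_3A^{(3)}_4)$ does indeed pin down four sub-cases, each with two four-parameter blocks vanishing (the paper gets $\{A^{(1)}_i=B^{(1)}_i=0,\ i=1,2\}\times\{A^{(3)}_j=B^{(3)}_j=0,\ j=1,2\text{ or }j=3,4\}$ and the analogous pair with $i=3,4$).

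The gap is in your contradiction mechanism. You expect, in each sub-case, to exhibit a pair of $C^{(ij)}$ among the six that become nonzero monomials and hence cannot both vanish. That does not happen here. In each sub-case the six conditions \eqref{eqn:case3_condp1} \emph{are} consistent: substituting back, $C^{(13)}$ and $C^{(31)}$ reduce to monomials in the $A^{(2)}_\ell$, and setting them to zero forces a further pair $A^{(2)}_k=B^{(2)}_k=0$ (e.g.\ $A^{(2)}_1=A^{(2)}_2=0$ in the first sub-case), after which all six $C^{(ij)}$ genuinely vanish and the quad-equation hypothesis is still satisfied. Your reference to $C^{(45)}$ or $C^{(25)}$ is a red herring: those are not required to vanish by \eqref{eqn:case3_condp1}, so their nonvanishing is no contradiction. (The pattern you recall from Lemma~\ref{lemma:case3_v1_pole} worked only because the pole hypothesis there forced twelve $C$'s, including $C^{(45)}$, to vanish.)

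The actual contradiction is with the \emph{square property}. Once the sub-case conditions and the induced $A^{(2)}$-vanishings are imposed, a direct computation of $v_4^{(r)}$ shows it collapses to a constant multiple of $v_1$, independent of $u_1$ and $u_4$. Any square polynomial $K_1(v_1,u_1,u_4,v_4)$ witnessing this relation would then have $\deg_{u_1}K_1=\deg_{u_4}K_1=0$, violating the degree bounds in the definition of the square property. That is the step you are missing, and it is what rules out each of the four sub-cases.
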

\begin{proof}
Since 
\begin{subequations}
\begin{align}
 0&=(A^{(1)}_4A^{(2)}_4B^{(1)}_2+A^{(1)}_1A^{(2)}_2B^{(1)}_4)C^{(13)}-(A^{(1)}_2A^{(2)}_4B^{(1)}_2+A^{(1)}_3A^{(2)}_2B^{(1)}_4)C^{(14)}\notag\\
 &=(A^{(1)}_1A^{(1)}_2-A^{(1)}_3A^{(1)}_4)(A^{(2)}_1A^{(2)}_2-A^{(2)}_3A^{(2)}_4)B^{(1)}_2B^{(1)}_4A^{(3)}_2B^{(3)}_2,\\
 0&=(A^{(1)}_4A^{(2)}_1B^{(1)}_2+A^{(1)}_1A^{(2)}_3B^{(1)}_4)C^{(13)}-(A^{(1)}_2A^{(2)}_1B^{(1)}_2+A^{(1)}_3A^{(2)}_3B^{(1)}_4)C^{(14)}\notag\\
 &=-(A^{(1)}_1A^{(1)}_2-A^{(1)}_3A^{(1)}_4)(A^{(2)}_1A^{(2)}_2-A^{(2)}_3A^{(2)}_4)B^{(1)}_2B^{(1)}_4B^{(3)}_3A^{(3)}_4,\\
 0&=(A^{(2)}_4A^{(3)}_3B^{(3)}_1+A^{(2)}_1A^{(3)}_1B^{(3)}_4)C^{(31)}-(A^{(2)}_4A^{(3)}_2B^{(3)}_1+A^{(2)}_1A^{(3)}_4B^{(3)}_4)C^{(32)}\notag\\
 &=(A^{(2)}_1A^{(2)}_2-A^{(2)}_3A^{(2)}_4)(A^{(3)}_1A^{(3)}_2-A^{(3)}_3A^{(3)}_4)B^{(1)}_1A^{(1)}_2B^{(3)}_1B^{(3)}_4,\\
 0&=(A^{(2)}_2A^{(3)}_3B^{(3)}_1+A^{(2)}_3A^{(3)}_1B^{(3)}_4)C^{(31)}-(A^{(2)}_2A^{(3)}_2B^{(3)}_1+A^{(2)}_3A^{(3)}_4B^{(3)}_4)C^{(32)}\notag\\
 &=-(A^{(2)}_1A^{(2)}_2-A^{(2)}_3A^{(2)}_4)(A^{(3)}_1A^{(3)}_2-A^{(3)}_3A^{(3)}_4)A^{(1)}_3B^{(1)}_3B^{(3)}_1B^{(3)}_4,\\
 0&=(A^{(2)}_3A^{(3)}_3B^{(3)}_2+A^{(2)}_2A^{(3)}_1B^{(3)}_3)C^{(13)}-(A^{(2)}_3A^{(3)}_2B^{(3)}_2+A^{(2)}_2A^{(3)}_4B^{(3)}_3)C^{(23)}\notag\\
 &=(A^{(2)}_1A^{(2)}_2-A^{(2)}_3A^{(2)}_4)(A^{(3)}_1A^{(3)}_2-A^{(3)}_3A^{(3)}_4)A^{(1)}_2B^{(1)}_2B^{(3)}_2B^{(3)}_3,\\
 0&=(A^{(2)}_1A^{(3)}_3B^{(3)}_2+A^{(2)}_4A^{(3)}_1B^{(3)}_3)C^{(13)}-(A^{(2)}_1A^{(3)}_2B^{(3)}_2+A^{(2)}_4A^{(3)}_4B^{(3)}_3)C^{(23)}\notag\\
 &=-(A^{(2)}_1A^{(2)}_2-A^{(2)}_3A^{(2)}_4)(A^{(3)}_1A^{(3)}_2-A^{(3)}_3A^{(3)}_4)A^{(1)}_3B^{(1)}_4B^{(3)}_2B^{(3)}_3,\\
 0&=(A^{(1)}_4A^{(2)}_3B^{(1)}_1+A^{(1)}_1A^{(2)}_1B^{(1)}_3)C^{(31)}-(A^{(1)}_2A^{(2)}_3B^{(1)}_1+A^{(1)}_3A^{(2)}_1B^{(1)}_3)C^{(41)}\notag\\
 &=(A^{(1)}_1A^{(1)}_2-A^{(1)}_3A^{(1)}_4)(A^{(2)}_1A^{(2)}_2-A^{(2)}_3A^{(2)}_4)B^{(1)}_1B^{(1)}_3B^{(3)}_1A^{(3)}_2,\\
 0&=(A^{(1)}_4A^{(2)}_2B^{(1)}_1+A^{(1)}_1A^{(2)}_4B^{(1)}_3)C^{(31)}-(A^{(1)}_2A^{(2)}_2B^{(1)}_1+A^{(1)}_3A^{(2)}_4B^{(1)}_3)C^{(41)}\notag\\
 &=-(A^{(1)}_1A^{(1)}_2-A^{(1)}_3A^{(1)}_4)(A^{(2)}_1A^{(2)}_2-A^{(2)}_3A^{(2)}_4)B^{(1)}_1B^{(1)}_3A^{(3)}_4B^{(3)}_4,
\end{align}
we obtain the following four cases:
\begin{align}
 &A^{(1)}_1=B^{(1)}_1=A^{(1)}_2=B^{(1)}_2=A^{(3)}_1=B^{(3)}_1=A^{(3)}_2=B^{(3)}_2=0;\label{eqn:case3_condp1_1}\\
 &A^{(1)}_1=B^{(1)}_1=A^{(1)}_2=B^{(1)}_2=A^{(3)}_3=B^{(3)}_3=A^{(3)}_4=B^{(3)}_4=0;\label{eqn:case3_condp1_2}\\
 &A^{(1)}_3=B^{(1)}_3=A^{(1)}_4=B^{(1)}_4=A^{(3)}_1=B^{(3)}_1=A^{(3)}_2=B^{(3)}_2=0;\label{eqn:case3_condp1_3}\\
 &A^{(1)}_3=B^{(1)}_3=A^{(1)}_4=B^{(1)}_4=A^{(3)}_3=B^{(3)}_3=A^{(3)}_4=B^{(3)}_4=0.\label{eqn:case3_condp1_4}
\end{align}
\end{subequations}

However, all cases are inadmissible.
Indeed, if \eqref{eqn:case3_condp1_1}, then from
\begin{equation}
 0=C^{(13)}=A^{(1)}_3B^{(1)}_4A^{(2)}_2B^{(3)}_3A^{(3)}_4,\quad
 0=C^{(31)}=A^{(1)}_3B^{(1)}_3A^{(2)}_1A^{(3)}_4B^{(3)}_4,
\end{equation}
we obtain $A^{(2)}_1=B^{(2)}_1=A^{(2)}_2=B^{(2)}_2=0$.
Then, we obtain 
\begin{equation}
 v_4^{(r)}=-\dfrac{A^{(1)}_4A^{(2)}_4A^{(3)}_4B^{(1)}_3B^{(3)}_3}{A^{(1)}_3A^{(2)}_3A^{(3)}_3B^{(1)}_4B^{(3)}_4}v_1,
\end{equation}
which is inconsistent with the square property.
Similarly, if \eqref{eqn:case3_condp1_2}, then from
\begin{equation}
 0=C^{(13)}=A^{(1)}_3B^{(1)}_4A^{(2)}_3A^{(3)}_2B^{(3)}_2,\quad
 0=C^{(31)}=A^{(1)}_3B^{(1)}_3A^{(2)}_4A^{(3)}_2B^{(3)}_1,
\end{equation}
we obtain $A^{(2)}_3=B^{(2)}_3=A^{(2)}_4=B^{(2)}_4=0$,
which gives the following inadequate result:
\begin{equation}
 v_4^{(r)}=-\dfrac{A^{(1)}_4A^{(2)}_1A^{(3)}_2B^{(1)}_3B^{(3)}_2}{A^{(1)}_3A^{(2)}_2A^{(3)}_1B^{(1)}_4B^{(3)}_1}v_1,
\end{equation}
if \eqref{eqn:case3_condp1_3}, then from
\begin{equation}
 0=C^{(13)}=A^{(1)}_2B^{(1)}_2A^{(2)}_4A^{(3)}_4B^{(3)}_3,\quad
 0=C^{(31)}=B^{(1)}_1A^{(1)}_2A^{(2)}_3A^{(3)}_4B^{(3)}_4,
\end{equation}
we obtain $A^{(2)}_3=B^{(2)}_3=A^{(2)}_4=B^{(2)}_4=0$,
which gives the following inadequate result:
\begin{equation}
 v_4^{(r)}=-\dfrac{A^{(1)}_1A^{(2)}_2A^{(3)}_4B^{(1)}_1B^{(3)}_3}{A^{(1)}_2A^{(2)}_1A^{(3)}_3B^{(1)}_2B^{(3)}_4}v_1,
\end{equation}
and if \eqref{eqn:case3_condp1_4}, then from
\begin{equation}
 0=C^{(13)}=A^{(1)}_2B^{(1)}_2A^{(2)}_1A^{(3)}_2B^{(3)}_2,\quad
 0=C^{(31)}=B^{(1)}_1A^{(1)}_2A^{(2)}_2A^{(3)}_2B^{(3)}_1,
\end{equation}
we obtain $A^{(2)}_1=B^{(2)}_1=A^{(2)}_2=B^{(2)}_2=0$, 
which gives the following inadequate result:
\begin{equation}
 v_4^{(r)}=-\dfrac{A^{(1)}_1A^{(2)}_3A^{(3)}_2B^{(1)}_1B^{(3)}_2}{A^{(1)}_2A^{(2)}_4A^{(3)}_1B^{(1)}_2B^{(3)}_1}v_1,
\end{equation}
Therefore, we have completed the proof.
\end{proof}

\begin{lemma}\label{lemma:case3_condp2}
The condition \eqref{eqn:case3_condp2} is inadequate.
\end{lemma}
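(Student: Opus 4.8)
\textbf{Proof plan for Lemma~\ref{lemma:case3_condp2}.}
The plan is to proceed exactly as in the proof of Lemma~\ref{lemma:case3_condp1}: assume the condition \eqref{eqn:case3_condp2}, namely
\begin{equation*}
 C^{(13)}=C^{(31)}=C^{(32)}=C^{(41)}=C^{(44)}=0,
\end{equation*}
together with all the vanishing relations already derived above (in particular $C^{(21)}=C^{(23)}=C^{(25)}=C^{(35)}=C^{(43)}=C^{(45)}=0$), and derive a contradiction with the square property. First I would exploit that each $C^{(ij)}$ is linear in the four parameters $A^{(2)}_1,\dots,A^{(2)}_4$ and, using the fact that $(A^{(2)}_1,\dots,A^{(2)}_4)\neq(0,0,0,0)$, form suitable $\bbC$-linear combinations of the vanishing $C^{(ij)}$'s that eliminate these four parameters. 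As in the previous lemma, the combinations
$(A^{(2)}_4A^{(3)}_3B^{(3)}_1+A^{(2)}_1A^{(3)}_1B^{(3)}_4)C^{(31)}-(A^{(2)}_4A^{(3)}_2B^{(3)}_1+A^{(2)}_1A^{(3)}_4B^{(3)}_4)C^{(32)}$ and its companion factor cleanly as
\begin{equation*}
 (A^{(2)}_1A^{(2)}_2-A^{(2)}_3A^{(2)}_4)(A^{(3)}_1A^{(3)}_2-A^{(3)}_3A^{(3)}_4)\cdot(\text{monomial in }A^{(1)},B^{(1)},B^{(3)}),
\end{equation*}
and since neither discriminant factor $A^{(2)}_1A^{(2)}_2-A^{(2)}_3A^{(2)}_4$ nor $A^{(3)}_1A^{(3)}_2-A^{(3)}_3A^{(3)}_4$ can vanish (these are the conditions for $P_2,P_3$ to be quad-equations, cf.\ \eqref{eqns:biquadratics_discriminants_N3}), each such relation forces a monomial in the remaining parameters to vanish. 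Collecting the implied disjunctions, I expect to land on a short list of cases of the same shape as \eqref{eqn:case3_condp1_1}--\eqref{eqn:case3_condp1_4}, i.e.\ ``four of the $A^{(1)}_i,B^{(1)}_i$ vanish and four of the $A^{(3)}_i,B^{(3)}_i$ vanish''.

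In each surviving case I would then substitute back into $C^{(13)}$ and $C^{(31)}$ (or whichever two $C^{(ij)}$ remain genuinely constraining), observe that these reduce to single monomials in the still-free parameters, and conclude that some pair among $A^{(2)}_i,B^{(2)}_i$ must vanish. Feeding all of this into the explicit formulas for $F_1,F_2,G_1,G_2$ collapses $v_4^{(r)}$ to something of the form $v_4^{(r)}=\text{const}\cdot v_1$ (a pure monomial, with no dependence on $u_1,u_3,u_4$, or with the wrong degree structure in $v_1$), which is incompatible with the square property: the square equation $K_1(v_1,u_1,u_4,v_4)=0$ requires $\deg_{u_1}K_1=\deg_{v_4}K_1=1$ and $\deg_{u_4}K_1\geq1$, so $v_4$ cannot be independent of $u_4$. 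Thus every branch is inadmissible and the lemma follows.

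The main obstacle I anticipate is bookkeeping rather than conceptual: one must be careful that the $C^{(ij)}$ used to build the eliminating combinations are genuinely among those already known to vanish under \eqref{eqn:case3_condp2} (the set is slightly different from the one available in Lemma~\ref{lemma:case3_condp1}, since here $C^{(22)}$ need not vanish while $C^{(44)}$ does), and that in each final case one exhibits the $v_4^{(r)}$ contradiction with the correct parameters nonzero. A secondary subtlety is the possibility that in some branch $v_4^{(r)}$ is not literally $\text{const}\cdot v_1$ but still fails the square-property degree constraints; in that situation I would argue directly from the degrees of $F_1,F_2,G_1,G_2$ in $u_4$ (using that $C^{(13)}=C^{(31)}=0$ already kills the top-degree-in-$u_1$ terms) to show that no admissible square polynomial $K_1$ can exist. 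Modulo this routine case analysis, the argument is a faithful mirror of the preceding lemma.
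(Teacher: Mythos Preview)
Your overall strategy---eliminate $A^{(2)}_k$ via linear combinations of the vanishing $C^{(ij)}$, land on a finite case list, and in each case contradict the square property---is exactly the paper's strategy. But your execution rests on a misreading of what has actually been established at this point.

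You write that, in addition to \eqref{eqn:case3_condp2}, one may use ``in particular $C^{(21)}=C^{(23)}=C^{(25)}=C^{(35)}=C^{(43)}=C^{(45)}=0$''. These are \emph{not} available. They appear only in the proof of Lemma~\ref{lemma:case3_v1_pole}, under the temporary hypothesis that $v_1=0$ is a pole of $v_4$, a hypothesis which is then shown to be inadmissible. At the point where \eqref{eqn:case3_condp2} is being analysed, the only vanishing conditions in force are $C^{(13)}=C^{(31)}=C^{(32)}=C^{(41)}=C^{(44)}=0$; nothing forces $C^{(23)}$ or $C^{(43)}$ to vanish (indeed, in Lemma~\ref{lemma:case3_condp4_Cneq0} they are eventually shown to be nonzero in the surviving case). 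Consequently you have fewer linear combinations to work with than you think.

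This changes the shape of the case analysis. From $C^{(31)}=C^{(32)}=0$ and $C^{(31)}=C^{(41)}=0$ the paper obtains four cases in which only \emph{one} pair $A^{(1)}_i=B^{(1)}_i=0$ and one pair $A^{(3)}_j=B^{(3)}_j=0$ vanish (e.g.\ $A^{(1)}_1=B^{(1)}_1=A^{(3)}_1=B^{(3)}_1=0$), not four-of-each as in \eqref{eqn:case3_condp1_1}--\eqref{eqn:case3_condp1_4}. The subsequent argument in each case also differs from what you sketch: in two of the four cases one must first use $C^{(13)}=0$, then $C^{(35)}=0$ (derived from the formulas, not assumed), and sometimes also $C^{(44)}=0$ or $C^{(22)}=0$, to whittle down to a situation where $v_4^{(r)}$ violates the square property; and the offending $v_4^{(r)}$ is not always of the form $\text{const}\cdot v_1$---in two cases it is a genuinely nontrivial rational expression in $u_4,v_1$ that still fails the degree constraints of a square equation. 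So the case-by-case work is somewhat more involved than a uniform ``collapses to $\text{const}\cdot v_1$''.
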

\begin{proof}
Since 
\begin{subequations}
\begin{align}
 0&=(A^{(2)}_4A^{(3)}_3B^{(3)}_1+A^{(2)}_1A^{(3)}_1B^{(3)}_4)C^{(31)}-(A^{(2)}_4A^{(3)}_2B^{(3)}_1+A^{(2)}_1A^{(3)}_4B^{(3)}_4)C^{(32)}\notag\\
 &=(A^{(2)}_1A^{(2)}_2-A^{(2)}_3A^{(2)}_4)(A^{(3)}_1A^{(3)}_2-A^{(3)}_3A^{(3)}_4)B^{(1)}_1A^{(1)}_2B^{(3)}_1B^{(3)}_4,\\
 0&=(A^{(2)}_2A^{(3)}_3B^{(3)}_1+A^{(2)}_3A^{(3)}_1B^{(3)}_4)C^{(31)}-(A^{(2)}_2A^{(3)}_2B^{(3)}_1+A^{(2)}_3A^{(3)}_4B^{(3)}_4)C^{(32)}\notag\\
 &=-(A^{(2)}_1A^{(2)}_2-A^{(2)}_3A^{(2)}_4)(A^{(3)}_1A^{(3)}_2-A^{(3)}_3A^{(3)}_4)A^{(1)}_3B^{(1)}_3B^{(3)}_1B^{(3)}_4,\\
 0&=(A^{(1)}_4A^{(2)}_3B^{(1)}_1+A^{(1)}_1A^{(2)}_1B^{(1)}_3)C^{(31)}-(A^{(1)}_2A^{(2)}_3B^{(1)}_1+A^{(1)}_3A^{(2)}_1B^{(1)}_3)C^{(41)}\notag\\
 &=(A^{(1)}_1A^{(1)}_2-A^{(1)}_3A^{(1)}_4)(A^{(2)}_1A^{(2)}_2-A^{(2)}_3A^{(2)}_4)B^{(1)}_1B^{(1)}_3B^{(3)}_1A^{(3)}_2,\\
 0&=(A^{(1)}_4A^{(2)}_2B^{(1)}_1+A^{(1)}_1A^{(2)}_4B^{(1)}_3)C^{(31)}-(A^{(1)}_2A^{(2)}_2B^{(1)}_1+A^{(1)}_3A^{(2)}_4B^{(1)}_3)C^{(41)}\notag\\
 &=-(A^{(1)}_1A^{(1)}_2-A^{(1)}_3A^{(1)}_4)(A^{(2)}_1A^{(2)}_2-A^{(2)}_3A^{(2)}_4)B^{(1)}_1B^{(1)}_3A^{(3)}_4B^{(3)}_4,
\end{align}
\end{subequations}
we obtain the following four cases:
\begin{subequations}
\begin{align}
 &A^{(1)}_1=B^{(1)}_1=A^{(3)}_1=B^{(3)}_1=0;\label{eqn:case3_condp2_1}\\
 &A^{(1)}_1=B^{(1)}_1=A^{(3)}_4=B^{(3)}_4=0;\label{eqn:case3_condp2_2}\\
 &A^{(1)}_3=B^{(1)}_3=A^{(3)}_1=B^{(3)}_1=0;\label{eqn:case3_condp2_3}\\
 &A^{(1)}_3=B^{(1)}_3=A^{(3)}_4=B^{(3)}_4= 0.\label{eqn:case3_condp2_4}
\end{align}
\end{subequations}
However, all cases are inadmissible.
Indeed, if \eqref{eqn:case3_condp2_1}, then from
\begin{equation}
 0=C^{(31)}=A^{(1)}_3B^{(1)}_3A^{(2)}_1A^{(3)}_4B^{(3)}_4,\quad
 0=C^{(44)}=A^{(1)}_4B^{(1)}_4A^{(2)}_2A^{(3)}_3B^{(3)}_3,
\end{equation}
we obtain $A^{(2)}_1=B^{(2)}_1=A^{(2)}_2=B^{(2)}_2=0$.
Then, with the condition \eqref{eqn:caseN3_CAO_B2}, we obtain 
\begin{equation}
 v_4^{(r)}=-\dfrac{A^{(1)}_2A^{(2)}_4A^{(3)}_4B^{(1)}_3B^{(3)}_3+A^{(1)}_3A^{(2)}_3A^{(3)}_2B^{(1)}_4B^{(3)}_4+A^{(1)}_4A^{(2)}_4A^{(3)}_4B^{(1)}_3B^{(3)}_3u_4v_1}{A^{(1)}_3A^{(2)}_3A^{(3)}_3B^{(1)}_4B^{(3)}_4u_4},
\end{equation}
which is inconsistent with the square property.
Similarly, if \eqref{eqn:case3_condp2_2}, then from
\begin{equation}
 C^{(31)}=0,\quad
 C^{(35)}=0,\quad
 C^{(13)}=0,\quad
 C^{(44)}=0,
\end{equation}
we respectively obtain
\begin{equation}
 A^{(2)}_4=B^{(2)}_4=0,\quad
 A^{(1)}_2=B^{(1)}_2=0,\quad
 A^{(2)}_3=B^{(2)}_3=0,\quad
 A^{(3)}_3=B^{(3)}_3=0,
\end{equation}
and then with the condition \eqref{eqn:caseN3_CAO_B2} we obtain the following inadequate result:
\begin{equation}
 v_4^{(r)}=-\dfrac{A^{(1)}_4A^{(2)}_1A^{(3)}_2B^{(1)}_3B^{(3)}_2}{A^{(1)}_3A^{(2)}_2A^{(3)}_1B^{(1)}_4B^{(3)}_1}v_1.
\end{equation}
If \eqref{eqn:case3_condp2_3}, then from
\begin{equation}
 C^{(31)}=0,\quad
 C^{(35)}=0,\quad
 C^{(13)}=0,
\end{equation}
we respectively obtain
\begin{equation}
 A^{(2)}_3=B^{(2)}_3=0,\quad
 A^{(3)}_2=B^{(3)}_2=0,\quad
 A^{(2)}_4=B^{(2)}_4=0,
\end{equation}
and then with the condition \eqref{eqn:caseN3_CAO_B2} we obtain the following inadequate result:
\begin{equation}
 v_4^{(r)}=-\dfrac{A^{(1)}_1A^{(2)}_2A^{(3)}_4B^{(1)}_1B^{(3)}_3}{A^{(1)}_2A^{(2)}_1A^{(3)}_3B^{(1)}_2B^{(3)}_4}v_1.
\end{equation}
Finally, if \eqref{eqn:case3_condp2_4}, then from
\begin{equation}
 0=C^{(13)}=A^{(1)}_2B^{(1)}_2A^{(2)}_1A^{(3)}_2B^{(3)}_2,\quad
 0=C^{(31)}=B^{(1)}_1A^{(1)}_2A^{(2)}_2B^{(3)}_1A^{(3)}_2,
\end{equation}
we obtain $A^{(2)}_1=B^{(2)}_1=A^{(2)}_2=B^{(2)}_2=0$,
and then with the condition \eqref{eqn:caseN3_CAO_B2} we obtain the following inadequate result:
\begin{equation}
 v_4^{(r)}=-\dfrac{A^{(1)}_1A^{(2)}_3A^{(3)}_2B^{(1)}_1B^{(3)}_2v_1}{A^{(1)}_2A^{(2)}_4A^{(3)}_1B^{(1)}_2B^{(3)}_1+(A^{(1)}_4A^{(2)}_4A^{(3)}_1B^{(1)}_2B^{(3)}_1+A^{(1)}_1A^{(2)}_3A^{(3)}_3B^{(1)}_1B^{(3)}_2)u_4v_1}.
\end{equation}
Therefore, we have completed the proof.
\end{proof}

\begin{lemma}\label{lemma:case3_condp3}
The condition \eqref{eqn:case3_condp3} is inadequate.
\end{lemma}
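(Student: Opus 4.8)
The plan is to follow the template already used in the proofs of Lemmas \ref{lemma:case3_condp1} and \ref{lemma:case3_condp2}. Starting from \eqref{eqn:case3_condp3}, i.e. $C^{(13)}=C^{(14)}=C^{(22)}=C^{(23)}=C^{(31)}=0$, I would first feed these vanishings into the bilinear identities among the $C^{(ij)}$ --- the same family of identities that is written out explicitly in the two preceding lemmas. Each such identity expresses a $\bbC$-linear combination of two of the vanishing $C^{(ij)}$ (with coefficients built from the $A^{(i)}_j$ and $B^{(i)}_j$) as a product of the form $(A^{(1)}_1A^{(1)}_2-A^{(1)}_3A^{(1)}_4)(A^{(2)}_1A^{(2)}_2-A^{(2)}_3A^{(2)}_4)$ times two of the boundary parameters times one further $A^{(3)}$- or $B^{(3)}$-factor. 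Since the $2\times2$ minors $A^{(i)}_1A^{(i)}_2-A^{(i)}_3A^{(i)}_4$ and $B^{(i)}_1B^{(i)}_2-B^{(i)}_3B^{(i)}_4$ are all nonzero (precisely the requirement that the $P_i$ be genuine quad-equations, recalled in \S\ref{section:proof_lemma_CACO_CO1CO2CO3}), each identity forces one of a short list of parameters to vanish, and running through the relevant identities attached to the vanishing $C^{(ij)}$ produces several disjunctions.

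Combining these disjunctions should reduce the situation to a handful of cases, each of the form ``four of the twelve parameters $A^{(1)}_j,B^{(1)}_j,A^{(3)}_j,B^{(3)}_j$ vanish'', exactly as in the cases \eqref{eqn:case3_condp1_1}--\eqref{eqn:case3_condp1_4} of Lemma \ref{lemma:case3_condp1}. In each such case I would substitute back into the defining formulas \eqref{eqn:typeII_defC11}--\eqref{eqn:typeII_defC46} for the remaining $C^{(ij)}$, use the constraints already in force (e.g. $C^{(31)}=0$, together with any additional vanishing such as $C^{(35)}=0$ that is then forced) to annihilate one of the $A^{(2)}$-pairs, and read off $v_4^{(r)}$. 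In every branch one of two outcomes occurs: either some $C^{(ij)}$ that must vanish turns out to equal a nonzero monomial in the surviving parameters --- an immediate contradiction --- or $v_4^{(r)}$ collapses to a M\"obius-trivial expression, e.g. $v_4^{(r)}=(\text{nonzero const})\cdot v_1$ or a rational function of $v_1$ alone of too low a degree in $u_1$ and $u_4$, which is incompatible with the square property, since $K_1(v_1,u_1,u_4,v_4)=0$ must have $\deg_{u_1}K_1=\deg_{u_4}K_1=1$ and depend genuinely on all four arguments. Either way the case is ruled out, so \eqref{eqn:case3_condp3} is inadmissible.

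The computation is routine but voluminous, and the only real obstacle is organizational: keeping the case split clean and certifying, in each branch, that the offending $C^{(ij)}$-monomial is genuinely nonzero (so the contradiction is valid) or that the resulting $v_4^{(r)}$ genuinely fails the degree requirement of the square equation $K_1$. Once this bookkeeping has been carried out in each case, the proof is complete.
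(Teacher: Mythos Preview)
Your plan matches the paper's approach in spirit, but there is one concrete missing ingredient. Before any of the bilinear identities are invoked, the paper first extracts the extra vanishing $C^{(15)}=0$. This does \emph{not} come from an algebraic identity among the $C^{(ij)}$; it comes from the asymptotics of $F_1,F_2$ as $u_4\to 0$, namely
\[
F_1=-u_1u_4C^{(15)}+O(u_4^2),\qquad F_2=u_1u_4^2(u_1v_1C^{(24)}+C^{(25)})+O(u_4^3),
\]
together with the facts (already established in Lemmas \ref{lemma:case3_u1u4_pole1}--\ref{lemma:case3_u1u4_zero}) that $u_4=0$ is neither a pole nor a zero of $v_4$. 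You should build this step in explicitly: it is the analogue of the $C^{(13)}=C^{(31)}=0$ step that precedes \eqref{eqn:case3_condp1}--\eqref{eqn:case3_condp4_1}, and it is genuinely outside the ``bilinear identity'' mechanism you describe.

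This matters because $C^{(15)}=0$ is used essentially in two of the four terminal branches (the analogues of \eqref{eqn:case3_condp3_2} and \eqref{eqn:case3_condp3_3}). In those branches the paper chains together $C^{(13)}=0$, $C^{(15)}=0$, $C^{(31)}=0$, $C^{(22)}=0$ --- each now reduced to a single monomial --- to kill four further parameter pairs before $v_4^{(r)}$ collapses to a constant multiple of $v_1$. Without $C^{(15)}=0$ in hand, those chains do not close and you will not reach the contradiction with the square property. Once you add this step, your outline is exactly the paper's argument: the four cases $A^{(1)}_j=B^{(1)}_j=A^{(3)}_k=B^{(3)}_k=0$ for $j\in\{2,4\}$, $k\in\{2,3\}$ arise from the identities you describe, and each terminates either in a nonzero monomial forced to vanish or in $v_4^{(r)}$ degenerating to a form incompatible with the square property.
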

\begin{proof}
Since
\begin{equation*}
 F_1=-u_1u_4C^{(15)}+O({u_4}^2),\quad
 F_2=u_1{u_4}^2(u_1v_1C^{(24)}+C^{(25)})+O({u_4}^3),
\end{equation*}
as $u_4\to0$ and $u_4=0$ is not a pole nor zero of $v_4$, we obtain
\begin{equation}
 C^{(15)}=0.
\end{equation}
Since 
\begin{subequations}
\begin{align}
 0&=(A^{(1)}_4A^{(2)}_4B^{(1)}_2+A^{(1)}_1A^{(2)}_2B^{(1)}_4)C^{(13)}-(A^{(1)}_2A^{(2)}_4B^{(1)}_2+A^{(1)}_3A^{(2)}_2B^{(1)}_4)C^{(14)}\notag\\
 &=(A^{(1)}_1A^{(1)}_2-A^{(1)}_3A^{(1)}_4)(A^{(2)}_1A^{(2)}_2-A^{(2)}_3A^{(2)}_4)B^{(1)}_2B^{(1)}_4A^{(3)}_2B^{(3)}_2,\\
 0&=(A^{(1)}_4A^{(2)}_1B^{(1)}_2+A^{(1)}_1A^{(2)}_3B^{(1)}_4)C^{(13)}-(A^{(1)}_2A^{(2)}_1B^{(1)}_2+A^{(1)}_3A^{(2)}_3B^{(1)}_4)C^{(14)}\notag\\
 &=-(A^{(1)}_1A^{(1)}_2-A^{(1)}_3A^{(1)}_4)(A^{(2)}_1A^{(2)}_2-A^{(2)}_3A^{(2)}_4)B^{(1)}_2B^{(1)}_4B^{(3)}_3A^{(3)}_4,\\
 0&=(A^{(2)}_3A^{(3)}_3B^{(3)}_2+A^{(2)}_2A^{(3)}_1B^{(3)}_3)C^{(13)}-(A^{(2)}_3A^{(3)}_2B^{(3)}_2+A^{(2)}_2A^{(3)}_4B^{(3)}_3)C^{(23)}\notag\\
 &=(A^{(2)}_1A^{(2)}_2-A^{(2)}_3A^{(2)}_4)(A^{(3)}_1A^{(3)}_2-A^{(3)}_3A^{(3)}_4)A^{(1)}_2B^{(1)}_2B^{(3)}_2B^{(3)}_3,\\
 0&=(A^{(2)}_1A^{(3)}_3B^{(3)}_2+A^{(2)}_4A^{(3)}_1B^{(3)}_3)C^{(13)}-(A^{(2)}_1A^{(3)}_2B^{(3)}_2+A^{(2)}_4A^{(3)}_4B^{(3)}_3)C^{(23)}\notag\\
 &=-(A^{(2)}_1A^{(2)}_2-A^{(2)}_3A^{(2)}_4)(A^{(3)}_1A^{(3)}_2-A^{(3)}_3A^{(3)}_4)A^{(1)}_3B^{(1)}_4B^{(3)}_2B^{(3)}_3,
\end{align}
\end{subequations}
we obtain the following four cases:
\begin{subequations}
\begin{align}
 &A^{(1)}_2=B^{(1)}_2=A^{(3)}_2=B^{(3)}_2=0;\label{eqn:case3_condp3_1}\\
 &A^{(1)}_2=B^{(1)}_2=A^{(3)}_3=B^{(3)}_3=0;\label{eqn:case3_condp3_2}\\
 &A^{(1)}_4=B^{(1)}_4=A^{(3)}_2=B^{(3)}_2=0;\label{eqn:case3_condp3_3}\\
 &A^{(1)}_4=B^{(1)}_4=A^{(3)}_3=B^{(3)}_3=0.\label{eqn:case3_condp3_4}
\end{align}
\end{subequations}
However, all cases are inadmissible.
Indeed, if \eqref{eqn:case3_condp3_1}, then from
\begin{equation}
 0=C^{(13)}=A^{(1)}_3B^{(1)}_4A^{(2)}_2A^{(3)}_4B^{(3)}_3,\quad
 0=C^{(31)}=A^{(1)}_3B^{(1)}_3A^{(2)}_1A^{(3)}_4B^{(3)}_4,
\end{equation}
we obtain $A^{(2)}_1=B^{(2)}_1=A^{(2)}_2=B^{(2)}_2=0$.
Then, with the condition \eqref{eqn:caseN3_CAO_B2}, we obtain 
\begin{equation}
 v_4^{(r)}=-\dfrac{A^{(1)}_4A^{(2)}_4A^{(3)}_4B^{(1)}_3B^{(3)}_3v_1}{A^{(1)}_3A^{(2)}_3A^{(3)}_3B^{(1)}_4B^{(3)}_4+(A^{(1)}_4A^{(2)}_4A^{(3)}_1B^{(1)}_3B^{(3)}_3+A^{(1)}_1A^{(2)}_3A^{(3)}_3B^{(1)}_4B^{(3)}_4)u_4v_1},
\end{equation}
which is inconsistent with the square property.
Similarly, if \eqref{eqn:case3_condp3_2}, then from
\begin{equation}
 C^{(13)}=0,\quad
 C^{(15)}=0,\quad
 C^{(31)}=0,\quad
 C^{(22)}=0,
\end{equation}
we respectively obtain
\begin{equation}
 A^{(2)}_3=B^{(2)}_3=0,\quad
 A^{(3)}_4=B^{(3)}_4=0,\quad
 A^{(2)}_4=B^{(2)}_4=0,\quad
 A^{(1)}_1=B^{(1)}_1=0,
\end{equation}
and then with the condition \eqref{eqn:caseN3_CAO_B2} we obtain the following inadequate result:
\begin{equation}
 v_4^{(r)}=-\dfrac{A^{(1)}_4A^{(2)}_1A^{(3)}_2B^{(1)}_3B^{(3)}_2}{A^{(1)}_3A^{(2)}_2A^{(3)}_1B^{(1)}_4B^{(3)}_1}v_1.
\end{equation}
If \eqref{eqn:case3_condp3_3}, then from
\begin{equation}
 C^{(13)}=0,\quad
 C^{(15)}=0,\quad
 C^{(31)}=0,\quad
 C^{(22)}=0,\quad
\end{equation}
we respectively obtain
\begin{equation}
 A^{(2)}_4=B^{(2)}_4=0,\quad
 A^{(1)}_3=B^{(1)}_3=0,\quad
 A^{(2)}_3=B^{(2)}_3=0,\quad
 A^{(3)}_1=B^{(3)}_1=0,
\end{equation}
and then with the condition \eqref{eqn:caseN3_CAO_B2} we obtain the following inadequate result:
\begin{equation}
 v_4^{(r)}=-\dfrac{A^{(1)}_1A^{(2)}_2A^{(3)}_4B^{(1)}_1B^{(3)}_3}{A^{(1)}_2A^{(2)}_1A^{(3)}_3B^{(1)}_2B^{(3)}_4}v_1.
\end{equation}
Finally, if \eqref{eqn:case3_condp3_4}, then from
\begin{equation}
 0=C^{(13)}=A^{(1)}_2B^{(1)}_2A^{(2)}_1A^{(3)}_2B^{(3)}_2,\quad
 0=C^{(22)}=A^{(1)}_1B^{(1)}_1A^{(2)}_2A^{(3)}_1B^{(3)}_1,
\end{equation}
we obtain $A^{(2)}_1=B^{(2)}_1=A^{(2)}_2=B^{(2)}_2=0$,
and then with the condition \eqref{eqn:caseN3_CAO_B2} we obtain the following inadequate result:
\begin{equation}
 v_4^{(r)}=-\dfrac{A^{(1)}_2A^{(2)}_4A^{(3)}_4B^{(1)}_2B^{(3)}_1+A^{(1)}_3A^{(2)}_3A^{(3)}_2B^{(1)}_1B^{(3)}_2+A^{(1)}_1A^{(2)}_3A^{(3)}_2B^{(1)}_1B^{(3)}_2u_4v_1}{A^{(1)}_2A^{(2)}_4A^{(3)}_1B^{(1)}_2B^{(3)}_1u_4}.
\end{equation}
Therefore, we have completed the proof.
\end{proof}

From Lemmas \ref{lemma:case3_condp1}--\ref{lemma:case3_condp3}, the condition \eqref{eqn:case3_condp4_1} holds.
This condition leads the conditions \eqref{eqn:typeII_C_nonzero} and \eqref{eqn:typeII_AB_nonzero} as shown in the following lemma.

\begin{lemma}\label{lemma:case3_condp4_Cneq0}
The conditions \eqref{eqn:typeII_C_nonzero} and \eqref{eqn:typeII_AB_nonzero} hold.
\end{lemma}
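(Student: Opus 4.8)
The plan is to show that under the surviving condition \eqref{eqn:case3_condp4_1}, namely $C^{(13)}=C^{(22)}=C^{(31)}=C^{(44)}=0$ (together with the standing assumptions \ref{cond:CO1}--\ref{cond:CO3} and the earlier-derived \eqref{eqn:caseN3_CAO_B2}), none of the eight parameters $C^{(12)},C^{(14)},C^{(21)},C^{(23)},C^{(32)},C^{(34)},C^{(41)},C^{(43)}$ can vanish, and that none of the parameters $A^{(i)}_j,B^{(i)}_j$ with $i\in\{1,3\}$, $j\in\{1,2,3,4\}$ can vanish. The proof is by contradiction in each case: assume one of these vanishes, substitute into the explicit formulas \eqref{eqn:typeII_defC11}--\eqref{eqn:typeII_defC46} and \eqref{eqn:caseN3_CAO_B2}, and derive either a contradiction with one of the already-established facts (the four relations in \eqref{eqn:case3_condp4_1}, the CAO relation \eqref{eqn:caseN3_CAO_B2}, or the four ``no pole / no zero'' Lemmas \ref{lemma:case3_u1u4_pole1}--\ref{lemma:case3_v1_pole}), or with the requirement that $Q_1,\dots,Q_9$ be genuine quad-equations (irreducibility forces certain coefficient combinations to be nonzero).

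The key steps, in order, are as follows. First I would record, under \eqref{eqn:case3_condp4_1}, the simplified forms of $v_4^{(r)}$ and $v_4^{(l)}$ obtained in the body of \S\ref{subsection:caseN3_3}: since $C^{(13)}=C^{(31)}=0$ and (by the no-pole lemmas) $v_1=0$ is not a pole, $v_4^{(r)}=-(F_{11}+F_{12}v_1)/(1+F_{21}v_1)$ with the denominators ${u_4}^2C^{(21)}+u_1(u_1C^{(23)}+u_4C^{(25)})$ and ${u_4}^2C^{(41)}+u_1(u_1C^{(43)}+u_4C^{(45)})$; these expressions make transparent which $C^{(\cdot)}$ must be nonzero for $v_4$ to be a genuine (non-constant, non-degenerate) rational function of the initial data. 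Second, for each of the eight ``$C$-type'' parameters I would assume it is zero and re-examine the behaviour of $v_4$ at $u_1=0$, $u_4=0$, or $v_1=0$; combined with Remark \ref{remark:case3_pole_zero_v4} and the pattern already used repeatedly above (compute the leading terms of $F_1,F_2,G_1,G_2$, force coefficients to vanish via the determinant/bracket relations among the $C^{(\cdot)}$, and land in one of the finitely many sub-cases $A^{(\cdot)}_\bullet=B^{(\cdot)}_\bullet=0$), this reproduces a configuration already shown inadmissible in Lemmas \ref{lemma:case3_condp1}--\ref{lemma:case3_condp3} or yields $v_4$ proportional to $v_1$ (or independent of $v_1$), contradicting the square property. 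Third, for the nonvanishing of $A^{(i)}_j,B^{(i)}_j$ with $i=1,3$: assuming such a parameter is zero, I would show using \eqref{eqn:typeII_defC11}--\eqref{eqn:typeII_defC46} that two of the supposedly-nonzero $C^{(\cdot)}$ are forced to vanish, contradicting the conclusion of the previous step — this is exactly the bootstrapping style ($C^{(\cdot)}$ nonzero $\Rightarrow$ parameters nonzero $\Rightarrow$ more $C^{(\cdot)}$ constraints) used throughout Appendix \ref{section:proof_lemma_CACO_CO1CO2CO3}.

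I would organise the argument so that the eight $C$-type nonvanishing claims are proved first (giving \eqref{eqn:typeII_C_nonzero}), and then deduce \eqref{eqn:typeII_AB_nonzero} from them; the two halves are logically intertwined, so some care is needed to avoid circularity — I will phrase it as: (a) if some $C^{(\cdot)}$ in the list vanishes, trace through the leading-order analysis and the bracket relations to reach a contradiction without assuming any $A^{(i)}_j$ or $B^{(i)}_j$ is nonzero beyond what the quad-equation hypothesis gives; (b) with all eight $C^{(\cdot)}$ now known nonzero, a direct inspection of the defining formulas rules out the vanishing of each $A^{(i)}_j,B^{(i)}_j$ for $i\in\{1,3\}$. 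The main obstacle I anticipate is bookkeeping: the formulas \eqref{eqn:typeII_defC11}--\eqref{eqn:typeII_defC46} are long, and each vanishing hypothesis splits into several sub-cases (typically four, mirroring the four cases $A^{(1)}_1=B^{(1)}_1=0$ etc. that appeared in the preceding lemmas), so the challenge is to handle all branches uniformly — I expect that in every branch the substitution collapses $v_4$ to a Möbius function of $v_1$ alone or produces two incompatible nonzero monomials among the $C^{(\cdot)}$, exactly as in Lemmas \ref{lemma:case3_condp1}, \ref{lemma:case3_condp2}, \ref{lemma:case3_condp3}, so no genuinely new mechanism is required, only careful case enumeration.
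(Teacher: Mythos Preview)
Your toolkit --- leading-order analysis of $v_4^{(r)},v_4^{(l)}$ at $u_1=0$, $u_4=0$, $v_1=0$, bracket relations among the $C^{(ij)}$, and reduction to the inadmissible configurations of Lemmas \ref{lemma:case3_condp1}--\ref{lemma:case3_condp3} --- is exactly what the paper uses. The organisation, however, differs in a way that matters. The paper does \emph{not} establish all eight $C^{(\cdot)}\neq0$ first and then deduce the $A,B$-nonvanishing; it interleaves them in five steps: first $C^{(23)},C^{(41)}\neq0$ (an easy asymptotic argument), then $C^{(14)}\neq0$ and $C^{(32)}\neq0$ (each by a lengthy contradiction argument of the kind you sketch), then --- using only those four --- all sixteen $A^{(i)}_j,B^{(i)}_j\neq0$ for $i\in\{1,3\}$, and only \emph{then} the remaining four $C^{(12)},C^{(21)},C^{(34)},C^{(43)}\neq0$, which at that point fall out from one-line identities of the form $(\cdots)C^{(12)}-(\cdots)C^{(22)}=(\text{nonzero product of }A^{(i)}_j,B^{(i)}_j)$, using $C^{(22)}=0$. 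Your ordering would force you to attack $C^{(12)},C^{(21)},C^{(34)},C^{(43)}$ by the same laborious case enumeration needed for $C^{(14)},C^{(32)}$, whereas the paper's interleaving collapses those four to immediate algebraic consequences. So your plan (a)$\to$(b) is sound in principle but inefficient; the structural insight you are missing is that half of \eqref{eqn:typeII_C_nonzero} is most cheaply proved \emph{after} \eqref{eqn:typeII_AB_nonzero}, not before.
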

\begin{proof}
We shall prove the lemma by dividing the proof into the following five steps:
\begin{center}
\begin{description}
\item[\underline{Step 1}]
$C^{(23)},C^{(41)}\neq0$;
\item[\underline{Step 2}]
$C^{(14)}\neq0$;
\item[\underline{Step 3}]
$C^{(32)}\neq0$;
\item[\underline{Step 4}]
$A^{(i)}_1,A^{(i)}_2,A^{(i)}_3,A^{(i)}_4,
 B^{(i)}_1,B^{(i)}_2,B^{(i)}_3,B^{(i)}_4\neq0,\quad (i=1,3);$
\item[\underline{Step 5}]
$C^{(12)},C^{(21)},C^{(34)},C^{(43)}\neq0$.
\end{description}
\end{center}

\underline{\bf Step 1.}\quad
The following hold:
\begin{align*}
 &v_4^{(r)}=-\dfrac{u_1v_1C^{(14)}+C^{(15)}}{u_1C^{(23)}}+O(u_4)&& (u_4\to0),\\
 &v_4^{(l)}=-\dfrac{u_4v_1C^{(32)}+C^{(35)}}{u_4C^{(41)}}+O(u_1)&& (u_1\to0).
\end{align*}
Since $u_1=0$ and $u_4=0$ are not a pole nor a zero of $v_4$,
if $C^{(23)}=0$, then we obtain $C^{(14)}=C^{(15)}=0$, which satisfy inadequate condition \ref{lemma:case3_condp3},
and if $C^{(41)}=0$, then we obtain $C^{(32)}=C^{(35)}=0$, which satisfy inadequate condition \ref{lemma:case3_condp2}.
Therefore, we obtain
\begin{equation}
 C^{(23)},C^{(41)}\neq0.
\end{equation}

\underline{\bf Step 2.}\quad
Assume $C^{(14)}=0$.
Then, from
\begin{subequations}
\begin{align}
 0&=(A^{(1)}_4A^{(2)}_4B^{(1)}_2+A^{(1)}_1A^{(2)}_2B^{(1)}_4)C^{(13)}-(A^{(1)}_2A^{(2)}_4B^{(1)}_2+A^{(1)}_3A^{(2)}_2B^{(1)}_4)C^{(14)}\notag\\
 &=(A^{(1)}_1A^{(1)}_2-A^{(1)}_3A^{(1)}_4)(A^{(2)}_1A^{(2)}_2-A^{(2)}_3A^{(2)}_4)B^{(1)}_2B^{(1)}_4A^{(3)}_2B^{(3)}_2,\\
 0&=(A^{(1)}_4A^{(2)}_1B^{(1)}_2+A^{(1)}_1A^{(2)}_3B^{(1)}_4)C^{(13)}-(A^{(1)}_2A^{(2)}_1B^{(1)}_2+A^{(1)}_3A^{(2)}_3B^{(1)}_4)C^{(14)}\notag\\
 &=-(A^{(1)}_1A^{(1)}_2-A^{(1)}_3A^{(1)}_4)(A^{(2)}_1A^{(2)}_2-A^{(2)}_3A^{(2)}_4)B^{(1)}_2B^{(1)}_4B^{(3)}_3A^{(3)}_4,
\end{align} 
\end{subequations}
we obtain
\begin{equation}
 A^{(1)}_2=B^{(1)}_2=0
\quad\text{or}\quad
 A^{(1)}_4=B^{(1)}_4=0.
\end{equation}
Below, we show that both cases are inadmissible.

Firstly, we consider the case $A^{(1)}_2=B^{(1)}_2=0$.
From
\begin{equation*}
 G_1=-u_1u_4(C^{(35)}+u_1v_1C^{(36)})+O({u_4}^2),\quad
 G_2={u_1}^3C^{(43)}+O(u_4),
\end{equation*}
as $u_4\to0$ and $u_4=0$ is not a zero of $v_4$, we obtain
\begin{equation}
 C^{(43)}=0.
\end{equation}
Since
\begin{subequations}
\begin{align}
 0&=(A^{(2)}_1A^{(3)}_1B^{(3)}_2+A^{(2)}_4A^{(3)}_3B^{(3)}_3)C^{(43)}-(A^{(2)}_1A^{(3)}_4B^{(3)}_2+A^{(2)}_4A^{(3)}_2B^{(3)}_3)C^{(44)}\notag\\
 &=(A^{(2)}_1A^{(2)}_2-A^{(2)}_3A^{(2)}_4)(A^{(3)}_1A^{(3)}_2-A^{(3)}_3A^{(3)}_4)A^{(1)}_4B^{(1)}_4B^{(3)}_2B^{(3)}_3,\\
 0&\neq C^{(23)}=A^{(1)}_3B^{(1)}_4(A^{(2)}_3A^{(3)}_3B^{(3)}_2+A^{(2)}_2A^{(3)}_1B^{(3)}_3),
\end{align}
\end{subequations}
hold, we obtain
\begin{equation}
 A^{(3)}_2=B^{(3)}_2=0,
\end{equation}
which gives
\begin{equation}
 C^{(35)}=0.
\end{equation}
Since $C^{(41)}\neq0$ and
\begin{equation*}
 G_1=-u_1u_4v_1(u_4C^{(32)}+u_1C^{(36)})+O({v_1}^2),\quad
 G_2=u_1u_4(u_4C^{(41)}+u_1C^{(45)})+O(v_1),
\end{equation*}
as $v_1\to0$, $v_1=0$ is a zero of $v_4$.
Then, from
\begin{equation*}
 F_1=-u_4(u_4C^{(11)}+u_1C^{(15)})+O(v_1),\quad
 F_2=u_4({u_4}^2C^{(21)}+{u_1}^2C^{(23)}+u_1u_4C^{(25)})+O(v_1),
\end{equation*}
as $v_1\to0$, we obtain
\begin{equation}
 C^{(11)}=C^{(15)}=0.
\end{equation}
Then, since
$C^{(23)}\neq0$ and
\begin{equation*}
 F_1=-u_1{u_4}^2v_1C^{(16)}+O({u_4}^3),\quad
 F_2={u_1}^2u_4C^{(23)}+O({u_4}^2),
\end{equation*}
as $u_4\to0$, $u_4=0$ must be a zero of $v_4$.
Therefore, the case $A^{(1)}_2=B^{(1)}_2=0$ is inadequate.

Next, we consider the case 
\begin{equation}
 A^{(1)}_4=B^{(1)}_4=0.
\end{equation} 
Then, 
\begin{equation}
 C^{(24)}=0
\end{equation}
holds.
Since
\begin{equation*}
 v_4^{(r)}=-\dfrac{C^{(15)}}{u_1C^{(23)}}+O(u_4),\quad
 v_4^{(l)}=-\dfrac{C^{(33)}+u_1v_1C^{(34)}}{u_1C^{(43)}}+O(u_4),
\end{equation*}
as $u_4\to0$ and $C^{(23)}\neq0$, we obtain
\begin{equation}
 C^{(34)}=0.
\end{equation}
Assume
\begin{equation}
 C^{(46)}\neq0.
\end{equation}
Then, from
\begin{align*}
 &F_1=-u_1u_4(C^{(15)}+u_4v_1C^{(16)})+O({u_1}^0),\quad
 F_2={u_1}^2u_4C^{(23)}+O(u_1),\\
 &G_1=-{u_1}^2(C^{(33)}+u_4v_1C^{(36)})+O(u_1),\quad
 G_2={u_1}^3(C^{(43)}+u_4v_1C^{(46)})+O({u_1}^2),
\end{align*}
as $u_1\to\infty$, and $C^{(23)},C^{(46)}\neq0$, we obtain
\begin{equation}
 C^{(16)}=C^{(33)}=C^{(43)}=0.
\end{equation}
From
\begin{subequations}
\begin{align}
 0&=(A^{(2)}_3A^{(3)}_1B^{(3)}_2+A^{(2)}_2A^{(3)}_3B^{(3)}_3)C^{(43)}-(A^{(2)}_3A^{(3)}_4B^{(3)}_2+A^{(2)}_2A^{(3)}_2B^{(3)}_3)C^{(44)}\notag\\
 &=-(A^{(2)}_1A^{(2)}_2-A^{(2)}_3A^{(2)}_4)(A^{(3)}_1A^{(3)}_2-A^{(3)}_3A^{(3)}_4)A^{(1)}_1B^{(1)}_2B^{(3)}_2B^{(3)}_3,\\
 0&\neq C^{(23)}=A^{(1)}_2B^{(1)}_2(A^{(2)}_1A^{(3)}_3B^{(3)}_2+A^{(2)}_4A^{(3)}_1B^{(3)}_3),
\end{align}
\end{subequations}
we obtain
\begin{equation}
 A^{(3)}_2=B^{(3)}_2=0.
\end{equation}
Since 
\begin{equation}
 C^{(13)}=A^{(1)}_2B^{(1)}_2A^{(2)}_4B^{(3)}_3A^{(3)}_4,\quad
 C^{(16)}=A^{(1)}_1B^{(1)}_1A^{(2)}_2B^{(3)}_3A^{(3)}_4,
\end{equation}
cannot be simultaneously zero, the assumption $C^{(46)}\neq0$ is inadequate.
Therefore, we obtain
\begin{equation}
 C^{(46)}=0.
\end{equation}
Since
\begin{align*}
 &0=B^{(1)}_2B^{(3)}_3C^{(46)}-(B^{(1)}_3B^{(3)}_3+B^{(1)}_2B^{(3)}_1)C^{(44)}
 =-A^{(1)}_1{B^{(1)}_2}^2A^{(2)}_1A^{(3)}_1(B^{(3)}_1B^{(3)}_2-B^{(3)}_3B^{(3)}_4),\\
 &0=B^{(1)}_2B^{(3)}_2C^{(46)}-(B^{(1)}_3B^{(3)}_2+B^{(1)}_2B^{(3)}_4)C^{(44)}
 =A^{(1)}_1{B^{(1)}_2}^2A^{(2)}_4A^{(3)}_3(B^{(3)}_1B^{(3)}_2-B^{(3)}_3B^{(3)}_4),\\
 &0\neq C^{(23)}=A^{(1)}_2B^{(1)}_2(A^{(2)}_1A^{(3)}_3B^{(3)}_2+A^{(2)}_4A^{(3)}_1B^{(3)}_3),
\end{align*}
we obtain
\begin{equation}
 A^{(2)}_4=B^{(2)}_4=A^{(3)}_1=B^{(3)}_1=0.
\end{equation}
Moreover, from
\begin{equation}
 0=C^{(13)}=A^{(1)}_2B^{(1)}_2A^{(2)}_1A^{(3)}_2B^{(3)}_2,\quad
 0=C^{(22)}=A^{(1)}_1B^{(1)}_1A^{(2)}_3A^{(3)}_3B^{(3)}_4,
\end{equation}
we obtain
\begin{equation}
 A^{(2)}_3=B^{(2)}_3=A^{(3)}_2=B^{(3)}_2=0,
\end{equation}
which is inconsistent with the condition $C^{(23)}\neq0$.
Therefore, the condition $A^{(1)}_4=B^{(1)}_4=0$ is also inadequate.

\underline{\bf Step 3.}\quad
Assume
\begin{equation}
 C^{(32)}=0.
\end{equation}
Then, from
\begin{subequations}
\begin{align}
 0&=(A^{(2)}_4A^{(3)}_3B^{(3)}_1+A^{(2)}_1A^{(3)}_1B^{(3)}_4)C^{(31)}-(A^{(2)}_4A^{(3)}_2B^{(3)}_1+A^{(2)}_1A^{(3)}_4B^{(3)}_4)C^{(32)}\notag\\
 &=(A^{(2)}_1A^{(2)}_2-A^{(2)}_3A^{(2)}_4)(A^{(3)}_1A^{(3)}_2-A^{(3)}_3A^{(3)}_4)B^{(1)}_1A^{(1)}_2B^{(3)}_1B^{(3)}_4,\\
 0&=(A^{(2)}_2A^{(3)}_3B^{(3)}_1+A^{(2)}_3A^{(3)}_1B^{(3)}_4)C^{(31)}-(A^{(2)}_2A^{(3)}_2B^{(3)}_1+A^{(2)}_3A^{(3)}_4B^{(3)}_4)C^{(32)}\notag\\
 &=-(A^{(2)}_1A^{(2)}_2-A^{(2)}_3A^{(2)}_4)(A^{(3)}_1A^{(3)}_2-A^{(3)}_3A^{(3)}_4)A^{(1)}_3B^{(1)}_3B^{(3)}_1B^{(3)}_4,
\end{align}
\end{subequations}
we obtain
\begin{equation}
 A^{(3)}_1=B^{(3)}_1=0
\quad\text{or}\quad
 A^{(3)}_4=B^{(3)}_4=0.
\end{equation}
Below, we show that both cases are inadmissible.

We first consider the case $A^{(3)}_1=B^{(3)}_1=0$.
Then, we obtain
\begin{equation}
 C^{(42)}=0.
\end{equation}
Form
\begin{equation*}
 v_4^{(r)}=-\dfrac{C^{(11)}+u_4v_1C^{(12)}}{u_4C^{(21)}}+O(u_1),\quad
 v_4^{(l)}=-\dfrac{C^{(35)}}{u_4C^{(41)}}+O(u_1),
\end{equation*}
as $u_1\to0$ and $C^{(41)}\neq0$, we obtain
\begin{equation}
 C^{(12)}=0,\quad C^{(35)}\neq0.
\end{equation}
Assume 
\begin{equation}
 C^{(26)}\neq0.
\end{equation}
Then, from
\begin{equation*}
 v_4^{(r)}=-\dfrac{C^{(11)}+u_1v_1C^{(16)}}{u_4(C^{(21)}+u_1v_1C^{(26)})}+O({u_4}^{-2}),\quad
 v_4^{(l)}=-\dfrac{C^{(35)}+u_1v_1C^{(36)}}{u_4 C^{(41)})}+O({u_4}^{-2}),
\end{equation*}
as $u_4\to\infty$ and $C^{(41)}\neq0$, we obtain
\begin{equation}
 C^{(11)}=C^{(21)}=C^{(36)}=0.
\end{equation}
Moreover, from
\begin{equation*}
 v_4^{(r)}=-\dfrac{C^{(15)}+u_4v_1C^{(16)}}{u_4(C^{(25)}+u_4v_1C^{(26)})}+O(u_1),\quad
 v_4^{(l)}=-\dfrac{C^{(35)}}{u_4C^{(41)}}+O(u_1),
\end{equation*}
as $u_1\to0$ and $C^{(26)},C^{(41)}\neq0$, we obtain
\begin{equation}
 C^{(15)}=C^{(25)}=0.
\end{equation}
From
\begin{equation*}
 v_4^{(r)}=O(v_1),\quad
 v_4^{(l)}=-\dfrac{u_1C^{(33)}+u_4C^{(35)}}{{u_4}^2C^{(41)}+{u_1}^2C^{(43)}+u_1u_4C^{(45)}}+O(v_1),
\end{equation*}
as $v_1\to0$ and $C^{(35)}\neq0$, the CACO property $v_4^{(r)}\neq v_4^{(l)}$ does not hold.
Therefore, the assumption $C^{(26)}\neq0$ is inadequate, and then we obtain
\begin{equation}
 C^{(26)}=0.
\end{equation}
From
\begin{align*}
 0&=(B^{(1)}_2B^{(3)}_4+B^{(1)}_3B^{(3)}_2)C^{(22)}-B^{(1)}_3B^{(3)}_4C^{(26)}
 =A^{(1)}_1A^{(2)}_3A^{(3)}_3{B^{(3)}_4}^2(B^{(1)}_1B^{(1)}_2-B^{(1)}_3B^{(1)}_4),\\
 0&=(B^{(1)}_4B^{(3)}_4+B^{(1)}_1B^{(3)}_2)C^{(22)}-B^{(1)}_1B^{(3)}_4C^{(26)}
 =-A^{(1)}_4A^{(2)}_1A^{(3)}_3{B^{(3)}_4}^2(B^{(1)}_1B^{(1)}_2-B^{(1)}_3B^{(1)}_4),\\
 0&\neq C^{(41)}=A^{(3)}_4B^{(3)}_4(A^{(1)}_4A^{(2)}_3B^{(1)}_1+A^{(1)}_1A^{(2)}_1B^{(1)}_3),
\end{align*}
we obtain
\begin{equation}
 A^{(1)}_4=B^{(1)}_4=A^{(2)}_3=B^{(2)}_3=0.
\end{equation}
From
\begin{equation}
 0=C^{(31)}=A^{(1)}_3B^{(1)}_3A^{(2)}_1A^{(3)}_4B^{(3)}_4,\quad
 0=C^{(44)}=A^{(1)}_1B^{(1)}_2A^{(2)}_4A^{(3)}_3B^{(3)}_3,
\end{equation}
we obtain
\begin{equation}
 A^{(1)}_3=B^{(1)}_3=A^{(2)}_4=B^{(2)}_4=0,
\end{equation}
which is inconsistent with the condition $C^{(41)}\neq0$.
Therefore, the case $A^{(3)}_1=B^{(3)}_1=0$ is inadequate.

Next, we consider the case $A^{(3)}_4=B^{(3)}_4=0$.
Then, we obtain
\begin{equation}
 C^{(11)}=C^{(12)}=0.
\end{equation}
From
\begin{equation*}
 v_4^{(r)}=-\dfrac{u_1(C^{(15)}+u_4v_1C^{(16)})}{{u_4}^2C^{(21)}}+O({u_1}^2),
\end{equation*}
as $u_1\to0$ and $u_1=0$ is not a zero of $v_4$, we obtain
\begin{equation}
 C^{(21)}=0.
\end{equation}
From
\begin{subequations}
\begin{align}
 0&=(A^{(1)}_1A^{(2)}_3B^{(1)}_1+A^{(1)}_4A^{(2)}_1B^{(1)}_3)C^{(21)}-(A^{(1)}_3A^{(2)}_3B^{(1)}_1+A^{(1)}_2A^{(2)}_1B^{(1)}_3)C^{(22)}\notag\\
 &=-(A^{(1)}_1A^{(1)}_2-A^{(1)}_3A^{(1)}_4)(A^{(2)}_1A^{(2)}_2-A^{(2)}_3A^{(2)}_4)B^{(1)}_1B^{(1)}_3A^{(3)}_1B^{(3)}_1,\\
 0&\neq C^{(41)}=A^{(3)}_2B^{(3)}_1(A^{(1)}_4A^{(2)}_2B^{(1)}_1+A^{(1)}_1A^{(2)}_4B^{(1)}_3),
\end{align}
\end{subequations}
we obtain
\begin{equation}
 A^{(1)}_3=B^{(1)}_3=0,
\end{equation}
which gives
\begin{equation}
 C^{(15)}=0.
\end{equation}
From
\begin{equation*}
 v_4^{(r)}=-\dfrac{v_1(u_1C^{(14)}+u_4C^{(16)})}{u_1C^{(23)}+u_4C^{(25)}}+O({v_1}^2),\quad
 v_4^{(l)}=-\dfrac{u_1C^{(33)}+u_4C^{(35)}}{{u_4}^2C^{(41)}+{u_1}^2C^{(43)}+u_1u_4C^{(45)}}+O(v_1),
\end{equation*}
as $v_1\to0$ and $C^{(23)}\neq0$, we obtain
\begin{equation}
 C^{(33)}=C^{(35)}=0.
\end{equation}
Then, we obtain
\begin{equation*}
 v_4^{(l)}=-\dfrac{u_1v_1C^{(36)}}{u_4C^{(41)}}+O({u_1}^2),
\end{equation*}
as $u_1\to0$.
Since $C^{(41)}\neq0$, this gives the inadequate property that $u_1=0$ is a zero of $v_4$.
Therefore, the condition $A^{(3)}_4=B^{(3)}_4=0$ is also inadequate.
Hence, we obtain $C^{(32)}\neq0$.

\underline{\bf Step 4.}\quad
From
\begin{equation}
 C^{(41)}\neq0,\quad
 C^{(14)}\neq0,\quad
 C^{(32)}\neq0,\quad
 C^{(23)}\neq0,
\end{equation}
we respectively obtain
\begin{equation}
 A^{(1)}_1,B^{(1)}_1\neq0,\quad
 A^{(1)}_4,B^{(1)}_4\neq0,\quad
 A^{(3)}_1,B^{(3)}_1\neq0,\quad
 A^{(3)}_3,B^{(3)}_3\neq0.
\end{equation}

Assume $A^{(1)}_2=B^{(1)}_2=0$.
Then, we obtain $A^{(1)}_3C^{(14)}=A^{(1)}_1C^{(13)}=0$.
Since $C^{(14)}\neq0$, we obtain $A^{(1)}_3=B^{(1)}_3=0$,
which is consistent with $C^{(23)}\neq0$.
Therefore, we obtain
\begin{equation}
 A^{(1)}_2,B^{(1)}_2\neq0.
\end{equation}

Assume $A^{(1)}_3=B^{(1)}_3=0$.
Then, we obtain $A^{(1)}_2C^{(41)}=A^{(1)}_4C^{(31)}=0$,
which is inconsistent with $A^{(1)}_2,C^{(41)}\neq0$.
Therefore, we obtain
\begin{equation}
 A^{(1)}_3,B^{(1)}_3\neq0.
\end{equation}

Assume $A^{(3)}_2=B^{(3)}_2=0$.
Then, we obtain $A^{(3)}_4C^{(23)}=A^{(3)}_1C^{(13)}=0$.
Since $C^{(23)}\neq0$, we obtain $A^{(3)}_4=B^{(3)}_4=0$,
which is consistent with $C^{(41)}\neq0$.
Therefore, we obtain
\begin{equation}
 A^{(3)}_4,B^{(3)}_4\neq0.
\end{equation}

Assume $A^{(3)}_4=B^{(3)}_4=0$.
Then, we obtain $A^{(3)}_2C^{(32)}=A^{(3)}_3C^{(31)}=0$,
which is inconsistent with $A^{(3)}_2,C^{(32)}\neq0$.
Therefore, we obtain
\begin{equation}
 A^{(3)}_4,B^{(3)}_4\neq0.
\end{equation}
Therefore, we obtain
\begin{equation}
 A^{(i)}_1,A^{(i)}_2,A^{(i)}_3,A^{(i)}_4,
 B^{(i)}_1,B^{(i)}_2,B^{(i)}_3,B^{(i)}_4\neq0,\quad (i=1,3).
\end{equation}



\underline{\bf Step 5.}\quad
From
\begin{subequations}
\begin{align}
 &(A^{(2)}_2A^{(3)}_1B^{(3)}_1+A^{(2)}_3A^{(3)}_3B^{(3)}_4)C^{(12)}-(A^{(2)}_2A^{(3)}_4B^{(3)}_1+A^{(2)}_3A^{(3)}_2B^{(3)}_4)C^{(22)}\notag\\
 &\quad=(A^{(2)}_1A^{(2)}_2-A^{(2)}_3A^{(2)}_4)(A^{(3)}_1A^{(3)}_2-A^{(3)}_3A^{(3)}_4)B^{(1)}_3A^{(1)}_4B^{(3)}_1B^{(3)}_4\neq0,\\
 &(A^{(1)}_1A^{(2)}_2B^{(1)}_1+A^{(1)}_4A^{(2)}_4B^{(1)}_3)C^{(21)}-(A^{(1)}_3A^{(2)}_2B^{(1)}_1+A^{(1)}_2A^{(2)}_4B^{(1)}_3)C^{(22)}\notag\\
 &\quad=(A^{(1)}_1A^{(1)}_2-A^{(1)}_3A^{(1)}_4)(A^{(2)}_1A^{(2)}_2-A^{(2)}_3A^{(2)}_4)B^{(1)}_1B^{(1)}_3A^{(3)}_3B^{(3)}_4\neq0,\\
 &(A^{(1)}_1A^{(2)}_1B^{(1)}_2+A^{(1)}_4A^{(2)}_3B^{(1)}_4)C^{(34)}-(A^{(1)}_3A^{(2)}_1B^{(1)}_2+A^{(1)}_2A^{(2)}_3B^{(1)}_4)C^{(44)}\notag\\
 &\quad=(A^{(1)}_1A^{(1)}_2-A^{(1)}_3A^{(1)}_4)(A^{(2)}_1A^{(2)}_2-A^{(2)}_3A^{(2)}_4)B^{(1)}_2B^{(1)}_4A^{(3)}_3B^{(3)}_3\neq0,\\
 &(A^{(2)}_1A^{(3)}_1B^{(3)}_2+A^{(2)}_4A^{(3)}_3B^{(3)}_3)C^{(43)}-(A^{(2)}_1A^{(3)}_4B^{(3)}_2+A^{(2)}_4A^{(3)}_2B^{(3)}_3)C^{(44)}\notag\\
 &\quad=(A^{(2)}_1A^{(2)}_2-A^{(2)}_3A^{(2)}_4)(A^{(3)}_1A^{(3)}_2-A^{(3)}_3A^{(3)}_4)A^{(1)}_4B^{(1)}_4B^{(3)}_2B^{(3)}_3\neq0,
\end{align}
\end{subequations}
and the condition \eqref{eqn:case3_condp4_1}, we obtain
\begin{equation}
 C^{(12)},C^{(21)},C^{(34)},C^{(43)}\neq0.
\end{equation}

Therefore, we have completed the proof. 
\end{proof}

\begin{lemma}
The following conditions hold:
\begin{equation}\label{eqn:case3_condp4_2}
\begin{split}
 &C^{(11)}=\dfrac{C^{(23)}C^{(35)}}{C^{(43)}},\quad
 C^{(12)}=\dfrac{C^{(23)}C^{(32)}}{C^{(43)}},\quad
 C^{(14)}=\dfrac{C^{(23)}C^{(34)}}{C^{(43)}},\\
 &C^{(15)}=\dfrac{C^{(23)}C^{(33)}}{C^{(43)}},\quad
 C^{(16)}=\dfrac{C^{(23)}C^{(36)}}{C^{(43)}},\quad
 C^{(21)}=\dfrac{C^{(23)}C^{(41)}}{C^{(43)}},\\
 &C^{(24)}=\dfrac{C^{(23)}C^{(46)}}{C^{(43)}},\quad
 C^{(25)}=\dfrac{C^{(23)}C^{(45)}}{C^{(43)}},\quad
 C^{(26)}=\dfrac{C^{(23)}C^{(42)}}{C^{(43)}}.
\end{split}
\end{equation}
Moreover, we can easily verify that under the conditions \eqref{eqn:case3_condp4_1} and \eqref{eqn:case3_condp4_2}
the CACO property holds.
\end{lemma}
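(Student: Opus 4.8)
The plan is to exploit the structure already established: by Lemmas \ref{lemma:case3_condp1}--\ref{lemma:case3_condp3} we have reduced to condition \eqref{eqn:case3_condp4_1}, namely $C^{(13)}=C^{(22)}=C^{(31)}=C^{(44)}=0$, and Lemma \ref{lemma:case3_condp4_Cneq0} guarantees all remaining $C^{(ij)}$ (with $(ij)$ among the eight listed in \eqref{eqn:typeII_C_nonzero}) are nonzero, together with nonvanishing of all $A^{(i)}_j$, $B^{(i)}_j$ for $i=1,3$. The first step is to write out the equation $v_4^{(r)}=v_4^{(l)}$ in the reduced form already displayed, i.e. $v_4^{(r)}=-(F_{11}+F_{12}v_1)/(1+F_{21}v_1)$ and $v_4^{(l)}=-(G_{11}+G_{12}v_1)/(1+G_{21}v_1)$, and match coefficients. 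Since $v_1$ is a genuine free initial value, the identity forces $F_{11}=G_{11}$, $F_{12}=G_{12}$, $F_{21}=G_{21}$ as rational functions of $u_1,u_4$. We already know $F_{21}=G_{21}$ yields the polynomial identity in $u_1,u_4$ displayed before \eqref{eqn:case3_condp1}; under \eqref{eqn:case3_condp4_1} that identity degenerates and the surviving coefficients give relations among the $C^{(ij)}$.

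The key computation is to impose $F_{11}=G_{11}$ and $F_{12}=G_{12}$ (equivalently $F_1G_2=F_2G_1$) after substituting \eqref{eqn:case3_condp4_1}. Clearing denominators, $F_1G_2-F_2G_1$ becomes a polynomial in $u_1,u_4$ (of bidegree at most $(3,4)$ or so after the cancellations forced by the vanishing $C^{(ij)}$), and each of its coefficients must vanish. Because $C^{(23)}\ne0$ and $C^{(43)}\ne0$, one can solve these coefficient equations successively for $C^{(11)},C^{(12)},C^{(14)},C^{(15)},C^{(16)},C^{(21)},C^{(24)},C^{(25)},C^{(26)}$, each expressed as $C^{(23)}C^{(3*)}/C^{(43)}$ or $C^{(23)}C^{(4*)}/C^{(43)}$ according to the pattern in \eqref{eqn:case3_condp4_2}. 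Concretely: the coefficient of the highest power of $u_1$ pairs $C^{(23)}$ with $C^{(43)}$ and its partners, letting me read off $C^{(21)}=C^{(23)}C^{(41)}/C^{(43)}$; then descending through the mixed monomials $u_1^au_4^b$ peels off the remaining nine identities one at a time. The bookkeeping is mechanical once the vanishing relations \eqref{eqn:case3_condp4_1} are in force, because they kill precisely the obstructing cross terms.

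For the converse (\emph{sufficiency}), the plan is to substitute \eqref{eqn:case3_condp4_1} and \eqref{eqn:case3_condp4_2} back into the expressions for $F_1,F_2,G_1,G_2$ and verify directly that $F_1G_2-F_2G_1\equiv0$; this is a finite symbolic check. One must also confirm that under these parameter relations none of the denominators $F_2$, $G_2$ vanishes identically and that $v_4$ has neither a pole nor a zero at $u_1=0$, $u_4=0$ (already covered by Lemmas \ref{lemma:case3_u1u4_pole1}--\ref{lemma:case3_u1u4_zero}) nor a pole at $v_1=0$ (Lemma \ref{lemma:case3_v1_pole}), so that the $K_1$ square equation genuinely exists; then the symmetry $\iota\circ s_{12}\circ s_{23}$ transfers the conclusion to $v_5$ and $v_6$, giving the full square property. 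Translating \eqref{eqn:case3_condp4_1} and \eqref{eqn:case3_condp4_2} into the intrinsic parameter conditions of Lemma \ref{lemma:classification_CACO_CO1CO2CO3_2} (the sign-choice conditions \eqref{eqn:typeII_cond1} or \eqref{eqn:typeII_cond2} together with \eqref{eqn:typeII_cond0}, \eqref{eqn:typeII_C_nonzero}, \eqref{eqn:typeII_AB_nonzero}, \eqref{eqn:caseN3_CAO_B2}) is then a matter of unwinding the definitions \eqref{eqn:typeII_defC11}--\eqref{eqn:typeII_defC46}.

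The main obstacle I expect is the combinatorial explosion in the coefficient-matching step: $F_1G_2-F_2G_1$ is a product of two trinomials-in-$(u_1,u_4)$-with-$v_1$-coefficients, so before simplification there are on the order of twenty distinct monomials, each a quadratic expression in the twelve $C^{(ij)}$ parameters, which themselves are cubic in the $A$'s and $B$'s. The art is to apply \eqref{eqn:case3_condp4_1} early and to use the many Plücker-type quadratic relations among the $C^{(ij)}$ (the same kind already used throughout Appendix \ref{subsection:caseN3_3}, e.g. $A^{(3)}_1C^{(31)}-A^{(3)}_4C^{(32)}=\cdots$) to collapse the system to the nine clean identities of \eqref{eqn:case3_condp4_2} rather than carrying hundreds of terms. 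I would organise the calculation by first extracting the pure $u_1$-power and pure $u_4$-power coefficients (which decouple cleanly), then the adjacent mixed ones, checking at each stage that the newly derived $C^{(ij)}$ relation is consistent with the nonvanishing facts from Lemma \ref{lemma:case3_condp4_Cneq0}.
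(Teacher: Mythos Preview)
Your overall strategy—reducing to $F_1G_2=F_2G_1$ under \eqref{eqn:case3_condp4_1} and extracting the nine relations—is correct and will succeed, but the paper organises the extraction differently and much more economically. Rather than expanding $F_1G_2-F_2G_1$ as a full polynomial in $u_1,u_4,v_1$ and matching all coefficients (the ``combinatorial explosion'' you anticipate), the paper evaluates the already-normalised forms $v_4^{(r)}$ and $v_4^{(l)}$ at the four limits $u_1\to0$, $u_4\to0$, $u_1\to\infty$, $u_4\to\infty$, and at $v_1\to0$. Each limit isolates two or three of the $C^{(ij)}$ and yields one or two of the relations in \eqref{eqn:case3_condp4_2} immediately, with essentially no cross terms to kill: the limits $u_1\to0$ and $u_4\to0$ give the first four relations ($C^{(11)},C^{(12)},C^{(14)},C^{(15)}$), the limits at infinity give $C^{(24)}$ and $C^{(26)}$, and the limit $v_1\to0$ gives $C^{(21)},C^{(25)},C^{(16)}$. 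The last step does require a small case split on whether $v_1=0$ is a zero of $v_4$ (in which case one reads off relations from the coefficient of $v_1$ rather than the constant term); your global coefficient-matching would absorb this case distinction automatically, which is one advantage of your route.

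Two minor remarks. First, the square property is not part of this lemma's statement—only the CACO property is asserted here—so your discussion of transferring to $v_5,v_6$ via $\iota\circ s_{12}\circ s_{23}$ belongs to the subsequent argument, and in fact the paper does \emph{not} use that symmetry for the square property in Type~II (it argues instead via a generic-parameter check that $\partial v_5/\partial v_2\ne0$ etc.), because the Type~II polynomial form \eqref{eqn:P123_caseN3_3} is not invariant under all of $G^{(CO)}$. Second, your claim that $F_{11}=G_{11}$, $F_{12}=G_{12}$, $F_{21}=G_{21}$ follow directly from $v_4^{(r)}=v_4^{(l)}$ is fine precisely because both denominators are normalised to have constant term $1$ (Lemma~\ref{lemma:case3_v1_pole} having excluded a pole at $v_1=0$); you should state this normalisation explicitly when you invoke it.
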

\begin{proof}
From 
\begin{align*}
 &v_4^{(r)}=-\dfrac{C^{(11)}}{u_4C^{(21)}}-\dfrac{v_1C^{(12)}}{C^{(21)}}+O(u_1),\quad
 v_4^{(l)}=-\dfrac{C^{(35)}}{u_4C^{(41)}}-\dfrac{v_1C^{(32)}}{C^{(41)}}+O(u_1)&&(u_1\to0),\\
 &v_4^{(r)}=-\dfrac{C^{(15)}}{u_1C^{(23)})}-\dfrac{v_1C^{(14)}}{C^{(23)}}+O(u_4),\quad
 v_4^{(l)}=-\dfrac{C^{(33)}}{u_1C^{(43)}}-\dfrac{v_1C^{(34)}}{C^{(43)}}+O(u_4)&&(u_4\to0),
\end{align*}
we obtain
\begin{equation}\label{eqn:case3_condp4_proof_1}
 C^{(11)}=\dfrac{C^{(21)}C^{(35)}}{C^{(41)}},\
 C^{(12)}=\dfrac{C^{(21)}C^{(32)}}{C^{(41)}},\
 C^{(14)}=\dfrac{C^{(23)}C^{(34)}}{C^{(43)}},\
 C^{(15)}=\dfrac{C^{(23)}C^{(33)}}{C^{(43)}}.
\end{equation}
Moreover, from
\begin{align*}
 &v_4^{(r)}=-\dfrac{v_1C^{(23)}C^{(34)}}{(C^{(23)}+u_4v_1C^{(24)})C^{(43)}}+O({u_1}^{-1}),\
 v_4^{(l)}=-\dfrac{v_1C^{(34)}}{C^{(43)}+u_4v_1C^{(46)}}+O({u_1}^{-1}),\ (u_1\to\infty),\\
 &v_4^{(r)}=-\dfrac{v_1C^{(21)}C^{(32)}}{(C^{(21)}+u_1v_1C^{(26)})C^{(41)}}+O({u_4}^{-1}),\
 v_4^{(l)}=-\dfrac{v_1C^{(32)}}{C^{(41)}+u_1v_1C^{(42)}}+O({u_4}^{-1}),\ (u_4\to\infty),
\end{align*}
we obtain
\begin{equation}\label{eqn:case3_condp4_proof_2}
 C^{(24)}=\dfrac{C^{(23)}C^{(46)}}{C^{(43)}},\quad
 C^{(26)}=\dfrac{C^{(21)}C^{(42)}}{C^{(41)}}.
\end{equation}

If $v_1=0$ is not a zero of $v_4$, then, from 
\begin{subequations}\label{eqn:case3_v4rl_v1_0}
\begin{align}
 &v_4^{(r)}=-\cfrac{u_1\cfrac{C^{(23)}C^{(33)}}{C^{(21)}C^{(43)}}+u_4\cfrac{C^{(35)}}{C^{(41)}}}{{u_4}^2+{u_1}^2\cfrac{C^{(23)}}{C^{(21)}}+u_1u_4\cfrac{C^{(25)}}{C^{(21)}}}+O(v_1),\\
 &v_4^{(l)}=-\cfrac{u_1\cfrac{C^{(33)}}{C^{(41)}}+u_4\cfrac{C^{(35)}}{C^{(41)}}}{{u_4}^2+{u_1}^2\cfrac{C^{(43)}}{C^{(41)}}+u_1u_4\cfrac{C^{(45)}}{C^{(41)}}}+O(v_1),
 \end{align}
\end{subequations}
at $v_1\to0$, we obtain
\begin{equation}\label{eqn:case3_condp4_proof_3}
 C^{(21)}=\dfrac{C^{(23)}C^{(41)}}{C^{(43)}},\quad
 C^{(25)}=\dfrac{C^{(23)}C^{(45)}}{C^{(43)}}.
\end{equation}
Moreover, from the condition of the CACO property:
\begin{equation*}
 0=v_4^{(r)}-v_4^{(l)}
 =\dfrac{u_1u_4v_1(C^{(23)}C^{(36)}-C^{(16)}C^{(43)})}{C^{(23)}({u_1}^2C^{(43)}+u_1u_4C^{(45)}+{u_4}^2C^{(41)}+u_1u_4(u_1C^{(46)}+u_4C^{(42)})v_1)},
\end{equation*}
we obtain
\begin{equation}\label{eqn:case3_condp4_proof_4}
 C^{(16)}=\dfrac{C^{(23)}C^{(36)}}{C^{(43)}}.
\end{equation}

If $v_1=0$ is a zero of $v_4$, then from \eqref{eqn:case3_v4rl_v1_0} we obtain
\begin{equation}\label{eqn:case3_condp4_proof_5}
 C^{(33)}=C^{(35)}=0.
\end{equation}
From
\begin{align*}
 &v_4^{(r)}=-v_1\cfrac{{u_1}^2\cfrac{C^{(34)}}{C^{(43)}}+u_1u_4\cfrac{C^{(16)}}{C^{(23)}}+{u_4}^2\cfrac{C^{(21)}C^{(32)}}{C^{(23)}C^{(41)}}}{{u_1}^2+u_1u_4\cfrac{C^{(25)}}{C^{(23)}}+{u_4}^2\cfrac{C^{(21)}}{C^{(23)}}}+O({v_1}^2),\\
 &v_4^{(l)}=-v_1\cfrac{{u_1}^2\cfrac{C^{(34)}}{C^{(43)}}+u_1u_4\cfrac{C^{(36)}}{C^{(43)}}+{u_4}^2\cfrac{C^{(32)}}{C^{(43)}}}{{u_1}^2+u_1u_4\cfrac{C^{(45)}}{C^{(43)}}+{u_4}^2\cfrac{C^{(41)}}{C^{(43)}}}+O({v_1}^2),
\end{align*}
as $v_1\to0$, we obtain
\begin{equation}\label{eqn:case3_condp4_proof_6}
 C^{(16)}=\dfrac{C^{(23)}C^{(36)}}{C^{(43)}},\quad
 C^{(21)}=\dfrac{C^{(23)}C^{(41)}}{C^{(43)}},\quad
 C^{(25)}=\dfrac{C^{(23)}C^{(45)}}{C^{(43)}}.
\end{equation}
We can easily verify that under the conditions 
\eqref{eqn:case3_condp4_1},
\eqref{eqn:case3_condp4_proof_1}, 
\eqref{eqn:case3_condp4_proof_2},
\eqref{eqn:case3_condp4_proof_5} and \eqref{eqn:case3_condp4_proof_6} the CACO property holds, 
and the conditions \eqref{eqn:case3_condp4_proof_1}, \eqref{eqn:case3_condp4_proof_2},
\eqref{eqn:case3_condp4_proof_5} and \eqref{eqn:case3_condp4_proof_6} are special case of the condition \eqref{eqn:case3_condp4_2}.

Therefore, we have completed the proof.
\end{proof}

\begin{lemma}
The condition \eqref{eqn:case3_condp4_2} can be divided into the conditions \eqref{eqn:typeII_cond1} and \eqref{eqn:typeII_cond2}.
\end{lemma}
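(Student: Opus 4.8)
The plan is to reduce the ten equations \eqref{eqn:case3_condp4_2} to a single scalar condition. Under the standing hypotheses \eqref{eqn:case3_condp4_1}, \eqref{eqn:typeII_C_nonzero} and \eqref{eqn:typeII_AB_nonzero}, both $C^{(23)}$ and $C^{(43)}$ are nonzero, so I would introduce $\la=C^{(23)}/C^{(43)}\in\bbC^{\times}$. Dividing each equation of \eqref{eqn:case3_condp4_2} by $C^{(43)}$, the system becomes
\begin{align*}
 &C^{(35)}=\la^{-1}C^{(11)},\quad C^{(32)}=\la^{-1}C^{(12)},\quad C^{(34)}=\la^{-1}C^{(14)},\quad C^{(33)}=\la^{-1}C^{(15)},\\
 &C^{(36)}=\la^{-1}C^{(16)},\quad C^{(41)}=\la^{-1}C^{(21)},\quad C^{(42)}=\la^{-1}C^{(26)},\quad C^{(45)}=\la^{-1}C^{(25)},\\
 &C^{(46)}=\la^{-1}C^{(24)},\quad C^{(43)}=\la^{-1}C^{(23)},
\end{align*}
the last equation being a tautology. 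From this form it is transparent that \eqref{eqn:case3_condp4_2} together with $\la=1$ is exactly \eqref{eqn:typeII_cond1}, and \eqref{eqn:case3_condp4_2} together with $\la=-1$ is exactly \eqref{eqn:typeII_cond2}; conversely, substituting $\la=\pm1$ into the display returns \eqref{eqn:case3_condp4_2}. Thus the lemma is equivalent to the single assertion $\la^{2}=1$, i.e.\ $\left(C^{(23)}\right)^{2}=\left(C^{(43)}\right)^{2}$.

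To prove $\la^{2}=1$ I would feed the explicit polynomial expressions \eqref{eqn:typeII_defC11}--\eqref{eqn:typeII_defC46} for the $C^{(ij)}$ into the system. The $C^{(ij)}$ are highly dependent: inspection of those formulas shows that $C^{(43)}$ is obtained from $C^{(23)}$, and $C^{(41)}$ from $C^{(21)}$, by the parameter interchange $A^{(1)}_1\leftrightarrow A^{(1)}_2$, $A^{(1)}_3\leftrightarrow A^{(1)}_4$, $A^{(3)}_1\leftrightarrow A^{(3)}_2$, $A^{(3)}_3\leftrightarrow A^{(3)}_4$ (the other parameters being untouched). Combining this interchange with the linear syzygies among the $C^{(ij)}$ already derived in the proof of Lemma \ref{lemma:case3_condp4_Cneq0} — in particular the Step-$5$ identities which, on imposing $C^{(22)}=C^{(44)}=0$ from \eqref{eqn:case3_condp4_1}, relate $C^{(23)}$ and $C^{(43)}$ linearly to explicit products of the parameters $A^{(i)}_j,B^{(i)}_j$ — and with the images of the relations in the display above, I expect to obtain a relation $\alpha\,C^{(23)}+\beta\,C^{(43)}=0$ in which $\alpha,\beta$ are expressed through the nonvanishing quantities $C^{(12)},C^{(21)},C^{(34)},C^{(43)}$ of that lemma; substituting $\la=C^{(23)}/C^{(43)}$ then forces $\la^{2}=1$. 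Equivalently, one checks that $\left(C^{(23)}\right)^{2}-\left(C^{(43)}\right)^{2}$ lies in the ideal generated by \eqref{eqn:case3_condp4_1}, \eqref{eqn:case3_condp4_2} and the defining identities \eqref{eqn:typeII_defC11}--\eqref{eqn:typeII_defC46}, after saturating at the nonzero quantities guaranteed by \eqref{eqn:typeII_C_nonzero} and \eqref{eqn:typeII_AB_nonzero}.

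The main obstacle is precisely this last computation: pinning down which polynomial identity among the $C^{(ij)}$ closes the argument, and then carrying out the elimination of the parameters $A^{(i)}_j,B^{(i)}_j$ without being swamped by the bookkeeping caused by the two $3$-cycles concealed in \eqref{eqn:typeII_cond1}/\eqref{eqn:typeII_cond2} (the second-index permutations $1\mapsto5\mapsto3$ on one pair of rows and $2\mapsto4\mapsto6$ on the other). Once $\la^{2}=1$ is established, splitting into the cases $\la=1$ and $\la=-1$ yields exactly the two advertised conditions, so that \eqref{eqn:case3_condp4_2} is the union of the loci \eqref{eqn:typeII_cond1} and \eqref{eqn:typeII_cond2}, completing the proof.
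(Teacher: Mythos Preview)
Your reduction is exactly the one the paper uses: set $\la=C^{(23)}/C^{(43)}$ and observe that \eqref{eqn:case3_condp4_2} with $\la=1$ is \eqref{eqn:typeII_cond1} and with $\la=-1$ is \eqref{eqn:typeII_cond2}, so the whole lemma comes down to $(C^{(23)})^{2}=(C^{(43)})^{2}$. Up to this point you and the paper agree.

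Where your proposal stalls is the proof of $\la^{2}=1$. The mechanism you suggest---a single linear relation $\alpha\,C^{(23)}+\beta\,C^{(43)}=0$ with $\alpha,\beta$ built from the nonzero $C^{(ij)}$---cannot on its own give two values of $\la$; it would pin $\la$ to a single value $-\alpha/\beta$. The parameter involution $A^{(1)}_1\!\leftrightarrow\!A^{(1)}_2$, $A^{(1)}_3\!\leftrightarrow\!A^{(1)}_4$, $A^{(3)}_1\!\leftrightarrow\!A^{(3)}_2$, $A^{(3)}_3\!\leftrightarrow\!A^{(3)}_4$ that you notice does send $C^{(23)}\leftrightarrow C^{(43)}$ and preserves \eqref{eqn:case3_condp4_1} (it swaps $C^{(13)}\leftrightarrow C^{(44)}$ and $C^{(22)}\leftrightarrow C^{(31)}$), so $\la\mapsto\la^{-1}$ is a symmetry of the solution locus; but a symmetry of a locus does not force each point to be fixed by it, so this by itself does not yield $\la=\la^{-1}$.

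What the paper actually does is concrete elimination followed by a case split. From $C^{(13)}=C^{(22)}=C^{(31)}=0$ it solves explicitly for $A^{(1)}_1$, $A^{(1)}_2$, $B^{(1)}_4$ in terms of the remaining parameters (the denominators are nonzero by \eqref{eqn:typeII_AB_nonzero} and irreducibility). Substituting these into $C^{(44)}=0$ produces a single equation that \emph{factors} as a product of three polynomial factors, leading (after separating off the degenerations $A^{(2)}_1=A^{(2)}_2=0$ and $A^{(2)}_3=A^{(2)}_4=0$) to five mutually exclusive cases:
\[
A^{(3)}_1A^{(3)}_4-A^{(3)}_2A^{(3)}_3=0,\quad
A^{(3)}_1A^{(3)}_2B^{(3)}_1B^{(3)}_2-A^{(3)}_3A^{(3)}_4B^{(3)}_3B^{(3)}_4=0,\quad
A^{(2)}_1A^{(2)}_3B^{(3)}_2B^{(3)}_4-A^{(2)}_2A^{(2)}_4B^{(3)}_1B^{(3)}_3=0,
\]
and the two $A^{(2)}$-degenerations. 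In each of the five cases a short computation---typically plugging the case hypothesis into one of the relations of \eqref{eqn:case3_condp4_2} such as $C^{(12)}-\la\,C^{(32)}=0$, $C^{(14)}-\la\,C^{(34)}=0$, or a suitable linear combination like $C^{(21)}-\la\,C^{(41)}$ and $C^{(25)}-\la\,C^{(45)}$---forces a factor of the form $(C^{(23)})^{2}-(C^{(43)})^{2}$ (or an equivalent quadratic such as ${A^{(1)}_3}^{2}{A^{(3)}_1}^{2}-{A^{(1)}_4}^{2}{A^{(3)}_2}^{2}$) to vanish. The key point you are missing is this factorisation of $C^{(44)}=0$ after eliminating $A^{(1)}_1,A^{(1)}_2,B^{(1)}_4$; it is what replaces the hoped-for single identity and explains why the answer is genuinely a two-branch union rather than a single $\la$.
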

\begin{proof}
Since
\begin{subequations}
\begin{align}
%
 &(A^{(2)}_3A^{(3)}_3B^{(3)}_2+A^{(2)}_2A^{(3)}_1B^{(3)}_3)C^{(13)}-(A^{(2)}_3A^{(3)}_2B^{(3)}_2+A^{(2)}_2A^{(3)}_4B^{(3)}_3)C^{(23)}\notag\\
 &\quad=(A^{(2)}_1A^{(2)}_2-A^{(2)}_3A^{(2)}_4)(A^{(3)}_1A^{(3)}_2-A^{(3)}_3A^{(3)}_4)A^{(1)}_2B^{(1)}_2B^{(3)}_2B^{(3)}_3\neq0,\\
 &(A^{(2)}_1A^{(3)}_3B^{(3)}_2+A^{(2)}_4A^{(3)}_1B^{(3)}_3)C^{(13)}-(A^{(2)}_1A^{(3)}_2B^{(3)}_2+A^{(2)}_4A^{(3)}_4B^{(3)}_3)C^{(23)}\notag\\
 &\quad=-(A^{(2)}_1A^{(2)}_2-A^{(2)}_3A^{(2)}_4)(A^{(3)}_1A^{(3)}_2-A^{(3)}_3A^{(3)}_4)A^{(1)}_3B^{(1)}_4B^{(3)}_2B^{(3)}_3\neq0,\\
 &(A^{(2)}_4A^{(3)}_3B^{(3)}_1+A^{(2)}_1A^{(3)}_1B^{(3)}_4)C^{(31)}-(A^{(2)}_4A^{(3)}_2B^{(3)}_1+A^{(2)}_1A^{(3)}_4B^{(3)}_4)C^{(32)}\notag\\
 &\quad=(A^{(2)}_1A^{(2)}_2-A^{(2)}_3A^{(2)}_4)(A^{(3)}_1A^{(3)}_2-A^{(3)}_3A^{(3)}_4)B^{(1)}_1A^{(1)}_2B^{(3)}_1B^{(3)}_4\neq0,\\
 &(A^{(2)}_2A^{(3)}_3B^{(3)}_1+A^{(2)}_3A^{(3)}_1B^{(3)}_4)C^{(31)}-(A^{(2)}_2A^{(3)}_2B^{(3)}_1+A^{(2)}_3A^{(3)}_4B^{(3)}_4)C^{(32)}\notag\\
 &\quad=-(A^{(2)}_1A^{(2)}_2-A^{(2)}_3A^{(2)}_4)(A^{(3)}_1A^{(3)}_2-A^{(3)}_3A^{(3)}_4)A^{(1)}_3B^{(1)}_3B^{(3)}_1B^{(3)}_4\neq0,\\
%
 &(A^{(2)}_2A^{(3)}_1B^{(3)}_1+A^{(2)}_3A^{(3)}_3B^{(3)}_4)C^{(12)}-(A^{(2)}_2A^{(3)}_4B^{(3)}_1+A^{(2)}_3A^{(3)}_2B^{(3)}_4)C^{(22)}\notag\\
 &\quad=(A^{(2)}_1A^{(2)}_2-A^{(2)}_3A^{(2)}_4)(A^{(3)}_1A^{(3)}_2-A^{(3)}_3A^{(3)}_4)B^{(1)}_3A^{(1)}_4B^{(3)}_1B^{(3)}_4\neq0,\\
 &(A^{(2)}_4A^{(3)}_1B^{(3)}_1+A^{(2)}_1A^{(3)}_3B^{(3)}_4)C^{(12)}-(A^{(2)}_4A^{(3)}_4B^{(3)}_1+A^{(2)}_1A^{(3)}_2B^{(3)}_4)C^{(22)}\notag\\
 &\quad=-(A^{(2)}_1A^{(2)}_2-A^{(2)}_3A^{(2)}_4)(A^{(3)}_1A^{(3)}_2-A^{(3)}_3A^{(3)}_4)A^{(1)}_1B^{(1)}_1B^{(3)}_1B^{(3)}_4\neq0,
\end{align}
\end{subequations}
and \eqref{eqn:case3_condp4_1}, the following relations hold:
\begin{subequations}
\begin{align}
 &A^{(2)}_3A^{(3)}_2B^{(3)}_2+A^{(2)}_2A^{(3)}_4B^{(3)}_3\neq0,\quad
 A^{(2)}_1A^{(3)}_2B^{(3)}_2+A^{(2)}_4A^{(3)}_4B^{(3)}_3\neq0,\\
 &A^{(2)}_4A^{(3)}_2B^{(3)}_1+A^{(2)}_1A^{(3)}_4B^{(3)}_4\neq0,\quad
 A^{(2)}_2A^{(3)}_2B^{(3)}_1+A^{(2)}_3A^{(3)}_4B^{(3)}_4\neq0,\\
 &A^{(2)}_2A^{(3)}_1B^{(3)}_1+A^{(2)}_3A^{(3)}_3B^{(3)}_4\neq0,\quad
 A^{(2)}_4A^{(3)}_1B^{(3)}_1+A^{(2)}_1A^{(3)}_3B^{(3)}_4\neq0.
\end{align}
\end{subequations}
From $C^{(13)}=C^{(22)}=C^{(31)}=0$ we obtain
\begin{subequations}
\begin{align}
 &A^{(1)}_1=-\dfrac{A^{(1)}_4B^{(1)}_3(A^{(2)}_4A^{(3)}_1B^{(3)}_1+A^{(2)}_1A^{(3)}_3B^{(3)}_4)}{B^{(1)}_1(A^{(2)}_2A^{(3)}_1B^{(3)}_1+A^{(2)}_3A^{(3)}_3B^{(3)}_4)},\\
 &A^{(1)}_2=-\dfrac{A^{(1)}_3B^{(1)}_3(A^{(2)}_4A^{(3)}_2B^{(3)}_1+A^{(2)}_1A^{(3)}_4B^{(3)}_4)}{B^{(1)}_1(A^{(2)}_2A^{(3)}_2B^{(3)}_1+A^{(2)}_3A^{(3)}_4B^{(3)}_4)},\\
 &B^{(1)}_4=\dfrac{B^{(1)}_2B^{(1)}_3(A^{(2)}_1A^{(3)}_2B^{(3)}_2+A^{(2)}_4A^{(3)}_4B^{(3)}_3)(A^{(2)}_4A^{(3)}_2B^{(3)}_1+A^{(2)}_1A^{(3)}_4B^{(3)}_4)}{B^{(1)}_1(A^{(2)}_3A^{(3)}_2B^{(3)}_2+A^{(2)}_2A^{(3)}_4B^{(3)}_3)(A^{(2)}_2A^{(3)}_2B^{(3)}_1+A^{(2)}_3A^{(3)}_4B^{(3)}_4)}.
\end{align}
\end{subequations}
Then, 
\begin{align}
 0=&C^{(44)}\notag\\
 =&\dfrac{B^{(1)}_2B^{(1)}_3A^{(1)}_4(A^{(2)}_1A^{(2)}_2-A^{(2)}_3A^{(2)}_4)(A^{(3)}_1A^{(3)}_4-A^{(3)}_2A^{(3)}_3)}{B^{(1)}_1(A^{(2)}_3A^{(3)}_2B^{(3)}_2+A^{(2)}_2A^{(3)}_4B^{(3)}_3)(A^{(2)}_2A^{(3)}_1B^{(3)}_1+A^{(2)}_3A^{(3)}_3B^{(3)}_4)}\notag\\
 &\dfrac{(A^{(3)}_1A^{(3)}_2B^{(3)}_1B^{(3)}_2-A^{(3)}_3A^{(3)}_4B^{(3)}_3B^{(3)}_4)(A^{(2)}_1A^{(2)}_3B^{(3)}_2B^{(3)}_4-A^{(2)}_2A^{(2)}_4B^{(3)}_1B^{(3)}_3)}{A^{(2)}_2A^{(3)}_2B^{(3)}_1+A^{(2)}_3A^{(3)}_4B^{(3)}_4},
\end{align}
gives
\begin{align}
 &(A^{(3)}_1A^{(3)}_4-A^{(3)}_2A^{(3)}_3)
 (A^{(3)}_1A^{(3)}_2B^{(3)}_1B^{(3)}_2-A^{(3)}_3A^{(3)}_4B^{(3)}_3B^{(3)}_4)\notag\\
 &\quad (A^{(2)}_1A^{(2)}_3B^{(3)}_2B^{(3)}_4-A^{(2)}_2A^{(2)}_4B^{(3)}_1B^{(3)}_3)
 =0.
\end{align}
Therefore, we obtain the following five cases:
\begin{subequations}
\begin{align}
 &A^{(3)}_1A^{(3)}_4-A^{(3)}_2A^{(3)}_3=0;\label{eqn:case3_condp4_11}\\
 &A^{(3)}_1A^{(3)}_2B^{(3)}_1B^{(3)}_2-A^{(3)}_3A^{(3)}_4B^{(3)}_3B^{(3)}_4=0;\label{eqn:case3_condp4_12}\\
 &A^{(2)}_1A^{(2)}_3B^{(3)}_2B^{(3)}_4-A^{(2)}_2A^{(2)}_4B^{(3)}_1B^{(3)}_3=0,\quad 
 A^{(2)}_1,\dots,A^{(2)}_4,B^{(2)}_1,\dots,B^{(2)}_4\neq0;\label{eqn:case3_condp4_13}\\
 &A^{(2)}_1=B^{(2)}_1=A^{(2)}_2=B^{(2)}_2=0;\label{eqn:case3_condp4_14}\\
 &A^{(2)}_3=B^{(2)}_3=A^{(2)}_4=B^{(2)}_4=0.\label{eqn:case3_condp4_15}
\end{align}
\end{subequations}

Firstly, we consider the case \eqref{eqn:case3_condp4_11}.
From 
\begin{equation}
 0=C^{(12)}-\dfrac{C^{(23)}C^{(32)}}{C^{(43)}}
=-\dfrac{C^{(12)}({A^{(1)}_3}^2{A^{(3)}_1}^2-{A^{(1)}_4}^2{A^{(3)}_2}^2)}{{A^{(1)}_4}^2{A^{(3)}_2}^2},
\end{equation}
we obtain
\begin{equation}
 A^{(1)}_4A^{(3)}_2-A^{(1)}_3A^{(3)}_1=0
\quad\text{or}\quad
 A^{(1)}_4A^{(3)}_2+A^{(1)}_3A^{(3)}_1=0.
\end{equation}
When $A^{(1)}_4A^{(3)}_2-A^{(1)}_3A^{(3)}_1=0$, then we obtain $C^{(43)}=C^{(23)}$,
and when $A^{(1)}_4A^{(3)}_2+A^{(1)}_3A^{(3)}_1=0$, then we obtain $C^{(43)}=-C^{(23)}$.
Therefore, the condition \eqref{eqn:case3_condp4_2} can be rewritten as 
\eqref{eqn:typeII_cond1} or \eqref{eqn:typeII_cond2}.

Next, we consider the case \eqref{eqn:case3_condp4_12}.
Under the condition \eqref{eqn:case3_condp4_12} the following relation holds:
\begin{align}
 0&=\dfrac{A^{(1)}_4A^{(3)}_3A^{(3)}_4}{B^{(1)}_1B^{(3)}_1B^{(3)}_3}
 \left(C^{(21)}-\dfrac{C^{(23)}C^{(41)}}{C^{(43)}}\right)
 -\dfrac{A^{(1)}_4A^{(3)}_1A^{(3)}_2}{B^{(1)}_1{B^{(3)}_3}^2}\left(C^{(25)}-\dfrac{C^{(23)}C^{(45)}}{C^{(43)}}\right)\notag\\
 &=\dfrac{A^{(2)}_2A^{(2)}_4(A^{(1)}_1A^{(1)}_2-A^{(1)}_3A^{(1)}_4)(A^{(3)}_1A^{(3)}_2-A^{(3)}_3A^{(3)}_4)(A^{(3)}_1A^{(3)}_4-A^{(3)}_2A^{(3)}_3)}{A^{(2)}_1A^{(3)}_2B^{(3)}_2+A^{(2)}_4A^{(3)}_4B^{(3)}_3},
\end{align}
which gives the conditions \eqref{eqn:case3_condp4_11}, $A^{(2)}_2=B^{(2)}_2=0$ or $A^{(2)}_4=B^{(2)}_4$.
If $A^{(2)}_2=B^{(2)}_2=0$, then we obtain
\begin{align}
 0&=\left(\dfrac{A^{(1)}_4A^{(2)}_4A^{(3)}_2B^{(1)}_2}{A^{(2)}_3A^{(3)}_3B^{(1)}_1B^{(1)}_3B^{(3)}_2}\right)^2\left(C^{(14)}-\dfrac{C^{(23)}}{C^{(34)}C^{(43)}}\right)\notag\\
 &=\dfrac{C^{(14)}({C^{(43)}}^2-{C^{(23)}}^2)}{(A^{(2)}_4A^{(3)}_1B^{(1)}_2B^{(3)}_1+A^{(2)}_1A^{(3)}_3B^{(1)}_3B^{(3)}_2)^2(B^{(1)}_2B^{(3)}_1-B^{(1)}_3B^{(3)}_3)^2},
\end{align}
and if $A^{(2)}_2=B^{(2)}_2=0$, then we obtain
\begin{equation*}
 0=C^{(14)}-\dfrac{C^{(23)}}{C^{(34)}C^{(43)}}
 =\dfrac{A^{(1)}_3B^{(1)}_1(A^{(2)}_2A^{(3)}_1B^{(3)}_1+A^{(2)}_3A^{(3)}_3B^{(3)}_4)C^{(21)}({C^{(43)}}^2-{C^{(23)}}^2)}{A^{(1)}_4A^{(2)}_1A^{(3)}_2B^{(1)}_2B^{(3)}_2{C^{(32)}}^2}.
\end{equation*}
In the both cases, we obtain the condition ${C^{(43)}}^2={C^{(23)}}^2$,
which allows the condition \eqref{eqn:case3_condp4_2} to be rewritten as 
\eqref{eqn:typeII_cond1} or \eqref{eqn:typeII_cond2}.

Thirdly, we consider the case \eqref{eqn:case3_condp4_13}.
Under the condition \eqref{eqn:case3_condp4_13} we obtain the following relation:
\begin{align}
 0&=\dfrac{B^{(1)}_2B^{(3)}_1}{A^{(1)}_3A^{(2)}_1A^{(2)}_3B^{(3)}_2}\left(C^{(25)}-\dfrac{C^{(23)}C^{(45)}}{C^{(43)}}\right)\notag\\
 &=-\dfrac{(B^{(1)}_1B^{(1)}_2-B^{(1)}_3B^{(1)}_4)(B^{(3)}_1B^{(3)}_2-B^{(3)}_3B^{(3)}_4)(A^{(3)}_1A^{(3)}_4-A^{(3)}_2A^{(3)}_3)}{A^{(2)}_1A^{(3)}_2B^{(3)}_2+A^{(2)}_4A^{(3)}_4B^{(3)}_3},
\end{align}
which gives the condition \eqref{eqn:case3_condp4_11}.

Fourthly, we consider the case \eqref{eqn:case3_condp4_14}.
Under the condition \eqref{eqn:case3_condp4_14} the following holds:
\begin{equation*}
 0=C^{(12)}-\dfrac{C^{(23)}C^{(32)}}{C^{(43)}}
 =\dfrac{A^{(2)}_4B^{(1)}_3B^{(3)}_1(A^{(3)}_1A^{(3)}_2-A^{(3)}_3A^{(3)}_4)({A^{(1)}_3}^2{A^{(3)}_3}^2-{A^{(1)}_4}^2{A^{(3)}_4}^2)}{A^{(1)}_4A^{(3)}_3{A^{(3)}_4}^2}.
\end{equation*}
Therefore, we obtain ${A^{(1)}_3}^2{A^{(3)}_3}^2-{A^{(1)}_4}^2{A^{(3)}_4}^2=0$,
and then we obtain ${C^{(43)}}^2={C^{(23)}}^2$,
which allows the condition \eqref{eqn:case3_condp4_2} to be rewritten as 
\eqref{eqn:typeII_cond1} or \eqref{eqn:typeII_cond2}.

Lastly, we consider the case \eqref{eqn:case3_condp4_15}.
Under the condition \eqref{eqn:case3_condp4_15} the following holds:
\begin{equation*}
 0=C^{(12)}-\dfrac{C^{(23)}C^{(32)}}{C^{(43)}}
 =-\dfrac{A^{(2)}_1B^{(1)}_3B^{(3)}_4(A^{(3)}_1A^{(3)}_2-A^{(3)}_3A^{(3)}_4)({A^{(1)}_3}^2{A^{(3)}_1}^2-{A^{(1)}_4}^2{A^{(3)}_2}^2)}{A^{(1)}_4A^{(3)}_1{A^{(3)}_2}^2}.
\end{equation*}
Therefore, we obtain
\begin{equation}
 {A^{(1)}_3}^2{A^{(3)}_1}^2-{A^{(1)}_4}^2{A^{(3)}_2}^2=0,
\end{equation}
and under the condition above we obtain ${C^{(43)}}^2={C^{(23)}}^2$,
which allows the condition \eqref{eqn:case3_condp4_2} to be rewritten as 
\eqref{eqn:typeII_cond1} or \eqref{eqn:typeII_cond2}.
Therefore, we have completed the proof.
\end{proof}

We can easily verify that under the conditions \eqref{eqn:typeII_cond0}--\eqref{eqn:typeII_AB_nonzero} (which exactly equals to \eqref{eqn:case3_condp4_1}) and \eqref{eqn:typeII_cond1},
the CACO property holds and the following square equation $K_1$ holds:
\begin{align}
 &v_4\Big(C^{(21)}{u_4}^2+C^{(23)}{u_1}^2+C^{(24)}{u_1}^2u_4v_1+C^{(25)}u_1u_4+C^{(26)}u_1{u_4}^2v_1\Big)\notag\\
 &\quad+C^{(11)}u_4+C^{(12)}{u_4}^2v_1+C^{(14)}{u_1}^2v_1+C^{(15)}u_1+C^{(16)}u_1u_4v_1=0,
\end{align}
and same is true for the conditions \eqref{eqn:typeII_cond0}--\eqref{eqn:typeII_AB_nonzero} and \eqref{eqn:typeII_cond2}.
It is difficult to explicitly show the square equations $K_2$ and $K_3$ by using the conditions \eqref{eqn:typeII_cond0}--\eqref{eqn:typeII_AB_nonzero} and \eqref{eqn:typeII_cond1} (or, the conditions \eqref{eqn:typeII_cond0}--\eqref{eqn:typeII_AB_nonzero} and \eqref{eqn:typeII_cond2}).
Therefore, we just show the existences of the square equations $K_2$ and $K_3$ 
under the conditions \eqref{eqn:typeII_cond0}--\eqref{eqn:typeII_AB_nonzero} and \eqref{eqn:typeII_cond1}
and under the conditions \eqref{eqn:typeII_cond0}--\eqref{eqn:typeII_AB_nonzero} and \eqref{eqn:typeII_cond2} as follows.

We can easily verify that for the general parameters $A^{(i)}_j$ and $B^{(i)}_j$, 
$v_5^{(r)}=v_5^{(r)}(u_2,u_3,u_4,u_5,v_2)$ and $v_6^{(r)}=v_6^{(r)}(u_3,u_4,u_5,u_6,v_3)$ satisfy
\begin{equation}
 \dfrac{\partial v_5^{(r)}}{u_3}=\dfrac{\partial v_5^{(r)}}{u_4}=0,\quad
 \dfrac{\partial v_6^{(r)}}{u_4}=\dfrac{\partial v_6^{(r)}}{u_5}=0,
\end{equation}
respectively.
Moreover, under the following special condition:
\begin{equation}
\begin{cases}
 A^{(2)}_1=B^{(2)}_1=A^{(2)}_2=B^{(2)}_2=0,\quad
 A^{(3)}_2=\dfrac{A^{(1)}_2A^{(3)}_1}{A^{(1)}_1},\quad
 A^{(3)}_3=\dfrac{A^{(1)}_4A^{(3)}_1}{A^{(1)}_1},\\
 A^{(3)}_4=\dfrac{A^{(1)}_3A^{(3)}_1}{A^{(1)}_1},\quad
 B^{(2)}_3=\dfrac{A^{(2)}_4B^{(3)}_1(B^{(1)}_1B^{(1)}_2-B^{(1)}_3B^{(1)}_4)}{A^{(2)}_3B^{(1)}_1},\\
 B^{(2)}_4=\dfrac{B^{(3)}_1(B^{(1)}_1B^{(1)}_2-B^{(1)}_3B^{(1)}_4)}{B^{(1)}_1},\quad
 B^{(3)}_2=-\dfrac{A^{(2)}_4B^{(1)}_2B^{(3)}_1}{A^{(2)}_3B^{(1)}_1},\quad
 B^{(3)}_3=\dfrac{B^{(1)}_4B^{(3)}_1}{B^{(1)}_1},\\
 B^{(3)}_4=-\dfrac{A^{(2)}_4B^{(1)}_3B^{(3)}_1}{A^{(2)}_3B^{(1)}_1},
\end{cases}
\end{equation}
which satisfies the conditions \eqref{eqn:caseN3_CAO_B2}, \eqref{eqn:typeII_cond0}--\eqref{eqn:typeII_AB_nonzero} and \eqref{eqn:typeII_cond1}, 
and under the following special condition:
\begin{equation}
\begin{cases}
 A^{(2)}_1=B^{(2)}_1=A^{(2)}_2=B^{(2)}_2=0,\quad
 A^{(3)}_2=-\dfrac{A^{(1)}_2A^{(3)}_1}{A^{(1)}_1},\quad
 A^{(3)}_3=\dfrac{A^{(1)}_4A^{(3)}_1}{A^{(1)}_1},\\
 A^{(3)}_4=-\dfrac{A^{(1)}_3A^{(3)}_1}{A^{(1)}_1},\quad
 B^{(2)}_3=\dfrac{A^{(2)}_4B^{(3)}_1(B^{(1)}_1B^{(1)}_2-B^{(1)}_3B^{(1)}_4)}{A^{(2)}_3B^{(1)}_1},\\
 B^{(2)}_4=\dfrac{B^{(3)}_1(B^{(1)}_1B^{(1)}_2-B^{(1)}_3B^{(1)}_4)}{B^{(1)}_1},\quad
 B^{(3)}_2=-\dfrac{A^{(2)}_4B^{(1)}_2B^{(3)}_1}{A^{(2)}_3B^{(1)}_1},\quad
 B^{(3)}_3=\dfrac{B^{(1)}_4B^{(3)}_1}{B^{(1)}_1},\\
 B^{(3)}_4=-\dfrac{A^{(2)}_4B^{(1)}_3B^{(3)}_1}{A^{(2)}_3B^{(1)}_1},
\end{cases}
\end{equation}
which satisfies the conditions \eqref{eqn:caseN3_CAO_B2}, \eqref{eqn:typeII_cond0}--\eqref{eqn:typeII_AB_nonzero} and \eqref{eqn:typeII_cond2}, 
we can easily verify
\begin{equation}\label{eqn:case3_v5v6_nonzero}
 \dfrac{\partial v_5}{v_2},\dfrac{\partial v_5}{u_2},\dfrac{\partial v_5}{u_5}\neq 0,\quad
 \dfrac{\partial v_6}{v_3},\dfrac{\partial v_6}{u_3},\dfrac{\partial v_6}{u_6}\neq 0.
\end{equation}
Therefore, under the conditions \eqref{eqn:typeII_cond0}--\eqref{eqn:typeII_AB_nonzero} and \eqref{eqn:typeII_cond1}
and under the conditions \eqref{eqn:typeII_cond0}--\eqref{eqn:typeII_AB_nonzero} and \eqref{eqn:typeII_cond2},
the square property holds.
Hence, we have complete the proof. 

\subsection{Proof of Lemma \ref{lemma:classification_CACO_CO1CO2CO3_3}}
\label{subsection:caseN1}
In this section, we consider quad-equations \eqref{eqn:Q1_P1_CO3}--\eqref{eqn:Q9_P3_CO3} with \eqref{eqn:P123_caseN1}.
Note that in this setting, the transformations $s_{12}$, $s_{23}$, $s_{13}$ and $\iota$ defined in \eqref{eqn:Gco} keep invariant the system of quad-equations $\{Q_1,\dots,Q_9\}$ under the appropriate transformations of parameters (see Remark \ref{remark:symmetry_GCO}).

In this setting, $v_4^{(r)}$ and $v_4^{(l)}$ are given by
\begin{equation}
 v_4^{(r)}=\dfrac{F_1}{F_2},\quad
 v_4^{(l)}=\dfrac{G_1}{G_2}.
\end{equation}
Here, 
\begin{subequations}
\begin{align}
 F_1=&-{d_2}^2{d_3}^2{U_1}^2{U_4}^2\Big(U_4(d_2+U_4v_1)C^{(15)}+{U_4}^2C^{(16)}+d_3(d_2+U_4v_1)C^{(17)}\notag\\
 &+d_3U_4C^{(18)}\Big)-d_2d_3U_1U_4(d_2d_3U_1-U_4)\Big(d_3(d_2+U_4v_1)C^{(21)}+d_3U_4C^{(22)}\notag\\
 &+U_4(d_2+U_4v_1)C^{(23)}+{U_4}^2C^{(24)}\Big)-(d_2d_3U_1-U_4)^2\Big(d_3(d_2+U_4v_1)C^{(11)}\notag\\
 &+d_3U_4C^{(12)}+U_4(d_2+U_4v_1)C^{(13)}+{U_4}^2C^{(14)}\Big),\\
 F_2=&{d_2}^2{d_3}^2{U_1}^2{U_4}^3\Big((d_2+U_4v_1)C^{(17)}+U_4C^{(18)}\Big)\notag\\
 &+d_2d_3U_1{U_4}^2(d_2d_3U_1-U_4)\Big((d_2+U_4v_1)C^{(21)}+U_4C^{(22)}\Big)\notag\\
 &+U_4(d_2d_3U_1-U_4)^2\Big((d_2+U_4v_1)C^{(11)}+U_4C^{(12)}\Big),\\
 G_1=&-{d_2}^2{d_3}^2{U_1}^2{U_4}^2\Big(U_1(1+d_3U_1v_1)C^{(35)}+{U_1}^2C^{(36)}+(1+d_3U_1v_1)C^{(37)}+U_1C^{(38)}\Big)\notag\\
 &-d_2d_3U_1U_4(d_2d_3U_1-U_4)\Big((1+d_3U_1v_1)C^{(41)}+U_1C^{(42)}+U_1(1+d_3U_1v_1)C^{(43)}\notag\\
 &+{U_1}^2C^{(44)}\Big)
 -(d_2d_3U_1-U_4)^2\Big((1+d_3U_1v_1)C^{(31)}+U_1C^{(32)}\notag\\
 &+U_1(1+d_3U_1v_1)C^{(33)}+{U_1}^2C^{(34)}\Big),\\
 G_2=&{d_2}^3{d_3}^2{U_1}^3{U_4}^2\Big((1+d_3U_1v_1)C^{(37)}+U_1C^{(38)}\Big)\notag\\
 &+{d_2}^2d_3{U_1}^2U_4(d_2d_3U_1-U_4)\Big((1+d_3U_1v_1)C^{(41)}+U_1C^{(42)}\Big)\notag\\
 &+d_2U_1(d_2d_3U_1-U_4)^2\Big((1+d_3U_1v_1)C^{(31)}+U_1C^{(32)}\Big),
\end{align}
\end{subequations}
where
\begin{equation}
 U_1=\dfrac{1}{u_1},\quad
 U_4=\dfrac{1}{u_4},
\end{equation}
and
\begin{subequations}
\begin{align}
 C^{(11)}&=A^{(1)}_1A^{(3)}_1(A^{(2)}_1B^{(1)}_2B^{(3)}_3-A^{(2)}_2B^{(1)}_1B^{(3)}_3d_2-A^{(2)}_3B^{(1)}_2B^{(3)}_1d_3)\notag\\
 &+d_2d_3\Big(A^{(1)}_1A^{(3)}_2B^{(3)}_1(A^{(2)}_1B^{(1)}_2-A^{(2)}_2B^{(1)}_1d_2)+A^{(3)}_1B^{(1)}_1B^{(3)}_1(A^{(1)}_1A^{(2)}_4-A^{(1)}_3A^{(2)}_3d_3)\notag\\
 &+A^{(1)}_3A^{(2)}_1B^{(1)}_1(A^{(3)}_1B^{(3)}_3+A^{(3)}_2B^{(3)}_1d_2d_3)\Big),\\
 C^{(12)}&=A^{(1)}_2A^{(3)}_1(A^{(2)}_1B^{(1)}_2B^{(3)}_3-A^{(2)}_2B^{(1)}_1B^{(3)}_3d_2-A^{(2)}_3B^{(1)}_2B^{(3)}_1d_3)\notag\\
 &+d_2d_3\Big(A^{(1)}_2A^{(3)}_2B^{(3)}_1(A^{(2)}_1B^{(1)}_2-A^{(2)}_2B^{(1)}_1d_2)+A^{(3)}_1B^{(1)}_1B^{(3)}_1(A^{(1)}_2A^{(2)}_4-A^{(1)}_4A^{(2)}_3d_3)\notag\\
 &+A^{(1)}_4A^{(2)}_1B^{(1)}_1(A^{(3)}_1B^{(3)}_3+A^{(3)}_2B^{(3)}_1d_2d_3)\Big),\\
 C^{(13)}&=A^{(1)}_1A^{(3)}_3(A^{(2)}_1B^{(1)}_2B^{(3)}_3-A^{(2)}_2B^{(1)}_1B^{(3)}_3d_2-A^{(2)}_3B^{(1)}_2B^{(3)}_1d_3)\notag\\
 &+d_2d_3\Big(A^{(1)}_1A^{(3)}_4B^{(3)}_1(A^{(2)}_1B^{(1)}_2-A^{(2)}_2B^{(1)}_1d_2)+A^{(3)}_3B^{(1)}_1B^{(3)}_1(A^{(1)}_1A^{(2)}_4-A^{(1)}_3A^{(2)}_3d_3)\notag\\
 &+A^{(1)}_3A^{(2)}_1B^{(1)}_1(A^{(3)}_3B^{(3)}_3+A^{(3)}_4B^{(3)}_1d_2d_3)\Big),\\
 C^{(14)}&=A^{(1)}_2A^{(3)}_3(A^{(2)}_1B^{(1)}_2B^{(3)}_3-A^{(2)}_2B^{(1)}_1B^{(3)}_3d_2-A^{(2)}_3B^{(1)}_2B^{(3)}_1d_3)\notag\\
 &+d_2d_3\Big(A^{(1)}_2A^{(3)}_4B^{(3)}_1(A^{(2)}_1B^{(1)}_2-A^{(2)}_2B^{(1)}_1d_2)+A^{(3)}_3B^{(1)}_1B^{(3)}_1(A^{(1)}_2A^{(2)}_4-A^{(1)}_4A^{(2)}_3d_3)\notag\\
 &+A^{(1)}_4A^{(2)}_1B^{(1)}_1(A^{(3)}_3B^{(3)}_3+A^{(3)}_4B^{(3)}_1d_2d_3)\Big),\\
 C^{(15)}&=A^{(1)}_1A^{(3)}_3(A^{(2)}_1B^{(1)}_4B^{(3)}_4-A^{(2)}_2B^{(1)}_3B^{(3)}_4d_2-A^{(2)}_3B^{(1)}_4B^{(3)}_2d_3)\notag\\
 &+d_2d_3\Big(A^{(1)}_1A^{(3)}_4B^{(3)}_2(A^{(2)}_1B^{(1)}_4-A^{(2)}_2B^{(1)}_3d_2)+A^{(3)}_3B^{(1)}_3B^{(3)}_2(A^{(1)}_1A^{(2)}_4-A^{(1)}_3A^{(2)}_3d_3)\notag\\
 &+A^{(1)}_3A^{(2)}_1B^{(1)}_3(A^{(3)}_3B^{(3)}_4+A^{(3)}_4B^{(3)}_2d_2d_3)\Big),\\
 C^{(16)}&=A^{(1)}_2A^{(3)}_3(A^{(2)}_1B^{(1)}_4B^{(3)}_4-A^{(2)}_2B^{(1)}_3B^{(3)}_4d_2-A^{(2)}_3B^{(1)}_4B^{(3)}_2d_3)\notag\\
 &+d_2d_3\Big(A^{(1)}_2A^{(3)}_4B^{(3)}_2(A^{(2)}_1B^{(1)}_4-A^{(2)}_2B^{(1)}_3d_2)+A^{(3)}_3B^{(1)}_3B^{(3)}_2(A^{(1)}_2A^{(2)}_4-A^{(1)}_4A^{(2)}_3d_3)\notag\\
 &+A^{(1)}_4A^{(2)}_1B^{(1)}_3(A^{(3)}_3B^{(3)}_4+A^{(3)}_4B^{(3)}_2d_2d_3)\Big),\\
 C^{(17)}&=A^{(1)}_1A^{(3)}_1(A^{(2)}_1B^{(1)}_4B^{(3)}_4-A^{(2)}_2B^{(1)}_3B^{(3)}_4d_2-A^{(2)}_3B^{(1)}_4B^{(3)}_2d_3)\notag\\
 &+d_2d_3\Big(A^{(1)}_1A^{(3)}_2B^{(3)}_2(A^{(2)}_1B^{(1)}_4-A^{(2)}_2B^{(1)}_3d_2)+A^{(3)}_1B^{(1)}_3B^{(3)}_2(A^{(1)}_1A^{(2)}_4-A^{(1)}_3A^{(2)}_3d_3)\notag\\
 &+A^{(1)}_3A^{(2)}_1B^{(1)}_3(A^{(3)}_1B^{(3)}_4+A^{(3)}_2B^{(3)}_2d_2d_3)\Big),\\
 C^{(18)}&=A^{(1)}_2A^{(3)}_1(A^{(2)}_1B^{(1)}_4B^{(3)}_4-A^{(2)}_2B^{(1)}_3B^{(3)}_4d_2-A^{(2)}_3B^{(1)}_4B^{(3)}_2d_3)\notag\\
 &+d_2d_3\Big(A^{(1)}_2A^{(3)}_2B^{(3)}_2(A^{(2)}_1B^{(1)}_4-A^{(2)}_2B^{(1)}_3d_2)+A^{(3)}_1B^{(1)}_3B^{(3)}_2(A^{(1)}_2A^{(2)}_4-A^{(1)}_4A^{(2)}_3d_3)\notag\\
 &+A^{(1)}_4A^{(2)}_1B^{(1)}_3(A^{(3)}_1B^{(3)}_4+A^{(3)}_2B^{(3)}_2d_2d_3)\Big),\\
 C^{(21)}&=A^{(1)}_1A^{(2)}_1A^{(3)}_1(B^{(1)}_4B^{(3)}_3+B^{(1)}_2B^{(3)}_4)-A^{(3)}_1d_2(B^{(1)}_3B^{(3)}_3+B^{(1)}_1B^{(3)}_4)(A^{(1)}_1A^{(2)}_2-A^{(1)}_3A^{(2)}_1d_3)\notag\\
 &-A^{(1)}_1d_3(B^{(1)}_4B^{(3)}_1+B^{(1)}_2B^{(3)}_2)(A^{(2)}_3A^{(3)}_1-A^{(2)}_1A^{(3)}_2d_2)\notag\\
 &+d_2d_3(B^{(1)}_3B^{(3)}_1+B^{(1)}_1B^{(3)}_2)\Big(A^{(1)}_1(A^{(2)}_4A^{(3)}_1-A^{(2)}_2A^{(3)}_2d_2)\notag\\
 &-A^{(1)}_3d_3(A^{(2)}_3A^{(3)}_1-A^{(2)}_1A^{(3)}_2d_2)\Big),\\
 C^{(22)}&=A^{(1)}_2A^{(2)}_1A^{(3)}_1(B^{(1)}_4B^{(3)}_3+B^{(1)}_2B^{(3)}_4)-A^{(3)}_1d_2(B^{(1)}_3B^{(3)}_3+B^{(1)}_1B^{(3)}_4)(A^{(1)}_2A^{(2)}_2-A^{(1)}_4A^{(2)}_1d_3)\notag\\
 &-A^{(1)}_2d_3(B^{(1)}_4B^{(3)}_1+B^{(1)}_2B^{(3)}_2)(A^{(2)}_3A^{(3)}_1-A^{(2)}_1A^{(3)}_2d_2)\notag\\
 &+d_2d_3(B^{(1)}_3B^{(3)}_1+B^{(1)}_1B^{(3)}_2)\Big(A^{(1)}_2(A^{(2)}_4A^{(3)}_1-A^{(2)}_2A^{(3)}_2d_2)\notag\\
 &-A^{(1)}_4d_3(A^{(2)}_3A^{(3)}_1-A^{(2)}_1A^{(3)}_2d_2)\Big),\\
 C^{(23)}&=A^{(1)}_1A^{(2)}_1A^{(3)}_3(B^{(1)}_4B^{(3)}_3+B^{(1)}_2B^{(3)}_4)-A^{(3)}_3d_2(B^{(1)}_3B^{(3)}_3+B^{(1)}_1B^{(3)}_4)(A^{(1)}_1A^{(2)}_2-A^{(1)}_3A^{(2)}_1d_3)\notag\\
 &-A^{(1)}_1d_3(B^{(1)}_4B^{(3)}_1+B^{(1)}_2B^{(3)}_2)(A^{(2)}_3A^{(3)}_3-A^{(2)}_1A^{(3)}_4d_2)\notag\\
 &+d_2d_3(B^{(1)}_3B^{(3)}_1+B^{(1)}_1B^{(3)}_2)\Big(A^{(1)}_1(A^{(2)}_4A^{(3)}_3-A^{(2)}_2A^{(3)}_4d_2)\notag\\
 &-A^{(1)}_3d_3(A^{(2)}_3A^{(3)}_3-A^{(2)}_1A^{(3)}_4d_2)\Big),\\
 C^{(24)}&=A^{(1)}_2A^{(2)}_1A^{(3)}_3(B^{(1)}_4B^{(3)}_3+B^{(1)}_2B^{(3)}_4)-A^{(3)}_3d_2(B^{(1)}_3B^{(3)}_3+B^{(1)}_1B^{(3)}_4)(A^{(1)}_2A^{(2)}_2-A^{(1)}_4A^{(2)}_1d_3)\notag\\
 &-A^{(1)}_2d_3(B^{(1)}_4B^{(3)}_1+B^{(1)}_2B^{(3)}_2)(A^{(2)}_3A^{(3)}_3-A^{(2)}_1A^{(3)}_4d_2)\notag\\
 &+d_2d_3(B^{(1)}_3B^{(3)}_1+B^{(1)}_1B^{(3)}_2)\Big(A^{(1)}_2(A^{(2)}_4A^{(3)}_3-A^{(2)}_2A^{(3)}_4d_2)\notag\\
 &-A^{(1)}_4d_3(A^{(2)}_3A^{(3)}_3-A^{(2)}_1A^{(3)}_4d_2)\Big),\\
 C^{(31)}&=A^{(1)}_1A^{(3)}_1(A^{(2)}_4B^{(1)}_1B^{(3)}_1-A^{(2)}_2B^{(1)}_1B^{(3)}_3d_2-A^{(2)}_3B^{(1)}_2B^{(3)}_1d_3)\notag\\
 &+d_2d_3\Big(A^{(2)}_1A^{(3)}_1B^{(3)}_3(A^{(1)}_1B^{(1)}_2-A^{(1)}_3B^{(1)}_1d_2)+A^{(1)}_1A^{(3)}_2B^{(3)}_1(A^{(2)}_2B^{(1)}_1-A^{(2)}_1B^{(1)}_2d_3)\notag\\
 &+A^{(1)}_3B^{(1)}_1B^{(3)}_1(A^{(2)}_3A^{(3)}_1+A^{(2)}_1A^{(3)}_2d_2d_3)\Big),\\
 C^{(32)}&=A^{(1)}_1A^{(3)}_3(A^{(2)}_4B^{(1)}_1B^{(3)}_1-A^{(2)}_2B^{(1)}_1B^{(3)}_3d_2-A^{(2)}_3B^{(1)}_2B^{(3)}_1d_3)\notag\\
 &+d_2d_3\Big(A^{(2)}_1A^{(3)}_3B^{(3)}_3(A^{(1)}_1B^{(1)}_2-A^{(1)}_3B^{(1)}_1d_2)+A^{(1)}_1A^{(3)}_4B^{(3)}_1(A^{(2)}_2B^{(1)}_1-A^{(2)}_1B^{(1)}_2d_3)\notag\\
 &+A^{(1)}_3B^{(1)}_1B^{(3)}_1(A^{(2)}_3A^{(3)}_3+A^{(2)}_1A^{(3)}_4d_2d_3)\Big),\\
 C^{(33)}&=A^{(1)}_2A^{(3)}_1(A^{(2)}_4B^{(1)}_1B^{(3)}_1-A^{(2)}_2B^{(1)}_1B^{(3)}_3d_2-A^{(2)}_3B^{(1)}_2B^{(3)}_1d_3)\notag\\
 &+d_2d_3\Big(A^{(2)}_1A^{(3)}_1B^{(3)}_3(A^{(1)}_2B^{(1)}_2-A^{(1)}_4B^{(1)}_1d_2)+A^{(1)}_2A^{(3)}_2B^{(3)}_1(A^{(2)}_2B^{(1)}_1-A^{(2)}_1B^{(1)}_2d_3)\notag\\
 &+A^{(1)}_4B^{(1)}_1B^{(3)}_1(A^{(2)}_3A^{(3)}_1+A^{(2)}_1A^{(3)}_2d_2d_3)\Big),\\
 C^{(34)}&=A^{(1)}_2A^{(3)}_3(A^{(2)}_4B^{(1)}_1B^{(3)}_1-A^{(2)}_2B^{(1)}_1B^{(3)}_3d_2-A^{(2)}_3B^{(1)}_2B^{(3)}_1d_3)\notag\\
 &+d_2d_3\Big(A^{(2)}_1A^{(3)}_3B^{(3)}_3(A^{(1)}_2B^{(1)}_2-A^{(1)}_4B^{(1)}_1d_2)+A^{(1)}_2A^{(3)}_4B^{(3)}_1(A^{(2)}_2B^{(1)}_1-A^{(2)}_1B^{(1)}_2d_3)\notag\\
 &+A^{(1)}_4B^{(1)}_1B^{(3)}_1(A^{(2)}_3A^{(3)}_3+A^{(2)}_1A^{(3)}_4d_2d_3)\Big),\\
 C^{(35)}&=A^{(1)}_2A^{(3)}_1(A^{(2)}_4B^{(1)}_3B^{(3)}_2-A^{(2)}_2B^{(1)}_3B^{(3)}_4d_2-A^{(2)}_3B^{(1)}_4B^{(3)}_2d_3)\notag\\
 &+d_2d_3\Big(A^{(2)}_1A^{(3)}_1B^{(3)}_4(A^{(1)}_2B^{(1)}_4-A^{(1)}_4B^{(1)}_3d_2)+A^{(1)}_2A^{(3)}_2B^{(3)}_2(A^{(2)}_2B^{(1)}_3-A^{(2)}_1B^{(1)}_4d_3)\notag\\
 &+A^{(1)}_4B^{(1)}_3B^{(3)}_2(A^{(2)}_3A^{(3)}_1+A^{(2)}_1A^{(3)}_2d_2d_3)\Big),\\
 C^{(36)}&=A^{(1)}_2A^{(3)}_3(A^{(2)}_4B^{(1)}_3B^{(3)}_2-A^{(2)}_2B^{(1)}_3B^{(3)}_4d_2-A^{(2)}_3B^{(1)}_4B^{(3)}_2d_3)\notag\\
 &+d_2d_3\Big(A^{(2)}_1A^{(3)}_3B^{(3)}_4(A^{(1)}_2B^{(1)}_4-A^{(1)}_4B^{(1)}_3d_2)+A^{(1)}_2A^{(3)}_4B^{(3)}_2(A^{(2)}_2B^{(1)}_3-A^{(2)}_1B^{(1)}_4d_3)\notag\\
 &+A^{(1)}_4B^{(1)}_3B^{(3)}_2(A^{(2)}_3A^{(3)}_3+A^{(2)}_1A^{(3)}_4d_2d_3)\Big),\\
 C^{(37)}&=A^{(1)}_1A^{(3)}_1(A^{(2)}_4B^{(1)}_3B^{(3)}_2-A^{(2)}_2B^{(1)}_3B^{(3)}_4d_2-A^{(2)}_3B^{(1)}_4B^{(3)}_2d_3)\notag\\
 &+d_2d_3\Big(A^{(2)}_1A^{(3)}_1B^{(3)}_4(A^{(1)}_1B^{(1)}_4-A^{(1)}_3B^{(1)}_3d_2)+A^{(1)}_1A^{(3)}_2B^{(3)}_2(A^{(2)}_2B^{(1)}_3-A^{(2)}_1B^{(1)}_4d_3)\notag\\
 &+A^{(1)}_3B^{(1)}_3B^{(3)}_2(A^{(2)}_3A^{(3)}_1+A^{(2)}_1A^{(3)}_2d_2d_3)\Big),\\
 C^{(38)}&=A^{(1)}_1A^{(3)}_3(A^{(2)}_4B^{(1)}_3B^{(3)}_2-A^{(2)}_2B^{(1)}_3B^{(3)}_4d_2-A^{(2)}_3B^{(1)}_4B^{(3)}_2d_3)\notag\\
 &+d_2d_3\Big(A^{(2)}_1A^{(3)}_3B^{(3)}_4(A^{(1)}_1B^{(1)}_4-A^{(1)}_3B^{(1)}_3d_2)+A^{(1)}_1A^{(3)}_4B^{(3)}_2(A^{(2)}_2B^{(1)}_3-A^{(2)}_1B^{(1)}_4d_3)\notag\\
 &+A^{(1)}_3B^{(1)}_3B^{(3)}_2(A^{(2)}_3A^{(3)}_3+A^{(2)}_1A^{(3)}_4d_2d_3)\Big),\\
 C^{(41)}&=A^{(1)}_1A^{(2)}_1A^{(3)}_1d_2d_3(B^{(1)}_4B^{(3)}_3+B^{(1)}_2B^{(3)}_4)\notag\\
 &-A^{(3)}_1d_2(B^{(1)}_3B^{(3)}_3+B^{(1)}_1B^{(3)}_4)(A^{(1)}_1A^{(2)}_2+A^{(1)}_3A^{(2)}_1d_2d_3)\notag\\
 &-A^{(1)}_1d_3(B^{(1)}_4B^{(3)}_1+B^{(1)}_2B^{(3)}_2)(A^{(2)}_3A^{(3)}_1+A^{(2)}_1A^{(3)}_2d_2d_3)\notag\\
 &+(B^{(1)}_3B^{(3)}_1+B^{(1)}_1B^{(3)}_2)\Big(A^{(3)}_2d_2d_3(A^{(1)}_1A^{(2)}_2+A^{(1)}_3A^{(2)}_1d_2d_3)\notag\\
 &+A^{(3)}_1(A^{(1)}_1A^{(2)}_4+A^{(1)}_3A^{(2)}_3d_2d_3)\Big),\\
 C^{(42)}&=A^{(1)}_1A^{(2)}_1A^{(3)}_3d_2d_3(B^{(1)}_4B^{(3)}_3+B^{(1)}_2B^{(3)}_4)\notag\\
 &-A^{(3)}_3d_2(B^{(1)}_3B^{(3)}_3+B^{(1)}_1B^{(3)}_4)(A^{(1)}_1A^{(2)}_2+A^{(1)}_3A^{(2)}_1d_2d_3)\notag\\
 &-A^{(1)}_1d_3(B^{(1)}_4B^{(3)}_1+B^{(1)}_2B^{(3)}_2)(A^{(2)}_3A^{(3)}_3+A^{(2)}_1A^{(3)}_4d_2d_3)\notag\\
 &+(B^{(1)}_3B^{(3)}_1+B^{(1)}_1B^{(3)}_2)\Big(A^{(3)}_4d_2d_3(A^{(1)}_1A^{(2)}_2+A^{(1)}_3A^{(2)}_1d_2d_3)\notag\\
 &+A^{(3)}_3(A^{(1)}_1A^{(2)}_4+A^{(1)}_3A^{(2)}_3d_2d_3)\Big),\\
 C^{(43)}&=A^{(1)}_2A^{(2)}_1A^{(3)}_1d_2d_3(B^{(1)}_4B^{(3)}_3+B^{(1)}_2B^{(3)}_4)\notag\\
 &-A^{(3)}_1d_2(B^{(1)}_3B^{(3)}_3+B^{(1)}_1B^{(3)}_4)(A^{(1)}_2A^{(2)}_2+A^{(1)}_4A^{(2)}_1d_2d_3)\notag\\
 &-A^{(1)}_2d_3(B^{(1)}_4B^{(3)}_1+B^{(1)}_2B^{(3)}_2)(A^{(2)}_3A^{(3)}_1+A^{(2)}_1A^{(3)}_2d_2d_3)\notag\\
 &+(B^{(1)}_3B^{(3)}_1+B^{(1)}_1B^{(3)}_2)\Big(A^{(3)}_2d_2d_3(A^{(1)}_2A^{(2)}_2+A^{(1)}_4A^{(2)}_1d_2d_3)\notag\\
 &+A^{(3)}_1(A^{(1)}_2A^{(2)}_4+A^{(1)}_4A^{(2)}_3d_2d_3)\Big),\\
 C^{(44)}&=A^{(1)}_2A^{(2)}_1A^{(3)}_3d_2d_3(B^{(1)}_4B^{(3)}_3+B^{(1)}_2B^{(3)}_4)\notag\\
 &-A^{(3)}_3d_2(B^{(1)}_3B^{(3)}_3+B^{(1)}_1B^{(3)}_4)(A^{(1)}_2A^{(2)}_2+A^{(1)}_4A^{(2)}_1d_2d_3)\notag\\
 &-A^{(1)}_2d_3(B^{(1)}_4B^{(3)}_1+B^{(1)}_2B^{(3)}_2)(A^{(2)}_3A^{(3)}_3+A^{(2)}_1A^{(3)}_4d_2d_3)\notag\\
 &+(B^{(1)}_3B^{(3)}_1+B^{(1)}_1B^{(3)}_2)\Big(A^{(3)}_4d_2d_3(A^{(1)}_2A^{(2)}_2+A^{(1)}_4A^{(2)}_1d_2d_3)\notag\\
 &+A^{(3)}_3(A^{(1)}_2A^{(2)}_4+A^{(1)}_4A^{(2)}_3d_2d_3)\Big).
\end{align}
\end{subequations}

We here consider by dividing into the following five cases:
\begin{align}
 &A^{(1)}_1=B^{(1)}_1=A^{(2)}_1=B^{(2)}_1=A^{(3)}_1=B^{(3)}_1=0;\tag{i}\label{eqn:typeIII_proof_c_1}\\
 &A^{(1)}_1=B^{(1)}_1=0,\quad A^{(2)}_1,B^{(2)}_1,A^{(3)}_1,B^{(3)}_1\neq0;\tag{ii}\label{eqn:typeIII_proof_c_2}\\
 &A^{(1)}_1,B^{(1)}_1,A^{(2)}_1,B^{(2)}_1,A^{(3)}_1,B^{(3)}_1\neq0;\tag{iii}\label{eqn:typeIII_proof_c_3}\\
 &A^{(2)}_1=B^{(2)}_1=0,\quad A^{(1)}_1,B^{(1)}_1,A^{(3)}_1,B^{(3)}_1\neq0;\tag{iv}\label{eqn:typeIII_proof_c_4}\\
 &A^{(3)}_1=B^{(3)}_1=0,\quad A^{(1)}_1,B^{(1)}_1,A^{(2)}_1,B^{(2)}_1\neq0.\tag{v}\label{eqn:typeIII_proof_c_5}
\end{align}
%
%
%
%
Note that because of the condition \eqref{eqn:caseN1_CAO_B2} the following cases are inadequate:
\begin{equation}
 A^{(i)}_1=B^{(i)}_1=A^{(i+1)}_1=B^{(i+1)}_1=0,\quad A^{(i+2)}_1,B^{(i+2)}_1\neq 0,
\end{equation}
where $i\in\bbZ/(3\bbZ)$.
Moreover, because of the symmetries $s_{12}$ and $s_{23}$,
considering the case \eqref{eqn:typeIII_proof_c_2}, 
considering the case \eqref{eqn:typeIII_proof_c_4} and 
considering the case \eqref{eqn:typeIII_proof_c_5} mean the same.
Note that the actions of $s_{12}$ and $s_{23}$ on the parameters are given by the following:
\begin{subequations}
\begin{align}
\begin{split}
 s_{12}:~&
 A^{(1)}_2\leftrightarrow A^{(1)}_3,\quad
 A^{(2)}_1\leftrightarrow A^{(3)}_1,\quad
 A^{(2)}_2\leftrightarrow A^{(3)}_3,\quad
 A^{(2)}_3\leftrightarrow A^{(3)}_2,\quad
 A^{(2)}_4\leftrightarrow A^{(3)}_4,\\
 &B^{(1)}_2\leftrightarrow B^{(1)}_3,\quad
 B^{(2)}_1\leftrightarrow B^{(3)}_1,\quad
 B^{(2)}_2\leftrightarrow B^{(3)}_3,\quad
 B^{(2)}_3\leftrightarrow B^{(3)}_2,\quad
 B^{(2)}_4\leftrightarrow B^{(3)}_4,\\
 &d_1\leftrightarrow d_2,
\end{split}\\
\begin{split}
 s_{23}:~&
 A^{(1)}_1\leftrightarrow A^{(3)}_1,\quad
 A^{(1)}_2\leftrightarrow A^{(3)}_3,\quad
 A^{(1)}_3\leftrightarrow A^{(3)}_2,\quad
 A^{(1)}_4\leftrightarrow A^{(3)}_4,\quad
 A^{(2)}_2\leftrightarrow A^{(2)}_3,\\
 &B^{(1)}_1\leftrightarrow B^{(3)}_1,\quad
 B^{(1)}_2\leftrightarrow B^{(3)}_3,\quad
 B^{(1)}_3\leftrightarrow B^{(3)}_2,\quad
 B^{(1)}_4\leftrightarrow B^{(3)}_4,\quad
 B^{(2)}_2\leftrightarrow B^{(2)}_3,\\
 &d_2\leftrightarrow d_3.
\end{split}
\end{align}
\end{subequations}
Therefore, it is sufficient to consider the three cases \eqref{eqn:typeIII_proof_c_1}--\eqref{eqn:typeIII_proof_c_3}.

\subsubsection{Case \eqref{eqn:typeIII_proof_c_1}.}
\label{subsubsection:caseN1_case1}
In this case, we obtain
\begin{align}
 C^{(11)}&=C^{(12)}=C^{(13)}=C^{(14)}=C^{(17)}=C^{(21)}=C^{(22)}=C^{(23)}\notag\\
 &=C^{(31)}=C^{(32)}=C^{(33)}=C^{(34)}=C^{(37)}=C^{(41)}=C^{(42)}=C^{(43)}=0,
\end{align}
and
\begin{equation}
 C^{(18)}=-A^{(1)}_2A^{(2)}_2A^{(3)}_2B^{(1)}_3B^{(3)}_2{d_2}^2d_3\neq0.
\end{equation}
From
\begin{align}
 0=&F_1G_2-F_2G_1\notag\\
 =&{d_2}^3{d_3}^3(d_2d_3C^{(18)}C^{(35)}+d_2d_3C^{(18)}C^{(38)}+d_2C^{(24)}C^{(38)}-C^{(18)}C^{(44)}){U_1}^5{U_4}^6\notag\\
&-{d_2}^4{d_3}^4({d_2}^2C^{(15)}C^{(38)}+d_2d_3C^{(18)}C^{(38)}+d_2C^{(24)}C^{(38)}-C^{(18)}C^{(44)}){U_1}^6{U_4}^5\notag\\
&+{d_2}^4{d_3}^4(C^{(18)}C^{(36)}-d_2C^{(16)}C^{(38)}){U_1}^6{U_4}^6\notag\\
&+{d_2}^4{d_3}^4(d_3C^{(18)}C^{(35)}-d_2C^{(15)}C^{(38)}){U_1}^6{U_4}^6v_1,
\end{align}
we obtain
\begin{align}
 &C^{(35)}=\dfrac{d_2C^{(15)}C^{(38)}}{d_3C^{(18)}},\quad
 C^{(44)}=\dfrac{d_2(d_2C^{(15)}+d_3C^{(18)}+C^{(24)})C^{(38)}}{C^{(18)}},\notag\\
 &C^{(36)}=\dfrac{d_2C^{(16)}C^{(38)}}{C^{(18)}}.
\end{align}
Moreover, from
\begin{align}
 0&=d_3C^{(18)}C^{(35)}-d_2C^{(15)}C^{(38)}\notag\\
 &=-{B^{(1)}_3}^2{B^{(3)}_2}^2{d_2}^3{d_3}^3
 (A^{(1)}_2A^{(2)}_2A^{(3)}_2-A^{(1)}_3A^{(2)}_3A^{(3)}_3)
 (A^{(1)}_2A^{(2)}_2A^{(3)}_2+A^{(1)}_3A^{(2)}_3A^{(3)}_3),
\end{align}
we obtain
\begin{equation}
A^{(1)}_2A^{(2)}_2A^{(3)}_2-A^{(1)}_3A^{(2)}_3A^{(3)}_3=0\quad
\text{or}\quad
A^{(1)}_2A^{(2)}_2A^{(3)}_2+A^{(1)}_3A^{(2)}_3A^{(3)}_3=0.
\end{equation}

In the case $A^{(1)}_2A^{(2)}_2A^{(3)}_2-A^{(1)}_3A^{(2)}_3A^{(3)}_3=0$, we obtain the condition \eqref{eqn:typeIII_cond1_1},
and under this condition we can verify by computation that the CACO property and square property hold. 

In the case $A^{(1)}_2A^{(2)}_2A^{(3)}_2+A^{(1)}_3A^{(2)}_3A^{(3)}_3=0$, 
if $d_2\neq-1$, then we obtain
\begin{equation}
 A^{(3)}_3=-\dfrac{A^{(1)}_2A^{(2)}_2A^{(3)}_2}{A^{(1)}_3A^{(2)}_3},\quad
 A^{(3)}_4=\dfrac{A^{(3)}_2\Big(A^{(1)}_4A^{(2)}_3d_2d_3(1+d_3)+A^{(1)}_2A^{(2)}_4(1-d_2d_3)\Big)}{A^{(1)}_3A^{(2)}_3d_2d_3(1+d_2)},
\end{equation}
if $d_2=-1$ and $d_3\neq-1$, then we obtain
\begin{equation}
 A^{(3)}_3=-\dfrac{A^{(1)}_2A^{(2)}_2A^{(3)}_2}{A^{(1)}_3A^{(2)}_3},\quad
 A^{(2)}_4=\dfrac{A^{(1)}_4A^{(2)}_3d_3}{A^{(1)}_2},
\end{equation}
and if $d_2=d_3=-1$, then we obtain
\begin{equation}
A^{(3)}_3=-\dfrac{A^{(1)}_2A^{(2)}_2A^{(3)}_2}{A^{(1)}_3A^{(2)}_3}.
\end{equation}
Therefore, we obtain the conditions \eqref{eqn:typeIII_cond1_2}--\eqref{eqn:typeIII_cond1_4}.
We can verify by computation that under these conditions the CACO property and square property hold.

\subsubsection{Case \eqref{eqn:typeIII_proof_c_2}.}
\label{subsubsection:caseN1_case2}
In this case, we obtain
\begin{subequations}
\begin{align}
 &C^{(11)}=C^{(13)}=C^{(31)}=C^{(32)}=0,\quad
 C^{(21)}=\dfrac{A^{(1)}_3B^{(1)}_3d_2d_3C^{(12)}}{A^{(1)}_2B^{(1)}_2},\\
 &C^{(23)}=\dfrac{A^{(1)}_3B^{(1)}_3d_2d_3C^{(14)}}{A^{(1)}_2B^{(1)}_2},\quad
 C^{(41)}=-\dfrac{A^{(1)}_3B^{(1)}_3d_2C^{(3)}_3}{A^{(1)}_2B^{(1)}_2},\\
 &C^{(42)}=-\dfrac{A^{(1)}_3B^{(1)}_3d_2C^{(34)}}{A^{(1)}_2B^{(1)}_2}.
\end{align}
\end{subequations}
From
\begin{equation}
 A^{(3)}_3C^{(33)}-A^{(3)}_1C^{(34)}
 =A^{(1)}_2A^{(2)}_1B^{(1)}_2B^{(3)}_1d_2{d_3}^2(A^{(3)}_1A^{(3)}_4-A^{(3)}_2A^{(3)}_3)\neq0,
\end{equation}
we obtain
\begin{equation}
 (C^{(33)},C^{(34)})\neq(0,0).
\end{equation}
Because of the following coefficients of $F_1G_2-F_2G_1=0$:
\begin{align*}
 &{U_1}^1{U_4}^6{v_1}^0:
 &&\dfrac{(A^{(1)}_2B^{(1)}_2+A^{(1)}_3B^{(1)}_3{d_2}^2d_3)C^{(12)}C^{(33)}}{A^{(1)}_2B^{(1)}_2}=0,\\
 &{U_1}^6{U_4}^2{v_1}^0:
 &&\dfrac{{d_2}^4{d_3}^4(A^{(1)}_2B^{(1)}_2+A^{(1)}_3B^{(1)}_3{d_2}^2d_3)^2C^{(12)}C^{(34)}}{{A^{(1)}_2}^2{B^{(1)}_2}^2}=0,
\end{align*}
we consider by dividing into the following four cases:
\begin{align}
 &A^{(1)}_2B^{(1)}_2+A^{(1)}_3B^{(1)}_3{d_2}^2d_3=0,\quad C^{(12)}=0;\tag{ii:1}\label{eqn:typeIII_proof_c_21}\\
 &A^{(1)}_2B^{(1)}_2+A^{(1)}_3B^{(1)}_3{d_2}^2d_3=0,\quad C^{(12)}\neq0;\tag{ii:2}\label{eqn:typeIII_proof_c_22}\\
 &A^{(1)}_2B^{(1)}_2+A^{(1)}_3B^{(1)}_3{d_2}^2d_3\neq0,\quad C^{(12)}=C^{(33)}=0;\tag{ii:3}\label{eqn:typeIII_proof_c_23}\\
 &A^{(1)}_2B^{(1)}_2+A^{(1)}_3B^{(1)}_3{d_2}^2d_3\neq0,\quad C^{(12)}=0,\quad C^{(33)}\neq0.\tag{ii:4}\label{eqn:typeIII_proof_c_24}
\end{align}
%
%
%

\subsubsection*{\quad Case \eqref{eqn:typeIII_proof_c_21}.}
From
\begin{equation}
 A^{(3)}_3C^{(12)}-A^{(3)}_1C^{(14)}
 =A^{(1)}_2A^{(2)}_1B^{(1)}_2B^{(3)}_1d_2d_3(A^{(3)}_2A^{(3)}_3-A^{(3)}_1A^{(3)}_4)\neq0,
\end{equation}
we obtain
\begin{equation}
 C^{(14)}\neq0.
\end{equation}
Since
\begin{equation*}
 v_4^{(r)}=\dfrac{C^{(14)}}{d_2d_3C^{(22)}U_1}+O({U_1}^0),
\end{equation*}
as $U_1\to0$, $U_1=0$ is a pole of $v_4$.
Hence, from
\begin{equation*}
 v_4^{(l)}=\dfrac{d_3(d_2d_3U_4C^{(37)}-C^{(33)}-U_4C^{(43)})}{U_4C^{(33)}}+O(U_1),
\end{equation*}
as $U_1\to0$, we obtain
\begin{equation}
 C^{(33)}=0.
\end{equation}
Because of the following relation:
\begin{equation}
 0=d_3C^{(12)}+C^{(33)}
 =A^{(1)}_3A^{(3)}_1B^{(1)}_3{d_2}^2{d_3}^2\Big((1+d_3)A^{(2)}_3B^{(3)}_1-(1+d_2)A^{(2)}_1B^{(3)}_3\Big),
\end{equation}
we consider by dividing into the following two cases:
\begin{align}
 &d_2=d_3=-1;\tag{ii:1.1}\label{eqn:typeIII_proof_c_211}\\
 &d_2,d_3\neq-1.\tag{ii:1.2}\label{eqn:typeIII_proof_c_212}
\end{align}
%
\subsubsection*{\qquad Case \eqref{eqn:typeIII_proof_c_211}.}
In this case, from 
\begin{equation}
 0=C^{(12)}=A^{(1)}_3B^{(1)}_3(A^{(2)}_3A^{(3)}_1B^{(3)}_1+A^{(2)}_1A^{(3)}_2B^{(3)}_1+A^{(2)}_1A^{(3)}_1B^{(3)}_3),
\end{equation}
we obtain
\begin{equation}
 B^{(3)}_3=-\dfrac{(A^{(2)}_3A^{(3)}_1+A^{(2)}_1A^{(3)}_2)B^{(3)}_1}{A^{(2)}_1A^{(3)}_1},
\end{equation}
and then the following hold:
\begin{align}
 &C^{(14)}=C^{(23)}=C^{(34)}=C^{(42)},\quad
 C^{(35)}=C^{(18)},\quad
 C^{(36)}=C^{(16)},\notag\\
 &C^{(37)}=C^{(17)},\quad
 C^{(38)}=C^{(15)},\quad
 C^{(43)}=C^{(22)},\quad
 C^{(44)}=C^{(24)}.
\end{align}
Then, we obtain
\begin{align}
 0=&F_1G_2-F_2G_1\notag\\
 =&-U_1^3U_4^4\Big((C^{(14)}-C^{(17)}+C^{(22)})(U_1-U_4)+(C^{(15)}+C^{(18)})U_1U_4\Big)\notag\\
 &~\Big((C^{(14)}+C^{(17)}-C^{(22)})U_1+(C^{(15)}-C^{(24)}){U_1}^2\notag\\
 &\quad-(C^{(14)}+C^{(17)}-C^{(22)}){U_1}^2v_1+(C^{(22)}-C^{(14)})U_4\notag\\
 &\quad+(C^{(24)}-C^{(18)})U_1U_4+(C^{(14)}-C^{(17)}-C^{(22)})U_1U_4v_1\notag\\
 &\quad-C^{(16)}U_1^2U_4+(C^{(18)}-C^{(15)}){U_1}^2U_4v_1+C^{(17)}{U_1}^2U_4{v_1}^2\Big).
\end{align}
Therefore, we obtain the following two cases.
The first case is given by
\begin{equation}
 C^{(14)}-C^{(17)}+C^{(22)}=0,\quad
 C^{(15)}+C^{(18)}=0,
\end{equation}
which gives the condition \eqref{eqn:typeIII_cond2_1},
and the second case is given by
\begin{equation}
 C^{(16)}=C^{(17)}=0,\quad
 C^{(14)}=C^{(22)},\quad
 C^{(15)}=C^{(18)}=C^{(24)},
\end{equation}
which gives the condition \eqref{eqn:typeIII_cond2_2}.
We can verify by computation that under the conditions \eqref{eqn:typeIII_cond2_1} and \eqref{eqn:typeIII_cond2_2}
the CACO property and square property hold.
\subsubsection*{\qquad Case \eqref{eqn:typeIII_proof_c_212}.}
From $C^{(12)}=C^{(33)}=0$, we obtain
\begin{equation}
B^{(3)}_3=\dfrac{A^{(2)}_3B^{(3)}_1(1+d_3)}{A^{(2)}_1(1+d_2)},\quad
A^{(3)}_2=-\dfrac{A^{(2)}_3A^{(3)}_1(1-d_2d_3)}{A^{(2)}_1d_2d_3(1+d_2)},
\end{equation}
and then the following hold:
\begin{align}
 &-d_2C^{(23)}=-\dfrac{C^{(34)}}{d_3}=-d_2C^{(42)}=C^{(14)},\quad
 C^{(37)}=-d_2C^{(17)}.
\end{align}
Then, from $F_1G_2=F_2G_1$, we obtain the following two cases.
The first case is given by
\begin{equation}
\begin{cases}
 C^{(43)}=d_2(C^{(14)}-d_2d_3C^{(17)}),\quad
 C^{(22)}=\dfrac{C^{(14)}-{d_2}^3d_3C^{(17)}}{{d_2}^2d_3},\\
 C^{(44)}=-{d_2}^2d_3(d_2C^{(15)}+d_3C^{(18)}+C^{(24)}),\quad
 C^{(38)}=-d_2d_3C^{(18)},\\
 C^{(35)}=-{d_2}^2C^{(15)},\quad
 C^{(36)}=-{d_2}^2d_3C^{(16)},
\end{cases}
\end{equation}
which gives the conditions \eqref{eqn:typeIII_cond2_3} and \eqref{eqn:typeIII_cond2_4}.
The second case is given by
\begin{equation}
\begin{cases}
 C^{(43)}=d_2C^{(14)},\quad
 C^{(22)}=\dfrac{C^{(14)}}{{d_2}^2d_3},\quad
 C^{(44)}=-\dfrac{d_3(d_2^4C^{(15)}+d_2^3C^{(24)}+C^{(35)})}{d_2},\\
 C^{(38)}=-\dfrac{C^{(35)}}{{d_2}^2},\quad
 C^{(17)}=0,\quad
 C^{(36)}=-\dfrac{d_2d_3(d_2C^{(15)}+C^{(24)})C^{(35)}}{C^{(14)}},\\
 C^{(18)}=-\dfrac{C^{(15)}}{d_2d_3},\quad
 C^{(16)}=-\dfrac{d_2C^{(15)}(d_2C^{(15)}+C^{(24)})}{C^{(14)}},
\end{cases}
\end{equation}
which gives the conditions \eqref{eqn:typeIII_cond2_5} and \eqref{eqn:typeIII_cond2_6}. 
We can verify by computation that under the conditions \eqref{eqn:typeIII_cond2_3}--\eqref{eqn:typeIII_cond2_6} the CACO property and square property hold.

\subsubsection*{\quad Case \eqref{eqn:typeIII_proof_c_22}.}
In this case, solving $F_1G_2=F_2G_1$, we obtain the following four conditions:
\begin{align}
&\begin{cases}
C^{(15)}=\dfrac{C^{(14)}C^{(17)}}{C^{(12)}},\quad
C^{(16)}=\dfrac{C^{(14)}C^{(18)}}{C^{(12)}},\quad
C^{(24)}=\dfrac{C^{(14)}C^{(22)}}{C^{(12)}},\\
C^{(35)}=\dfrac{C^{(14)}(d_3C^{(12)}C^{(43)}-C^{(14)}C^{(33)})}{{d_3}^2{C^{(12)}}^2},\quad
C^{(36)}=\dfrac{d_2C^{(14)}C^{(38)}}{C^{(12)}},\\
C^{(37)}=\dfrac{d_3C^{(12)}C^{(43)}-C^{(14)}C^{(33)}}{d_2{d_3}^2C^{(12)}},\quad
C^{(44)}=\dfrac{C^{(14)}C^{(34)}+d_2{d_3}^2C^{(12)}C^{(38)}}{d_3C^{(12)}},
\end{cases}\label{eqn:case5_caseii-2_C1}\\
&\begin{cases}
C^{(15)}=\dfrac{C^{(14)}C^{(17)}}{C^{(12)}},\quad
C^{(16)}=-\dfrac{d_2C^{(17)}(d_2C^{(14)}C^{(17)}+C^{(12)}C^{(24)})}{{C^{(12)}}^2},\\
C^{(18)}=-\dfrac{d_2C^{(17)}(d_2C^{(17)}+C^{(22)})}{C^{(12)}},\quad
C^{(34)}=-\dfrac{d_2d_3(d_2C^{(17)}+C^{(22)})C^{(33)}}{C^{(12)}},\\
C^{(35)}=\dfrac{C^{(14)}(d_3C^{(12)}C^{(43)}-C^{(14)}C^{(33)})}{{d_3}^2{C^{(12)}}^2},\quad
C^{(37)}=\dfrac{C^{(43)}}{d_2d_3}-\dfrac{C^{(14)}C^{(33)}}{d_2{d_3}^2C^{(12)}},\\
C^{(36)}=-\dfrac{d_2(d_2C^{(14)}C^{(17)}+C^{(12)}C^{(24)})(d_3C^{(12)}C^{(43)}-C^{(14)}C^{(33)})}{d_3{C^{(12)}}^3},\\
C^{(38)}=-\dfrac{(d_2C^{(17)}+C^{(22)})(d_3C^{(12)}C^{(43)}-C^{(14)}C^{(33)})}{d_3{C^{(12)}}^2},\\
C^{(44)}=-\dfrac{d_2\Big((C^{(12)}C^{(24)}-C^{(14)}C^{(22)})C^{(33)}+d_3C^{(12)}(d_2C^{(17)}+C^{(22)})C^{(43)}\Big)}{{C^{(12)}}^2},
\end{cases}\label{eqn:case5_caseii-2_C2}\\
&\begin{cases}
C^{(34)}=-\dfrac{d_2d_3(d_2C^{(17)}+C^{(22)})C^{(33)}}{C^{(12)}},\quad
C^{(35)}=-\dfrac{d_2C^{(15)}C^{(33)}}{d_3C^{(12)}},\\
C^{(36)}=-\dfrac{d_2C^{(16)}C^{(33)}}{C^{(12)}},\quad
C^{(37)}=-\dfrac{C^{(17)}C^{(33)}}{d_3C^{(12)}},\quad
C^{(38)}=-\dfrac{C^{(18)}C^{(33)}}{C^{(12)}},\\
C^{(43)}=-\dfrac{(d_2d_3C^{(17)}-C^{(14)})C^{(33)}}{d_3C^{(12)}},\\
C^{(44)}=-\dfrac{d_2(d_2C^{(15)}+d_3C^{(18)}+C^{(24)})C^{(33)}}{C^{(12)}},
\end{cases}\label{eqn:case5_caseii-2_C3}\\
&\begin{cases}
C^{(16)}=-\dfrac{d_2C^{(15)}(d_2C^{(17)}+C^{(22)})}{C^{(12)}},\quad
C^{(18)}=-\dfrac{d_2C^{(17)}(d_2C^{(17)}+C^{(22)})}{C^{(12)}},\\
C^{(24)}=\dfrac{C^{(14)}(d_2C^{(17)}+C^{(22)})}{C^{(12)}}-d_2C^{(15)},\quad
C^{(35)}=-\dfrac{d_2C^{(15)}C^{(33)}}{d_3C^{(12)}},\\
C^{(36)}=-\dfrac{d_2C^{(15)}C^{(34)}}{d_3C^{(12)}},\quad
C^{(37)}=-\dfrac{C^{(17)}C^{(33)}}{d_3C^{(12)}},\quad
C^{(38)}=-\dfrac{C^{(17)}C^{(34)}}{d_3C^{(12)}},\\
C^{(43)}=\dfrac{(C^{(14)}-d_2d_3C^{(17)})C^{(33)}}{d_3C^{(12)}},\quad
C^{(44)}=\dfrac{(C^{(14)}-d_2d_3C^{(17)})C^{(34)}}{d_3C^{(12)}}.
\end{cases}\label{eqn:case5_caseii-2_C4}
\end{align}
Since under the conditions \eqref{eqn:typeIII_proof_c_2} and \eqref{eqn:typeIII_proof_c_22}
the following hold:
\begin{subequations}
\begin{align}
 &C^{(15)}-\dfrac{C^{(14)}C^{(17)}}{C^{(12)}}\notag\\
 &\quad=-\dfrac{A^{(1)}_3B^{(1)}_3{A^{(2)}_1}^2(A^{(3)}_1A^{(3)}_4-A^{(3)}_2A^{(3)}_3)(B^{(3)}_1B^{(3)}_4-B^{(3)}_2B^{(3)}_3){d_2}^2{d_3}^2}{A^{(2)}_1A^{(3)}_1B^{(3)}_3-A^{(2)}_3A^{(3)}_1B^{(3)}_1d_3+A^{(2)}_1A^{(3)}_2B^{(3)}_1d_2d_3}
\neq0,\\
 &C^{(18)}+\dfrac{d_2C^{(17)}(d_2C^{(17)}+C^{(22)})}{C^{(12)}}\notag\\
 &\quad=-\dfrac{A^{(1)}_2{A^{(3)}_1}^2B^{(1)}_3(A^{(2)}_1A^{(2)}_4-A^{(2)}_2A^{(2)}_3)(B^{(3)}_1B^{(3)}_4-B^{(3)}_2B^{(3)}_3)d_2d_3}{A^{(2)}_1A^{(3)}_1B^{(3)}_3-A^{(2)}_3A^{(3)}_1B^{(3)}_1d_3+A^{(2)}_1A^{(3)}_2B^{(3)}_1d_2d_3}
\neq0,
\end{align}
\end{subequations}
the conditions \eqref{eqn:case5_caseii-2_C1}, \eqref{eqn:case5_caseii-2_C2} and \eqref{eqn:case5_caseii-2_C4} are inadequate.
Then, from the condition \eqref{eqn:case5_caseii-2_C3} 
we obtain the conditions \eqref{eqn:typeIII_cond2_7}--\eqref{eqn:typeIII_cond2_9} and 
\begin{equation}\label{eqn:case5_caseii-2_4}
\begin{cases}
B^{(1)}_2=-\dfrac{A^{(1)}_3B^{(1)}_3{d_2}^2d_3}{A^{(1)}_2},\\
A^{(1)}_4=\dfrac{A^{(1)}_3A^{(3)}_3B^{(1)}_3d_2(1+{d_2}^2d_3)-A^{(1)}_2A^{(3)}_1B^{(1)}_4(1-d_2d_3)}{A^{(3)}_1B^{(1)}_3d_2(1+d_2)d_3},\\
A^{(2)}_2=\dfrac{A^{(2)}_1(A^{(1)}_2A^{(3)}_1B^{(1)}_4(1+d_3)-A^{(1)}_3A^{(3)}_3B^{(1)}_3d_2(1-d_2d_3)}{A^{(1)}_2A^{(3)}_1B^{(1)}_3(1+d_2)},\\
A^{(2)}_4=\dfrac{A^{(2)}_3(A^{(1)}_2A^{(3)}_1B^{(1)}_4(1+d_3)-A^{(1)}_3A^{(3)}_3B^{(1)}_3d_2(1-d_2d_3)}{A^{(1)}_2A^{(3)}_1B^{(1)}_3(1+d_2)},\\
A^{(3)}_2=-\dfrac{A^{(2)}_3A^{(3)}_1(1-d_2d_3)}{A^{(2)}_1d_2(1+d_2)d_3},\quad
A^{(3)}_4=-\dfrac{A^{(2)}_3A^{(3)}_3(1-d_2d_3)}{A^{(2)}_1d_2(1+d_2)d_3},\quad
 d_2\neq-1.
\end{cases}
\end{equation}
We can verify by computation that under the conditions \eqref{eqn:typeIII_cond2_7}--\eqref{eqn:typeIII_cond2_9}
the CACO property and square property hold.
Under the condition \eqref{eqn:case5_caseii-2_4} the CACO property holds, but since $v_4$ is given by
\begin{equation}
 v_4=-\dfrac{A^{(3)}_3}{A^{(3)}_1}-\dfrac{d_3}{U_4},
\end{equation}
the square property does not hold. 
Therefore, the condition \eqref{eqn:case5_caseii-2_4} is inadequate.

\subsubsection*{\quad Case \eqref{eqn:typeIII_proof_c_23}.}
Under the condition \eqref{eqn:typeIII_proof_c_2} the following relations hold:
\begin{subequations}
\begin{align}
 &A^{(3)}_3C^{(12)}-A^{(3)}_1C^{(14)}
 =-A^{(1)}_2A^{(2)}_1B^{(1)}_2B^{(3)}_1d_2d_3(A^{(3)}_1A^{(3)}_4-A^{(3)}_2A^{(3)}_3)\neq0,
 \label{eqn:type3_proof_ii3_1}\\
 &A^{(1)}_3B^{(1)}_3B^{(3)}_2d_2d_3C^{(12)}-A^{(1)}_2B^{(1)}_2B^{(3)}_1C^{(17)}\notag\\
 &\quad=-A^{(1)}_2A^{(1)}_3A^{(2)}_1A^{(3)}_1B^{(1)}_2B^{(1)}_3d_2d_3(B^{(3)}_1B^{(3)}_4-B^{(3)}_2B^{(3)}_3)\neq0,
 \label{eqn:type3_proof_ii3_2}\\
 &A^{(3)}_3C^{(33)}-A^{(3)}_1C^{(34)}
 =A^{(1)}_2A^{(2)}_1B^{(1)}_2B^{(3)}_1d_2{d_3}^2(A^{(3)}_1A^{(3)}_4-A^{(3)}_2A^{(3)}_3)\neq0,\\
 &A^{(1)}_3B^{(1)}_3B^{(3)}_2d_2C^{(33)}+A^{(1)}_2B^{(1)}_2B^{(3)}_1C^{(37)}\notag\\
 &\quad=-A^{(1)}_2A^{(1)}_3A^{(2)}_1A^{(3)}_1B^{(1)}_2B^{(1)}_3{d_2}^2d_3(B^{(3)}_1B^{(3)}_4-B^{(3)}_2B^{(3)}_3)\neq0,\\
 &A^{(1)}_3A^{(3)}_3B^{(1)}_3d_2C^{(33)}+A^{(1)}_2A^{(3)}_1B^{(1)}_2C^{(42)}\notag\\
 &\quad=A^{(1)}_2A^{(1)}_3A^{(2)}_1B^{(1)}_2B^{(1)}_3B^{(3)}_1{d_2}^2{d_3}^2(A^{(3)}_1A^{(3)}_4-A^{(3)}_2A^{(3)}_3)\neq0.
\end{align}
\end{subequations}
Because of the condition \eqref{eqn:typeIII_proof_c_23}, we obtain
\begin{equation}
 C^{(14)},C^{(17)},C^{(34)},C^{(37)},C^{(42)}\neq0.
\end{equation}
From $F_1G_2=F_2G_1$ we obtain the following two conditions:
\begin{align}
 &\begin{cases}\label{eqn:case5_caseii3_C1}
C^{(12)}=C^{(33)}=0,\quad
C^{(16)}=-\dfrac{A^{(1)}_2B^{(1)}_2(A^{(1)}_2B^{(1)}_2+A^{(1)}_3B^{(1)}_3{d_2}^2d_3)C^{(15)}C^{(18)}}{{A^{(1)}_3}^2{B^{(1)}_3}^2{d_2}^3d_3C^{(14)}},\\
C^{(17)}=-\dfrac{{A^{(1)}_3}^2{B^{(1)}_3}^2{d_2}^3d_3C^{(14)}}{A^{(1)}_2B^{(1)}_2(A^{(1)}_2B^{(1)}_2+A^{(1)}_3B^{(1)}_3{d_2}^2d_3)},\\
C^{(22)}=-\dfrac{A^{(1)}_3B^{(1)}_3{d_2}^2C^{(14)}}{A^{(1)}_2B^{(1)}_2+A^{(1)}_3B^{(1)}_3{d_2}^2d_3},\\
C^{(24)}=-\dfrac{-A^{(1)}_2B^{(1)}_2d_2C^{(15)}+A^{(1)}_2B^{(1)}_2d_3C^{(18)}+A^{(1)}_3B^{(1)}_3{d_2}^2{d_3}^2C^{(18)}}{A^{(1)}_3B^{(1)}_3{d_2}^2d_3},\\
C^{(35)}=-\dfrac{A^{(1)}_2B^{(1)}_2(A^{(1)}_2B^{(1)}_2+A^{(1)}_3B^{(1)}_3{d_2}^2d_3)C^{(15)}C^{(37)}}{{A^{(1)}_3}^2{B^{(1)}_3}^2{d_2}^2d_3C^{(14)}},\\
C^{(36)}=-\dfrac{A^{(1)}_2B^{(1)}_2(A^{(1)}_2B^{(1)}_2+A^{(1)}_3B^{(1)}_3{d_2}^2d_3)C^{(15)}C^{(38)}}{{A^{(1)}_3}^2{B^{(1)}_3}^2{d_2}^2d_3C^{(14)}},\\
C^{(43)}=-\dfrac{A^{(1)}_2B^{(1)}_2C^{(37)}}{A^{(1)}_3B^{(1)}_3d_2},\\
C^{(44)}=\dfrac{A^{(1)}_2B^{(1)}_2C^{(15)}C^{(34)}+A^{(1)}_3B^{(1)}_3{d_2}^2d_3C^{(15)}C^{(34)}-A^{(1)}_2B^{(1)}_2d_3C^{(14)}C^{(38)}}{A^{(1)}_3B^{(1)}_3d_2d_3C^{(14)}},
\end{cases}\\
&\begin{cases}\label{eqn:case5_caseii3_C2}
C^{(12)}=C^{(33)}=0,\quad
C^{(16)}=\dfrac{C^{(14)}C^{(36)}}{C^{(34)}},\quad
C^{(17)}=\dfrac{d_2d_3C^{(14)}C^{(37)}}{C^{(34)}},\\
C^{(22)}=-\dfrac{{d_2}^2C^{(14)}(A^{(1)}_3B^{(1)}_3C^{(34)}+A^{(1)}_2B^{(1)}_2d_3C^{(37)})}{A^{(1)}_2B^{(1)}_2C^{(34)}},\quad
C^{(35)}=\dfrac{C^{(15)}C^{(34)}}{d_3C^{(14)}},\\
C^{(38)}=\dfrac{C^{(18)}C^{(34)}}{d_2C^{(14)}},\quad
C^{(43)}=\dfrac{d_2(A^{(1)}_3B^{(1)}_3C^{(34)}+A^{(1)}_2B^{(1)}_2d_3C^{(37)})}{A^{(1)}_2B^{(1)}_2},\\
C^{(44)}=\dfrac{(d_2C^{(15)}+d_3C^{(18)}+C^{(24)})C^{(34)}}{C^{(14)}}.
\end{cases}
\end{align}
Under the condition \eqref{eqn:case5_caseii3_C1} the CACO property holds but since $v_4$ is given by
\begin{align}
 v_4=&\dfrac{A^{(1)}_2B^{(1)}_2d_2d_3U_1-(A^{(1)}_2B^{(1)}_2+A^{(1)}_3B^{(1)}_3{d_2}^2d_3)U_4}{A^{(1)}_3B^{(1)}_3{d_2}^3d_3U_1U_4}\notag\\
 &+\dfrac{A^{(1)}_2B^{(1)}_2(A^{(1)}_2B^{(1)}_2+A^{(1)}_3B^{(1)}_3{d_2}^2d_3)C^{(15)}}{{A^{(1)}_3}^2{B^{(1)}_3}^2{d_2}^3d_3C^{(14)}},
\end{align}
the square property does not hold.
Therefore, the condition \eqref{eqn:case5_caseii3_C1} is inadequate.
From the condition \eqref{eqn:case5_caseii3_C2} 
we obtain the conditions \eqref{eqn:typeIII_cond2_10}--\eqref{eqn:typeIII_cond2_12},
\begin{align}
&\begin{cases}
 A^{(1)}_1=B^{(1)}_1=A^{(2)}_3=B^{(2)}_3=A^{(2)}_4=B^{(2)}_4=0,\quad
 A^{(1)}_4=\dfrac{A^{(1)}_2A^{(2)}_2}{A^{(2)}_1d_3},\\
 A^{(3)}_2=\dfrac{A^{(3)}_1B^{(3)}_3}{B^{(3)}_1d_3},\quad
 A^{(3)}_4=\dfrac{A^{(3)}_1(-A^{(1)}_2B^{(1)}_4B^{(3)}_3+A^{(1)}_2B^{(1)}_2B^{(3)}_4+A^{(1)}_3B^{(1)}_3B^{(3)}_4d_3)}{A^{(1)}_3B^{(1)}_3B^{(3)}_1d_3},\\
 B^{(3)}_2=\dfrac{A^{(1)}_3A^{(3)}_3B^{(1)}_3B^{(3)}_1+A^{(1)}_2A^{(3)}_1B^{(1)}_4B^{(3)}_1}{A^{(3)}_1(A^{(1)}_2B^{(1)}_2+A^{(1)}_3B^{(1)}_3d_3)},\quad
 d_2=-1,
\end{cases}\label{eqn:case5_caseii3_2}\\
\intertext{and}
&\begin{cases}
 A^{(1)}_1=B^{(1)}_1=0,\quad
 A^{(3)}_3=\dfrac{A^{(3)}_1\Big(A^{(1)}_4A^{(2)}_1d_3(1+d_3)+A^{(1)}_2(A^{(2)}_2-A^{(2)}_2d_3)\Big)}{2A^{(1)}_3A^{(2)}_1d_3},\\
 A^{(2)}_3=\dfrac{2A^{(2)}_1B^{(3)}_3}{B^{(3)}_1(1+d_3)},\quad
 A^{(2)}_4=\dfrac{2A^{(2)}_2B^{(3)}_3}{B^{(3)}_1(1+d_3)},\quad
 A^{(3)}_2=-\dfrac{A^{(3)}_1B^{(3)}_3(1-d_3)}{B^{(3)}_1d_3(1+d_3)},\\
 A^{(3)}_4=\dfrac{A^{(3)}_1}{2A^{(1)}_3A^{(2)}_1B^{(1)}_3{B^{(3)}_1}^2{d_3}^2(1+d_3)}\\
	\quad\Big(A^{(2)}_1B^{(1)}_3d_3(1+d_3)\Big(A^{(1)}_4B^{(3)}_1B^{(3)}_3(-1+d_3)+2A^{(1)}_3(B^{(3)}_2B^{(3)}_3-B^{(3)}_1B^{(3)}_4)d_3\Big)\\
	\qquad+A^{(1)}_2\Big(-A^{(2)}_2B^{(1)}_3B^{(3)}_1B^{(3)}_3(-1+d_3)^2\\
	\qquad+2A^{(2)}_1B^{(1)}_2(B^{(3)}_2B^{(3)}_3-B^{(3)}_1B^{(3)}_4)d_3(1+d_3)\Big)\Big),\\
 B^{(1)}_4=\dfrac{2A^{(2)}_1B^{(1)}_2B^{(3)}_2d_3+A^{(2)}_2B^{(1)}_3B^{(3)}_1(1+d_3)}{2A^{(2)}_1B^{(3)}_1d_3}\\
 	\quad+\dfrac{B^{(1)}_3\Big(2A^{(1)}_3B^{(3)}_2d_3+A^{(1)}_4(B^{(3)}_1-B^{(3)}_1d_3)\Big)}{2A^{(1)}_2B^{(3)}_1},\\
 d_2=1,\quad
 d_3\neq-1.
\end{cases}\label{eqn:case5_caseii3_5}
\end{align}
We can verify by computation that under the conditions \eqref{eqn:typeIII_cond2_10}--\eqref{eqn:typeIII_cond2_12}
the CACO property and square property hold.
Moreover, we can easily check that under the conditions \eqref{eqn:case5_caseii3_2} and \eqref{eqn:case5_caseii3_5},
the square property does not hold since $v_4$ does not depend on $v_1$.
Therefore, the conditions \eqref{eqn:case5_caseii3_2} and \eqref{eqn:case5_caseii3_5} are inadequate.

\subsubsection*{\quad Case \eqref{eqn:typeIII_proof_c_24}.}
From the conditions \eqref{eqn:type3_proof_ii3_1} and \eqref{eqn:type3_proof_ii3_2} and \eqref{eqn:typeIII_proof_c_24} 
we obtain
\begin{equation}
C^{(14)},C^{(17)}\neq0.
\end{equation}
From $F_1G_2=F_2G_1$ we obtain the following condition:
\begin{equation}
\begin{cases}
 C^{(12)}=0,\quad
 C^{(16)}=-\dfrac{A^{(1)}_2B^{(1)}_2C^{(15)}(A^{(1)}_2B^{(1)}_2C^{(15)}-A^{(1)}_3B^{(1)}_3d_2d_3C^{(24)})}{{A^{(1)}_3}^2{B^{(1)}_3}^2{d_2}^2{d_3}^2C^{(14)}},\\
 C^{(17)}=-\dfrac{{A^{(1)}_3}^2{B^{(1)}_3}^2{d_2}^3d_3C^{(14)}}{A^{(1)}_2B^{(1)}_2(A^{(1)}_2B^{(1)}_2+A^{(1)}_3B^{(1)}_3{d_2}^2d_3)},\\
 C^{(18)}=\dfrac{d_2(A^{(1)}_2B^{(1)}_2C^{(15)}-A^{(1)}_3B^{(1)}_3d_2d_3C^{(24)})}{d_3(A^{(1)}_2B^{(1)}_2+A^{(1)}_3B^{(1)}_3{d_2}^2d_3)},\\
 C^{(22)}=-\dfrac{A^{(1)}_3B^{(1)}_3{d_2}^2C^{(14)}}{A^{(1)}_2B^{(1)}_2+A^{(1)}_3B^{(1)}_3{d_2}^2d_3},\\
 C^{(35)}=-\dfrac{A^{(1)}_2B^{(1)}_2(A^{(1)}_2B^{(1)}_2+A^{(1)}_3B^{(1)}_3{d_2}^2d_3)C^{(15)}C^{(37)}}{{A^{(1)}_3}^2{B^{(1)}_3}^2{d_2}^2d_3C^{(14)}},\\
 C^{(36)}=-\dfrac{A^{(1)}_2B^{(1)}_2(A^{(1)}_2B^{(1)}_2+A^{(1)}_3B^{(1)}_3{d_2}^2d_3)C^{(15)}C^{(38)}}{{A^{(1)}_3}^2{B^{(1)}_3}^2{d_2}^2d_3C^{(14)}},\\
 C^{(43)}=\dfrac{(A^{(1)}_2B^{(1)}_2+A^{(1)}_3B^{(1)}_3{d_2}^2d_3)C^{(15)}C^{(33)}-A^{(1)}_2B^{(1)}_2d_3C^{(14)}C^{(37)}}{A^{(1)}_3B^{(1)}_3d_2d_3C^{(14)}},\\
 C^{(44)}=\dfrac{(A^{(1)}_2B^{(1)}_2+A^{(1)}_3B^{(1)}_3{d_2}^2d_3)C^{(15)}C^{(34)}-A^{(1)}_2B^{(1)}_2d_3C^{(14)}C^{(38)}}{A^{(1)}_3B^{(1)}_3d_2d_3C^{(14)}}.
\end{cases}
\end{equation}
From 
\begin{align}
 0&=C^{(16)}+\dfrac{A^{(1)}_2B^{(1)}_2C^{(15)}(A^{(1)}_2B^{(1)}_2C^{(15)}-A^{(1)}_3B^{(1)}_3d_2d_3C^{(24)})}{{A^{(1)}_3}^2{B^{(1)}_3}^2{d_2}^2{d_3}^2C^{(14)}}\notag\\
 &=-\dfrac{{A^{(1)}_2}^2B^{(1)}_2B^{(1)}_3{A^{(3)}_3}^2d_2d_3(A^{(2)}_1A^{(2)}_4-A^{(2)}_2A^{(2)}_3)(B^{(3)}_1B^{(3)}_4-B^{(3)}_2B^{(3)}_3)}{C^{(14)}},
\end{align}
we obtain
\begin{equation}
 A^{(3)}_3=B^{(3)}_3=0.
\end{equation}
Then, from the condition \eqref{eqn:caseN1_CAO_B2} we obtain
\begin{equation}
 A^{(2)}_2=B^{(2)}_2=0.
\end{equation}
From
\begin{equation}
 0=C^{(12)}=A^{(1)}_2B^{(1)}_2B^{(3)}_1d_3(A^{(2)}_1A^{(3)}_2d_2-A^{(2)}_3A^{(3)}_1),
\end{equation}
we obtain
\begin{equation}
A^{(2)}_3=\dfrac{A^{(2)}_1A^{(3)}_2d_2}{A^{(3)}_1},
\end{equation}
Moreover, from
\begin{align}
 0&=C^{(17)}+\dfrac{{A^{(1)}_3}^2{B^{(1)}_3}^2{d_2}^3d_3C^{(14)}}{A^{(1)}_2B^{(1)}_2(A^{(1)}_2B^{(1)}_2+A^{(1)}_3B^{(1)}_3{d_2}^2d_3)}\notag\\
 &=A^{(1)}_3A^{(2)}_1B^{(1)}_3d_2d_3
 \left(A^{(3)}_1B^{(3)}_4+\dfrac{A^{(1)}_3A^{(3)}_4B^{(1)}_3B^{(3)}_1{d_2}^3d_3}{A^{(1)}_2B^{(1)}_2+A^{(1)}_3B^{(1)}_3{d_2}^2d_3}\right),
\end{align}
we obtain
\begin{equation}
 B^{(3)}_4=-\dfrac{A^{(1)}_3A^{(3)}_4B^{(1)}_3B^{(3)}_1{d_2}^3d_3}{A^{(3)}_1(A^{(1)}_2B^{(1)}_2+A^{(1)}_3B^{(1)}_3{d_2}^2d_3)}.
\end{equation}
Then, we obtain the following inadequate condition:
\begin{equation}
 0=C^{(22)}+\dfrac{A^{(1)}_3B^{(1)}_3{d_2}^2C^{(14)}}{A^{(1)}_2B^{(1)}_2+A^{(1)}_3B^{(1)}_3{d_2}^2d_3}
=A^{(1)}_2A^{(2)}_4A^{(3)}_1B^{(1)}_3B^{(3)}_1d_2d_3\neq0.
\end{equation}
Therefore, the case \eqref{eqn:typeIII_proof_c_24} is inadequate.
\subsubsection{Case \eqref{eqn:typeIII_proof_c_3}.}
\label{subsubsection:caseN1_case3}
Because of the condition {\bf (iii)} and the forms of quad-equations $Q_i$, $i=1,\dots,9$,
we can without loss of generality let 
\begin{equation}
 A^{(i)}_1=B^{(i)}_1=1,\quad i=1,2,3.
\end{equation}
Set
\begin{equation}
 a^{(i)}_4=A^{(i)}_4-A^{(i)}_2A^{(i)}_3,\quad
 b^{(i)}_4=B^{(i)}_4-B^{(i)}_2B^{(i)}_3,\quad 
 i=1,2,3.
\end{equation}
Then, the quad-equations $Q_i$, $i=1,\dots,9$, can be rewritten as the following:
\begin{equation}\label{eqn:typeIII_c3_Q}
\begin{split}
 &Q_1=(u_1+d_2v_4+A^{(1)}_2)(u_5+d_1v_5+A^{(1)}_3)+a^{(1)}_4,\\
 &Q_2=(d_2u_4+v_1+A^{(1)}_2)(d_1u_2+v_2+A^{(1)}_3)+a^{(1)}_4,\\
 &Q_3=(u_5+d_3v_2+A^{(2)}_2)(u_3+d_2v_3+A^{(2)}_3)+a^{(2)}_4,\\
 &Q_4=(d_3u_2+v_5+A^{(2)}_2)(d_2u_6+v_6+A^{(2)}_3)+a^{(2)}_4,\\
 &Q_5=(u_3+d_1v_6+A^{(3)}_2)(u_1+d_3v_1+A^{(3)}_3)+a^{(3)}_4,\\
 &Q_6=(d_1u_6+v_3+A^{(3)}_2)(d_3u_4+v_4+A^{(3)}_3)+a^{(3)}_4,\\
 &Q_7=(u_2+d_2u_5+B^{(1)}_2)(d_1u_1+u_4+B^{(1)}_3)+b^{(1)}_4,\\
 &Q_8=(d_3u_3+u_6+B^{(2)}_2)(u_2+d_2u_5+B^{(2)}_3)+b^{(2)}_4,\\
 &Q_9=(d_1u_1+u_4+B^{(3)}_2)(d_3u_3+u_6+B^{(3)}_3)+b^{(3)}_4.
\end{split}
\end{equation}
Note that 
\begin{equation}
 a^{(i)}_4,b^{(i)}_4\neq0,\quad i=1,2,3,
\end{equation}
and the condition \eqref{eqn:caseN1_CAO_B2} can be rewritten as \eqref{eqn:caseN1_CAO_B2_2}.
For simplicity, we introduce the polynomials $C^{(51)},\dots,C^{(58)}$ given by
\begin{subequations}
\begin{align}
 &C^{(51)}=B^{(1)}_2-A^{(2)}_2d_2+A^{(1)}_3d_2d_3,
 &&C^{(52)}=B^{(3)}_3-A^{(2)}_3d_3+A^{(3)}_2d_2d_3,\\
 &C^{(53)}=A^{(2)}_2-B^{(1)}_2d_3+A^{(1)}_3d_2d_3,
 &&C^{(54)}=A^{(2)}_3-B^{(3)}_3d_2+A^{(3)}_2d_2d_3,\\
 &C^{(55)}=A^{(1)}_2b^{(1)}_4+a^{(1)}_4B^{(1)}_3d_2d_3,
 &&C^{(56)}=A^{(3)}_3b^{(3)}_4+a^{(3)}_4B^{(3)}_2d_2d_3,\\
 &C^{(57)}=A^{(1)}_2b^{(1)}_4-a^{(1)}_4B^{(1)}_3d_2,
 &&C^{(58)}=A^{(3)}_3b^{(3)}_4-a^{(3)}_4B^{(3)}_2d_3.
\end{align}
\end{subequations}

From the following relation:
\begin{equation}
 (a^{(1)}_4d_2d_3+A^{(1)}_2C^{(51)})C^{(11)}-C^{(51)}C^{(12)}
 =a^{(1)}_4a^{(2)}_4{d_2}^2{d_3}^2\neq0,
\end{equation}
we obtain 
\begin{equation}
 (C^{(11)},C^{(12)})\neq(0,0).
\end{equation}
Hence, from the following coefficients of $F_1G_2-F_2G_1=0$:
\begin{align*}
 {U_1}^0{U_4}^5{v_1}^0:\quad
 &d_2C^{(11)}C^{(31)}=0,\\
 {U_1}^0{U_4}^6{v_1}^0:\quad
 &C^{(12)}C^{(31)}=0,
\end{align*}
we obtain
\begin{equation}
 C^{(31)}=0.
\end{equation}
Moreover, from he following relation:
\begin{equation}
 (a^{(3)}_4d_2d_3+A^{(3)}_3C^{(54)})C^{(31)}-C^{(54)}C^{(32)}= a^{(2)}_4a^{(3)}_4d_2d_3\neq0,
\end{equation}
we obtain
\begin{equation}
 C^{(32)}\neq0,
\end{equation}
and then from the following coefficient of $F_1G_2-F_2G_1=0$:
\begin{equation}
 {U_1}^6{U_4}^0{v_1}^0:
 -{d_2}^6{d_3}^5C^{(11)}C^{(32)}=0,
\end{equation}
we obtain
\begin{equation}
 C^{(11)}=0.
\end{equation}
Therefore, from
\begin{equation}
 0=C^{(11)}=d_2d_3a^{(2)}_4+C^{(51)}C^{(52)},\quad
 0=C^{(31)}=a^{(2)}_4+C^{(53)}C^{(54)},
\end{equation}
and $a^{(2)}_4\neq0$, we obtain
\begin{equation}
 C^{(51)},C^{(52)},C^{(53)},C^{(54)}\neq0.
\end{equation}
From the following coefficients of $F_1G_2-F_2G_1=0$:
\begin{align*}
 {U_1}^6{U_4}^1{v_1}^0:
 &\dfrac{a^{(3)}_4{d_2}^7{d_3}^6C^{(53)}\Big(a^{(2)}_4d_2d_3(b^{(1)}_4+a^{(1)}_4d_3)-(b^{(3)}_4+a^{(3)}_4d_2){C^{(51)}}^2\Big)}{C^{(51)}}=0,\\
 {U_1}^1{U_4}^6{v_1}^0:
 &\dfrac{a^{(1)}_4a^{(2)}_4{d_2}^3{d_3}^3\Big(a^{(2)}_4(a^{(1)}_4+b^{(1)}_4d_3)-(a^{(3)}_4+b^{(3)}_4d_2){C^{(53)}}^2\Big)}{C^{(51)}C^{(53)}}=0,\\
 {U_1}^6{U_4}^3{v_1}^2:
 &\dfrac{a^{(1)}_4a^{(2)}_4{d_2}^5{d_3}^6}{C^{(51)}C^{(53)}}\Big(a^{(2)}_4b^{(1)}_4d_2d_3(1-{d_3}^2)-b^{(3)}_4{C^{(51)}}^2+b^{(3)}_4{d_2}^2{d_3}^2{C^{(53)}}^2\Big)\notag\\
 &+\dfrac{{d_2}^5{d_3}^6}{C^{(51)}C^{(53)}}
 \Big(a^{(2)}_4b^{(1)}_4d_3-b^{(3)}_4d_2{C^{(53)}}^2\Big)\notag\\
 &\hspace{5em}\Big(a^{(2)}_4d_2d_3(b^{(1)}_4+a^{(1)}_4d_3)-(b^{(3)}_4+a^{(3)}_4d_2){C^{(51)}}^2\Big)=0,
\end{align*}
we obtain
\begin{subequations}
\begin{align}
 &a^{(2)}_4d_2d_3(b^{(1)}_4+a^{(1)}_4d_3)-(b^{(3)}_4+a^{(3)}_4d_2){C^{(51)}}^2=0,\label{eqn:case5_ciii_cond_1}\\
 &a^{(2)}_4(a^{(1)}_4+b^{(1)}_4d_3)-(a^{(3)}_4+b^{(3)}_4d_2){C^{(53)}}^2=0,\label{eqn:case5_ciii_cond_2}\\
 &a^{(2)}_4b^{(1)}_4d_2d_3(1-{d_3}^2)-b^{(3)}_4{C^{(51)}}^2+b^{(3)}_4{d_2}^2{d_3}^2{C^{(53)}}^2=0.\label{eqn:case5_ciii_cond_3}
\end{align}
\end{subequations}
In the following, we consider by dividing into the following three cases.
\begin{align}
 &b^{(3)}_4+a^{(3)}_4d_2=0,\quad a^{(3)}_4+b^{(3)}_4d_2=0;\tag{iii:1}\label{eqn:typeIII_proof_c_31}\\
 &b^{(3)}_4+a^{(3)}_4d_2=0,\quad a^{(3)}_4+b^{(3)}_4d_2\neq0;\tag{iii:2}\label{eqn:typeIII_proof_c_32}\\
 &b^{(3)}_4+a^{(3)}_4d_2\neq0,\quad a^{(3)}_4+b^{(3)}_4d_2\neq0.\tag{iii:3}\label{eqn:typeIII_proof_c_33}
\end{align}
%
%
Note that considering the case \eqref{eqn:typeIII_proof_c_32} and considering the case
\begin{equation}
 b^{(3)}_4+a^{(3)}_4d_2\neq0,\quad a^{(3)}_4+b^{(3)}_4d_2=0,
\end{equation}
mean the same under the following symmetry of quad equations \eqref{eqn:typeIII_c3_Q}:
\begin{align}
 \iota:&~
 u_1\leftrightarrow u_4,\quad u_2\leftrightarrow u_5,\quad u_3\leftrightarrow u_6,\quad 
 v_1\leftrightarrow v_4,\quad v_2\leftrightarrow v_5,\quad v_3\leftrightarrow v_6,\notag\\
 &A^{(1)}_2\mapsto {d_2}^{-1}A^{(1)}_2,\quad
 A^{(1)}_3\mapsto {d_1}^{-1}A^{(1)}_3,\quad
 a^{(1)}_4\mapsto {d_1}^{-1}{d_2}^{-1}a^{(1)}_2,\notag\\
 &A^{(2)}_2\mapsto {d_3}^{-1}A^{(2)}_2,\quad
 A^{(2)}_3\mapsto {d_2}^{-1}A^{(2)}_3,\quad
 a^{(2)}_4\mapsto {d_2}^{-1}{d_3}^{-1}a^{(2)}_2,\notag\\
 &A^{(3)}_2\mapsto {d_1}^{-1}A^{(3)}_2,\quad
 A^{(3)}_3\mapsto {d_3}^{-1}A^{(3)}_3,\quad
 a^{(3)}_4\mapsto {d_3}^{-1}{d_1}^{-1}a^{(3)}_2,\notag\\
 &B^{(1)}_2\mapsto {d_2}^{-1}B^{(1)}_2,\quad
 B^{(1)}_3\mapsto {d_1}^{-1}B^{(1)}_3,\quad
 b^{(1)}_4\mapsto {d_1}^{-1}{d_2}^{-1}b^{(1)}_2,\notag\\
 &B^{(2)}_2\mapsto {d_3}^{-1}B^{(2)}_2,\quad
 B^{(2)}_3\mapsto {d_2}^{-1}B^{(2)}_3,\quad
 b^{(2)}_4\mapsto {d_2}^{-1}{d_3}^{-1}b^{(2)}_2,\notag\\
 &B^{(3)}_2\mapsto {d_1}^{-1}B^{(3)}_2,\quad
 B^{(3)}_3\mapsto {d_3}^{-1}B^{(3)}_3,\quad
 b^{(3)}_4\mapsto {d_3}^{-1}{d_1}^{-1}b^{(3)}_2,\notag\\
 &d_1\mapsto{d_1}^{-1},\quad
 d_2\mapsto{d_2}^{-1},\quad
 d_3\mapsto{d_3}^{-1}.
\end{align}

\begin{remark}
In what follows, for simplify, we let $c(i,j,k)$ be the coefficient of ${U_1}^i{U_4}^j{v_1}^k$ in $F_1G_2-F_2G_1$.
If the CACO property holds, then 
\begin{equation}
 c(i,j,k)=0
\end{equation}
for all $\{i,j,k\}$ hold.
\end{remark}

\subsubsection*{\quad Case \eqref{eqn:typeIII_proof_c_31}.}
From the condition \eqref{eqn:typeIII_proof_c_31} we obtain
\begin{equation}
 {d_2}^2=1.
\end{equation}
Moreover, from \eqref{eqn:case5_ciii_cond_1} and \eqref{eqn:case5_ciii_cond_2}, we obtain
\begin{equation}\label{eqn:typeIII_proof_c_31_same}
 b^{(1)}_4+a^{(1)}_4d_3=0,\quad
 a^{(1)}_4+b^{(1)}_4d_3=0,
\end{equation}
which give
\begin{equation}
 {d_3}^2=1.
\end{equation}
Therefore, we shall consider the following three cases:
\begin{align}
 &d_2=d_3=1;\tag{iii:1.1}\label{eqn:typeIII_proof_c_311}\\
 &d_2=1,\quad d_3=-1;\tag{iii:1.2}\label{eqn:typeIII_proof_c_312}\\
 &d_2=d_3=-1.\tag{iii:1.3}\label{eqn:typeIII_proof_c_313}
\end{align}
%
%
Note that considering the case \eqref{eqn:typeIII_proof_c_312} and considering the case 
\begin{equation}
 d_2=-1,\quad d_3=1
\end{equation}
mean the same under the following symmetry of quad equations \eqref{eqn:typeIII_c3_Q}:
\begin{align}\label{rqn:case5_s23}
 s_{23}:&~
 u_2\leftrightarrow u_6,\quad
 u_3\leftrightarrow u_5,\quad
 v_1\leftrightarrow v_4,\quad
 v_2\leftrightarrow v_3,\quad
 v_5\leftrightarrow v_6,\notag\\
 &A^{(1)}_2\leftrightarrow A^{(3)}_3,\quad
 A^{(1)}_3\leftrightarrow A^{(3)}_2,\quad
 a^{(1)}_4\leftrightarrow a^{(3)}_4,\quad
 A^{(2)}_2\leftrightarrow A^{(2)}_3,\notag\\
 &B^{(1)}_2\leftrightarrow B^{(3)}_3,\quad
 B^{(1)}_3\leftrightarrow B^{(3)}_2,\quad
 b^{(1)}_4\leftrightarrow b^{(3)}_4,\quad
 B^{(2)}_2\leftrightarrow B^{(2)}_3,\quad
 d_2\leftrightarrow d_3.
\end{align}
Note that the symmetry $s_{23}$ does not change the conditions \eqref{eqn:typeIII_proof_c_31}--\eqref{eqn:typeIII_proof_c_33}.
For example, applying $s_{23}$ to the condition \eqref{eqn:typeIII_proof_c_31}, 
we obtain the condition \eqref{eqn:typeIII_proof_c_31_same},
which is equivalent to the condition \eqref{eqn:typeIII_proof_c_31}
because of the conditions \eqref{eqn:case5_ciii_cond_1} and \eqref{eqn:case5_ciii_cond_3}.

\subsubsection*{\qquad Case \eqref{eqn:typeIII_proof_c_311}.}
From 
\begin{subequations}
\begin{align}
 0&=a^{(2)}_4b^{(1)}_4d_2d_3(1-{d_3}^2)-b^{(3)}_4{C^{(51)}}^2+b^{(3)}_4{d_2}^2{d_3}^2{C^{(53)}}^2
 =4A^{(1)}_3a^{(3)}_4(B^{(1)}_2-A^{(2)}_2),\\
 0&\neq C^{(51)}=A^{(1)}_3+B^{(1)}_2-A^{(2)}_2,
\end{align}
\end{subequations}
we obtain the following two cases:
\begin{subequations}
\begin{align}
 &A^{(1)}_3=B^{(1)}_3=0,\quad B^{(1)}_2\neq A^{(2)}_2;\label{eqn:typeIII_proof_c_311_1}\\
 &A^{(1)}_3,B^{(1)}_3\neq0,\quad B^{(1)}_2=A^{(2)}_2.\label{eqn:typeIII_proof_c_311_2}
\end{align}
\end{subequations}

Assume \eqref{eqn:typeIII_proof_c_311_1}.
Then, from the condition \eqref{eqn:caseN1_CAO_B2_2} we obtain
\begin{equation}
 B^{(3)}_2=-1,\quad
 B^{(2)}_3=B^{(1)}_2-a^{(1)}_4,\quad
 B^{(3)}_3=B^{(2)}_2-a^{(3)}_4,\quad
 b^{(2)}_4=a^{(1)}_4a^{(3)}_4,
\end{equation}
and then we obtain the following inadequate condition:
\begin{equation}
 0=c(5,2,0)=2A^{(3)}_2a^{(1)}_4a^{(3)}_4(A^{(2)}_2-B^{(1)}_2)\neq0.
\end{equation}
Therefore, the condition \eqref{eqn:typeIII_proof_c_311_2} holds.

From
\begin{equation}
 0=c(5,2,0)=-2A^{(1)}_3a^{(1)}_4a^{(3)}_4(A^{(2)}_3-B^{(3)}_3),
\end{equation}
we obtain
\begin{equation}
 B^{(3)}_3=A^{(2)}_3.
\end{equation}
Moreover, from
\begin{equation}
 0=2c(6,2,0)-c(6,3,1)=4A^{(1)}_3A^{(3)}_2a^{(1)}_4a^{(3)}_4(A^{(1)}_2-A^{(3)}_3),
\end{equation}
we obtain the following two cases:
\begin{align}
 &A^{(3)}_2,B^{(3)}_2\neq0,\quad A^{(3)}_3=A^{(1)}_2;\label{eqn:typeIII_proof_c_311_21}\\
 &A^{(3)}_2=B^{(3)}_2=0.\label{eqn:typeIII_proof_c_311_22}
\end{align}
When \eqref{eqn:typeIII_proof_c_311_21}, we obtain the following inadequate condition:
\begin{equation}
 0=2a^{(1)}_4A^{(3)}_2c(6,2,0)+A^{(1)}_3a^{(3)}_4c(3,5,0)=-4A^{(1)}_3A^{(3)}_2{a^{(1)}_4}^2{a^{(3)}_4}^2\neq0.
\end{equation}
Moreover, when \eqref{eqn:typeIII_proof_c_311_22}, from
\begin{equation}
 0=c(6,2,0)=-A^{(1)}_3{a^{(3)}_4}^2\Big(a^{(1)}_4+A^{(1)}_3(A^{(1)}_2-A^{(3)}_3)\Big),
\end{equation}
we obtain 
\begin{equation}
 a^{(1)}_4=-A^{(1)}_3(A^{(1)}_2-A^{(3)}_3),
\end{equation}
and then we obtain the following inadequate condition:
\begin{equation}
 0=c(6,3,0)
 ={A^{(1)}_3}^2{a^{(3)}_4}^2(A^{(1)}_2-A^{(3)}_3)^2
 =(a^{(1)}_4a^{(3)}_4)^2\neq0.
\end{equation}
Therefore, the case \eqref{eqn:typeIII_proof_c_311} is inadequate.
\subsubsection*{\qquad Case \eqref{eqn:typeIII_proof_c_312}.}
From 
\begin{subequations}
\begin{align}
 0&=a^{(2)}_4b^{(1)}_4d_2d_3(1-{d_3}^2)-b^{(3)}_4{C^{(51)}}^2+b^{(3)}_4{d_2}^2{d_3}^2{C^{(53)}}^2\notag\\
 &=4A^{(2)}_2a^{(3)}_4(A^{(1)}_3-B^{(1)}_2),\\
 0&\neq C^{(51)}=A^{(2)}_2-A^{(1)}_3+B^{(1)}_2,
\end{align}
\end{subequations}
we obtain the following two cases:
\begin{subequations}
\begin{align}
 &A^{(2)}_2=B^{(2)}_2=0,\quad B^{(1)}_2\neq A^{(1)}_3;\label{eqn:typeIII_proof_c_312_1}\\
 &A^{(2)}_2,B^{(2)}_2\neq0,\quad B^{(1)}_2=A^{(1)}_3.\label{eqn:typeIII_proof_c_312_2}
\end{align}
\end{subequations}

Assume \eqref{eqn:typeIII_proof_c_312_1}.
Then, from the condition \eqref{eqn:caseN1_CAO_B2_2} we obtain
\begin{equation}
 B^{(3)}_2=-1+B^{(1)}_3,\quad
 B^{(3)}_3=-a^{(3)}_4+B^{(2)}_2,\quad
 B^{(2)}_3=a^{(1)}_4+B^{(1)}_2,\quad
 b^{(2)}_4=-a^{(1)}_4a^{(3)}_4.
\end{equation}
From
\begin{equation}
 0=c(5,2,0)=-2a^{(1)}_4a^{(3)}_4(A^{(1)}_3-B^{(1)}_2)(A^{(2)}_3-A^{(3)}_2),
\end{equation}
we obtain
\begin{equation}
 A^{(3)}_2=A^{(2)}_3.
\end{equation}
From
\begin{equation}
 0=c(5,3,0)-c(6,3,1)=8a^{(1)}_4A^{(3)}_3B^{(3)}_3a^{(3)}_4(A^{(1)}_3-B^{(1)}_2),
\end{equation}
we obtain
\begin{equation}
 A^{(3)}_3=B^{(3)}_3=0,
\end{equation}
and then from
\begin{equation}
 0=c(6,2,0)={a^{(3)}_4}^2(A^{(1)}_3-B^{(1)}_2)\Big(a^{(1)}_4+(A^{(1)}_3-B^{(1)}_2)(1+A^{(1)}_2-B^{(1)}_3)\Big),
\end{equation}
we obtain
\begin{equation}
 a^{(1)}_4=-(A^{(1)}_3-B^{(1)}_2)(1+A^{(1)}_2-B^{(1)}_3).
\end{equation}
Then, we obtain the following inadequate condition:
\begin{equation}
 0=c(6,3,0)={a^{(1)}_4}^2{a^{(3)}_4}^2\neq0.
\end{equation}
Therefore, the conditions \eqref{eqn:typeIII_proof_c_312_2} holds.

From the condition \eqref{eqn:caseN1_CAO_B2_2} we obtain
\begin{equation}
 B^{(3)}_2=-1+B^{(1)}_3,\quad
 B^{(3)}_3=-a^{(3)}_4+B^{(2)}_2,\quad
 B^{(2)}_3=A^{(1)}_3+a^{(1)}_4,\quad
 b^{(2)}_4=-a^{(1)}_4a^{(3)}_4.
\end{equation}
From 
\begin{equation}
 0=c(5,2,0)=2a^{(1)}_4A^{(2)}_2a^{(3)}_4(B^{(2)}_2-a^{(3)}_4),
\end{equation}
we obtain
\begin{equation}
 B^{(2)}_2=a^{(3)}_4,
\end{equation}
which and $B^{(3)}_3=-a^{(3)}_4+B^{(2)}_2$ gives
\begin{equation}
 A^{(3)}_3=B^{(3)}_3=0.
\end{equation}
From 
\begin{equation}
 0=2a^{(1)}_4(A^{(2)}_3-A^{(3)}_2)c(6,2,0)-A^{(2)}_2a^{(3)}_4c(3,5,0)
 =4{a^{(1)}_4}^2A^{(2)}_2 {a^{(3)}_4}^2(A^{(2)}_3-A^{(3)}_2),
\end{equation}
we obtain
\begin{equation}
 A^{(3)}_2=A^{(2)}_3.
\end{equation}
From 
\begin{equation}
 0=c(6,2,0)=A^{(2)}_2{a^{(3)}_4}^2\Big(a^{(1)}_4-A^{(2)}_2(1+A^{(1)}_2-B^{(1)}_3)\Big),
\end{equation}
we obtain
\begin{equation}
 a^{(1)}_4=A^{(2)}_2(1+A^{(1)}_2-B^{(1)}_3).
\end{equation}
Then, we obtain the following inadequate condition:
\begin{equation}
 0=c(6,3,0)=-{a^{(1)}_4}^2{a^{(3)}_4}^2\neq0.
\end{equation}
Therefore, the case \eqref{eqn:typeIII_proof_c_312} is inadequate.
\subsubsection*{\qquad Case \eqref{eqn:typeIII_proof_c_313}.}
Then, the following relations hold:
\begin{equation}
 C^{(53)}=C^{(51)},\quad
 C^{(54)}=C^{(52)},\quad
 C^{(57)}=C^{(55)},\quad
 C^{(58)}=C^{(56)}.
\end{equation}
From $c(6,2,0)=0$
we obtain
\begin{equation}\label{eqn:case5_ciii_1_3_eqn1}
 C^{(56)}=-\dfrac{a^{(1)}_4a^{(3)}_4+(a^{(1)}_4A^{(3)}_3+C^{(55)})C^{(52)}}{C^{(51)}}-A^{(1)}_2a^{(3)}_4,
\end{equation}
and then from $c(6,3,0)=0$ we obtain
\begin{align}\label{eqn:case5_ciii_1_3_eqn2}
 &A^{(1)}_2
 =-\dfrac{1}{a^{(1)}_4a^{(3)}_4C^{(51)}}
 \Big({a^{(1)}_4}^2a^{(3)}_4+a^{(1)}_4a^{(3)}_4B^{(1)}_3C^{(51)}+{a^{(1)}_4}^2A^{(3)}_3C^{(52)}\notag\\
 &\qquad+a^{(1)}_4A^{(3)}_3B^{(1)}_3C^{(51)}C^{(52)}-a^{(1)}_4A^{(3)}_3B^{(3)}_2C^{(51)}C^{(52)}-a^{(3)}_4C^{(51)}C^{(55)}\notag\\
 &\qquad+a^{(1)}_4C^{(52)}C^{(55)}+B^{(1)}_3C^{(51)}C^{(52)}C^{(55)}-B^{(3)}_2C^{(51)}C^{(52)}C^{(55)}\Big).
\end{align}
From 
\begin{equation}
 0=C^{(11)}=a^{(2)}_4+(A^{(1)}_3+A^{(2)}_2+B^{(1)}_2)(A^{(2)}_3+A^{(3)}_2+B^{(3)}_3),
\end{equation}
we obtain
\begin{equation}
 a^{(2)}_4=-(A^{(1)}_3+A^{(2)}_2+B^{(1)}_2)(A^{(2)}_3+A^{(3)}_2+B^{(3)}_3),
\end{equation}
and then from \eqref{eqn:case5_ciii_1_3_eqn1} we obtain
\begin{equation}
 a^{(3)}_4=\dfrac{(1+A^{(1)}_2+A^{(3)}_3+B^{(3)}_2)(A^{(2)}_3+A^{(3)}_2+B^{(3)}_3)}{A^{(1)}_2+A^{(3)}_3+B^{(3)}_2},
\end{equation}
and from \eqref{eqn:case5_ciii_1_3_eqn2} we obtain
\begin{equation}
 a^{(1)}_4=-\dfrac{(A^{(1)}_3+A^{(2)}_2+B^{(1)}_2)(A^{(1)}_2+A^{(3)}_3+B^{(3)}_2)}{1+A^{(1)}_2+A^{(3)}_3+B^{(3)}_2}.
\end{equation}
Therefore, we obtain the condition \eqref{eqn:typeIII_cond3_1}.
We can verify by computation that under this condition the CACO property and square property hold. 
\subsubsection*{\quad Case \eqref{eqn:typeIII_proof_c_32}.}
From \eqref{eqn:case5_ciii_cond_1}, we obtain
\begin{equation}
 b^{(1)}_4=-a^{(1)}_4d_3.
\end{equation}
From 
\begin{subequations}
\begin{align}
 0&\neq a^{(3)}_4+b^{(3)}_4d_2=a^{(3)}_4(1-{d_2}^2),\\
 0&\neq (a^{(3)}_4+b^{(3)}_4d_2){C^{(53)}}^2
 =a^{(2)}_4(a^{(1)}_4+b^{(1)}_4d_3)
 =a^{(1)}_4(1-{d_3}^2),
\end{align}
\end{subequations}
we obtain
\begin{equation}
 {d_2}^2,{d_3}^2\neq1.
\end{equation}
Then, from \eqref{eqn:case5_ciii_cond_2}, we obtain
\begin{equation}
 a^{(2)}_4=\dfrac{a^{(3)}_4(1-{d_2}^2){C^{(53)}}^2}{a^{(1)}_4(1-{d_3}^2)},
\end{equation}
and then from \eqref{eqn:case5_ciii_cond_3}, we obtain
\begin{align}
 0&=a^{(2)}_4b^{(1)}_4d_2d_3(1-{d_3}^2)-b^{(3)}_4{C^{(51)}}^2+b^{(3)}_4{d_2}^2{d_3}^2{C^{(53)}}^2\notag\\
 &=a^{(3)}_4d_2({C^{(51)}}^2-{d_3}^2{C^{(53)}}^2).
\end{align}
Therefore, we shall consider the following two cases:
\begin{align}
 &C^{(51)}=d_3C^{(53)};\tag{iii:2.1}\label{eqn:typeIII_proof_c_321}\\
 &C^{(51)}=-d_3C^{(53)}.\tag{iii:2.2}\label{eqn:typeIII_proof_c_322}
\end{align}
%
\subsubsection*{\qquad Case \eqref{eqn:typeIII_proof_c_321}.}
From $c(6,2,0)=0$ we obtain
\begin{equation}
 C^{(56)}=\dfrac{A^{(3)}_3a^{(3)}_4(1-{d_2}^2)}{1-{d_3}^2}-A^{(1)}_2a^{(3)}_4-\dfrac{a^{(1)}_4a^{(3)}_4d_2d_3}{C^{(53)}}+\dfrac{a^{(3)}_4(1-{d_2}^2)C^{(55)}}{a^{(1)}_4(1-{d_3}^2)},
\end{equation}
and then from $c(5,3,0)=0$ we obtain
\begin{align}
 C^{(58)}=&\dfrac{a^{(1)}_4a^{(3)}_4A^{(3)}_3(1+d_2)(1-2{d_3}^2+d_2{d_3}^2)}{a^{(1)}_4d_2(1-{d_3}^2)}
 -\dfrac{A^{(1)}_2a^{(3)}_4(1+d_3-{d_3}^2)}{d_2}\notag\\
 &-\dfrac{a^{(1)}_4a^{(3)}_4d_3(1-2{d_3}^2)}{C^{(53)}}
 +\dfrac{a^{(3)}_4({d_2}^2-{d_3}^2)C^{(55)}}{a^{(1)}_4d_2(1-{d_3}^2)}+\dfrac{a^{(3)}_4d_3C^{(57)}}{a^{(1)}_4d_2}.
\end{align}
Under these conditions
we obtain the conditions of the parameters $A^{(i)}_j$ and $B^{(i)}_j$:
\eqref{eqn:typeIII_cond3_2}--\eqref{eqn:typeIII_cond3_4}.
We can verify by computation that under these conditions the CACO property and square property hold.

\subsubsection*{\qquad Case \eqref{eqn:typeIII_proof_c_322}.}
From $c(6,2,0)=0$ we obtain
\begin{equation}
 C^{(56)}=\dfrac{a^{(3)}_4\Big(A^{(3)}_3(1-{d_2}^2)-A^{(1)}_2(1-{d_3}^2)\Big)}{1-{d_3}^2}
 -\dfrac{a^{(1)}_4a^{(3)}_4d_2d_3}{C^{(53)}}+\dfrac{a^{(3)}_4(1-{d_2}^2)C^{(55)}}{a^{(1)}_4(1-{d_3}^2)},
\end{equation}
and then from $c(5,3,0)=0$ we obtain
\begin{align}
 C^{(58)}=&-\dfrac{a^{(3)}_4}{a^{(1)}_4d_2(1-{d_3}^2)C^{(53)}}
 \Big({a^{(1)}_4}^2d_2d_3(1-{d_3}^2)(1-2{d_3}^2)\notag\\
 &+\Big(a^{(1)}_4\big(A^{(1)}_2(1-{d_3}^2)(1+(1-d_3)d_3)-A^{(3)}_3(1+d_2)(1-(2-d_2){d_3}^2)\big)\notag\\
 &-({d_2}^2-{d_3}^2)C^{(55)}-d_3(1-{d_3}^2)C^{(57)}\Big)C^{(53)}\Big).
\end{align}
Under these conditions
we obtain the conditions of the parameters $A^{(i)}_j$ and $B^{(i)}_j$:
\eqref{eqn:typeIII_cond3_5}--\eqref{eqn:typeIII_cond3_9}.
We can verify by computation that under these conditions the CACO property and square property hold.

\subsubsection*{\quad Case \eqref{eqn:typeIII_proof_c_33}.}
From \eqref{eqn:case5_ciii_cond_1} and \eqref{eqn:case5_ciii_cond_2}, we obtain
\begin{equation}
 {C^{(51)}}^2=\dfrac{a^{(2)}_4d_2d_3(b^{(1)}_4+a^{(1)}_4d_3)}{b^{(3)}_4+a^{(3)}_4d_2},\quad
 {C^{(53)}}^2=\dfrac{a^{(2)}_4(a^{(1)}_4+b^{(1)}_4d_3)}{a^{(3)}_4+b^{(3)}_4d_2},
\end{equation}
respectively.
Then, from \eqref{eqn:case5_ciii_cond_3} we obtain
\begin{align}
 0&=a^{(2)}_4b^{(1)}_4d_2d_3(1-{d_3}^2)-b^{(3)}_4{C^{(51)}}^2+b^{(3)}_4{d_2}^2{d_3}^2{C^{(53)}}^2\notag\\
 &=-\dfrac{a^{(2)}_4a^{(3)}_4d_2d_3\Big(a^{(1)}_4b^{(3)}_4d_3(1-{d_2}^2)-a^{(3)}_4b^{(1)}_4d_2(1-{d_3}^2)-b^{(1)}_4b^{(3)}_4({d_2}^2-{d_3}^2)\Big)}{(b^{(3)}_4+a^{(3)}_4d_2)(a^{(3)}_4+b^{(3)}_4d_2)}.
\end{align} 
Note that from the equation above and 
the conditions \eqref{eqn:case5_ciii_cond_2} and \eqref{eqn:typeIII_proof_c_33},
if ${d_2}^2=1$ then ${d_3}^2=1$, and vice versa.
In the following, we consider by dividing into the following four cases.
\begin{align}
 &d_2=d_3=1;\tag{iii:3.1}\label{eqn:typeIII_proof_c_331}\\
 &d_2=1,\quad d_3=-1;\tag{iii:3.2}\label{eqn:typeIII_proof_c_332}\\
 &d_2=d_3=-1;\tag{iii:3.3}\label{eqn:typeIII_proof_c_333}\\
 &{d_2}^2,{d_3}^2\neq1.\tag{iii:3.4}\label{eqn:typeIII_proof_c_334}
\end{align}
%
%
%
Note that considering the case \eqref{eqn:typeIII_proof_c_332} and considering the case 
\begin{equation}
 d_2=-1,\quad d_3=1,
\end{equation}
mean the same under the symmetry $s_{23}$ \eqref{rqn:case5_s23}.
\subsubsection*{\qquad Case \eqref{eqn:typeIII_proof_c_331}.}
From $C^{(51)}\neq0$, we have 
\begin{equation}
 a^{(1)}_4+b^{(1)}_4\neq0.
\end{equation}
From \eqref{eqn:case5_ciii_cond_1} we obtain
\begin{equation}
 a^{(2)}_4=\dfrac{(A^{(1)}_3-A^{(2)}_2+B^{(1)}_2)^2(a^{(3)}_4+b^{(3)}_4)}{a^{(1)}_4+b^{(1)}_4},
\end{equation}
and then from \eqref{eqn:case5_ciii_cond_2} and \eqref{eqn:case5_ciii_cond_3}, we obtain
\begin{equation}
 A^{(1)}_3(A^{(2)}_2-B^{(1)}_2)=0,
\end{equation}
which give the following two cases:
\begin{subequations}
\begin{align}
 &A^{(1)}_3=B^{(1)}_3=0,\quad B^{(1)}_2\neq A^{(2)}_2;\label{eqn:typeIII_proof_c_331_1}\\
 &A^{(1)}_3,B^{(1)}_3\neq0,\quad B^{(1)}_2=A^{(2)}_2.\label{eqn:typeIII_proof_c_331_2}
\end{align}
\end{subequations}
Note that the condition 
\begin{equation}
 A^{(1)}_3=B^{(1)}_3=A^{(2)}_2-B^{(1)}_2=0
\end{equation}
is inconsistent with $a^{(2)}_4\neq0$.

Assume \eqref{eqn:typeIII_proof_c_331_1}.
Then, from $C^{(11)}=C^{(31)}=0$ we obtain
\begin{equation}
 b^{(3)}_4=-a^{(3)}_4-\dfrac{(a^{(1)}_4+b^{(1)}_4)(A^{(2)}_3-A^{(3)}_2-B^{(3)}_3)}{A^{(2)}_2-B^{(1)}_2},\quad
 A^{(3)}_2=B^{(3)}_2=0.
\end{equation}
The condition $B^{(1)}_3=B^{(3)}_2=0$ is inconsistent with the condition \eqref{eqn:caseN1_CAO_B2_2}.
Therefore, the condition \eqref{eqn:typeIII_proof_c_331_2} holds.

From $C^{(11)}=C^{(31)}=0$ we obtain
\begin{equation}
 b^{(3)}_4=-a^{(3)}_4+\dfrac{(a^{(1)}_4+b^{(1)}_4)(A^{(2)}_3-A^{(3)}_2-B^{(3)}_3)}{A^{(1)}_3},\quad
 B^{(3)}_3=A^{(2)}_3,
\end{equation}
and then from the condition \eqref{eqn:caseN1_CAO_B2_2} we obtain
\begin{equation}
 a^{(2)}_4=-A^{(1)}_3A^{(3)}_2,
\end{equation}
which gives
\begin{equation}
 A^{(1)}_3,A^{(3)}_2\neq0.
\end{equation}
If 
\begin{equation}
 a^{(1)}_4A^{(3)}_2+A^{(1)}_3a^{(3)}_4=0,
\end{equation}
then we obtain
\begin{equation*}
 c(6,3,1)=-\dfrac{2a^{(1)}_4{A^{(3)}_2}^2b^{(1)}_4(A^{(1)}_3+b^{(1)}_4)}{A^{(1)}_3},\quad
 c(5,4,1)=\dfrac{2a^{(1)}_4{A^{(3)}_2}^2b^{(1)}_4(3A^{(1)}_3+b^{(1)}_4)}{A^{(1)}_3},
\end{equation*}
which cannot be simultaneously zero. 
Therefore, we obtain
\begin{equation}
 a^{(1)}_4A^{(3)}_2+A^{(1)}_3a^{(3)}_4\neq0.
\end{equation}
Then, from $c(6,2,0)=0$ we obtain
\begin{align}
 B^{(1)}_3
 =&\dfrac{(A^{(1)}_2A^{(1)}_3+a^{(1)}_4-A^{(1)}_3A^{(3)}_3)(a^{(1)}_4A^{(3)}_2-A^{(1)}_3a^{(3)}_4)}{A^{(1)}_3(a^{(1)}_4A^{(3)}_2+A^{(1)}_3a^{(3)}_4)}\notag\\
 &+\dfrac{A^{(3)}_2b^{(1)}_4(A^{(1)}_3+2a^{(1)}_4+b^{(1)}_4)}{A^{(1)}_3(a^{(1)}_4A^{(3)}_2+A^{(1)}_3a^{(3)}_4)}
 +1,
\end{align}
and then from $c(5,3,0)=0$ we obtain
\begin{equation}
 b^{(1)}_4=-A^{(1)}_2A^{(1)}_3-a^{(1)}_4+A^{(1)}_3A^{(3)}_3.
\end{equation}
Moreover, from
\begin{equation}
 0=c(4,4,0)=2a^{(1)}_4A^{(3)}_2\Big(a^{(1)}_4+A^{(1)}_3(A^{(1)}_2-A^{(3)}_3)\Big)\Big(a^{(3)}_4-A^{(3)}_2(A^{(1)}_2-A^{(3)}_3)\Big),
\end{equation}
we obtain
\begin{equation}
 a^{(1)}_4=-A^{(1)}_3(A^{(1)}_2-A^{(3)}_3)\quad
 \text{or}\quad
 a^{(3)}_4=A^{(3)}_2(A^{(1)}_2-A^{(3)}_3),
\end{equation}
which give inadequate conditions 
\begin{equation}
 b^{(1)}_4=0\quad
 \text{or}\quad
 b^{(3)}_4=0,
\end{equation}
respectively.
Therefore, the case \eqref{eqn:typeIII_proof_c_331} is inadequate.
\subsubsection*{\qquad Case \eqref{eqn:typeIII_proof_c_332}.}
From $C^{(51)}\neq0$, we have 
\begin{equation}
 a^{(1)}_4-b^{(1)}_4\neq0.
\end{equation}
From \eqref{eqn:case5_ciii_cond_1} we obtain
\begin{equation}
 a^{(2)}_4=\dfrac{(A^{(1)}_3+A^{(2)}_2-B^{(1)}_2)^2(a^{(3)}_4+b^{(3)}_4)}{a^{(1)}_4-b^{(1)}_4},
\end{equation}
and then from \eqref{eqn:case5_ciii_cond_2} and \eqref{eqn:case5_ciii_cond_3}, we obtain
\begin{subequations}
\begin{align}
 0&=a^{(2)}_4(a^{(1)}_4+b^{(1)}_4d_3)-(a^{(3)}_4+b^{(3)}_4d_2){C^{(53)}}^2
 =4A^{(2)}_2(A^{(1)}_3-B^{(1)}_2)(a^{(3)}_4+b^{(3)}_4),\\
 0&=a^{(2)}_4b^{(1)}_4d_2d_3(1-{d_3}^2)-b^{(3)}_4{C^{(51)}}^2+b^{(3)}_4{d_2}^2{d_3}^2{C^{(53)}}^2
 =4A^{(2)}_2(B^{(1)}_2-A^{(1)}_3)b^{(3)}_4,
\end{align}
\end{subequations}
which give the following two conditions:
\begin{subequations}
\begin{align}
 &A^{(2)}_2=B^{(2)}_2=0,\quad B^{(1)}_2\neq A^{(1)}_3;\label{eqn:typeIII_proof_c_332_1}\\
 &A^{(2)}_2,B^{(2)}_2\neq0,\quad B^{(1)}_2=A^{(1)}_3.\label{eqn:typeIII_proof_c_332_2}
\end{align}
\end{subequations}
Note that the condition $ A^{(2)}_2=B^{(2)}_2=A^{(1)}_3-B^{(1)}_2=0$ is inconsistent with $a^{(2)}_4\neq0$.

Assume \eqref{eqn:typeIII_proof_c_332_1}.
Then, from $C^{(11)}=C^{(31)}=0$ we obtain
\begin{equation}
 b^{(3)}_4=-a^{(3)}_4-\dfrac{(a^{(1)}_4-b^{(1)}_4)(A^{(2)}_3-A^{(3)}_2+B^{(3)}_3)}{A^{(1)}_3-B^{(1)}_2},\quad
 A^{(3)}_2=A^{(2)}_3.
\end{equation}
From the condition \eqref{eqn:caseN1_CAO_B2_2} we obtain
\begin{equation}
 B^{(3)}_2=B^{(1)}_3-1,\quad
 a^{(3)}_4=-\dfrac{(A^{(1)}_3+a^{(1)}_4-B^{(1)}_2-b^{(1)}_4)B^{(3)}_3}{A^{(1)}_3-B^{(1)}_2},
\end{equation}
which gives
\begin{equation}
 A^{(1)}_3+a^{(1)}_4-B^{(1)}_2-b^{(1)}_4\neq0,\quad 
 A^{(3)}_3,B^{(3)}_3\neq0.
\end{equation}
Then, we obtain the following inadequate condition:
\begin{equation}
 0=c(6,3,1)-2c(6,2,0)=4a^{(1)}_4A^{(3)}_3{B^{(3)}_3}^2(A^{(1)}_3+a^{(1)}_4-B^{(1)}_2-b^{(1)}_4)\neq0.
\end{equation}
Therefore, the condition \eqref{eqn:typeIII_proof_c_332_2} holds.

From $C^{(11)}=C^{(31)}=0$ we obtain
\begin{equation}
 b^{(3)}_4=-a^{(3)}_4-\dfrac{(a^{(1)}_4-b^{(1)}_4)(A^{(2)}_3-A^{(3)}_2)}{A^{(2)}_2},\quad
 A^{(3)}_3=B^{(3)}_3=0,
\end{equation}
and then from the condition \eqref{eqn:caseN1_CAO_B2_2} we obtain
\begin{equation}
 B^{(3)}_2=B^{(1)}_3-1.
\end{equation}
Then, we have
\begin{equation}
 a^{(2)}_4=-A^{(2)}_2(A^{(2)}_3-A^{(3)}_2),\quad
 a^{(3)}_4=B^{(2)}_2-\dfrac{(A^{(2)}_3-A^{(3)}_2)(a^{(1)}_4-b^{(1)}_4)}{A^{(2)}_2},
\end{equation}
which give
\begin{equation}
 A^{(2)}_2\neq0,\quad
 A^{(2)}_3\neq A^{(3)}_2,\quad
 A^{(2)}_2B^{(2)}_2-(A^{(2)}_3-A^{(3)}_2)(a^{(1)}_4-b^{(1)}_4)\neq0.
\end{equation}
Then, we obtain the following inadequate condition:
\begin{align}
 0&=c(6,3,1)-2c(6,2,0)\notag\\
 &=\dfrac{4a^{(1)}_4(A^{(2)}_3-A^{(3)}_2)(a^{(1)}_4-b^{(1)}_4)\Big(A^{(2)}_2B^{(2)}_2-(A^{(2)}_3-A^{(3)}_2)(a^{(1)}_4-b^{(1)}_4)\Big)}{A^{(2)}_2}\neq0.
\end{align}
Therefore, the case \eqref{eqn:typeIII_proof_c_332} is inadequate.
\subsubsection*{\qquad Case \eqref{eqn:typeIII_proof_c_333}.}
From $C^{(51)}\neq0$, we have 
\begin{equation}
 a^{(1)}_4-b^{(1)}_4\neq0.
\end{equation}
From \eqref{eqn:case5_ciii_cond_1} we obtain
\begin{equation}
 a^{(2)}_4=\dfrac{(A^{(1)}_3+A^{(2)}_2+B^{(1)}_2)^2(a^{(3)}_4-b^{(3)}_4)}{a^{(1)}_4-b^{(1)}_4}.
\end{equation}
Then, from $C^{(11)}=C^{(31)}=0$ and the condition \eqref{eqn:caseN1_CAO_B2_2}, we obtain
\begin{subequations}
\begin{align}
 &a^{(2)}_4=\dfrac{(A^{(1)}_3+A^{(2)}_2+B^{(1)}_2)^2(a^{(3)}_4-b^{(3)}_4)}{a^{(1)}_4-b^{(1)}_4},\quad
 B^{(3)}_2=B^{(1)}_3-1,\\
 &B^{(3)}_3=-\dfrac{(A^{(2)}_3+A^{(3)}_2)(a^{(1)}_4-b^{(1)}_4)+(A^{(1)}_3+A^{(2)}_2+B^{(1)}_2)(a^{(3)}_4-b^{(3)}_4)}{a^{(1)}_4-b^{(1)}_4}.
\end{align}
\end{subequations}
We here consider by dividing into the following two cases:
\begin{subequations}
\begin{align}
 &a^{(3)}_4=\dfrac{a^{(1)}_4b^{(3)}_4}{b^{(1)}_4};\label{eqn:typeIII_proof_c_333_1}\\
 &a^{(3)}_4\neq\dfrac{a^{(1)}_4b^{(3)}_4}{b^{(1)}_4}.\label{eqn:typeIII_proof_c_333_2}
\end{align}
\end{subequations}

We first consider the case \eqref{eqn:typeIII_proof_c_333_1}. 
Then, from $c(6,2,0)=0$ we obtain
\begin{equation}
 b^{(1)}_4=-A^{(1)}_3+2a^{(1)}_4-A^{(2)}_2-B^{(1)}_2,
\end{equation}
and then from $c(6,6,2)=0$ we obtain
\begin{equation}
 B^{(1)}_3=\dfrac{1-2A^{(1)}_2-2A^{(3)}_3}{2}.
\end{equation}
Therefore, we obtain the condition \eqref{eqn:typeIII_cond3_10}.
We can verify by computation that under this condition the CACO property and square property hold. 

We next consider the case \eqref{eqn:typeIII_proof_c_333_2}.
From $c(6,2,0)=0$ we obtain
\begin{equation*}
 B^{(1)}_3
 =-A^{(1)}_2-A^{(3)}_3-\dfrac{b^{(3)}_4(a^{(1)}_4-b^{(1)}_4)}{a^{(3)}_4b^{(1)}_4-a^{(1)}_4b^{(3)}_4}
 +\dfrac{(a^{(1)}_4-b^{(1)}_4)(a^{(3)}_4b^{(1)}_4+a^{(1)}_4b^{(3)}_4-b^{(1)}_4b^{(3)}_4)}{(A^{(1)}_3+A^{(2)}_2+B^{(1)}_2)(a^{(3)}_4b^{(1)}_4-a^{(1)}_4b^{(3)}_4)}.
\end{equation*}
If 
\begin{equation}
 b^{(1)}_4=-A^{(1)}_3+a^{(1)}_4-A^{(2)}_2-B^{(1)}_2,
\end{equation}
then we obtain the following inadequate condition:
\begin{equation}
 0=c(6,3,0)=\dfrac{a^{(1)}_4{a^{(3)}_4}^2b^{(1)}_4b^{(3)}_4(A^{(1)}_3+A^{(2)}_2+B^{(1)}_2)}{A^{(1)}_3a^{(3)}_4+a^{(3)}_4(A^{(2)}_2+B^{(1)}_2)-a^{(1)}_4(a^{(3)}_4-b^{(3)}_4)}\neq0.
\end{equation}
Therefore, we obtain
\begin{equation}
 b^{(1)}_4\neq-A^{(1)}_3+a^{(1)}_4-A^{(2)}_2-B^{(1)}_2.
\end{equation}
From $c(6,3,0)=0$ we obtain
\begin{align}
 &b^{(3)}_4=a^{(3)}_4-\dfrac{a^{(1)}_4a^{(3)}_4(a^{(1)}_4-b^{(1)}_4)}{(A^{(1)}_3-a^{(1)}_4+A^{(2)}_2+B^{(1)}_2+b^{(1)}_4)^2}.
\end{align} 
Therefore, we obtain the condition \eqref{eqn:typeIII_cond3_11}.
We can verify by computation that under this condition the CACO property and square property hold. 

\subsubsection*{\qquad Case \eqref{eqn:typeIII_proof_c_334}.}
Eliminating $C^{(51)}$ and $C^{(53)}$ from \eqref{eqn:case5_ciii_cond_3} by using \eqref{eqn:case5_ciii_cond_1} and \eqref{eqn:case5_ciii_cond_2}, we obtain
\begin{equation}
 a^{(1)}_4=\dfrac{b^{(1)}_4\Big(a^{(3)}_4d_2(1-{d_3}^2)+b^{(3)}_4({d_2}^2-{d_3}^2)\Big)}{b^{(3)}_4d_3(1-{d_2}^2)}.
\end{equation}
From \eqref{eqn:case5_ciii_cond_1} we obtain
\begin{equation}
 b^{(1)}_4=\dfrac{b^{(3)}_4(1-{d_2}^2)(B^{(1)}_2-A^{(2)}_2d_2+A^{(1)}_3d_2d_3)^2}{a^{(2)}_4d_2d_3(1-{d_3}^2)},
\end{equation}
from $C^{(11)}=0$ we obtain
\begin{equation}
 a^{(2)}_4=-\dfrac{(B^{(1)}_2-A^{(2)}_2d_2+A^{(1)}_3d_2d_3)(B^{(3)}_3-A^{(2)}_3d_3+A^{(3)}_2d_2d_3)}{d_2d_3},
\end{equation}
from the condition \eqref{eqn:caseN1_CAO_B2_2} we obtain
\begin{equation}
 B^{(3)}_2=B^{(1)}_3-1,
\end{equation}
and from \eqref{eqn:case5_ciii_cond_2} we obtain
\begin{align}
 0=&{d_3}^2\Big(a^{(2)}_4(a^{(1)}_4+b^{(1)}_4d_3)-(a^{(3)}_4+b^{(3)}_4d_2){C^{(53)}}^2\Big)\notag\\
 =&(a^{(3)}_4+b^{(3)}_4d_2)\Big(A^{(1)}_3d_2d_3(1+d_3)+B^{(1)}_2(1-{d_3}^2)-A^{(2)}_2(d_2-d_3)\Big)\notag\\
 &\Big(A^{(1)}_3d_2d_3(1-d_3)+B^{(1)}_2(1+{d_3}^2)-A^{(2)}_2(d_2+d_3)\Big).
\end{align}
Therefore, in the following, we consider by dividing into the following two cases:
\begin{align}
 &A^{(1)}_3=-\dfrac{B^{(1)}_2(1-{d_3}^2)-A^{(2)}_2(d_2-d_3)}{d_2d_3(1+d_3)};\tag{iii:3.4.1}\label{eqn:typeIII_proof_c_3341}\\
 &A^{(1)}_3=-\dfrac{B^{(1)}_2(1+{d_3}^2)-A^{(2)}_2(d_2+d_3)}{d_2d_3(1-d_3)}.\tag{iii:3.4.2}\label{eqn:typeIII_proof_c_3342}
\end{align}
In the following, we will consider the cases \eqref{eqn:typeIII_proof_c_3341} and \eqref{eqn:typeIII_proof_c_3342} separately.
%
\subsubsection*{\hspace{2.5em} Case \eqref{eqn:typeIII_proof_c_3341}.}
From $C^{(31)}=0$ we obtain
\begin{equation}
 B^{(3)}_3=-\dfrac{A^{(2)}_3(d_2-d_3)+A^{(3)}_2d_2d_3(1+d_2)}{1-{d_2}^2}.
\end{equation}
From
\begin{align}
 0&=c(6,3,1)+\dfrac{a^{(3)}_4{d_2}^4{d_3}^3(1-{d_3}^2)}{a^{(3)}_4d_2(1-{d_3}^2)+b^{(3)}_4({d_2}^2-{d_3}^2)}c(3,6,1)\notag\\
 &=\dfrac{2a^{(3)}_4b^{(3)}_4{d_2}^5{d_3}^7(1-{d_2}^2)(A^{(2)}_2-B^{(1)}_2+A^{(2)}_2d_2-B^{(1)}_2d_3)^2}{(1+d_3)^3(1-d_3)(A^{(2)}_3-A^{(2)}_3d_2d_3+A^{(3)}_2d_2d_3+A^{(3)}_2{d_2}^2d_3)}\notag\\
 &\Big(b^{(3)}_4(1-{d_2}^2)(1-{d_2}^2{d_3}^2)+d_2(1-{d_3}^2)\Big(A^{(3)}_2d_2(1+d_2)d_3+A^{(2)}_3(1-d_2d_3)\Big)\Big),
\end{align}
we obtain
\begin{equation}
 A^{(3)}_2=-\dfrac{A^{(2)}_3(1-d_2d_3)}{d_2d_3(1+d_2)}-\dfrac{b^{(3)}_4(1-d_2)(1-{d_2}^2{d_3}^2)}{{d_2}^2d_3(1-{d_3}^2)}.
\end{equation}
Note that we have
\begin{equation}
 b^{(1)}_4=\dfrac{(1-{d_2}^2)d_3\Big(A^{(2)}_2(1+d_2)-B^{(1)}_2(1+d_3)\Big)}{(1+d_3)(1-{d_2}^2{d_3}^2)},
\end{equation}
which implies
\begin{equation}
 {d_2}^2{d_3}^2\neq1.
\end{equation}
From $c(6,3,1)=0$ we obtain
\begin{equation}
 a^{(3)}_4=-\dfrac{b^{(3)}_4(A^{(1)}_2-A^{(3)}_3d_2+B^{(1)}_3d_2d_3)({d_2}^2-{d_3}^2)}{{d_2}^2d_3(1-{d_3}^2)}+\dfrac{b^{(3)}_4d_2(1-{d_2}^2)}{1-{d_2}^2{d_3}^2},
\end{equation}
from $c(6,2,0)=0$ we obtain
\begin{align}
 &A^{(3)}_3=\dfrac{A^{(1)}_2(1+d_3)}{1+d_2}-\dfrac{d_2d_3(1-d_2)(1-{d_3}^2)}{1-{d_2}^2{d_3}^2},
\end{align}
and from $c(6,4,1)=0$ we obtain
\begin{align}
 &B^{(1)}_3=-\dfrac{A^{(1)}_2(1-d_2d_3)}{d_2d_3(1+d_2)}+\dfrac{(1-d_2)(1+d_2{d_3}^2)}{1-{d_2}^2{d_3}^2}.
\end{align}
Therefore, we obtain the condition \eqref{eqn:typeIII_cond3_12}.
We can verify by computation that under this condition the CACO property and square property hold. 
\subsubsection*{\hspace{2.5em} Case \eqref{eqn:typeIII_proof_c_3342}.}
From $C^{(31)}=0$ we obtain
\begin{equation}
A^{(3)}_2=-\dfrac{B^{(3)}_3(1+{d_2}^2)-A^{(2)}_3(d_2+d3_)}{(1-d_2)d_2d_3}.
\end{equation}
Note that since $a^{(1)}_4\neq0$ we have
\begin{equation}
 b^{(3)}_4({d_2}^2-{d_3}^2)+a^{(3)}_4d_2(1-{d_3}^2)\neq0.
\end{equation}
Then, from
\begin{align}
 0=&c(6,3,1)+\dfrac{a^{(3)}_4{d_2}^4{d_3}^3(1-{d_3}^2)}{b^{(3)}_4({d_2}^2-{d_3}^2)+a^{(3)}_4d_2(1-{d_3}^2)}c(3,6,1)\notag\\
 =&-2a^{(2)}_4a^{(3)}_4b^{(1)}_4{d_2}^6{d_3}^6\left(d_2(1-{d_3}^2)+\dfrac{b^{(3)}_4(1-d_2)(1+{d_2}^2{d_3}^2)}{B^{(3)}_3(1+d_2)-A^{(2)}_3(1+d_3)}\right),
\end{align}
we obtain
\begin{equation}
 B^{(3)}_3=\dfrac{A^{(2)}_3(1+d_3)}{1+d_2}-\dfrac{b^{(3)}_4(1-d_2)(1+{d_2}^2{d_3}^2)}{d_2(1+d_2)(1-{d_3}^2)}.
\end{equation}

Consider the case 
\begin{equation}
 {d_2}^2={d_3}^2.
\end{equation}
Solving $F_1G_2-F_2G_1=0$ under the condition
\begin{equation}
 d_2=d_3,
\end{equation}
we obtain the condition \eqref{eqn:typeIII_cond3_13}, 
and solving $F_1G_2-F_2G_1=0$ under the condition
\begin{equation}
 d_2=-d_3,
\end{equation}
we obtain the condition \eqref{eqn:typeIII_cond3_14}.
We can verify by computation that under these conditions the CACO property and square property hold.

Also, we consider the case 
\begin{equation}
 {d_2}^2\neq{d_3}^2.
\end{equation}
From $c(6,3,1)=0$ we obtain
\begin{equation}
 B^{(1)}_3=-\dfrac{A^{(1)}_2-A^{(3)}_3d_2}{d_2d_3}
 -\dfrac{d_2(1-{d_3}^2)}{b^{(3)}_4({d_2}^2-{d_3}^2)}
 \left(a^{(3)}_4-\dfrac{d_2(b^{(3)}_4+2a^{(3)}_4d_2+b^{(3)}_4{d_2}^2)}{1+{d_2}^2{d_3}^2}\right),
\end{equation}
and then from $c(6,2,0)=0$ we obtain
\begin{equation}
 a^{(3)}_4=-\dfrac{b^{(3)}_4(1+{d_2}^2)}{2d_2}
+\dfrac{A^{(1)}_2b^{(3)}_4(1+{d_2}^2{d_3}^2)}{2{d_2}^2d_3(1-d_3)}
-\dfrac{A^{(3)}_3b^{(3)}_4(1+d_2)(1+{d_2}^2{d_3}^2)}{2{d_2}^2d_3(1-{d_3}^2)}.
\end{equation}
Note that since
\begin{equation}
 a^{(2)}_4=-\dfrac{b^{(3)}_4(1+{d_2}^2{d_3}^2)\Big(A^{(2)}_2(1+d_2)-B^{(1)}_2(1+d_3)\Big)}{d_2(1+d_3)(1-d_3)^2},
\end{equation}
we have
\begin{equation}
 1+{d_2}^2{d_3}^2\neq0,\quad
 A^{(2)}_2(1+d_2)-B^{(1)}_2(1+d_3)\neq0.
\end{equation}
Then, we obtain
\begin{align}
 0=&c(6,3,0)\notag\\
 =&\dfrac{a^{(3)}_4b^{(3)}_4{d_2}^5{d_3}^5(1-d_2)(1+d_2)^3\Big(A^{(2)}_2(1+d_2)-B^{(1)}_2(1+d_3)^2\Big)}{4(1-d_3)^2({d_2}^2-{d_3}^2)}\notag\\
 &\left(A^{(3)}_3-\dfrac{A^{(1)}_2(1+d_3)}{1+d_2}+\dfrac{d_2d_3(1+{d_2}^2)}{(1+d_2)(1+{d_2}^2{d_3}^2)}+\dfrac{d_2{d_3}^3(1-d_2)}{1+{d_2}^2{d_3}^2}\right)\notag\\
 &\left((1-{d_2}^2{d_3}^2)A^{(3)}_3-\dfrac{A^{(1)}_2(1+d_3)(1-{d_2}^2{d_3}^2)}{1+d_2}+\dfrac{d_2d_3(1+{d_2}^2)(1+{d_3}^2)}{1+d_2}\right),
\end{align}
which gives the following two cases:
\begin{subequations}
\begin{align}
 &A^{(3)}_3=\dfrac{A^{(1)}_2(1+d_3)}{1+d_2}-\dfrac{d_2d_3(1+{d_2}^2)}{(1+d_2)(1+{d_2}^2{d_3}^2)}-\dfrac{d_2{d_3}^3(1-d_2)}{1+{d_2}^2{d_3}^2};\label{eqn:typeIII_proof_c_3342_1}\\
 &(1-{d_2}^2{d_3}^2)A^{(3)}_3-\dfrac{A^{(1)}_2(1+d_3)(1-{d_2}^2{d_3}^2)}{1+d_2}+\dfrac{d_2d_3(1+{d_2}^2)(1+{d_3}^2)}{1+d_2}=0.\label{eqn:typeIII_proof_c_3342_2}
\end{align}
\end{subequations}

In the case \eqref{eqn:typeIII_proof_c_3342_1},
from $c(5,4,0)=0$ we obtain
\begin{equation}
 {d_2}^2+{d_3}^2=-2.
\end{equation}
Therefore, we obtain the condition \eqref{eqn:typeIII_cond3_15}.
We can verify by computation that under this condition the CACO property and square property hold.

Consider the condition \eqref{eqn:typeIII_proof_c_3342_2}.
If 
\begin{equation}
 {d_2}^2 {d_3}^2=1,
\end{equation} 
then from the condition \eqref{eqn:typeIII_proof_c_3342_2} we obtain 
\begin{equation}
 (1+{d_2}^2)(1+{d_3}^2)=0,
\end{equation} 
which gives the inadequate condition:
\begin{equation}
 {d_2}^2={d_3}^2.
\end{equation}
Therefore,
\begin{equation}
 {d_2}^2 {d_3}^2\neq 1.
\end{equation}
Then, the condition \eqref{eqn:typeIII_proof_c_3342_2} can be solved by $A^{(3)}_3$ as the following:
\begin{equation}
 A^{(3)}_3=\dfrac{A^{(1)}_2(1+d_3)}{1+d_2}-\dfrac{d_2d_3(1+{d_2}^2)(1+{d_3}^2)}{(1+d_2)(1-{d_2}^2{d_3}^2)}.
\end{equation}
Then, we obtain the condition \eqref{eqn:typeIII_cond3_16}.
We can verify by computation that under this condition the CACO property and square property hold.

Therefore, we have completed the proof.
\def\cprime{$'$} \def\cprime{$'$}

\end{document}